\newtheorem{theorem}{Theorem}
\newtheorem{corollary}{Corollary}
\newtheorem{lemma}{Lemma}
\newtheorem{conjecture}{Conjecture}
\theoremstyle{remark}
\newtheorem*{remark}{Remark}
\begin{document}
\newcommand{\real}{\textrm{Re}\:}
\newcommand{\sto}{\stackrel{s}{\to}}
\newcommand{\Tr}{\textrm{Tr}\:}
\newcommand{\Ran}{\textrm{Ran}}
\newcommand{\supp}{\textrm{supp}\:}
\newcommand{\evs}{\textnormal{evs}\:}
\newcommand{\Ker}{\textnormal{Ker}\:}
\newcommand{\sign}{\textnormal{sign}\:}
\newcommand{\wto}{\stackrel{w}{\to}}
\newcommand{\ssto}{\stackrel{s}{\to}}
\newcommand{\epstr}{{\tilde \epsilon}_0}
\newcommand{\beps}{\pmb{\epsilon}}
\newcounter{foo}
\newcommand{\sslim}{\textnormal{s--}\lim}
\newcommand{\wlim}{\textnormal{w--}\lim}
\providecommand{\norm}[1]{\lVert#1\rVert}
\providecommand{\abs}[1]{\lvert#1\rvert}
\providecommand{\absbm}[1]{\pmb{|}\mspace{1mu}#1\mspace{1mu}\pmb{|}}

\title{Why there is no Efimov effect for four bosons and related results on the
finiteness of the discrete spectrum.}

\author{Dmitry K. Gridnev}
\affiliation{FIAS, Ruth-Moufang-Stra{\ss}e 1, D--60438 Frankfurt am Main,
Germany}
\altaffiliation[On leave from:  ]{ Institute of Physics, St. Petersburg State
University, Ulyanovskaya 1, 198504 Russia}

\begin{abstract}
We consider a system of $N$ pairwise interacting particles described by
the Hamiltonian $H$, where $\sigma_{ess} (H) = [0,\infty)$
and none of the particle pairs has a zero energy resonance. The pair
potentials are allowed to take both signs and obey certain restrictions regarding the fall off.
It is proved that if $N \geq 4$ and none of the Hamiltonians corresponding to the 
subsystems containing $N-2$ or less particles has an eigenvalue equal to zero
then $H$ has a finite number of negative energy bound states. This result
provides a positive proof to a long--standing conjecture of Amado and Greenwood stating
that four bosons with an empty negative continuous spectrum have at most a
finite number of negative energy bound states. Additionally, we give a short proof to the theorem of Vugal'ter and Zhislin on the finiteness
of the discrete spectrum and pose a conjecture regarding the existence of the ``true'' four--body Efimov effect.
\end{abstract}

\maketitle

%%%%%
%%%%%                       MAIN   PART      %%%%%%%%%
%%%%%%%%%%%%%%%%%%%%%%%%%%%%%%%%%%%%%%%%%%%%%%%%
\section{Introduction}\label{sec:1}

In 1970 V. Efimov predicted \cite{vefimov} a remarkable and counterintuitive phenomenon now called the Efimov effect, which can be stated as follows.
If the negative continuous spectrum of a three-particle Hamiltonian $H$ is empty but
at least two of the particle pairs have a resonance at zero energy then $H$ has an infinite number of negative energy bound states. Thereby the pair--interactions' range
can be finite. This effect was in striking contradiction with the general knowledge of that time saying
that an infinite number of bound states can only be produced by long--range interactions. The first sketch of mathematical proof
of the Efimov effect was done by
L. D. Faddeev shortly after V. Efimov told him about his discovery \cite{efimovprivate}. The first published proof, which was not
mathematically flawless, appeared in \cite{amadonoble}. Later D. R. Yafaev \cite{yafaev} basing on the Faddeev's idea presented a complete mathematical proof;
in \cite{ovchinnikov,tamura,fonseca} one finds other proofs by different methods.
The original Faddeev's argument and his derivation of the discrete spectrum asymptotics in the case of three identical particles can be found in \cite{merkuriev}.
The spectral asymptotics for particles with unequal masses was analyzed in \cite{sobolev}.
In \cite{wang} the author claimed having generalized the result in \cite{yafaev,sobolev} to the case of three clusters but the proof in \cite{wang}
contains a mistake \cite{wangwrong}.

After the Efimov effect was proved to exist in the three--particle case the researchers had an eye on its
most straightforward generalization, namely, the case of four bosons with an empty negative continuous spectrum.
Amado and Greenwood in \cite{greenwood} claimed having proved that the Efimov effect is impossible for $N \geq 4$ bosons.
For four bosons their prediction later got numerical confirmation, see f. e. \cite{universal}. This showed that somehow
the Efimov effect appeared to be possible only for three bosons, not more not less.
The ``proof'' in \cite{greenwood} is invalid: the authors make various unclear and ungrounded assumptions (in particular, 
one could question the validity of the expansion Eq.~(10), where somehow $B\neq 0$, or the finiteness of the nominator in Eq.~6 in \cite{greenwood}, etc.). 
The reader should also be warned against
the misused terminology in \cite{amadonoble,greenwood}, where
the authors use the term ``zero energy bound states'' both for zero energy
resonances and $L^2$ states, which may lead to a controversy already in the three--particle case \cite{1}. The aim of the present paper is to give a
correct mathematical proof to the Amado--Greenwood's conjecture. For a more detailed explanation of our result we refer the reader to Sec.~\ref{25.9;15}.

In a series of papers (in particular, see \cite{vugzhisl1,vugzhisl2,vugzhisl3,vugzhisl4})
Vugal'ter and Zhislin using the variational approach derived several theorems concerning the finiteness of the discrete spectrum of Schr\"odinger
operators. Applied to systems, where subsystems may have virtual levels, their results are the most advanced ones so far.
In Theorem~1.3 in \cite{vugzhisl1} these authors prove the following: suppose that
the system of $N$ particles is described by the Hamiltonian $H$ with $\sigma_{ess} (H) = [0, \infty)$. Suppose also that the particles can be partitioned into two clusters
$C_1$ and $C_2$ in such a way that any subsystem having the particles both from $C_1$ and $C_2$ does not have a virtual level at zero energy. Then $H$ has at most a finite
number of negative eigenvalues. Note, that subsystems having the particles only from $C_1$ (or equivalently only from $C_2$) are allowed to have virtual
levels! Already in \cite{yafaev} as a byproduct of the main result one finds the proof
that if $N=3$ and at most one particle pair has a zero energy resonance then the number of negative energy levels is finite; in \cite{yafaev29} 
this case is analyzed in more detail. 
In that sense one can view the theorem of Vugal'ter and Zhislin as a
generalization of Yafaev's result regarding the finiteness of the discrete spectrum to $N \geq 4$.
The proof of Theorem~1.3 in \cite{vugzhisl1} is rather involved
(a proposed simplification in \cite{ahia} helps only a little since the method is practically the same as in \cite{vugzhisl1}
but various results from \cite{vugzhisl1,vugzhisl2,vugzhisl3} are cited without a proof). Sec.~\ref{1.08;33} of this paper
contains a relatively simple proof of Theorem~\ref{30.07;1}, which implies the theorem of Vugal'ter and Zhislin, however,  restricted to the class of potentials considered here.
Our proof uses the Birman--Schwinger (BS) principle and we believe that it contributes to a better understanding of the results in \cite{vugzhisl1}.

The Amado--Greenwood's conjecture \cite{greenwood} that is most interesting from a physics point of view
is not covered by the results of Vugal'ter and Zhislin. Here one can mention two sorts of arising difficulties.
One is, quoting Zhislin, a lack of information on the near--threshold resolvent behavior of the
N-body system for $N \geq 3$. Another difficulty appears if one tries to extend Yafaev's analysis
to the case $N \geq 4$. Namely, in \cite{yafaev,sobolev} the eigenvalues were ``counted'' using
symmetrized Faddeev equations, which have a  compact kernel away from the
resonances. In the case $N \geq 4$ Faddeev equations seize being compact, which is a
known problem \cite{motovilov,yakovlev,merkuriev}. Their exists a generalization in form of
Faddeev--Yakubovsky equations \cite{yakubovsky,merkuriev}, but it is not at all obvious how one can adopt Faddeev--Yakubovsky equations
to the purpose of counting eigenvalues.

In our proof of the Amado--Greenwood's conjecture we use an approach, which is in the spirit of \cite{yafaev}.
We reduce the problem to the analysis of the spectrum of
an integral operator, which arises after successively applying $N$ times the BS principle. In distinction from \cite{yafaev,sobolev},
where integral equations had a $3\times 3$ matrix form,
the resulting integral equation in our approach is written in one line; this equation is subsequently used for counting eigenvalues. One should also
mention the fundamental difference between the cases of 3 and 4 identical particles. In the case $N=3$
the two--particle subsystems have at most a zero energy resonance but cannot have a zero energy $L^2$ state.
On the contrary, for $N=4$ the 3--particle subsystems may have square integrable zero energy
bound states \cite{1,jpasubmit}. These zero energy three--body ground states have a power fall--off,
the most trivial lower bound \cite{1} being $|\psi_0 (x)| \geq (1+|x|)^{-4}$, where
$x \in \mathbb{R}^6$. The hard part of the proof is to
show in some sense that these states fall off rapidly enough in some sense so that they cannot lead to an infinite number of bound states. The presence of a bound state at zero energy also
affects the dependence of the energy on the coupling constant near the threshold \cite{klaus1,klaus2,simonfuncan} and shapes the low--energy behavior of the resolvent.

The paper is organized as follows. In Sec.~\ref{1.08;33} we prove Theorem~\ref{30.07;1}, which implies Theorem~1.3 in \cite{vugzhisl1}.
Sec.~\ref{25.9;15} contains the statement of the main theorem and an intriguing conjecture concerning the existence of the true four--body Efimov effect.
Sec.~\ref{1.08;1} deals with the situation when the N--particle system is at critical coupling (for the definition of critical coupling, see \cite{jpasubmit}).
The results of Sec.~\ref{1.08;1} are then applied to the subsystems
containing $N-1$ particles in Sec.~\ref{1.08;2}, where we prove the Amado--Greenwood's conjecture.
The Appendix reviews the BS principle, thereby, we extensively use the results from \cite{klaus2}.

Here are some of the notations used in the paper. We define
$\mathbb{R}_+ := \{x\in \mathbb{R}|x \geq 0\}$ and $\mathbb{Z}_+ := \{n \in \mathbb{Z}| n \geq 0\}$, where $\mathbb{R} , \mathbb{Z}$ are reals and integers respectively.
The function $\chi_R : \mathbb{R}_+ \to \mathbb{R}_+$ for $R > 0$ is such that
$\chi_R (r) = 1$ for $r \leq R$ and  $\chi_R (r) = 0$ otherwise.
For $v : \mathbb{R}^n \to \mathbb{R}$ the positive and negative parts are $v_{\pm} := \max[\pm v , 0]$ so
that $v= v_+ - v_-$. For a self--adjoint operator $A$ following \cite{imsigal} we denote $\# (\textrm{evs}(A) >
\lambda)$ the number of eigenvalues of $A$ (counting multiplicities) larger
than $\lambda \in \mathbb{R}$. In the case when $\sigma_{ess} (A) \cap (\lambda, \infty) \neq \emptyset$ we set by definition $\# (\textrm{evs}(A) > \lambda) = \infty$.
The versions of this definition using other relation symbols like $<, \geq, \leq$ are self--explanatory.
The linear space of bounded
operators on a Hilbert space $\mathcal{H}$ is always denoted as $\mathfrak{B}(\mathcal{H})$.  For a self--adjoint operator $A$ the notation $A \ngeqq 0$
means that there exists $f_0 \in D(A)$ such that $(f_0, Af_0) < 0$. $\|A \|_{HS}$ denotes the Hilbert--Schmidt norm of the Hilbert--Schmidt operator $A$. The notation
$f \in L^\infty_\infty (\mathbb{R}^n)$ means that $f: \mathbb{R}^n \to \mathbb{C}$ is a bounded Borel function going to zero at infinity.

\section{Theorem of Vugal'ter and Zhislin}\label{1.08;33}

We consider the Schr\"odinger operator of $N$ particles in $\mathbb{R}^3$
\begin{equation}\label{30.07;8}
 H = H_0 + \sum_{1 \leq i< j \leq N} v_{ij} ,
\end{equation}
where $H_0$ is the kinetic energy operator with the center of mass removed and
$v_{ij}$ are operators of multiplication by $v_{ij} (r_i -r_j)$. Here and further $r_i \in \mathbb{R}^3$ and $m_i \in \mathbb{R}_+ /\{0\}$ always denote particle position vectors and masses. 
For pair--interactions we shall require (throughout this section) that $(v_{ij})_+ \in L^2 (\mathbb{R}^3) + L^\infty_\infty (\mathbb{R}^3)$ and 
$(v_{ij})_- \in L^3 (\mathbb{R}^3) \cap L^{ 3/2 - \alpha} (\mathbb{R}^3)$, where  $\alpha \in (0, 1/2)$ has a fixed value throughout this section. 
By the Kato--Rellich theorem \cite{kato,teschl}
$H$ is self--adjoint on $D(H_0) = \mathcal{H}^2 (\mathbb{R}^{3N-3}) \subset L^2 (\mathbb{R}^{3N-3})$ (the symbol $\mathcal{H}^2$ denotes the corresponding Sobolev space \cite{teschl,liebloss}).
The conditions on pair--interactions guarantee that for all $\gamma \in [0, \alpha(3-2\alpha)^{-1}]$ there is a constant $c_\gamma \in \mathbb{R}_+$ such that 
\begin{equation}\label{30.07;4}
\left[ \int \frac{\bigl(v_{ij}(r)\bigr)_-   \bigl(v_{ij}(r')\bigr)_- } {|r-r'|^{2 - 8\gamma}} d^3r d^3 r' \right]^{1/2} < c_\gamma \quad \quad (1 \leq i < j \leq N), 
\end{equation}
whereby the constant $c_\gamma$ can be determined from the Hardy--Littlewood--Sobolev inequality \cite{liebloss}. 
Incidentally, $c_0$ is the Rollnik norm of $\bigl(v_{ij}\bigr)_- $ \cite{quadrforms}.

Suppose all particles are partitioned into two non--empty clusters
$\mathfrak{C}_1$ and $\mathfrak{C}_2$ each containing $\# \mathfrak{C}_1$ and
$\# \mathfrak{C}_2$ particles respectively. Then we can write
\begin{gather}
 H = h_{\mathfrak{C}_1 \mathfrak{C}_2} - \Delta_R + V_{\mathfrak{C}_1 \mathfrak{C}_2}^+ - V_{\mathfrak{C}_1 \mathfrak{C}_2}^- , \label{31.07;4}\\
V_{\mathfrak{C}_1 \mathfrak{C}_2}^\pm := \sum_{\substack{i \in \mathfrak{C}_1 \\ j \in \mathfrak{C}_2 }}
\bigl(v_{ij}\bigr)_\pm , \label{31.07;22}
\end{gather}
where $h_{\mathfrak{C}_1 \mathfrak{C}_2}$ is the Hamiltonian of internal motion in the clusters, $R \in \mathbb{R}^3$ is a vector pointing from the
center of mass of $\mathfrak{C}_1$ to the center of mass of $\mathfrak{C}_2$ (for convenience we set in (\ref{31.07;4}) the coefficient in front  of the Laplace operator
equal to unity). $V_{\mathfrak{C}_1 \mathfrak{C}_2}^+ $ (resp. $V_{\mathfrak{C}_1 \mathfrak{C}_2}^- $) are the sums of positive (resp. negative) parts of
interactions between the clusters. For each partition we define
\begin{gather}
  H(\lambda) := h_{\mathfrak{C}_1 \mathfrak{C}_2} - \Delta_R + V_{\mathfrak{C}_1 \mathfrak{C}_2}^+ - \lambda V_{\mathfrak{C}_1 \mathfrak{C}_2}^-  \quad \quad (\lambda \geq
1), \label{31.07;12}\\
 E_{thr} (\lambda) := \inf \sigma_{ess} \bigl( H (\lambda)\bigr), \label{30.07;9}
\end{gather}
where, clearly, $H = H(1)$.  Our aim in this section is to prove the following
\begin{theorem}\label{30.07;1}
 Let $H$ be defined as in (\ref{30.07;8}). Suppose that there exists a partition into two clusters $\mathfrak{C}_{1,2}$ and  $\lambda_0 > 1$ such that
$E_{thr}(\lambda_0)=\inf \sigma_{ess} (H)$, where $E_{thr}(\lambda_0)$ is defined in (\ref{30.07;9}). Then $\#(\evs(H) < \inf \sigma_{ess} (H)) < \infty$.
\end{theorem}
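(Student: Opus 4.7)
The plan is to apply the Birman--Schwinger (BS) principle along the attractive inter-cluster interaction $V^-_{\mathfrak{C}_1\mathfrak{C}_2}$ and then harvest the finiteness of $\#(\evs H < E)$ from the strict inequality $\lambda_0 > 1$. Set $E := \inf\sigma_{ess}(H)$ and $A := h_{\mathfrak{C}_1\mathfrak{C}_2} - \Delta_R + V^+_{\mathfrak{C}_1\mathfrak{C}_2}$, so that $H = A - V^-_{\mathfrak{C}_1\mathfrak{C}_2}$ and $H(\lambda) = A - \lambda V^-_{\mathfrak{C}_1\mathfrak{C}_2}$. Since $A \geq H$ the min--max principle yields $\inf\sigma_{ess}(A) \geq E$, so $A$ has only finitely many eigenvalues below $E$. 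Writing
\[
K(z) := \sqrt{V^-_{\mathfrak{C}_1\mathfrak{C}_2}}\,(A-z)^{-1}\sqrt{V^-_{\mathfrak{C}_1\mathfrak{C}_2}},\qquad z < E,
\]
Klaus's form of the BS principle (recalled in the Appendix) identifies, up to an $O(1)$ correction determined by the discrete spectrum of $A$,
\[
\#(\evs H < z) = \#\bigl(\evs K(z) > 1\bigr) + O(1),\qquad \#(\evs H(\lambda_0) < z) = \#\bigl(\evs K(z) > 1/\lambda_0\bigr) + O(1),
\]
uniformly in $z < E$. Since the hypothesis gives $\sigma_{ess}(H(\lambda_0)) = [E,\infty)$, the second count is finite for each fixed $z < E$.

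The estimate \eqref{30.07;4} combined with Hardy--Littlewood--Sobolev yields $\sup_{z<E}\|K(z)\| < \infty$, so the monotone limit $K(E) := \sslim_{z\uparrow E} K(z)$ exists as a bounded non-negative self-adjoint operator with $K(z) \leq K(E)$. By min--max, $\#(\evs K(z) > 1) \leq \#(\evs K(E) > 1)$ for every $z<E$. The theorem therefore reduces to the key inclusion
\[
\sigma_{ess}\bigl(K(E)\bigr) \subset [0,\,1/\lambda_0],
\]
which, since $1/\lambda_0 < 1$, forces the portion of $\sigma(K(E))$ strictly above $1$ to be purely discrete and finite, giving a uniform bound on $\#(\evs H < z)$ for $z < E$ and hence $\#(\evs H < E) < \infty$.

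The main obstacle is precisely this essential-spectrum inclusion. I would carry it out by an HVZ-style geometric analysis of $K(E)$: an IMS-type partition of unity in the Jacobi coordinates splits $K(E)$ into a compact interior piece (all relative distances bounded) plus channel operators indexed by proper sub-partitions of $\{1,\dots,N\}$ in which at least one sub-cluster $C$ contains particles from both $\mathfrak{C}_1$ and $\mathfrak{C}_2$. Modulo a compact remainder, each channel operator is the BS operator of the mixing subsystem Hamiltonian $h_C$ evaluated at the residual threshold $E_C$ of the detached clusters. Reversing BS inside the channel, its spectrum above $1/\lambda_0$ would correspond to eigenvalues of $h_C(\lambda)$ with $\lambda > \lambda_0$ lying below $E_C$; but any such eigenvalue would depress $E_{thr}(\lambda_0)$ strictly below $E$, contradicting the hypothesis. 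Hence each channel operator has norm at most $1/\lambda_0$, and the HVZ-type identification of $\sigma_{ess}(K(E))$ as contained in the union of channel spectra — whose most delicate point is the compactness of the cross-terms, controlled via the Hilbert--Schmidt smallness provided by \eqref{30.07;4} — closes the argument.
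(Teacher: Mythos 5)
Your reduction of the theorem to the inclusion $\sigma_{ess}\bigl(K(E)\bigr)\subset[0,\lambda_0^{-1}]$ is sound, but the proof of that inclusion — which you yourself identify as the crux — is missing, and the route you sketch for it does not close. The strong limit $K(E)=\sslim_{z\uparrow E}K(z)$ is too weak a notion here: the essential spectrum is not stable under strong convergence, so the bound $\sigma_{ess}(K(z))\cap(\lambda_0^{-1},\infty)=\emptyset$ for $z<E$ (which follows at once from the BS principle, Theorem~\ref{31.07;16}, applied to $H(\lambda_0)$, whose essential spectrum starts at $E$ by hypothesis) cannot be transferred to $K(E)$. This is presumably why you propose instead a direct HVZ/IMS channel decomposition of the threshold operator $K(E)$; but that analysis is only a plan, and its key assertions — compactness of the cross-terms at the threshold energy, the identification of each channel piece with the BS operator of a mixing subsystem at its residual threshold, and the ``reverse BS'' bound $\leq\lambda_0^{-1}$ for each channel (the spectrum of a channel operator above $\lambda_0^{-1}$ need not consist of eigenvalues, and the BS correspondence exactly at a threshold is the delicate point throughout) — are each substantial unproved claims. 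In addition, your framework is internally inconsistent on a minor point: if $A$ were allowed finitely many eigenvalues below $E$, then $\sup_{z<E}\|K(z)\|=\infty$; in fact $A\geq E$ (by HVZ, $\inf\sigma(h_{\mathfrak{C}_1\mathfrak{C}_2})\geq\inf\sigma_{ess}(H)$), so no $O(1)$ corrections are needed, but this has to be stated, since otherwise your uniform bound contradicts your own setup.

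The paper's proof shows what the missing ingredient is: norm convergence, not just strong convergence, of the BS family up to the threshold. Lemmas~\ref{31.07;1} and~\ref{31.07;2} establish that $K(\epsilon)$ is norm--continuous on $[0,\infty)$, and this is precisely where the weighted condition (\ref{30.07;4}) with $\gamma>0$ and the positivity--preservation/Trotter argument are used — the plain $\gamma=0$ Rollnik/HLS bound you invoke yields uniform boundedness but not a norm limit. Once one has $K(\epsilon)\to K(0)$ in norm, the inclusion $\sigma_{ess}(K(0))\cap(\lambda_0^{-1},\infty)=\emptyset$ follows from the $\epsilon>0$ bound together with the stability of the essential spectrum under norm limits (Theorem~9.5 in \cite{weidmann}), and the finiteness of $\#(\evs(H)<E)$ then follows by a soft argument ($K(0)\leq 1-2\delta+\mathcal{C}_f$ with $\mathcal{C}_f$ finite rank, plus Lemma~\ref{30.07;2}); no geometric channel analysis is needed. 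So either upgrade your limit to a norm limit (essentially reproducing the paper's Lemmas~\ref{31.07;1}--\ref{31.07;2}) or carry out the HVZ--type analysis of $K(E)$ in full; as written, the central step is a gap.
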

The proof would be given later in this section, where at the end we would also
demonstrate that in the case when $\inf \sigma_{ess} (H) = 0$ Theorem~1.3 in \cite{vugzhisl1} (restricted to the class of potentials discussed here)
follows from Theorem~\ref{30.07;1}.

Let us introduce the following operator function
\begin{equation}
 D (\gamma, \epsilon) := \bigl[H_w + \epsilon \bigr]^{-1/2-\gamma} (V_{\mathfrak{C}_1 \mathfrak{C}_2}^-)^{1/2} ,
\end{equation}
where $\gamma\geq 0$, $\epsilon >0$ and for a  shorter notation we set
\begin{equation}
H_w := (h_{\mathfrak{C}_1 \mathfrak{C}_2} - E_{thr}(1)) -
\Delta_R + V_{\mathfrak{C}_1 \mathfrak{C}_2}^+  .
\end{equation}
The relevant properties of $D (\gamma, \epsilon)$ are established
in the following two lemmas.
\begin{lemma}\label{31.07;1}
For all $\gamma \in [0, \alpha(3-2\alpha)^{-1}]$  
\begin{equation}
\Lambda_\gamma := \sup_{\epsilon>0} \left\|   D(\gamma, \epsilon)\right\|  < \infty
\end{equation}
\end{lemma}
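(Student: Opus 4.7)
The plan is to dominate $(V^-)^{1/2}(H_w+\epsilon)^{-1-2\gamma}(V^-)^{1/2}$ in operator norm by an expression involving only the intercluster Laplacian $-\Delta_R$, and then to reduce to a three-dimensional Hardy--Littlewood--Sobolev estimate using block diagonality in the intracluster coordinate.

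First, I would use Feynman--Kac. Writing $A := h_{\mathfrak{C}_1\mathfrak{C}_2} - E_{thr}(1)$, which is non-negative by HVZ, and using that $V_{\mathfrak{C}_1\mathfrak{C}_2}^+\geq 0$ acts by multiplication, the Feynman--Kac formula gives the pointwise kernel bound $0 \leq e^{-tH_w}(\eta,\eta') \leq e^{-t(A-\Delta_R)}(\eta,\eta')$. Subordination against $t^{2\gamma}e^{-t\epsilon}\,dt/\Gamma(1+2\gamma)$ transfers this to the same pointwise inequality between the kernels of $(H_w+\epsilon)^{-1-2\gamma}$ and $(A-\Delta_R+\epsilon)^{-1-2\gamma}$, and multiplying by the non-negative weights $V^-(\eta)^{1/2}$ and $V^-(\eta')^{1/2}$ preserves it. Since pointwise domination of positive kernels induces operator-norm dominance, one obtains
\[
\|D(\gamma,\epsilon)\|^2 \;\leq\; \bigl\|(V^-)^{1/2}(A-\Delta_R+\epsilon)^{-1-2\gamma}(V^-)^{1/2}\bigr\|.
\]
This step is essential because it circumvents the failure of operator monotonicity of $x\mapsto x^{-s}$ at $s=1+2\gamma>1$. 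Since $A$ and $-\Delta_R$ act on disjoint variables they commute, and the spectral theorem applied to this commuting pair yields $(A-\Delta_R+\epsilon)^{-1-2\gamma}\leq(-\Delta_R+\epsilon)^{-1-2\gamma}$ as operators; sandwiching with $(V^-)^{1/2}$ preserves this operator inequality.

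Next, I would exploit that $(-\Delta_R+\epsilon)^{-1-2\gamma}$ is block diagonal in the intracluster coordinate $\xi$ (its integral kernel contains the factor $\delta(\xi-\xi')$), so
\[
\bigl\|(V^-)^{1/2}(-\Delta_R+\epsilon)^{-1-2\gamma}(V^-)^{1/2}\bigr\|\;=\;\sup_\xi \bigl\|(-\Delta+\epsilon)^{-1/2-\gamma}V^-(\xi,\cdot)^{1/2}\bigr\|_{L^2(\mathbb{R}^3)}^2.
\]
The inner three-dimensional norm is bounded by H\"older's inequality combined with Hardy--Littlewood--Sobolev: the boundedness $(-\Delta+\epsilon)^{-1/2-\gamma}:L^{6/(5+4\gamma)}(\mathbb{R}^3)\to L^2(\mathbb{R}^3)$ with constant uniform in $\epsilon\geq 0$ yields an upper bound by a constant times $\|V^-(\xi,\cdot)\|_{L^{3/(2+4\gamma)}(\mathbb{R}^3)}$. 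Since $V^-(\xi,R)=\sum_{(i,j)}(v_{ij})_-(R+\xi_i-\xi_j)$ is a sum of translates, the triangle inequality gives $\|V^-(\xi,\cdot)\|_{L^p}\leq\sum_{(i,j)}\|(v_{ij})_-\|_{L^p}$ uniformly in $\xi$. The assumption $\gamma\leq\alpha(3-2\alpha)^{-1}$ is precisely what places $p=3/(2+4\gamma)$ in $[3/2-\alpha,3/2]$, so interpolation between the assumed $L^3$ and $L^{3/2-\alpha}$ bounds on $(v_{ij})_-$ puts these into $L^p$ with a finite norm independent of $\epsilon$.

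The main obstacle is the Feynman--Kac step: while it elegantly reduces the exponent-greater-than-one problem, which is beyond the reach of L\"owner--Heinz, to a pointwise kernel comparison, establishing the domination $e^{-tH_w}(\eta,\eta')\leq e^{-t(A-\Delta_R)}(\eta,\eta')$ requires that $e^{-tA}$ be a positivity-preserving contraction semigroup on $L^2$ of the intracluster variables. Under the paper's regularity assumptions on the intracluster pair interactions this is a standard fact about Schr\"odinger semigroups with Kato-class potentials, but it deserves to be spelled out carefully, together with the commuting factorization $e^{-t(A-\Delta_R)}=e^{-tA}\otimes e^{t\Delta_R}$ that makes the right-hand kernel explicit.
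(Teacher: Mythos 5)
Your proposal is correct, and its reduction half coincides with the paper's own argument: the subordination formula (\ref{30.07;10}), positivity preservation of the semigroup, and then the commuting spectral calculus to replace $H'_w=(h_{\mathfrak{C}_1\mathfrak{C}_2}-E_{thr}(1))-\Delta_R$ by $-\Delta_R$ despite the exponent $1+2\gamma>1$. The step you flag as the main obstacle is handled in the paper without any kernel-level Feynman--Kac: the Lie--Trotter formula together with $e^{-tV^+_{\mathfrak{C}_1\mathfrak{C}_2}}\leq 1$ and positivity preservation of $e^{-tH'_w}$ (cited from \cite{lpestim}) gives directly $\bigl|(H_w+\epsilon)^{-1/2-\gamma}f\bigr|\leq (H'_w+\epsilon)^{-1/2-\gamma}|f|$, which is all that is needed. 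Where you genuinely diverge is the endgame for $(V^-_{\mathfrak{C}_1\mathfrak{C}_2})^{1/2}(-\Delta_R+\epsilon)^{-1-2\gamma}(V^-_{\mathfrak{C}_1\mathfrak{C}_2})^{1/2}$: the paper reduces to single pairs, bounds the fiber operator by its Hilbert--Schmidt norm using the explicit kernel estimate (\ref{2.10;1}) for $(-\Delta_R+\epsilon)^{-1-2\gamma}$, and then invokes the weighted Rollnik-type condition (\ref{30.07;4}), itself derived from $(v_{ij})_-\in L^3\cap L^{3/2-\alpha}$ via Hardy--Littlewood--Sobolev; you instead fiber over the intracluster variables and bound the fiber operator norm through the $L^{6/(5+4\gamma)}\to L^2$ boundedness of the Riesz potential $(-\Delta+\epsilon)^{-1/2-\gamma}$ (admissible since $\gamma\leq\alpha(3-2\alpha)^{-1}<1/4$) combined with H\"older and $L^p$-interpolation, which lands exactly on $p=3/(2+4\gamma)\in[3/2-\alpha,3/2]$. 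Both routes rest on the same HLS input; yours bypasses the explicit Green's-function computation and the intermediate condition (\ref{30.07;4}), while the paper's Hilbert--Schmidt route yields the explicit constant $c_\gamma$ it records for later use. Two bookkeeping remarks: the fiber bound should read $\|(-\Delta+\epsilon)^{-1/2-\gamma}V^-(\xi,\cdot)^{1/2}\|\leq C\,\|V^-(\xi,\cdot)\|_{3/(2+4\gamma)}^{1/2}$ (the full power of the $L^p$ norm appears only after squaring), and uniformity in $\xi$ uses, exactly as the paper notes, that the coefficient of $R$ in $r_j-r_i$ equals $1$, so each $(v_{ij})_-$ enters the fiber only as a translate (or reflection of one), leaving its $L^p$ norm unchanged.
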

\begin{proof}
Note that for a self--adjoint operator $A \geq 0$ and $\epsilon, p >0$ we have
\begin{equation}\label{30.07;10}
 (A+ \epsilon)^{-p} = [p\Gamma(p)]^{-1} \int_0^\infty  \exp\{-t^{1/p}(A+ \epsilon)\} dt ,
\end{equation}
where the integral is to be understood in the strong sense. Eq.~(\ref{30.07;10}) can be easily verified using the spectral theorem.
Therefore, for any $f \in L^2 (\mathbb{R}^{3N-3}), \gamma \geq 0$
\begin{equation}
 \bigl(H_w + \epsilon \bigr)^{-1/2-\gamma} f = \left[\left(1/2 +
\gamma\right)\Gamma \left(1/2 + \gamma\right)\right]^{-1} \int_0^\infty
 \exp\{-t^{\frac 2{1+2\gamma}}(H_w + \epsilon)\}  f  dt
\end{equation}
The operator under the integral is positivity preserving \cite{lpestim}, which
allows us to write
\begin{equation}\label{3.8;1}
 \left| \bigl(H_w + \epsilon \bigr)^{-1/2-\gamma} f \right| \leq
\left[\left(1/2 + \gamma\right)\Gamma \left(1/2 +
\gamma\right)\right]^{-1}  \int_0^\infty
 \exp\{-t^{\frac 2{1+2\gamma}}(H_w + \epsilon)\}  |f|  dt
\end{equation}
By the Lie--Trotter product formula (see Sec. VIII of vol.~1 in \cite{reed})
\begin{equation}
e^{-t(H_w + \epsilon)} = \textnormal{s--}\negmedspace\lim_{\negthickspace
\negthickspace \negthickspace n \to \infty} \left( e^{-(t/n)(H'_w +
\epsilon)}e^{-(t/n)V_{\mathfrak{C}_1 \mathfrak{C}_2}^+}\right)^n ,
\end{equation}
where $H'_w := (h_{\mathfrak{C}_1 \mathfrak{C}_2} - E_{thr}(1)) -\Delta_R$. This gives us the inequality
\begin{equation}
 \left| \bigl(H_w + \epsilon \bigr)^{-1/2-\gamma} f \right| \leq \bigl(H'_w +
\epsilon \bigr)^{-1/2-\gamma} |f| .
\end{equation}
Hence,
\begin{gather}
\Lambda_\gamma \leq \sup_{\epsilon>0} \left\|  \bigl(H'_w + \epsilon
\bigr)^{-1/2-\gamma}  \bigl(V_{\mathfrak{C}_1 \mathfrak{C}_2}^-\bigr)^{1/2}\right\| \nonumber\\
  = \sup_{\epsilon>0} \left\|
\bigl(V_{\mathfrak{C}_1 \mathfrak{C}_2}^-\bigr)^{1/2} \bigl(H'_w + \epsilon \bigr)^{-1-2\gamma}
\bigl(V_{\mathfrak{C}_1 \mathfrak{C}_2}^-\bigr)^{1/2}\right\|^{1/2} . \label{31.07;5}
\end{gather}
By the spectral theorem
\begin{equation}\label{27.01/1}
  \bigl(H'_w + \epsilon \bigr)^{-1-2\gamma} \leq \bigl(-\Delta_R + \epsilon
\bigr)^{-1-2\gamma} . 
\end{equation}
(here we benefit from the fact that $[h_{\mathfrak{C}_1 \mathfrak{C}_2}, -\Delta_R] = 0$). Using
(\ref{27.01/1}) we get from (\ref{31.07;5})
\begin{gather}
\Lambda_\gamma^2  \leq  \sup_{\epsilon>0} \left\|  \bigl(V_{\mathfrak{C}_1 \mathfrak{C}_2}^-\bigr)^{1/2}
\bigl(-\Delta_R  + \epsilon \bigr)^{-1-2\gamma}  \bigl(V_{\mathfrak{C}_1 \mathfrak{C}_2}^-\bigr)^{1/2}\right\|
\nonumber \\
= \sup_{\epsilon>0} \left\|  \bigl(-\Delta_R + \epsilon \bigr)^{-1/2-\gamma}
V_{\mathfrak{C}_1 \mathfrak{C}_2}^- \bigl(-\Delta_R + \epsilon \bigr)^{-1/2-\gamma} \right\|\nonumber\\
\leq \bigl(\# \mathfrak{C}_1 \bigr) \bigl(\# \mathfrak{C}_2 \bigr) \max_{i \in
\mathfrak{C}_1 , j \in \mathfrak{C}_2} \sup_{\epsilon>0} \left\|
\bigl(v_{ij}\bigr)_-^{1/2} \bigl(-\Delta_R + \epsilon \bigr)^{-1-2\gamma}
\bigl(v_{ij}\bigr)_-^{1/2}\right\| . \label{31.07;6}
\end{gather}
Note that for $i \in  \mathfrak{C}_1 , j \in  \mathfrak{C}_2$ we can write $r_j
- r_i = R + \sum_k \mathbf{m}^{(ij)}_k x_k$, where $\mathbf{m}^{(ij)}_k $ are
real
coefficients depending on masses and $x_k \in \mathbb{R}^3$ for $k = 1, \ldots, N-2$ are intercluster coordinates. (It is easy to see that the coefficient in
front of $R$ is always 1 by fixing all $|x_k|$ and taking $|R|\gg 1$).
Thus we can trivially estimate the norm on the rhs of (\ref{31.07;6})
\begin{equation}
 \left\|  \bigl(v_{ij}\bigr)_-^{1/2} \bigl(-\Delta_R + \epsilon
\bigr)^{-1-2\gamma}  \bigl(v_{ij}\bigr)_-^{1/2}\right\|^2  \leq \int \bigl(v_{ij}(R)\bigr)_-
G_\gamma^2 (\epsilon; R-R') \bigl(v_{ij}(R')\bigr)_- d^3R d^3R' , \label{31.07;7}
\end{equation}
where $G_\gamma (\epsilon; R-R')$ is the integral kernel of the operator
$\bigl(-\Delta_R + \epsilon \bigr)^{-1-2\gamma}$ on $L^2(\mathbb{R}^3)$, which is positive \cite{reed,lpestim}.
Using
(\ref{30.07;10}) and the formula on p. 59 in \cite{reed}, vol. II (c.f . the formula on the top of
page 262 in \cite{klaus1}) this integral kernel can be written
as
\begin{gather}
 G_\gamma (\epsilon; R) = (4\pi)^{-3/2} [p\Gamma(p)]^{-1} \int_0^\infty t^{-\frac{3}{2p}}
\exp\{-\epsilon t^{\frac 1p} -2^{-2}|R|^2 t^{-\frac 1p}\}  dt\nonumber\\
\leq (4\pi)^{-3/2} [p\Gamma(p)]^{-1} \frac{8p2^{-2p}}{|R|^{3-2p}} \int_0^\infty y^{-p + \frac 12}e^{-y} =\frac{2^{-2p}\Gamma( 3/2 -p )}{\pi^{3/2}\Gamma(p)}  \frac{1}{|R|^{3-2p}} , \label{2.10;1}
\end{gather}
where $p= (1+2\gamma)$ and in the integral we used the substitution 
$y = |R|^2 /(4 t^{1/p})$. 
Substituting this upper bound into (\ref{31.07;7}) and using (\ref{30.07;4}) finishes the proof. 
\end{proof}
Let us remark that 
\begin{equation}\label{5.10;1}
 \sup_{\epsilon>0} \left\|  \bigl[H_w + \epsilon \bigr]^{-1/2} (V_{\mathfrak{C}_1 \mathfrak{C}_2}^-)\right\|  < \infty. 
\end{equation}
Eq.~(\ref{5.10;1}) follows from the proof of Lemma~\ref{31.07;1} since for all $1 < i \leq j \leq N$ 
\begin{equation}\label{5.10;2}
 \sup_{\epsilon >0} \left\|\bigl(v_{ij}(r)\bigr)_- \bigl(-\Delta_r + \epsilon\bigr)^{- \frac 12}\right\| < \infty ,  
\end{equation}
where the norm on the rhs is that of $L^2 (\mathbb{R}^3)$. To check that (\ref{5.10;2}) holds note that 
$(-\Delta_r + \epsilon)^{-1/2} f = h * f$ for any $f \in L^2 (\mathbb{R}^3)$, where by (\ref{2.10;1}) 
\begin{equation}
 h(r) \leq \frac 1{2 \pi^2 |r|^2}  \int_0^\infty e^{-y} dy = \frac 1{2 \pi^2 |r|^2} 
\end{equation}
By the Sobolev's inequality (eq.~(4.2) in \cite{traceideals}) 
\begin{equation}\label{9.10;1}
 \left\|\bigl(v_{ij}(r)\bigr)_- \bigl(-\Delta_r + \epsilon\bigr)^{- \frac 12}\right\| \leq  \left(\frac{2}{3\pi^2}\right)^{2/3} C_3\|\bigl(v_{ij}\bigr)_-\|_3 
\end{equation}
(for an estimation of the constant $C_3$ see f. e. \cite{liebloss}, where this inequality is called the weak Young inequality). 
\begin{lemma}\label{31.07;2}
The function $D(0, \epsilon)$ is norm--continuous for $\epsilon >0 $ and has a norm limit for $\epsilon\to +0$.
\end{lemma}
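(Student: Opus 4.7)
The plan is to establish a H\"older-type bound
\[
\|D(0,\epsilon_1) - D(0,\epsilon_2)\| \leq \frac{\Lambda_{\gamma_0}}{2\gamma_0}\,\bigl|\epsilon_2^{\gamma_0} - \epsilon_1^{\gamma_0}\bigr|
\]
for some fixed $\gamma_0 \in (0,\alpha(3-2\alpha)^{-1}]$, from which both the norm continuity on $(0,\infty)$ and the existence of the norm limit as $\epsilon \to 0^+$ will follow immediately.

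I would begin with the scalar functional--calculus identity $(x+\epsilon_1)^{-1/2}-(x+\epsilon_2)^{-1/2}=\tfrac{1}{2}\int_{\epsilon_1}^{\epsilon_2}(x+\mu)^{-3/2}\,d\mu$, apply it via the spectral theorem to $H_w$ (which is nonnegative, since $h_{\mathfrak{C}_1\mathfrak{C}_2}\geq E_{thr}(1)$ by HVZ while $V_{\mathfrak{C}_1\mathfrak{C}_2}^+\geq 0$ and $-\Delta_R\geq 0$), and then multiply from the right by $(V_{\mathfrak{C}_1\mathfrak{C}_2}^-)^{1/2}$. Acting on $\psi$ in the dense multiplication--operator domain of $(V_{\mathfrak{C}_1\mathfrak{C}_2}^-)^{1/2}$ yields the Hilbert--space FTC identity
\[
[D(0,\epsilon_1)-D(0,\epsilon_2)]\psi = \tfrac{1}{2}\int_{\epsilon_1}^{\epsilon_2}(H_w+\mu)^{-3/2}(V_{\mathfrak{C}_1\mathfrak{C}_2}^-)^{1/2}\psi\,d\mu,
\]
which is routine to justify once the integrand is shown to be norm--continuous and norm--majorizable on $(0,\infty)$.

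Next I would bound the integrand by exploiting the factorization
\[
(H_w+\mu)^{-3/2}(V_{\mathfrak{C}_1\mathfrak{C}_2}^-)^{1/2} = (H_w+\mu)^{-(1-\gamma_0)}\,D(\gamma_0,\mu),
\]
valid for $\gamma_0<1$. The spectral theorem applied to $H_w\geq 0$ gives $\|(H_w+\mu)^{-(1-\gamma_0)}\|\leq \mu^{-(1-\gamma_0)}$, and Lemma~\ref{31.07;1} bounds $\|D(\gamma_0,\mu)\|\leq \Lambda_{\gamma_0}$ uniformly in $\mu>0$. Consequently $\|(H_w+\mu)^{-3/2}(V_{\mathfrak{C}_1\mathfrak{C}_2}^-)^{1/2}\|\leq \Lambda_{\gamma_0}\mu^{-(1-\gamma_0)}$, and the strict positivity $\gamma_0>0$ (available because $\alpha>0$) makes this singularity integrable at the origin.

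Inserting this bound into the FTC identity, passing the norm under the integral, and taking the supremum over $\|\psi\|\leq 1$ will produce the announced H\"older estimate. The main technical subtlety is the exchange of norm and integral, handled by a standard Bochner argument given the norm--continuity of the integrand on $(0,\infty)$. The genuine conceptual content is that Lemma~\ref{31.07;1} with a strictly positive $\gamma_0$ furnishes the extra factor $\mu^{\gamma_0}$ of regularity that converts the otherwise logarithmically divergent bound at $\gamma_0=0$ into the convergent H\"older bound above, which in particular forces $D(0,\epsilon)$ to be Cauchy as $\epsilon\to 0^+$.
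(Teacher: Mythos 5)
Your proposal is correct, but it follows a genuinely different route from the paper's proof. The paper argues purely algebraically: for $\epsilon_2\geq\epsilon_1>0$ it writes $D(0,\epsilon_1)-D(0,\epsilon_2)=\bigl[(H_w+\epsilon_2)^{1/2}-(H_w+\epsilon_1)^{1/2}\bigr](H_w+\epsilon_1)^{-1/2}(H_w+\epsilon_2)^{-1/2}(V_{\mathfrak{C}_1\mathfrak{C}_2}^-)^{1/2}$, inserts $(H_w+\epsilon_2)^{-1/2+\gamma_0}(H_w+\epsilon_2)^{-\gamma_0}$, and uses the elementary spectral bounds $\|[(H_w+\epsilon_2)^{1/2}-(H_w+\epsilon_1)^{1/2}]f\|\leq|\epsilon_2-\epsilon_1|^{1/2}\|f\|$ and $\|(H_w+\epsilon_2)^{-1/2+\gamma_0}\|\leq\epsilon_2^{-1/2+\gamma_0}$, together with a short computation showing that the remaining factor is bounded by $\Lambda_{\gamma_0}$, arriving at (\ref{xwdcont}); you instead differentiate the resolvent power in $\epsilon$, i.e.\ use the operator-valued identity $(H_w+\epsilon_1)^{-1/2}-(H_w+\epsilon_2)^{-1/2}=\tfrac12\int_{\epsilon_1}^{\epsilon_2}(H_w+\mu)^{-3/2}d\mu$, factorize $(H_w+\mu)^{-3/2}(V_{\mathfrak{C}_1\mathfrak{C}_2}^-)^{1/2}=(H_w+\mu)^{-(1-\gamma_0)}D(\gamma_0,\mu)$, and integrate. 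Both arguments hinge on exactly the same key input, namely Lemma~\ref{31.07;1} with the strictly positive exponent $\gamma_0=\alpha(3-2\alpha)^{-1}$ (and on $H_w\geq0$, which as you note follows from $h_{\mathfrak{C}_1\mathfrak{C}_2}\geq E_{thr}(1)$ via HVZ), and both yield bounds of the same strength near zero: your H\"older estimate $\tfrac{\Lambda_{\gamma_0}}{2\gamma_0}|\epsilon_2^{\gamma_0}-\epsilon_1^{\gamma_0}|$ is symmetric and immediately gives uniform H\"older continuity on $[0,\infty)$, while the paper's bound is of order $(\max\{\epsilon_1,\epsilon_2\})^{\gamma_0}$ and is equally sufficient for the Cauchy property. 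What your route buys is a cleaner final estimate and a conceptually transparent ``integrate the derivative'' structure; what it costs is the (routine but necessary) justification of the Bochner/Riemann operator-valued integral and the interchange of norm and integral, which the paper's purely multiplicative decomposition avoids entirely. No gap in your argument, provided you carry out the density extension from $D\bigl((V_{\mathfrak{C}_1\mathfrak{C}_2}^-)^{1/2}\bigr)$ and the dominated-convergence step in the spectral representation as you indicate.
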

\begin{proof}
$D(0, \epsilon)$ is uniformly norm--bounded for $\epsilon >0$ by Lemma~\ref{31.07;1}. Below we prove the following
inequality
\begin{equation} 
 \left\| D(0, \epsilon_2 ) - D(0, \epsilon_1)\right\| \leq \frac{\Lambda_{\gamma_0} |\epsilon_1 -
\epsilon_2|^{1/2}}{\bigl(\max \{\epsilon_1, \epsilon_2\} \bigr)^{1/2 - \gamma_0}} \quad \quad (\textnormal{for
$\epsilon_{1,2} >0$}), \label{xwdcont}
\end{equation}
where $\gamma_0 := \alpha(3-2\alpha)^{-1}$ and $\Lambda_{\gamma_0} $ is defined in Lemma~\ref{31.07;1}. From (\ref{xwdcont}) it follows that the
norm limit
$D(0, 0):= \lim_{\epsilon\to +0} D(0, \epsilon)$ exists because it is a
norm--limit of a Cauchy sequence (the norm--continuity 
is also a trivial consequence of (\ref{xwdcont})). For $\epsilon_2 \geq \epsilon_1 >0$ we can write
\begin{gather}
D(0, \epsilon_1) -  D(0, \epsilon_2) = \Bigl[ \bigl(H_w + \epsilon_2\bigr)^{1/2}
-
\bigl(H_w + \epsilon_1\bigr)^{1/2}  \Bigr] \bigl(H_w + \epsilon_1 \bigr)^{-1/2}
\bigl(H_w + \epsilon_2\bigr)^{-1/2}\bigl(V_{\mathfrak{C}_1 \mathfrak{C}_2}^-\bigr)^{1/2} \nonumber\\
= \Bigl[ \bigl(H_w + \epsilon_2\bigr)^{1/2}  -
\bigl(H_w + \epsilon_1\bigr)^{1/2}  \Bigr] \bigl(H_w + \epsilon_2\bigr)^{-1/2 +
\gamma_0} \bigl(H_w + \epsilon_1 \bigr)^{-1/2} \bigl(H_w +
\epsilon_2\bigr)^{-\gamma_0} \bigl(V_{\mathfrak{C}_1 \mathfrak{C}_2}^-\bigr)^{1/2}.  \label{xwnonu}
\end{gather}
The following inequality for $f \in D(H_0)$ is a trivial consequence of the spectral theorem
\begin{equation}
 \left\| \Bigl[ \bigl(H_w + \epsilon_2\bigr)^{1/2}  -
\bigl(H_w + \epsilon_1\bigr)^{1/2} \Bigr] f \right\| \leq |\epsilon_2 -
\epsilon_1|^{1/2} \|f\| .
\end{equation}
By the same reasons
\begin{equation}
 \| \bigl(H_w + \epsilon_2 \bigr)^{-1/2 + \gamma_0} \| \leq |\epsilon_2|^{-1/2 +
\gamma_0}.
\end{equation}
Thus from (\ref{xwnonu})
\begin{equation}
 \left\| D(0, \epsilon_2 ) - D(0, \epsilon_1)\right\| \leq |\epsilon_2 -
\epsilon_1|^{1/2} |\epsilon_2|^{-1/2 + \gamma_0} \left\| \bigl(H_w + \epsilon_1
\bigr)^{-1/2} \bigl(H_w + \epsilon_2\bigr)^{-\gamma_0}
\bigl(V_{\mathfrak{C}_1 \mathfrak{C}_2}^-\bigr)^{1/2}\right\| .\label{xwbotw2}
\end{equation}
For the norm on the rhs we can write
\begin{gather}
 \left\| \bigl(H_w + \epsilon_1 \bigr)^{-1/2} \bigl(H_w +
\epsilon_2\bigr)^{-\gamma_0} \bigl(V_{\mathfrak{C}_1 \mathfrak{C}_2}^-\bigr)^{1/2}\right\|^2 \nonumber\\
= \left\| \bigl(V_{\mathfrak{C}_1 \mathfrak{C}_2}^-\bigr)^{1/2} \bigl(H_w + \epsilon_1\bigr)^{-1/2}  \bigl(H_w +
\epsilon_2 \bigr)^{-2\gamma_0} \bigl(H_w + \epsilon_1\bigr)^{-1/2}
\bigl(V_{\mathfrak{C}_1 \mathfrak{C}_2}^-\bigr)^{1/2}\right\| \nonumber\\
\leq  \left\| \bigl(V_{\mathfrak{C}_1 \mathfrak{C}_2}^-\bigr)^{1/2}  \bigl(H_w + \epsilon_1 \bigr)^{-1-2\gamma_0}
\bigl(V_{\mathfrak{C}_1 \mathfrak{C}_2}^-\bigr)^{1/2}\right\| =  \left\| \bigl(H_w + \epsilon_1
\bigr)^{-1/2-\gamma_0}  \bigl(V_{\mathfrak{C}_1 \mathfrak{C}_2}^-\bigr)^{1/2}\right\|^2 \leq \Lambda^2_{\gamma_0}
\label{xwbotw}.
\end{gather}
where we used the inequality $ \bigl(H_w + \epsilon_2 \bigr)^{-2\gamma_0} \leq
\bigl(H_w + \epsilon_1 \bigr)^{-2\gamma_0}$,
which again follows from the spectral theorem.
Now (\ref{xwdcont}) follows from (\ref{xwbotw2}) and (\ref{xwbotw}).
\end{proof}

The following lemma is trivial.
\begin{lemma}\label{30.07;2}
 Let $A$ be a Hilbert--Schmidt operator on a Hilbert space $\mathcal{H}$.
Suppose there is $\delta >0$ and
an orthonormal set $\varphi_1 , \ldots, \varphi_n \in \mathcal{H}$ such that
$|(\varphi_i, A\varphi_i)| \geq \delta$ for $i = 1,\ldots, n$.
Then $n \leq \|A\|_{HS}^2/\delta^2$.
\end{lemma}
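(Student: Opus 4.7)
The plan is to exploit the basis-independence of the Hilbert--Schmidt norm. First I would extend the given orthonormal set $\{\varphi_1, \ldots, \varphi_n\}$ to a complete orthonormal basis $\{\varphi_k\}_{k \in I}$ of $\mathcal{H}$ (by Zorn's lemma in general, or by the Gram--Schmidt procedure when $\mathcal{H}$ is separable).

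Using this extended basis to compute the Hilbert--Schmidt norm, I obtain
\begin{equation*}
\|A\|_{HS}^2 \;=\; \sum_{k \in I} \|A\varphi_k\|^2 \;\geq\; \sum_{k=1}^{n} \|A\varphi_k\|^2 .
\end{equation*}
For each $k \in \{1,\ldots,n\}$ the Cauchy--Schwarz inequality together with $\|\varphi_k\|=1$ yields
\begin{equation*}
\|A\varphi_k\| \;=\; \|A\varphi_k\|\,\|\varphi_k\| \;\geq\; |(\varphi_k, A\varphi_k)| \;\geq\; \delta ,
\end{equation*}
so squaring and summing produces $\|A\|_{HS}^2 \geq n\delta^2$, which rearranges to the stated bound $n \leq \|A\|_{HS}^2/\delta^2$.

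Since the authors themselves label the statement ``trivial,'' there is no real obstacle; the only mildly delicate point is the basis-independence of $\|\cdot\|_{HS}$, which is a standard fact about Hilbert--Schmidt operators. If one wishes to avoid invoking Zorn's lemma, an equivalent route is to consider the finite-rank orthogonal projection $P_n$ onto $\mathrm{span}\{\varphi_1,\ldots,\varphi_n\}$, use the trace-norm monotonicity $\|AP_n\|_{HS}^2 \leq \|A\|_{HS}^2$, and expand the left-hand side directly as $\sum_{k=1}^n \|A\varphi_k\|^2$.
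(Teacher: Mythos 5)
Your proposal is correct and follows essentially the same route as the paper: Cauchy--Schwarz gives $\|A\varphi_i\| \geq |(\varphi_i, A\varphi_i)| \geq \delta$, and summing $\|A\varphi_i\|^2 = (\varphi_i, A^*A\varphi_i)$ over the orthonormal set is dominated by $\|A\|_{HS}^2$. The only cosmetic difference is that you extend to a full basis (or use the projection $P_n$) while the paper bounds the finite sum by the Hilbert--Schmidt norm directly; both are the same estimate.
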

\begin{proof}
From Lemma's conditions it follows that
\begin{equation}
 \delta^2 \leq |(\varphi_i , A\varphi_i)|^2 \leq \|A \varphi_i\|^2 = (\varphi_i
, A^*  A \varphi_i)
\end{equation}
From the last inequality it follows that
\begin{equation}
 n \delta^2 \leq \sum_{i=1}^n (\varphi_i , A^*  A \varphi_i) \leq \|A\|_{HS}^2 .
\end{equation}
\end{proof}

Now we pass to the
\begin{proof}[Proof of Theorem~\ref{30.07;1}]
The Hamiltonian in (\ref{31.07;12}) has the form $H(\lambda) = H_w - \lambda V_{\mathfrak{C}_1 \mathfrak{C}_2}^-$. 
Before we apply Theorem~\ref{31.07;16} we need to verify its conditions. Note that $D(V_{\mathfrak{C}_1 \mathfrak{C}_2}^-)\supseteq D(H_w^{\frac 12})$ due to (\ref{5.10;1}) 
(see a first Remark in Sec.~\ref{31.07;14}). Besides, $H_w + \mu V_{\mathfrak{C}_1 \mathfrak{C}_2}^-$ is self--adjoint on $D(H_0) = D(H_w)$ for all $\mu \in [0, \infty)$. 

The
associated BS operator for $\epsilon >0$ is given by 
\begin{equation}
 K(\epsilon)= \bigl(H_w + \epsilon \bigr)^{-1/2} V_{\mathfrak{C}_1 \mathfrak{C}_2}^- \bigl(H_w + \epsilon
\bigr)^{-1/2} = D(0, \epsilon)  D^* (0, \epsilon).
\end{equation}
From Lemmas~\ref{31.07;1}, \ref{31.07;2} it follows that $K(\epsilon)$ is norm--continuous on $[0, \infty)$. 
Using that
$\sigma_{ess} \bigl( H(\lambda_0)\bigr) = [E_{thr}(1), \infty)$ and Theorem~\ref{31.07;16} we
conclude that
\begin{equation}\label{xwthisholds}
 \sigma_{ess} \bigl( K(\epsilon)\bigr) \cap \left(\lambda^{-1}_0 , \infty\right)
= \emptyset
\end{equation}
for $\epsilon >0$.
By Theorem~9.5 in \cite{weidmann} (\ref{xwthisholds}) holds also for $\epsilon =
0$. Hence, due to $\lambda_0 >1$ there must exist $\delta > 0$ such that
\begin{equation}\label{xwthisholds4}
 K(0) \leq 1-2\delta + \mathcal{C}_f ,
\end{equation}
where $\mathcal{C}_f$ is a finite rank operator. Now let us assume by
contradiction  that $\#(\evs(H(1)) < E_{thr}(1))$ is infinite. Then by the
BS principle
\begin{equation}\label{xwthisholds2}
 \lim_{\epsilon \to + 0} \#\bigl(\evs(K(\epsilon)) > 1 \bigr) \to \infty .
\end{equation}
Due to the norm--continuity of $K(\epsilon)$ from (\ref{xwthisholds2}) it follows that for any
$n = 1, 2, \ldots$ one can find an orthonormal set $\psi_1, \psi_2 , \ldots ,
\psi_n \in L^2 (\mathbb{R}^{3N-3})$
such that
\begin{equation}\label{xwthisholds3}
 \bigl( \psi_i, K(0)\psi_i\bigr) \geq 1-\delta \quad \quad (i = 1, 2, \ldots, n).
\end{equation}
Then (\ref{xwthisholds4}) gives
\begin{equation}
  \bigl( \psi_i, \mathcal{C}_f \psi_i\bigr) \geq \delta \quad \quad (i = 1, 2, \ldots, n).
\end{equation}
By Lemma~\ref{30.07;2} $n \leq \|\mathcal{C}_f \|^2_{HS}/\delta^2$,
which contradicts $n$ being arbitrary positive integer.
\end{proof}

Let us briefly show that Theorem~1.3 in \cite{vugzhisl1} follows from Theorem~\ref{30.07;1}. Suppose that $d_1 \subset \{1, \ldots, N\}$ and $d_2 \subset \{1, \ldots, N\}$ are
two nonempty disjoint clusters. Similar to (\ref{31.07;12})--(\ref{30.07;9}) the Hamiltonian of the subsystem $d_1 \cup d_2$ can be written as
\begin{equation}\label{31.07;21}
 H_{d_1 d_2} (\lambda) := h_{d_1 d_2} - \Delta_{r_{12}} + V_{d_1 d_2}^+ - \lambda V_{d_1 d_2}^- ,
\end{equation}
where $r_{12} \in \mathbb{R}^3$ points from the center of mass of $d_1$ in the direction to the center of mass of $d_2$; the scale is chosen so as to make
(\ref{31.07;21}) hold. The meaning of other notations is clear from (\ref{31.07;4})--(\ref{31.07;22}).
The Hamiltonian (\ref{31.07;21}) acts on $L^2 (\mathbb{R}^n)$, where $n = 3 \bigl( (\# d_1) + (\# d_2) - 1\bigr)$.
Putting aside the comparison of restrictions on the potentials, Theorem~1.3
in \cite{vugzhisl1} can be equivalently reformulated as follows
\begin{theorem}[S. Vugal'ter and G. Zhislin 1986]
 Suppose the Hamiltonian in (\ref{30.07;8}) is  such that $\sigma_{ess} (H) = [0, \infty)$. Suppose there exists a partition in two clusters $\mathfrak{C}_{1,2}$ 
where for all 
subsystems $d_1 \cup d_2$ such that $d_1 \subseteq \mathfrak{C}_1$, $d_2 \subseteq \mathfrak{C}_2$ and $(\#d_1) + (\#d_2) \leq N-1$,
the Hamiltonian $H_{d_1 d_2} (1)$ does not have a virtual level at zero energy. Then $\#(\evs(H) < 0) < \infty$.
\end{theorem}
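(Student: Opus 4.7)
The strategy is to derive the Vugal'ter--Zhislin theorem from Theorem~\ref{30.07;1}. Since $\inf\sigma_{ess}(H)=0$, the task reduces to exhibiting some $\lambda_0>1$ with $E_{thr}(\lambda_0)=0$, after which Theorem~\ref{30.07;1} immediately yields $\#(\evs(H)<0)<\infty$. The HVZ theorem identifies
\[
 E_{thr}(\lambda) \;=\; \min_{C_1\sqcup C_2 = \{1,\ldots,N\}} \bigl[\inf\sigma\bigl(H_{C_1}(\lambda)\bigr) + \inf\sigma\bigl(H_{C_2}(\lambda)\bigr)\bigr],
\]
where the minimum is taken over all binary partitions and the sub-cluster Hamiltonian $H_{C_k}(\lambda)$ is $H_{d_1^{(k)}d_2^{(k)}}(\lambda)$ of (\ref{31.07;21}) with $d_j^{(k)}=C_k\cap\mathfrak{C}_j$. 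Each summand is non-increasing in $\lambda$ and bounded above by $0$ (standard variational argument by induction on cluster size under the present potential class), so $E_{thr}$ is continuous, non-increasing on $[1,\infty)$, and equals $0$ at $\lambda=1$; the job is to rule out a strict drop of $E_{thr}$ at $\lambda=1^+$.

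A decomposition can force such a drop only if its ground-state sum equals $0$ at $\lambda=1$ and strictly decreases immediately thereafter. The trivial decomposition $\{\mathfrak{C}_1,\mathfrak{C}_2\}$ is $\lambda$-independent and hence harmless; any other decomposition contains at least one mixed sub-cluster $C_k$, and because $\#d_1^{(k)}+\#d_2^{(k)}=\#C_k\leq N-1$ the corresponding $H_{d_1^{(k)}d_2^{(k)}}(1)$ has no virtual level at zero by hypothesis. Moreover, when the sum equals $0$ and each summand is $\leq 0$, each $\inf\sigma(H_{C_k}(1))$ must vanish, so every mixed sub-cluster appearing in such a decomposition sits exactly at the bottom of its essential spectrum.

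The core technical step is a \emph{freezing lemma}: if $C=d_1\cup d_2$ is a mixed proper sub-cluster with $\inf\sigma(H_{d_1 d_2}(1))=0$ and $H_{d_1 d_2}(1)$ has no virtual level at zero, then there exists $\delta>0$ such that $\inf\sigma(H_{d_1 d_2}(\lambda))=0$ on $[1,1+\delta]$. I would prove this by induction on $\#d_1+\#d_2$. The two-body base case is the classical Klaus--Simon/Birman--Schwinger fact that the absence of a zero-energy resonance precludes any new bound state being created by an arbitrarily small additional attractive coupling. In the inductive step, HVZ expresses $\inf\sigma_{ess}(H_{d_1 d_2}(\lambda))$ as a minimum of sub-sub-cluster ground-state sums; the inductive hypothesis freezes each vanishing sub-sub-cluster value, so $\inf\sigma_{ess}(H_{d_1 d_2}(\lambda))\equiv 0$ on some right neighbourhood of $1$. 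Combining this with the no-virtual-level hypothesis for $H_{d_1 d_2}(1)$ and a Birman--Schwinger analysis in the spirit of Theorem~\ref{30.07;1} applied to the sub-cluster then rules out the emergence of a new negative eigenvalue for small $\lambda-1$, closing the induction.

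Combining the frozen zero-sum decompositions with the continuity-stability of strictly positive sums gives $E_{thr}(\lambda)\equiv 0$ on some $[1,1+\delta]$; choosing any $\lambda_0\in(1,1+\delta]$ and invoking Theorem~\ref{30.07;1} completes the proof. I expect the main obstacle to lie in the inductive step of the freezing lemma: the two-body case is classical, but propagating ``no new eigenvalue from the threshold'' through sub-clusters whose own essential-spectrum threshold is $\lambda$-dependent requires a careful interplay between the recursion on sub-cluster size and the Birman--Schwinger machinery developed in the proof of Theorem~\ref{30.07;1}.
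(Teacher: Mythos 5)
Your route is the same as the paper's: reduce to Theorem~\ref{30.07;1} by showing that the ground--state energies of all mixed subsystems $d_1\cup d_2$ (with $d_1\subseteq\mathfrak{C}_1$, $d_2\subseteq\mathfrak{C}_2$, $\#d_1+\#d_2\leq N-1$) stay at zero for some common $\lambda_0>1$, so that by the HVZ theorem $E_{thr}(\lambda_0)=\inf\sigma_{ess}(H)=0$. Your ``freezing lemma'' is precisely the contrapositive of the step the paper proves (a mixed subsystem at critical coupling whose own mixed subsubsystems are not at critical coupling must have a virtual level at zero), and your induction on cluster size is the paper's device of passing to a minimal critical subsystem, which is what pins $\sigma_{ess}\bigl(H_{d_1 d_2}(1+\omega)\bigr)=[0,\infty)$ for some $\omega>0$. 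Up to this reorganization the arguments coincide, and your HVZ bookkeeping (the trivial partition is $\lambda$--independent, every other partition contains a mixed cluster of at most $N-1$ particles, all relevant cluster energies vanish at $\lambda=1$) is correct.

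The gap is at the one point where a genuine idea is needed, namely the inductive step that you delegate to ``a Birman--Schwinger analysis in the spirit of Theorem~\ref{30.07;1}''. The proof of Theorem~\ref{30.07;1} counts eigenvalues below an essential--spectrum threshold that is stable under increasing $\lambda$; it neither uses nor can use the no--virtual--level hypothesis, and it cannot exclude that negative eigenvalues are \emph{born} as $\lambda$ passes $1$ --- that is exactly the critical--coupling situation, which is perfectly compatible with the finiteness statement of Theorem~\ref{30.07;1}. Note also that you cannot reduce the perturbation $-(\lambda-1)V^-_{d_1 d_2}$ to the perturbation $-a e^{-|x|}$ entering the definition of a virtual level by pointwise domination, since $V^-_{d_1 d_2}$ in this class admits no exponential bound. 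What is actually needed (and what the paper supplies) is a threshold Perron--Frobenius argument: assuming the subsystem is at critical coupling, its BS operator $K(\epsilon)=\bigl[h_{d_1 d_2}-\Delta_{r_{12}}+V^+_{d_1 d_2}+\epsilon\bigr]^{-1/2}V^-_{d_1 d_2}\bigl[h_{d_1 d_2}-\Delta_{r_{12}}+V^+_{d_1 d_2}+\epsilon\bigr]^{-1/2}$ is positivity preserving and norm--continuous down to $\epsilon=0$, critical coupling forces $\|K(0)\|=1$, and the frozen subsubsystem thresholds give $\sigma_{ess}\bigl(K(0)\bigr)\subseteq[0,(1+\omega)^{-1}]$ via Theorem~\ref{31.07;16}, so $1$ is an isolated eigenvalue with a strictly positive eigenfunction $\phi_0$; then for every $a>0$ the BS operator built with $V^-_{d_1 d_2}+a e^{-|x|}$ exceeds $1$ in norm (the overlap of $e^{-|x|}$ with $\phi_0>0$ is strictly positive), hence $H_{d_1 d_2}(1)-a e^{-|x|}\ngeqq 0$ for all $a>0$, i.e.\ a virtual level, contradicting the hypothesis. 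Without this step (or an equivalent coupling--constant--threshold argument in the style of Klaus--Simon adapted to the non--compact $K(0)$) your freezing lemma is unproved, and that lemma is the entire content of the theorem beyond Theorem~\ref{30.07;1}.
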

\begin{proof}
We need to show that the conditions of Theorem~\ref{30.07;1} are fulfilled. We would say that the subsystem $d_1 \cup d_2$ is at critical coupling if
$H_{d_1 d_2} (1) \geq 0$ and $H_{d_1 d_2} (1+ \epsilon) \ngeqq 0$ for $\epsilon >0$ (this is different from the Definition~1 in \cite{jpasubmit}; for the definition of virtual
level see, for example, Definition~3 in \cite{jpasubmit}). By the HVZ theorem
conditions of Theorem~\ref{30.07;1} would be verified if we can prove that
$d_1 \cup d_2$ is not at critical coupling for all $d_{1,2}$ described in the conditions of the theorem to prove. Assume
by contradiction that there exist $d_{1,2}$ such that $H_{d_1 d_2}$ is at critical coupling.
Without loosing generality we can assume that the subsystems $d'_1 \cup d'_2$, where
$d'_1 \subseteq d_1$, $d'_2 \subseteq d_2$ and $(\#d'_1) + (\#d'_2) < (\#d_1) + (\#d_2) $ are \textit{not} at critical
coupling (otherwise we can pass to an appropriate sub/subsystem).
Thus there must exist $\omega > 0$ such that $\sigma_{ess} \bigl( H_{d_1 d_2} (1+\omega)\bigr) = [0, \infty)$.
For $d_1 \cup d_2$ we construct the BS operator
\begin{equation}
 K(\epsilon) := \left[h_{d_1d_2} - \Delta_{r_{12}} + V_{d_1 d_2}^+ + \epsilon \right]^{-\frac 12} V_{d_1 d_2}^-
\left[h_{d_1d_2} - \Delta_{r_{12}} + V_{d_1 d_2}^+ + \epsilon \right]^{-\frac 12} . 
\end{equation}
By the analysis above $K(\epsilon)$ is positivity preserving, $K(\epsilon) \to K(0)$ in norm, where $K(0)$ is a positivity preserving operator as well. 
By Theorem~\ref{31.07;16} and Theorem~9.5 in \cite{weidmann}
$ \sigma_{ess} \bigl(K(0)\bigr) \subseteq [0 ,(1+\omega)^{-1}]$.
Because $d_1 \cup d_2$ is at critical coupling we have $\|K(0)\| = 1$ (this follows from the BS principle and norm--continuity of $K(\epsilon)$).
Due to the location of the essential spectrum  $\|K(0)\|$ is an eigenvalue. 
Thus there exists $\phi_0 \in L^2 (\mathbb{R}^n)$ such that $K(0)\phi_0 = \phi_0$ and $\phi_0 >0$, see \cite{reed}. From this fact and
from the variational principle it follows that for all $a >0$ there exists $\epsilon > 0$ such that $\|K' (\epsilon,a) \| >1$, where
\begin{gather}
 K' (\epsilon,a) := \left[h_{d_1d_2} - \Delta_{r_{12}} + V_{d_1 d_2}^+ + \epsilon \right]^{-\frac 12} \left\{ V_{d_1 d_2}^- + ae^{-|x|} \right\}
\left[h_{d_1d_2} - \Delta_{r_{12}} + V_{d_1 d_2}^+ + \epsilon \right]^{-\frac 12}.
\end{gather}
By the BS principle this means that $H_{d_1 d_2} - ae^{-|x|} \ngeq 0$ for any $a>0$. Now it is clear that $H_{d_1 d_2}$ has a virtual level at zero energy contrary to the conditions of the
theorem.
\end{proof}

\section{Main Result and Discussion}\label{25.9;15}

We shall consider the Hamiltonian of $N$ particles in $\mathbb{R}^3$
\begin{gather}
 H = H_0 +  V \label{1.08;3}\\
V = \sum_{i< k} v_{ik},   \quad\quad \quad v_{ik} \in L^1 (\mathbb{R}^3) \cap L^3 (\mathbb{R}^3) \label{1.08;4},
\end{gather}
where $H_0$ is the kinetic energy operator with the center of mass removed and
$v_{ik}$ are operators of multiplication by $v_{ik} (r_i - r_k)$. 
By Kato--Rellich's theorem \cite{kato,teschl}
$H$ is self--adjoint on $D(H_0) = \mathcal{H}^2 (\mathbb{R}^{3N-3}) \subset L^2 (\mathbb{R}^{3N-3})$.

For an ordered multi--index $\pmb{i} = \{i_1, \ldots , i_s\}$, where $1 \leq s \leq N$ and $1 \leq i_1 < i_2 < \cdots < i_s \leq N$ let us define the set
$S_{\pmb{i}} = \{1, 2, \ldots, N\} / \{i_1, \ldots , i_s\}$, which is the subsystem containing $N - \# \pmb{i}$ particles.
The sums of pair--interactions for the subsystem $S_{\pmb{i}}$ are defined as
\begin{gather}
 V_{\pmb{i}} := \sum_{\substack{j < k\\ j,k \in S_{\pmb{i}}}} v_{jk} \label{7.8;1a} \\
V^{\pm}_{\pmb{i}} := \sum_{\substack{j < k\\ j,k \in S_{\pmb{i}}}} (v_{jk})_\pm \label{7.8;1}
\end{gather}
In particular, $V_{\{j\}}$ is the sum of pair--interactions in
the subsystem, where particle $j$ is removed; $V_{\{j, s\}}$ is the sum of
pair--interactions in the subsystem,
where particles $j$ and $s$ are removed, etc. Obviously, $V_{\pmb{i}} = V^{+}_{\pmb{i}} - V^{-}_{\pmb{i}}$.
We shall make the following assumption
\begin{list}{R\arabic{foo}}
{\usecounter{foo}
  %    set rightmargin equal to leftmargin
    \setlength{\rightmargin}{\leftmargin}}
\item $\sigma_{ess} (H) = [0, \infty)$. There exists $\omega >0$ such that $H_0 + V_{\{j, s\}}^+ -
(1+\omega)V_{\{j, s\}}^- \geq 0$ for all $1  \leq j < s \leq N $.
\end{list}
In particular, R1 implies that a
subsystem containing $N-2$ or less particles is not at critical coupling \cite{jpasubmit}.
In Sec.~\ref{1.08;2} we prove the following
\begin{theorem}\label{1.08;9}
 Suppose that $H$ defined in (\ref{1.08;3})--(\ref{1.08;4}) satisfies R1 and $N \geq 4$. Then
$\#(\evs(H) < 0)$ is finite.
\end{theorem}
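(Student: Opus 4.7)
The plan is a proof by contradiction modeled on Theorem~\ref{30.07;1}. Assume $\#(\evs(H)<0)=\infty$, set $H^+ := H_0 + V^+$ with $V^\pm := \sum_{i<k}(v_{ik})_\pm$, and form the Birman--Schwinger (BS) operator
\begin{equation*}
K(\epsilon) := (H^+ + \epsilon)^{-1/2}\, V^- \,(H^+ + \epsilon)^{-1/2}.
\end{equation*}
The target is a decomposition $K(0) \leq 1 - 2\delta + \mathcal{C}_f$ for some $\delta > 0$ and some finite--rank Hilbert--Schmidt operator $\mathcal{C}_f$: by the Appendix's BS principle together with norm--continuity of $\epsilon \mapsto K(\epsilon)$ on $[0,\infty)$ (proved as in Lemmas~\ref{31.07;1}--\ref{31.07;2}), such a decomposition combined with Lemma~\ref{30.07;2} yields a contradiction in exactly the same way as in the final argument of Theorem~\ref{30.07;1}.

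To produce the decomposition I would iterate the BS reduction. Inside each summand $(v_{ik})_-$ of $V^-$, apply resolvent identities so as to extract the internal Hamiltonian of a progressively smaller subsystem, applying BS again at each level, and continue unfolding down to two--body resolvents governed by the integral--kernel estimates of Lemma~\ref{31.07;1}. In every cluster decomposition channel in which both clusters contain at most $N-2$ particles, R1 supplies a uniform coupling gap, so the argument of Theorem~\ref{30.07;1} shows that the corresponding contribution to $\sigma_{ess}(K(0))$ lies strictly below $1$ modulo a finite--rank correction. The genuinely new channels are those of the form $(N-1)+1$: one cluster consists of a single particle and the other is an $(N-1)$-particle subsystem which R1 permits to sit at critical coupling with a zero--energy $L^2$ ground state $\psi_0$.

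The hard part is controlling such $(N-1)+1$ channels. After projecting onto $\psi_0$ via a Feshbach decomposition, the effective intercluster operator on $L^2(\mathbb{R}^3)$ has a kernel whose Hilbert--Schmidt norm is governed by weighted integrals of $|\psi_0|^2$ against the pair potentials $(v_{ij})_\pm$. The trivial lower bound $|\psi_0(x)|\gtrsim (1+|x|)^{-4}$ recalled in Section~\ref{sec:1} is by itself insufficient, so one needs a quantitative \emph{upper} bound on $|\psi_0|$ with enough power decay. Such a bound is the role I would assign to the critical--coupling analysis of Section~\ref{1.08;1}, used here one level below; combining it with the Hardy--Littlewood--Sobolev estimate used in Lemma~\ref{31.07;1} should yield compactness of the $(N-1)+1$ channel. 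Once this is in hand, $\sigma_{ess}(K(0))\subseteq [0,1-2\delta]$ for some $\delta>0$, and the proof closes as in Theorem~\ref{30.07;1}. The main obstacle is precisely this decay analysis: extracting from the zero--energy eigenvalue equation of an $(N-1)$-particle cluster a pointwise decay on $\psi_0$ strong enough to feed into the HLS--type Hilbert--Schmidt bound is the essential technical difficulty that distinguishes the four--body (and higher) case from the setting of Theorem~\ref{30.07;1}.
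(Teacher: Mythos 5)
Your skeleton (contradiction, BS principle, norm--continuity, Lemma~\ref{30.07;2}) matches the paper's, and you correctly isolate the $(N-1)+1$ channels with a zero--energy $L^2$ cluster state as the crux. But the way you propose to close those channels is a genuine gap, not a deferred technicality. You assign to Section~\ref{1.08;1} the task of producing a pointwise upper bound on $\psi_0$ with strong power decay, to be fed into an HLS--type Hilbert--Schmidt estimate; Section~\ref{1.08;1} proves nothing of that kind, and the paper deliberately avoids needing it (the only pointwise information quoted is the \emph{lower} bound $|\psi_0|\gtrsim(1+|x|)^{-4}$). What the critical--coupling analysis actually supplies is different in nature: (i) the near--threshold behavior of the cluster BS eigenvalue, $1-\mu_j(\epsilon)\geq a_\mu\epsilon$ (Theorem~\ref{3.8;3}, via Lemma~3.1 of \cite{simonfuncan}), which holds precisely because the cluster has an $L^2$ zero--energy state rather than a resonance; and (ii) the uniform weighted momentum--space bounds $\sup_{\epsilon,\epsilon',s}\|G_j^\alpha(s,\epsilon')\varphi_j(\epsilon)\|<\infty$ for $\alpha\in[1,\tfrac32)$ (Theorem~\ref{3.8;8}). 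The finiteness mechanism is the borderline small--momentum estimate these yield: subtracting the fibered projections $\mathcal{P}_j(\epsilon)$ produces singular factors $[1-\mu_j(\epsilon+p_1^2)]^{-1/2}\lesssim a_\mu^{-1/2}|p_1|^{-1}$, and $|p_1|^{-1}$ is just barely square--integrable near the origin in $\mathbb{R}^3$ (Lemma~\ref{19.9;21}). Your coordinate--space ``decay of $\psi_0$ plus HLS'' picture does not see this threshold singularity at all, and it is exactly the place where $N=3$ (resonance, $1-\mu\sim\sqrt{\epsilon}$) and $N\geq4$ part ways; an argument that does not make that distinction cannot be right as stated.

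Two further ingredients of the paper's proof are absent from your plan and are not cosmetic. First, the different critical $(N-1)$--subsystems overlap in $N-2$ particles, so one cannot treat the channels one at a time: after iterating the BS reduction $N$ times (Theorem~\ref{8.8.;8}) the remainder $\mathcal{M}_N$ contains cross terms such as $\mathcal{R}_i L_i\mathcal{R}_j$ and $P_iP_j$ with $i\neq j$, and showing these lie in the ideal $\mathfrak{J}$ requires the dedicated estimates of Lemmas~\ref{19.9;22}--\ref{19.9;23} together with the piecewise--constant fiber approximation of Lemma~\ref{24.9;41} (indeed, for $N=3$ exactly such a cross term is what destroys compactness and permits the Efimov effect). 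Second, your claim that ``R1 supplies a uniform coupling gap'' so that $\sigma_{ess}(K(0))$ lies below $1$ modulo finite rank is not a proof for the full operator once the critical channels are present: the paper has to localize the essential spectrum of $K(\epsilon)-\sum_j\mathcal{P}_j^{(c)}(\epsilon)$ by constructing the bounded operators $\mathcal{Z}_j$ from the sesquilinear forms of $\mathcal{P}_j^{(c)}(0)$ and running an IMS/HVZ argument for $H_0+\lambda V+\lambda\sum_j\mathcal{Z}_j$ (Lemmas~\ref{8.8;19}, \ref{8.8;20}, \ref{13.8;4}). Without substitutes for these steps, your proposal identifies the right difficulty but does not yet contain an argument that resolves it.
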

The first thing worth noting is that Theorem~\ref{1.08;9} does not hold for $N=3$ because
of the Efimov effect \cite{vefimov,yafaev,sobolev,merkuriev}.
\begin{corollary}\label{1.08;10}
 Suppose the system of 4 identical particles is described by the Hamiltonian $H$ in (\ref{1.08;3})--(\ref{1.08;4}) and $\sigma_{ess} (H) = [0, \infty)$.
Then the number of negative
energy bound states is finite.
\end{corollary}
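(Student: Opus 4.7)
The plan is to deduce Corollary~\ref{1.08;10} directly from Theorem~\ref{1.08;9} by verifying that, for four identical particles, the hypothesis R1 is already implied by the single assumption $\sigma_{ess}(H) = [0, \infty)$. Since $N = 4$, Theorem~\ref{1.08;9} will then yield the conclusion at once.

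Both parts of R1 must be checked. The first, the location of the essential spectrum, is given. For the second, fix any pair $\{j,s\} \subset \{1,2,3,4\}$ and set $\{k,\ell\} := \{1,2,3,4\} \setminus \{j,s\}$. Then $V_{\{j,s\}}$ depends only on the internal coordinate of the pair $\{k,\ell\}$, and choosing Jacobi coordinates adapted to the two--cluster splitting $(\{k,\ell\}, \{j,s\})$ decomposes $H_0$ as the sum of the internal kinetic energy of that pair and a nonnegative intercluster operator acting on orthogonal variables. By a standard tensor product argument the required inequality $H_0 + V_{\{j,s\}}^+ - (1+\omega) V_{\{j,s\}}^- \geq 0$ is equivalent to the two--body operator $-(2\mu)^{-1}\Delta + v_+ - (1+\omega) v_- \geq 0$ being nonnegative, where $v$ is the common pair potential and $\mu$ the reduced mass. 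Because all four particles are identical, this reduces to a single statement about one two--body Hamiltonian.

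To verify that two--body statement I would argue by contradiction. By the HVZ theorem applied to the $2{+}1{+}1$ cluster decomposition, $\sigma_{ess}(H) = [0,\infty)$ already forces $\inf\sigma(-\Delta + v) \geq 0$. If no admissible $\omega > 0$ existed, then by definition the pair would sit at critical coupling; by standard two--body theory in $\mathbb{R}^3$ for potentials in $L^1 \cap L^3$, critical coupling is equivalent to the existence of a zero--energy resonance, i.e.\ a distributional solution of $(-\Delta + v)\psi = 0$ with $|r|^{-1}$ decay that fails to lie in $L^2$. For three identical particles all three pair interactions are then simultaneously at resonance, which is precisely the setting of the Efimov effect: the three--body subsystem Hamiltonian $H_3$ possesses an infinite sequence of negative eigenvalues accumulating at zero from below (see \cite{vefimov,yafaev,sobolev,merkuriev}), hence $\inf\sigma(H_3) < 0$. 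Applying HVZ once more to the $3{+}1$ cluster decomposition gives $\sigma_{ess}(H) \supseteq [\inf\sigma(H_3),\infty)$, contradicting $\sigma_{ess}(H) = [0,\infty)$.

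Thus R1 holds and Theorem~\ref{1.08;9} delivers the finiteness of the negative discrete spectrum. The only delicate step is the passage from ``critical coupling of the identical pair'' to ``zero--energy resonance,'' which rests on the classical threshold analysis of two--body Schr\"odinger operators in three dimensions; the subsequent invocation of the three--body Efimov theorem is then routine, and no additional estimate is needed beyond what Theorem~\ref{1.08;9} already supplies.
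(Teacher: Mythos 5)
Your argument is correct and follows essentially the same route as the paper: one checks only the second part of R1, notes that by identity of the particles a single critically coupled pair forces all pairs to be critically coupled, and then invokes the three--body Efimov effect together with the HVZ theorem to contradict $\sigma_{ess}(H)=[0,\infty)$. The extra details you supply (the tensor--product reduction to a two--body operator and the identification of critical coupling with a zero--energy resonance) are fine but not needed beyond what the paper's own proof already uses.
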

\begin{proof}
 We need only to check the second part of R1. If one pair of particles would be
at critical coupling then this would be true for all particle pairs because the
particles are
identical. Then due to the Efimov effect \cite{yafaev,tamura} three--particle subsystems would have negative energy bound states thereby violating the condition
$\sigma_{ess} (H) = [0, \infty)$.
Therefore, none of the particle pairs is at critical coupling and Theorem~\ref{1.08;9} applies.
\end{proof}
Unfortunately, we did not succeed in extending Corollary~\ref{1.08;10} to $N\geq 5$. To
explain the difficulty let us consider $N=5$. Like in the proof of Corollary~\ref{1.08;10} we
conclude that
none of the particle pairs is at critical coupling. It can happen, however, that
all particle triples would be at critical coupling, each of them having a zero
energy
bound state \cite{1}. It is natural to assume that in most cases this would lead to negative
energy bound states in 4--particle subsystems. But it is unclear how to prove
that even for negative pair--interactions.

 It is natural to ask whether instead of assumption R1 in the condition of Theorem~\ref{1.08;9} one could simply require $\sigma_{ess} (H) = [0, \infty)$.
In this regard we pose the following conjecture
\begin{conjecture}\label{1.08;14}
 Suppose that  in (\ref{1.08;3})--(\ref{1.08;4}) $N=4$ and all $v_{ik} \leq 0$ are bounded and finite--range potentials. Suppose also that $\sigma_{ess} (H) = [0, \infty)$,
the particle pair $\{1,2\}$ and the particle triples $\{1,2,3\}$ and $\{1,2,4\}$ are at critical coupling (in the sense of Definition~1 in \cite{jpasubmit}),
and the subsystems $\{1,3,4\}$ and $\{2,3,4\}$ do not have zero energy bound states. Then $H$
has an infinite number of negative energy bound states.
\end{conjecture}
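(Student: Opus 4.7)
The plan is to construct, by Rayleigh--Ritz, an infinite sequence of approximately mutually orthogonal trial functions $\psi_n$ with $(\psi_n, H \psi_n) < 0$. Heuristically, in this critically coupled situation the dynamics at low energy is dominated by two cluster channels: $\{1,2,3\} + \{4\}$ and $\{1,2,4\} + \{3\}$. Each of the two triples carries a zero--energy $L^2$ ground state, and the resonant pair $\{1,2\}$ is shared between them; the exchange of particle $3 \leftrightarrow 4$ should therefore generate a supercritical effective $|y|^{-2}$ attraction in the "outer" Jacobi coordinate $y$ pointing from the lone particle to the center of mass of the triple, in complete analogy with the standard three--body Efimov effect where the role of the zero--energy pair resonance is played here by a zero--energy triple bound state.

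The first step is to obtain sharp asymptotics for the positive zero--energy ground states $\phi_{123}, \phi_{124} \in L^2(\mathbb{R}^6)$ of the critical triples. The trivial lower bound $|\phi_0(x)| \geq (1+|x|)^{-4}$ quoted in the introduction should in fact be matched from above: one expects $\phi_{123}(x) \sim C\,\Omega(\hat x)/|x|^{4}$ for some universal constant $C>0$ and angular function $\Omega$. This requires an analysis of the threshold resolvent of the critical three--body Hamiltonian, for instance via the Birman--Schwinger machinery of Sec.~\ref{1.08;33} at the critical coupling constant together with a Mellin--type decomposition of the zero--energy kernel; the positivity improving property already exploited in Lemmas~\ref{31.07;1}--\ref{31.07;2} is useful here.

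Next, one chooses a rapidly growing scale sequence $R_1 \ll R_2 \ll \cdots$ with $R_{n+1} \geq R_n^{2}$, a radial profile $f_n(y)$ supported in the logarithmic annulus $\{R_n \leq |y| \leq R_n^{2}\}$ and slowly varying there (schematically $f_n(y) = |y|^{-3/2}\sin(s_0 \ln|y|)$ with a smooth cutoff, where $s_0$ is the Efimov exponent of the effective two--channel problem), and forms, with appropriate symmetrization,
\begin{equation*}
\psi_n \;=\; \phi_{123}(\xi_3,\eta_3)\,f_n(y_4) \;+\; \phi_{124}(\xi_4,\eta_4)\,f_n(y_3),
\end{equation*}
where $(\xi_3,\eta_3,y_4)$ and $(\xi_4,\eta_4,y_3)$ are the two relevant sets of Jacobi coordinates. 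A direct calculation then gives: the diagonal contributions reduce, after using $h_{123}\phi_{123} = h_{124}\phi_{124}= 0$, to the kinetic energy $(f_n, -\Delta_y f_n)/(2\mu)$ of order $R_n^{-2}\ln R_n$ plus exponentially small remainders coming from the short--range interactions between the lone particle and the triple. The off--diagonal term $(\phi_{123}f_n, V \phi_{124} f_n)$, after inserting the asymptotics of Step~1 and integrating out the internal coordinates, produces the desired effective $|y|^{-2}$ potential with coupling strictly larger than the critical Hardy value $1/4$; this yields $(\psi_n,H\psi_n) < 0$. The annuli being disjoint by construction, the $\psi_n$ are mutually orthogonal, so by the min--max principle $\#(\evs(H)<0) = \infty$.

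The main obstacle is Step~1 together with the extraction of the effective Efimov coupling in the off--diagonal term. One needs not just a decay rate but the precise asymptotic coefficient and angular structure of the critical triple ground states, since the sign and size of the $|y|^{-2}$ coefficient depend on a convolution of $\phi_{123}$ with $\phi_{124}$ weighted by the pair potentials $v_{i4}$, $v_{j3}$. This is exactly the kind of near--threshold information that is lacking for $N\geq 3$ and that the introduction explicitly identifies as the source of difficulty. A rigorous execution will likely require a Faddeev--Yakubovsky--type symmetrization, separating the two--cluster channels and reducing the BS operator at zero energy to a Skorniakov--Ter--Martirosian--like integral equation whose spectrum is analyzed by a Mellin transform, with $s_0$ appearing as the root of the resulting transcendental equation; a separable--approximation version treating $\phi_{123}, \phi_{124}$ and the $\{1,2\}$ resonance as form factors would furnish a proof of concept while deferring the full regularity question.
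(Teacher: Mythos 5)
The statement you are trying to prove is not proved in the paper at all: it is posed there as an open problem (Conjecture~\ref{1.08;14}), and the surrounding text explicitly identifies the missing ingredient --- quantitative knowledge of the near--threshold behavior of critical subsystems --- as the reason it remains open. So there is no proof in the paper to compare against, and your proposal does not close the gap either; it is a heuristic program whose decisive steps are exactly the ones left unproven. Concretely: (i) Step~1, the sharp asymptotics $\phi_{123}(x)\sim C\,\Omega(\hat x)|x|^{-4}$ for the zero--energy ground states of the critical triples, is nowhere established --- the paper only quotes a pointwise \emph{lower} bound of this order, and nothing in Sec.~\ref{1.08;1} (e.g. Theorem~\ref{3.8;8}) gives the matching upper bound, let alone the coefficient and angular profile; (ii) the crucial quantitative claim that the off--diagonal exchange term $(\phi_{123}f_n, V\,\phi_{124}f_n)$ produces an effective $-c|y|^{-2}$ attraction with $c$ strictly above the Hardy constant $1/4$ is asserted, not derived. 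The sign and size of $c$ depend on convolution integrals of $\phi_{123}$, $\phi_{124}$ and the zero--energy resonance of the pair $\{1,2\}$ that you have not computed, and if $c\le 1/4$ the same construction yields only finitely many negative trial energies, i.e. no conclusion. Since the whole point of the conjecture is whether this coupling is supercritical, the proposal begs the question.

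There are also technical errors in the sketch that would have to be repaired even granting Step~1. The "exponentially small remainders" in the diagonal terms are not exponentially small: the triple ground states decay only polynomially (this is precisely the $N\geq 4$ novelty stressed in the introduction), so terms like $(\phi_{123}f_n,\,v_{14}\,\phi_{123}f_n)$ decay only at a power rate in $|y|$ and must be shown not to compete with the putative $|y|^{-2}$ exchange attraction; similarly the kinetic cost of a log--periodic profile on the annulus $R_n\le |y|\le R_n^2$ is of order $1/4\cdot\int |f_n|^2|y|^{-2}$ plus corrections, not $R_n^{-2}\ln R_n$ as written, which is exactly why the comparison with the off--diagonal term is a borderline $1/4$--versus--$c$ question and not a soft estimate. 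Finally, the $\psi_n$ built from two different Jacobi channels are not exactly orthogonal for $n\neq m$ (cross terms between $\phi_{123}f_n(y_4)$ and $\phi_{124}f_m(y_3)$ do not vanish), so the min--max step needs an almost--orthogonality argument. None of these points is fatal to the \emph{strategy}, which is indeed the natural Efimov--type variational attack and consistent with the mechanism the paper has in mind (the shared critical pair $\{1,2\}$ is what evades Theorem~\ref{1.08;9}), but as it stands the proposal is a restatement of why one believes the conjecture, not a proof of it.
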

The conditions in Conjecture~\ref{1.08;14} can always be met by appropriate tuning of the coupling constants, see Sec.~6 in \cite{1}. Note, that by Theorem~3 in \cite{1}
none of the subsystems has zero energy bound states and, therefore, the no--clustering theorem applies (see Theorem~3 in \cite{jpasubmit}).
Conjecture~\ref{1.08;14}, if true, would mean the existence of a ``true'' Efimov effect for
four particles (``true'' means that it does not trivially reduce to the case of three clusters).

Let us remark that using Theorem~2 from \cite{jpasubmit} Theorem~\ref{1.08;9} can be reformulated in the following way
\begin{theorem}\label{1.08;9aa}
 Suppose that $N \geq 4$ and $H$ in (\ref{1.08;3})--(\ref{1.08;4}) is such that $\sigma_{ess} (H) = [0, \infty)$. Suppose also that
none of the particle pairs is at critical coupling and none of the subsystems consisting of $N-2$ or less particles has a square--integrable zero energy ground state.
Then $H$ has a finite number of bound states with negative energies.
\end{theorem}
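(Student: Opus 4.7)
My plan is to deduce Theorem~\ref{1.08;9aa} from Theorem~\ref{1.08;9} via Theorem~2 of \cite{jpasubmit}, which identifies condition R1 with the conjunction ``no pair is at critical coupling and no subsystem of size $\leq N-2$ has a zero--energy $L^{2}$ ground state''. It therefore suffices to prove Theorem~\ref{1.08;9}. Following the strategy successfully used in Sec.~\ref{1.08;33}, I argue by contradiction: assume $\#(\evs(H)<0)=\infty$, introduce the Birman--Schwinger operator
\begin{equation*}
K(\epsilon):=(H_0+V^{+}+\epsilon)^{-1/2}\,V^{-}\,(H_0+V^{+}+\epsilon)^{-1/2},\qquad \epsilon\geq 0,
\end{equation*}
rerun Lemmas~\ref{31.07;1}--\ref{31.07;2} with $V^{\pm}$ replacing the inter--cluster sums $V^{\pm}_{\mathfrak{C}_1\mathfrak{C}_2}$ to obtain uniform boundedness and norm--continuity up to $\epsilon=0$, and apply the BS principle to conclude that $\#(\evs(K(0))\geq 1-\eta)\to\infty$ for every $\eta>0$. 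The target then reduces, exactly as in the proof of Theorem~\ref{30.07;1}, to an estimate of the form $K(0)\leq 1-2\delta+\mathcal{C}_{f}$ with $\delta>0$ and $\mathcal{C}_{f}$ finite--rank, after which Lemma~\ref{30.07;2} produces the contradiction.

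To obtain such a decomposition I would locate $\sup\sigma_{ess}(K(0))$ by an HVZ--type identification analogous to Theorem~\ref{31.07;16}. Each non--trivial cluster decomposition contributes an essential--spectrum piece of magnitude $\lambda^{-1}$, where $\lambda$ is the largest coupling for which the amplified Hamiltonian $H_0+V^{+}-\lambda V^{-}$, restricted to the corresponding inter--cluster channel, remains non--negative. Under R1 every partition that breaks a subsystem of size $\leq N-2$ satisfies $\lambda\geq 1+\omega$ (this is precisely the mechanism already exploited in Theorem~\ref{30.07;1}), so such partitions contribute essential spectrum strictly below~$1$. The only partitions whose contribution can touch~$1$ are the $1\mid(N-1)$ splittings for which the $N-1$--particle subsystem is itself at critical coupling. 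For $N=4$ this is the genuinely new phenomenon stressed in the Introduction: the critical three--particle subsystem may possess a square--integrable zero--energy ground state $\psi_0$, which prevents a direct use of the Weyl--type argument of Theorem~\ref{30.07;1}.

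The hard step, and the real content of the theorem, is to show that such critical $N-1$ subsystems still produce only finitely many eigenvalues of $K(0)$ exceeding~$1$. Here I would invoke the results of Sec.~\ref{1.08;1} on systems at critical coupling to extract quantitative polynomial decay of $\psi_0$ (the complementary upper bound to the $(1+|x|)^{-4}$ lower bound recalled in the Introduction), combine it with $v_{ik}\in L^{1}\cap L^{3}$ and the explicit $|R|^{-1}$--type singularity of the inter--cluster Green kernel from (\ref{2.10;1}), and thereby factor the contribution of the critical subsystem through a rank--one piece along $\psi_0$ plus a remainder that becomes Hilbert--Schmidt after one further BS iteration with respect to the ``removed'' particle. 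Iterating this peeling--off at most $N-2$ more times walks the construction down to the 2--particle level, where R1 finally supplies a uniform strict gap below~$1$. The principal obstacle I anticipate is exactly the quantitative matching between the polynomial fall--off of the subsystem zero--energy states and the singularity (\ref{2.10;1}) of the inter--cluster propagator: verifying that the matching is sharp enough to yield a Hilbert--Schmidt error is the content of the ``$N$--fold BS iteration'' alluded to in the Introduction, and is exactly what forces the detour through Sec.~\ref{1.08;1}.
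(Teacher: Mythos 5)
Your opening reduction---identifying the hypotheses of the statement with R1 via Theorem~2 of \cite{jpasubmit} and deducing the claim from Theorem~\ref{1.08;9}---is exactly the paper's own proof of this statement, which is presented there precisely as a reformulation of Theorem~\ref{1.08;9}. Had you stopped there and cited Theorem~\ref{1.08;9}, nothing more would be required. You instead take on the burden of proving Theorem~\ref{1.08;9}, and there your sketch has a genuine gap. When an $(N-1)$--particle subsystem is at critical coupling, the corresponding part of the Birman--Schwinger operator is, after a partial Fourier transform in the momentum $p_1$ conjugate to the removed particle, fibered with top fiber eigenvalue $\mu_j(\epsilon+p_1^2)\to 1$ as $p_1\to 0$, $\epsilon \to 0$; hence $\sup\sigma_{ess}\bigl(K(0)\bigr)=1$, and no bound of the form $K(0)\leq 1-2\delta+\mathcal{C}_f$ with $\mathcal{C}_f$ finite rank---nor ``rank--one along $\psi_0$ plus Hilbert--Schmidt,'' which is still compact---can hold, since compact corrections cannot remove essential spectrum sitting at $1$. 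This is exactly why the paper does not rerun the argument of Theorem~\ref{30.07;1} on $K(0)$: it subtracts the \emph{infinite--rank}, momentum--fibered projections $\hat P_j(\epsilon)$ built from $\varphi_j(\epsilon+p_1^2;x_r)$, applies the BS principle $N$ times, once per $(N-1)$--particle subsystem (Theorem~\ref{8.8.;8})---not a cascade ``down to the two--particle level,'' the smaller subsystems entering only through the uniform gap of R1---and then needs two separate nontrivial facts: the HVZ/IMS argument of Lemma~\ref{8.8;20} with the auxiliary bounded operators $\mathcal{Z}_j$, giving $\sigma_{ess}(\mathcal{G}_N)\subset(-\infty,1-q)$ (Lemma~\ref{13.8;4}), and membership of the correction $\mathcal{M}_N$ in the ideal $\mathfrak{J}$ (Lemma~\ref{13.8;1}). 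None of this structure is visible in your outline.

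The quantitative ingredient is also misidentified. What makes the scheme close is not configuration--space polynomial decay of $\psi_0$, but (i) the bound $1-\mu_j(\epsilon)\geq a_\mu\epsilon$ of Theorem~\ref{3.8;3} (via Lemma~3.1 of \cite{simonfuncan}), which holds precisely because the critical subsystem has a square--integrable zero--energy state rather than a resonance, and which makes the singular factors $\mathcal{R}_j$ only $|p_1|^{-1}$--singular, hence square--integrable in three dimensions---with the resonance behavior $1-\mu\sim\sqrt{\epsilon}$ the same construction fails, which is exactly the $N=3$ Efimov effect discussed at the end of Sec.~\ref{1.08;2}; and (ii) the uniform momentum--space estimates $\sup\|G_j^{\alpha}(s,\epsilon')\varphi_j(\epsilon)\|<\infty$ for $\alpha\in[1,\tfrac32)$ of Theorem~\ref{3.8;8}, which feed the Hilbert--Schmidt bounds of Lemmas~\ref{19.9;21}--\ref{19.9;23} and the approximation Lemma~\ref{24.9;41}. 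Without identifying these two mechanisms, the ``peeling--off'' you describe cannot be carried out, so the proposal does not yet constitute a proof of Theorem~\ref{1.08;9} and hence, on the route you chose, not of the present statement either.
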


Let us now discuss how the material is arranged in the next sections. The main
tool of our analysis is the BS operator, see Sec.~\ref{31.07;14}. The somewhat uncommon form of the BS operator, which we
adopt in this paper, has an advantage that the BS operator of an $N$--particle
system can be expressed through the BS operators of the subsystems. In Sec.~\ref{1.08;1} we
analyze the spectrum of the BS operator, which corresponds to the
$N$--particle system at critical coupling, whose subsystems are not at critical coupling. From Theorem~2 in \cite{jpasubmit} we know that the Hamiltonian of 
such system has eigenvalue equal to zero. Here of special
interest is the behavior of the  eigenfunction corresponding to the largest positive
eigenvalue of the BS operator. Sec.~\ref{1.08;2} is devoted to the proof of Theorem~\ref{1.08;9}.
From Sec.~\ref{1.08;33} we already know that in proving Theorem~\ref{1.08;9}
we need to focus on the case when some of the
 $N-1$--particle subsystems are at critical coupling (otherwise the proof is
 accomplished by applying Theorem~\ref{30.07;1}). For these subsystems we shall need the results of Sec.~\ref{1.08;1}.

We define the BS operator associated with (\ref{1.08;3}) as
\begin{equation}\label{1.08;35}
 K(\epsilon) := \bigl(H_0 + \epsilon \bigr)^{-1/2} V \bigl(H_0 + \epsilon
\bigr)^{-1/2} \quad \quad (\epsilon > 0) ,
\end{equation}
The main object of our interest is
\begin{equation}\label{7.8;55}
 N_\epsilon := \# (\textrm{evs}(H) < -\epsilon)= \# (\textrm{evs}(K(\epsilon)) >
1),
\end{equation}
where the last equation follows from the BS principle. (The applicability of Theorem~\ref{31.07;16} can always be checked in the same way it is done 
in the proof of Theorem~\ref{30.07;1}). 
\begin{lemma}\label{1.8;54}
One can define $K(0)$ so that $K(\epsilon)$ is
norm--continuous on $[0, \infty)$.
\end{lemma}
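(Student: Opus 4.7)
The plan is to mimic the two-step strategy of Lemmas~\ref{31.07;1}--\ref{31.07;2}, now applied to the full $N$-body potential $V$ rather than to the intercluster attraction $V_{\mathfrak{C}_1\mathfrak{C}_2}^-$. First I would split $V = V^+ - V^-$, with $V^\pm := \sum_{i<k}(v_{ik})_\pm \geq 0$, and write $K(\epsilon) = K^+(\epsilon) - K^-(\epsilon)$ where $K^\pm(\epsilon) := (H_0+\epsilon)^{-1/2} V^\pm (H_0+\epsilon)^{-1/2}$. It suffices to prove norm-continuity of each $K^\pm$ on $[0,\infty)$ separately. Factoring $K^\pm(\epsilon) = D^\pm(0,\epsilon)\,D^\pm(0,\epsilon)^*$ with
$$D^\pm(\gamma,\epsilon) := (H_0+\epsilon)^{-1/2-\gamma}(V^\pm)^{1/2},$$
the question reduces to showing that $D^\pm(0,\epsilon)$ is norm-continuous on $(0,\infty)$ and has a norm limit as $\epsilon\to+0$.

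The analogue of Lemma~\ref{31.07;1} gives $\Lambda_\gamma^\pm := \sup_{\epsilon>0}\|D^\pm(\gamma,\epsilon)\|<\infty$ for some small $\gamma>0$. For each pair $(i,k)$ I introduce Jacobi coordinates whose first vector is $y_{ik}=r_i-r_k$, so that $H_0 = -(2\mu_{ik})^{-1}\Delta_{y_{ik}} + H_0''$ with $H_0''\geq 0$ commuting with $\Delta_{y_{ik}}$; the spectral theorem then gives
$$(H_0+\epsilon)^{-1-2\gamma} \leq (2\mu_{ik})^{1+2\gamma}\bigl(-\Delta_{y_{ik}}+2\mu_{ik}\epsilon\bigr)^{-1-2\gamma}.$$
Using the explicit kernel bound~(\ref{2.10;1}) and the interpolation $v_{ik}\in L^1\cap L^3 \subset L^{3/2-\alpha}$ for any $\alpha\in(0,1/2)$, the Rollnik-type integral~(\ref{30.07;4}) is finite for all sufficiently small $\gamma$, and the calculation~(\ref{31.07;6})--(\ref{31.07;7}) goes through verbatim (with $V^\pm$ in place of $V_{\mathfrak{C}_1\mathfrak{C}_2}^-$), yielding the uniform bound.

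With $\Lambda_{\gamma_0}^\pm<\infty$ in hand I would replay Lemma~\ref{31.07;2}: the telescoping identity~(\ref{xwnonu}) with $H_0$ in place of $H_w$, together with the elementary spectral estimate $\|[(H_0+\epsilon_2)^{1/2}-(H_0+\epsilon_1)^{1/2}]f\|\leq|\epsilon_1-\epsilon_2|^{1/2}\|f\|$, yields
$$\bigl\|D^\pm(0,\epsilon_2)-D^\pm(0,\epsilon_1)\bigr\| \leq \Lambda_{\gamma_0}^\pm\,|\epsilon_1-\epsilon_2|^{1/2}\bigl(\max\{\epsilon_1,\epsilon_2\}\bigr)^{-1/2+\gamma_0}.$$
This Hölder bound forces $D^\pm(0,\epsilon)$ to be norm-Cauchy as $\epsilon\to+0$ and norm-continuous on $(0,\infty)$. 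Setting $D^\pm(0,0):=\lim_{\epsilon\to+0}D^\pm(0,\epsilon)$ and $K(0):=D^+(0,0)D^+(0,0)^*-D^-(0,0)D^-(0,0)^*$, the norm-continuity of $K^\pm(\epsilon)=D^\pm(0,\epsilon)D^\pm(0,\epsilon)^*$ on $[0,\infty)$ follows, since products of norm-continuous bounded operator-valued functions are norm-continuous.

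The main obstacle is purely technical: verifying that the Sec.~\ref{25.9;15} integrability hypothesis $v_{ik}\in L^1\cap L^3$ is sufficient for the Rollnik-type bound~(\ref{30.07;4}). Once one notes the interpolation $L^1\cap L^3 \subset L^{3/2-\alpha}$ for some $\alpha\in(0,1/2)$, the Hardy--Littlewood--Sobolev inequality closes the gap, and the rest of the argument is an essentially mechanical replay of Sec.~\ref{1.08;33}.
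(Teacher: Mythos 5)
Your argument is correct and essentially the paper's own: the paper likewise reduces Lemma~\ref{1.8;54} to the Lemma~\ref{31.07;1}-type uniform bound and the H\"older estimate (\ref{xwdcont}) with $H_0$ in place of $H_w$ (taking $\gamma_0=1/10$, which is exactly where $v_{ik}\in L^1\cap L^3$ enters, as your interpolation remark indicates). The only cosmetic difference is the decomposition: the paper factors pairwise, writing $K(\epsilon)=\sum_{1\leq i<k\leq N}d_{ik}(\epsilon)\,\sign(v_{ik})\,d_{ik}^*(\epsilon)$ with $d_{ik}(\epsilon)=(H_0+\epsilon)^{-1/2}|v_{ik}|^{1/2}$, rather than through your global split $V=V^+-V^-$ and the factors $(V^\pm)^{1/2}$.
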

\begin{proof}
We can write
\begin{equation}\label{1.08;32}
 K(\epsilon) = \sum_{1 \leq i < k\leq N} d_{ik} (\epsilon) \sign (v_{ik})
d^*_{ik} (\epsilon) ,
\end{equation}
where
\begin{equation}
 d_{ik} (\epsilon) := \bigl(H_0 + \epsilon\bigr)^{-1/2} |v_{ik}|^{1/2} .
\end{equation}
and $\sign (v_{ik})$ is the operator of multiplication by the sign of $v_{ik}$.
Repeating the
arguments from the proof of Lemma~\ref{31.07;2} we prove the following inequality analogous to (\ref{xwdcont}) (the restrictions on the pair--potentials 
allow us to set $\gamma_0 = 1/10$)
\begin{equation}\label{dik}
 \|d_{ik} (\epsilon_1) - d_{ik} (\epsilon_2)\|  \leq \Lambda_{\frac 1{10}} |\epsilon_2 -
\epsilon_1|^{1/2} \epsilon_2^{-2/5},
\end{equation}
where $\epsilon_2 \geq \epsilon_1 >0$. From (\ref{dik}) it
follows that $d_{ik} (\epsilon)$, and, hence, $K(\epsilon)$ is norm continuous on $[0, \infty)$. 
\end{proof}

\section{N--Particle System at Critical Coupling}\label{1.08;1}

Here we shall analyze the BS operator (\ref{1.08;35}) in the case when $H$ defined in (\ref{1.08;3})--(\ref{1.08;4}) 
is at critical coupling and has a bound state at zero energy.
Let us define $\mu : \mathbb{R}_+  \to \mathbb{R}$ 
\begin{equation}\label{1.08;41}
 \mu(\epsilon):= \sup \sigma\bigl( K (\epsilon)\bigr) .
\end{equation}
We shall make the following assumption
\begin{list}{R\arabic{foo}}
{\usecounter{foo}
  %    set rightmargin equal to leftmargin
    \setlength{\rightmargin}{\leftmargin}}\setcounter{foo}{1}
\item $H \geq 0$. There exists $\omega >0$ such that $H_0 + V_{\{j\}}^+
- (1+\omega)V_{\{j\}}^- \geq 0$ for $j = 1, \ldots,  N $. Besides, $H_0 + \delta V \ngeqq 0$ for all $\delta >0$. 
\end{list}
By Theorem~2 in \cite{jpasubmit} $H$ satisfying R2 has zero as an eigenvalue.
\begin{theorem}\label{3.8;3}
Suppose $H$ defined in (\ref{1.08;3})--(\ref{1.08;4}) satisfies R2.
Then there is $\beps > 0$ such that $\mu (\epsilon)$ defined in (\ref{1.08;41}) is an eigenvalue of $K(\epsilon)$ for $\epsilon \in [0, \beps]$.
As eigenvalue $\mu(\epsilon)$ is isolated and non--degenerate, as a function it is continuous and monotone decreasing on $[0,
\beps]$. Besides, $\mu(0) = 1$, $\mu(\beps) \geq (1+\omega/2)^{-1}$ and there is $a_\mu >0$ such that
\begin{equation}\label{mubound}
 1 - \mu(\epsilon) \geq a_\mu \epsilon \quad \quad \textnormal{for
$\epsilon \in [0, \beps]$}. 
\end{equation}
\end{theorem}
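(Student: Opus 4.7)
The plan is to combine three ingredients: a uniform upper bound $\sup\sigma_{ess}(K(\epsilon))\leq(1+\omega)^{-1}<1$ obtained from R2 via an HVZ--type argument; norm continuity of $K(\epsilon)$ from Lemma~\ref{1.8;54}; and a perturbative analysis at $\epsilon=0$ anchored on the zero--energy eigenfunction $\phi_0$ of $H$ supplied by Theorem~2 in \cite{jpasubmit}.

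For the essential spectrum, I would exploit the ``uncommon form'' advantage advertised in Sec.~\ref{25.9;15}: $K(\epsilon)$ of the $N$--body system admits a representation through BS operators of $(N-1)$--particle subsystems. R2 says each such subsystem satisfies $H_0+V_{\{j\}}^+-(1+\omega)V_{\{j\}}^-\geq 0$, which translates by the BS principle to the corresponding subsystem BS operator having spectrum bounded above by $(1+\omega)^{-1}$; the HVZ argument together with Theorem~\ref{31.07;16} and the Weidmann extension (Theorem~9.5 in \cite{weidmann}, already used in the proof of Theorem~\ref{30.07;1}) then gives $\sup\sigma_{ess}(K(\epsilon))\leq(1+\omega)^{-1}$ uniformly on $[0,\infty)$.

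To identify $\mu(0)=1$, I would use R2 in two directions. On one hand, $H\geq 0$ combined with the BS correspondence~(\ref{7.8;55}) forces $\mu(\epsilon)\leq 1$ for $\epsilon>0$, and norm continuity of $K(\epsilon)$ transfers this to $\mu(0)\leq 1$. On the other hand, Theorem~2 in \cite{jpasubmit} produces a zero--energy eigenfunction $\phi_0$ of $H$ (non--degenerate by the standard Perron--Frobenius/positivity--preserving--semigroup argument), and the BS transform turns $\phi_0$ into an eigenvector of $K(0)$ at the eigenvalue $1$, so $\mu(0)\geq 1$. Since $1$ strictly exceeds $(1+\omega)^{-1}\geq\sup\sigma_{ess}(K(0))$, the eigenvalue is isolated, and simplicity is inherited from $\phi_0$. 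Standard Kato perturbation theory for isolated simple eigenvalues of a norm--continuous family then preserves both properties on some interval $[0,\beps]$ and delivers $\mu(\beps)\geq(1+\omega/2)^{-1}$ for $\beps$ sufficiently small. Strict monotonicity of $\mu$ follows from its Rayleigh quotient representation: the denominator $(f,(H_0+\epsilon)f)$ in the substituted variational form is strictly increasing in $\epsilon$ on the cone where the numerator contributes positively to the supremum.

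The main obstacle will be the linear bound $1-\mu(\epsilon)\geq a_\mu\epsilon$. The formal Rayleigh--Schr\"odinger first--order computation at the simple isolated eigenvalue $\mu(0)=1$, using $V\phi_0=-H_0\phi_0$ to simplify the relevant matrix element, yields an explicit negative rate proportional to $\|\phi_0\|^2/(\phi_0,H_0\phi_0)$, and the theorem's constant $a_\mu$ is (slightly less than) this quantity. The obstruction to direct differentiation is that $H_0^{-3/2}$, which appears in $dK/d\epsilon$, is unbounded on $L^2(\mathbb{R}^{3N-3})$ because of the three--dimensional infrared behavior, so $K(\epsilon)$ is not norm--differentiable at the origin. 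I would circumvent this by a two--sided variational estimate: the trial vector built from $\phi_0$ plugged into the Rayleigh quotient yields the rate on one side, while on the other side I would decompose an arbitrary trial vector along and transverse to the one--dimensional $\mu(0)=1$ eigenspace of $K(0)$ and use the spectral gap $1-(1+\omega)^{-1}>0$ to control the transverse contribution uniformly in $\epsilon$. Combined, the two estimates give the linear bound on $[0,\beps]$, with $a_\mu$ possibly halved to absorb the higher--order error.
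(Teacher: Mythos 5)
Your proposal breaks down at the non--degeneracy step, which is precisely the delicate point of this theorem. You argue that the zero--energy eigenfunction $\phi_0$ of $H$ is simple by a positivity--preserving semigroup argument (true: $e^{-tH}$ is positivity improving regardless of the sign of $V$), and that "simplicity is inherited" by the eigenvalue $1$ of $K(0)$. That inheritance is unjustified: an eigenvector $\varphi$ of $K(0)$ with $K(0)\varphi=\varphi$ corresponds formally to the zero--energy solution $H_0^{-1/2}\varphi$, which need not lie in $L^2$ --- this is exactly what a virtual level/resonance is --- so the multiplicity of $\mu(0)=1$ for $K(0)$ is not controlled by the multiplicity of the eigenvalue $0$ of $H$. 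Nor can you apply Perron--Frobenius directly to $K(0)$: since the $v_{ik}$ are allowed to change sign, $K(0)$ is not positivity preserving (the paper itself remarks that non--degeneracy "would easily follow from $V\leq 0$"). The paper's proof supplies the missing idea: assume degeneracy, use norm--continuity to get two eigenvalues of $K(\epsilon_n)$ in $(\mu_n,1)$, transport them by the BS principle to two eigenvalues of $H_0+\mu_n^{-1}V$ below $-\epsilon_n$, then use the operator inequality $H_0+\mu_n^{-1}(V^+-V^-)\geq H_0+V^+-\mu_n^{-1}V^-$ and the min--max principle to pass to the auxiliary operator $\tilde K(\epsilon)=[H_0+V^++\epsilon]^{-1/2}V^-[H_0+V^++\epsilon]^{-1/2}$, which \emph{is} positivity preserving; a doubly degenerate top eigenvalue of $\tilde K(0)$ then contradicts Theorem~XIII.43 in \cite{reed}. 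Without this (or an equivalent) device your claim of simplicity of $\mu(0)$, and hence the Kato--perturbation continuation on $[0,\beps]$ as you set it up, is not established.

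A second, lesser gap concerns (\ref{mubound}). Your diagonal estimate is sound in outline: with $\varphi_0=H_0^{1/2}\phi_0$ and $H\geq 0$ one gets $(\varphi_0,K(\epsilon)\varphi_0)\leq\|\varphi_0\|^2-\epsilon\,(\phi_0,H_0(H_0+\epsilon)^{-1}\phi_0)$, a genuine linear decrease. But your transverse control "uniformly in $\epsilon$" cannot rest on the spectral gap plus norm continuity of $K(\epsilon)$: the available continuity modulus (cf.\ (\ref{dik})) is far weaker than $O(\epsilon)$, and the cross terms in the decomposition along and transverse to $\varphi_0$ enter the maximization quadratically, so you need a quantitative vector estimate of the type $\|[K(\epsilon)-K(0)]\varphi_0\|=O(\sqrt{\epsilon})$; this does not follow from anything you have stated and requires infrared information tied to $\phi_0\in L^2$ (and to the uniform boundedness of $(H_0+\epsilon)^{-1/2}V$). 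The paper sidesteps all of this by quoting Lemma~3.1 of \cite{simonfuncan}, which encapsulates exactly this threshold estimate. Your remaining steps (upper bound on $\sigma_{ess}(K(\epsilon))$ via R2, HVZ, Theorem~\ref{31.07;16} and Theorem~9.5 of \cite{weidmann}; $\mu(0)\leq 1$ from $H\geq0$; $\mu(0)\geq 1$ via the eigenvector $H_0^{1/2}\phi_0$, which is a legitimate alternative to the paper's critical--coupling sequence argument; monotonicity via the Rayleigh quotient) are fine.
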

\begin{proof}
By R2 and the HVZ theorem $\sigma_{ess} (H_0 + (1+\omega) V) = [0, \infty)$. By Theorem~\ref{31.07;16} and Theorem~9.5 in \cite{weidmann}
\begin{equation}
 \sigma_{ess} (K(\epsilon)) \cap ((1+\omega)^{-1} , \infty) = \emptyset \quad
\quad \textnormal{for  $\epsilon \in [0, \infty)$}.
\end{equation}
On one hand, from the BS principle and the condition $H \geq 0$ it follows that
\begin{equation}
 \sigma \bigl(K(\epsilon)\bigr) \cap (1 , \infty) = \emptyset \quad
\quad \textnormal{for  $\epsilon >0$}.\label{1.8;51}
\end{equation}
On the other hand, since $H$ is at critical coupling, there must exist the sequences $\omega_n \to +0$, $\epsilon_n \to +0$, where $\omega_n < \omega$, such that
$\inf \sigma \bigl(H_0 + (1+\omega_n)V\bigr) = - \epsilon_n$. Hence, by the BS principle
\begin{equation}
 \sigma \bigl( (1+\omega_n)K(\epsilon_n)\bigr) \cap (1, \infty) \neq \emptyset. \label{1.8;52}
\end{equation}
Comparing (\ref{1.8;51}) and (\ref{1.8;52}) and using the continuity of $K(\epsilon)$ (Lemma~\ref{1.8;54}) we conclude
that $\mu(0) = 1$ is an eigenvalue of $K(0)$ lying aside from the essential
spectrum. (The non--degeneracy of this eigenvalue would easily follow from $V \leq 0$, for in this case $K(\epsilon)$
is a positivity
preserving operator).
Let us assume by contradiction that the eigenvalue $\mu(0)$ is degenerate. Then by
continuity for $\epsilon_n \to +0$ there must exist a sequence $\mu_n \nearrow
1$ such that
$K (\epsilon_n)$ has at least two eigenvalues in the interval $(\mu_n, 1)$.
We can choose the sequences $\epsilon_n, \mu_n$ so that $(1+\omega)^{-1} < \mu_n$. Thereby we guarantee that
\begin{equation}\label{23.03:01}
 \sigma_{ess} \left(H_0 + V^+ - \mu_n^{-1} V^- \right) = [0, \infty),
\end{equation}
where by definition
\begin{equation}
 V^{\pm} := \sum_{i < j} \bigl( v_{ij} \bigr)_\pm  .
\end{equation}
Since $\mu_n^{-1} K(\epsilon_n)$ has at least 2 eigenvalues in the interval $(1,
\infty)$, by the BS principle (Theorem~\ref{31.07;16}) the operator
\begin{equation}
 H_0 + \mu^{-1}_n \bigl(V^+ - V^-   \bigr)
\end{equation}
has at least 2 eigenvalues in the interval $(-\infty, -\epsilon_n)$. From the
operator inequality
\begin{equation}\label{31.01/1}
 H_0 + \mu^{-1}_n \bigl(V^+ - V^-   \bigr) \geq  H_0 + V^+ - \mu^{-1}_n V^- ,
\end{equation}
maxmin principle and (\ref{23.03:01}) it follows that the operator on the rhs of
(\ref{31.01/1}) has at least 2 eigenvalues in the interval $(-\infty,
-\epsilon_n)$.
The BS operator associated with  the operator on the rhs of (\ref{31.01/1}) is $
\mu_n^{-1} \tilde K(\epsilon)$, where
\begin{equation}
\tilde K (\epsilon) =\left[H_0 + V^+ +
\epsilon\right]^{-1/2}\bigl(V^-\bigr)\left[H_0 + V^+ + \epsilon\right]^{-1/2}.
\end{equation}
By the BS principle
\begin{equation}\label{23.03:02}
 \# (\textrm{evs}(\tilde K(\epsilon_n)) > \mu_n) \geq 2.
\end{equation}
The proof that $\tilde K(\epsilon)$ is norm--continuous on $[0, \infty)$ repeats that of Lemma~\ref{1.8;54}. Using the inequality
\begin{equation}
\sigma_{ess} (H_0 + V^+ - (1+\omega) V^- ) = [0, \infty),
\end{equation}
Theorem~\ref{31.07;16} and Theorem~9.5 in \cite{weidmann}  we conclude that
\begin{equation}
  \sigma_{ess} (\tilde K(\epsilon)) \cap ((1+\omega)^{-1} , \infty) = \emptyset
\quad \quad (\textnormal{for  $\epsilon \in [0, \infty]$} ).
\end{equation}

Repeating the arguments in the beginning of the proof we infer that $\|\tilde K
(0) \| = 1$ is an eigenvalue of $\tilde K(0)$. By norm--continuity and
(\ref{23.03:02})
we know that this eigenvalue must be at least two--fold degenerate. However,
$\tilde K(\epsilon)$ for $\epsilon >0$ is a product of positivity preserving operators
(c. f. formula (\ref{3.8;1})), and $\tilde K(0)$ is also positivity
preserving being the norm limit of positivity preserving operators. By
Theorem~XIII.43 in vol.~4 \cite{reed} $\|\tilde K(0)\|$ must be a non--degenerate eigenvalue, a
contradiction. The existence of $\beps > 0$ such that $\mu(\epsilon)$ is continuous on $[0, \beps]$ and $\mu(\beps) \geq (1+\omega/2)^{-1}$ is a trivial consequence
of the norm--continuity of $K(\epsilon)$. Thus there exists $\varphi(\epsilon) : \mathbb{R}_+ \to L^2 (\mathbb{R}^{3N-3})$ such that
\begin{equation}\label{3.8;5}
 K(\epsilon) \varphi (\epsilon) = \mu (\epsilon) \varphi(\epsilon) \quad \quad (\| \varphi (\epsilon) \| = 1, \epsilon \in [0, \beps]) 
\end{equation}
and by definition $\varphi(\epsilon) \equiv 0$ for $\epsilon \in (\beps, \infty)$. 
Let us define
\begin{equation}
 \psi (\epsilon) := \bigl( H_0 + \epsilon\bigr)^{-1/2} \varphi(\epsilon) \quad \quad (\epsilon >0). \label{11.01.12/1}
\end{equation}
Due to (\ref{3.8;5}) $\varphi(\epsilon) \in \Ran \bigl( (H_0 + \epsilon)^{-1/2}\bigr) $, hence, $\psi(\epsilon)\in D(H_0)$. Besides, $\psi(\epsilon)$
satisfies the Schr\"odinger
equation
\begin{equation}
 \bigl( H_0 + \epsilon\bigr)  \psi (\epsilon)  + \frac 1{\mu(\epsilon)}V  \psi
(\epsilon) = 0 \quad \quad (\epsilon \in (0, \beps]). \label{11.01.12/1.3}
\end{equation}
Monotonicity of $\mu(\epsilon)$ follows from the fact that $-\epsilon = \inf \sigma \left( H_0 + \mu^{-1}(\epsilon) V\right) $ is monotone 
decreasing with $\mu^{-1}$ \cite{reed}. 
 Inequality (\ref{mubound}) is a direct
consequence of Lemma~3.1 in \cite{simonfuncan}.
\end{proof}
\begin{remark}
 With additional effort instead of (\ref{mubound}) one can possibly prove
that $\mu(\epsilon) = 1 -(\psi (0) , V  \psi (0))^{-1} \epsilon +
\hbox{o}(\epsilon)$, that is $\mu(\epsilon)$ has a derivative at $\epsilon =
0$. Here $\psi(0) = \lim_{\epsilon \to 0} \psi(\epsilon)$, where the limit is in norm (its existence is proved in Corollary~\ref{18.9;4} below). 
\end{remark}

Let us introduce the projection operator
\begin{equation}
P(\epsilon):= \bigl( \varphi(\epsilon),\cdot\bigr)\varphi(\epsilon) , \label{7.8;31}
\end{equation}
where $\varphi(\epsilon)$ is defined in (\ref{3.8;5}). So far we have defined $\mu(\epsilon)$ by equation (\ref{1.08;41}). Now we redefine $\mu(\epsilon)$ setting 
\begin{equation}\label{7.01;41}
 \mu(\epsilon):= \begin{cases}
\sup \sigma\bigl( K (\epsilon)\bigr) & \text{if $\epsilon \in [0, \beps]$},\\
\mu(\beps) & \text{if $\epsilon \in (\beps , \infty)$}.
\end{cases} 
\end{equation}
By Lemma~\ref{3.8;3}
\begin{gather}
 K(\epsilon) = \mu(\epsilon)P(\epsilon) + K(\epsilon)\bigl(1-P(\epsilon)\bigr) ,\\
\|K(\epsilon)(1-P(\epsilon)) \| \leq \eta , \label{19.9;11}
\end{gather}
where $\eta \in (0, 1)$ is some constant.
\begin{lemma}\label{7.8;53}
 For $\epsilon >0 $ the following formula holds
\begin{equation}\label{wurzel}
 \Bigl[1 - \mu(\epsilon) P(\epsilon)\Bigr]^{-1/2} = 1 + \left(\frac
1{\sqrt{1-\mu(\epsilon)}} - 1\right)P(\epsilon).
\end{equation}
\end{lemma}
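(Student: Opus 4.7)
The plan is to exploit the fact that $P(\epsilon)$ is a rank--one orthogonal projection, so that $P(\epsilon)^2 = P(\epsilon) = P(\epsilon)^*$, and to invoke functional calculus for the self--adjoint operator $1 - \mu(\epsilon) P(\epsilon)$, whose spectrum consists of just two points. The only nontrivial input needed beforehand is that $1 - \mu(\epsilon) > 0$ for $\epsilon > 0$, which is precisely what inequality (\ref{mubound}) in Theorem~\ref{3.8;3} gives us; this ensures that $1 - \mu(\epsilon)P(\epsilon) \geq (1-\mu(\epsilon)) > 0$ as a self--adjoint operator, and its (negative) powers are therefore well defined bounded operators.

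Having fixed $\epsilon > 0$, I would observe that with respect to the orthogonal decomposition $L^2(\mathbb{R}^{3N-3}) = \mathrm{Ran}\, P(\epsilon) \oplus \Ker P(\epsilon)$, the operator $1 - \mu(\epsilon) P(\epsilon)$ acts as multiplication by $1 - \mu(\epsilon)$ on $\mathrm{Ran}\, P(\epsilon)$ and as the identity on $\Ker P(\epsilon)$. Applying the function $x \mapsto x^{-1/2}$ via the spectral theorem therefore gives
\begin{equation}
\bigl[1 - \mu(\epsilon) P(\epsilon)\bigr]^{-1/2} = \frac{1}{\sqrt{1-\mu(\epsilon)}} P(\epsilon) + \bigl(1 - P(\epsilon)\bigr),
\end{equation}
which is exactly (\ref{wurzel}) after regrouping terms.

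If one prefers to avoid the spectral theorem altogether, the same identity can be verified by a direct algebraic check: setting $c = (1-\mu(\epsilon))^{-1/2}$ and squaring the claimed right hand side, one computes $[1 + (c-1)P]^2 = 1 + \bigl(2(c-1) + (c-1)^2\bigr)P = 1 + (c^2 - 1)P$, using $P^2 = P$. Since the geometric series gives $(1-\mu P)^{-1} = 1 + \mu(1-\mu)^{-1}P$, and $c^2 - 1 = \mu(1-\mu)^{-1}$, the square of the right hand side equals $(1-\mu(\epsilon) P(\epsilon))^{-1}$, and positivity of both sides yields the equality of the square roots.

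There is no real obstacle here; the only subtlety is making sure that the scalar coefficient $(1-\mu(\epsilon))^{-1/2}$ is finite on the range of $\epsilon$ under consideration, and this has already been arranged in Theorem~\ref{3.8;3}. The formula will be useful in the sequel because it isolates cleanly the singular behavior of $[1 - \mu(\epsilon) P(\epsilon)]^{-1/2}$ as $\epsilon \to +0$ into a single rank--one piece.
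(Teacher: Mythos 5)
Your proof is correct, and your second argument (Neumann series for $[1-\mu(\epsilon)P(\epsilon)]^{-1}$, then squaring the claimed right-hand side and invoking positivity) is exactly the paper's own proof. The spectral-decomposition version in your first paragraph is just an equivalent rephrasing, relying on the same facts that $P(\epsilon)$ is an orthogonal projection and $1-\mu(\epsilon)>0$.
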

\begin{proof}
 Using von Neumann series we get
\begin{equation}\label{wurzel2}
 \Bigl[1 - \mu(\epsilon) P(\epsilon)\Bigr]^{-1} = 1 +
\frac{\mu(\epsilon)}{1-\mu(\epsilon)}P(\epsilon) .
\end{equation}
The operator on the rhs of (\ref{wurzel}) is positive and by the direct check
one finds that its square is equal to the operator on the rhs of
(\ref{wurzel2}).
\end{proof}

In the rest of this section we shall derive various estimates on $\varphi(\epsilon), \psi(\epsilon)$, the key result in this respect being
Theorem~\ref{3.8;8}.
\begin{figure}
\begin{center}
\includegraphics[width=0.7\textwidth]{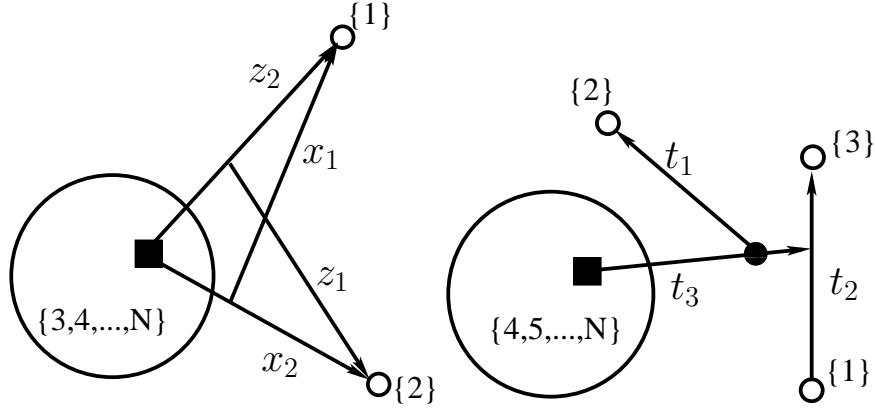}
\caption{Illustration to the choice of orthogonal Jacobi coordinates. The filled square symbolizes the center of mass
of the particles $\{3,4, \ldots, N\}$. Coordinates' scales
are set so that $H_0 = -\sum_i \Delta_{x_i} = -\sum_i \Delta_{z_i}= -\sum_i \Delta_{t_i}$ holds.}
\label{Fig:1}
\end{center}
\end{figure}

We shall use $N$ sets of Jacobi coordinates each associated with the particle number $1, \ldots, N$.
Let us construct the Jacobi coordinates $x_1, \ldots, x_{N-1} \in
\mathbb{R}^3$ associated with the first particle. Here $x_1$ points from the center of mass of particles $\{2, 3,
\ldots, N\}$ in the
direction of particle $1$, $x_i$ points from the center of mass of particles $\{i+1, \ldots, N\}$ in the
direction of particle $i$. The coordinates' scales are chosen so that $H_0 = -
\sum_i \Delta_{x_i}$ holds. The coordinates $z_1, z_2, \ldots, z_{N-1} \in
\mathbb{R}^3$ are associated with particle $2$. Here
$z_1$ points from the center of mass of particles $\{1, 3,
\ldots, N\}$ in the
direction of particle $2$, $z_2$ points from the center of mass of particles $\{ 3,
\ldots, N\}$ in the direction of particle $1$, $z_i$ for $i \geq 3$ points from the center of mass of particles $\{i+1, \ldots, N\}$ in the
direction of particle $i$. The coordinates' scales are chosen so that $H_0 = -
\sum_i \Delta_{z_i}$ holds. This choice of coordinates is illustrated in Fig.~\ref{Fig:1}.
The two sets of coordinates are connected through
\begin{gather}
 z_1 = a_{11} x_1 + a_{12} x_2 , \label{24.9;28}\\
 z_2 = a_{21} x_1 + a_{22} x_2 , \label{24.9;29}\\
z_i = x_i \quad \quad (i \geq 3) ,
\end{gather}
where the $2\times 2$ real matrix $a_{ik}$ is orthogonal. In fact,
\begin{gather}
a_{11} =  -\left[ \frac{m_1 m_2}{(M - m_1)(M-m_2)} \right]^{\frac 12} \label{21.9;1}\\
a_{12} = \left[ \frac{M(M- m_1 -m_2 )}{(M - m_1)(M-m_2)} \right]^{\frac 12}, \label{21.9;2}
\end{gather}
where $M := \sum_{i=1}^N m_i$ and $a_{22} = -a_{11}$ and $a_{12} = a_{21}$.

For each set of coordinates $j = 1, \ldots , N$ we introduce the full and the partial Fourier transforms denoted as $F_j$ and $\mathcal{F}_j$ respectively.
In particular,
\begin{gather}
 \hat f (p_1 , \ldots, p_{N-1}) = (F_1 f) = \frac 1{(2 \pi)^{\frac {3N-3}2}} \int e^{-i \sum_{k=1}^{N-1} p_k \cdot x_k }f(x_1, \ldots, x_{N-1})d^3 x_1 \ldots d^3 x_{N-1} , \label{6.8;51}\\
\hat f (p_1 , x_2 , \ldots, x_{N-1}) = (\mathcal{F}_1 f) = \frac 1{(2 \pi)^{3/2}} \int e^{-ip_1 \cdot x_1 }f(x_1, \ldots, x_{N-1})d^3 x_1 , \label{7.8;51}\\
\hat f (q_1,  \ldots, q_{N-1}) = (F_2 f) = \frac 1{(2 \pi)^{\frac {3N-3}2}} \int e^{-i \sum_{k=1}^{N-1} q_k \cdot z_k }f(z_1, \ldots, z_{N-1})d^3 z_1 \ldots d^3 z_{N-1} , \\
\hat f (q_1 , z_2 , \ldots, z_{N-1}) = (\mathcal{F}_2 f) = \frac 1{(2 \pi)^{3/2}} \int e^{-iq_1 \cdot z_1 }f(z_1, \ldots, z_{N-1})d^3 z_1 , \label{20.9;27}
\end{gather}
For shorter notation let us define the following tuples $x_r := (x_2, x_3, \ldots, x_{N-1})\in \mathbb{R}^{3N-6}$, $x_c := (x_3, \ldots, x_{N-1}) \in \mathbb{R}^{3N-9}$ and
$p_r := (p_2, p_3, \ldots, p_{N-1})\in \mathbb{R}^{3N-6}$, $p_c := (p_3, \ldots, p_{N-1})\in \mathbb{R}^{3N-9}$. Similarly, we define $z_r, q_r \in \mathbb{R}^{3N-6}$ and
$z_c, q_c \in \mathbb{R}^{3N-9}$.

For $j=1,2,\ldots, N$, $s \in \mathbb{R}^3$ and $\epsilon > 0$ let us define the positive operator
$G_j (s, \epsilon)$, where $G_1 (s, \epsilon)$ acts on $f \in L^2(\mathbb{R}^{3N-3})$ as follows
\begin{equation}\label{6.8;1}
G_1 (s, \epsilon) f = \mathcal{F}_1^{-1} \left[(p_1 + s)^2 + \epsilon
\right]^{-1/2} (\mathcal{F}_1 f).
\end{equation}
The operators $G_j (s, \epsilon)$ for $j \geq 2$ are constructed analogously using appropriate coordinates.
\begin{theorem}\label{3.8;8}
 Suppose the conditions of Theorem~\ref{3.8;3} are fulfilled. Then for all $\alpha \in [1,
\frac 32 )$ and $\varphi(\epsilon)$ defined in (\ref{3.8;5}) the following bound holds
\begin{equation}\label{6.8;2}
 \sup_{\epsilon >0} \sup_{\epsilon' >0} \sup_{s \in \mathbb{R}^3}  \|  G_j^\alpha (s, \epsilon')
\varphi(\epsilon) \| < \infty \quad (j = 1, 2, \ldots , N).
\end{equation}
\end{theorem}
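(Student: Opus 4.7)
The plan is to combine a reduction via positivity--preservation with a kernel estimate of Hardy--Littlewood--Sobolev type, in the spirit of Lemma~\ref{31.07;1}.

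First, I remove the shift $s$. Since the integral kernel of $G_j^\alpha(s,\epsilon')$ in the Jacobi variable $x_1^{(j)}$ has the form $e^{-is\cdot(x-x')}K_0(x-x')$ with $K_0\geq 0$ the kernel of $G_j^\alpha(0,\epsilon')$, the pointwise bound $|G_j^\alpha(s,\epsilon')\varphi|\leq G_j^\alpha(0,\epsilon')|\varphi|$ holds, and it suffices to bound $\|G_j^\alpha(0,\epsilon')|\varphi(\epsilon)|\|$ uniformly in $\epsilon,\epsilon'>0$. Next, I invoke the eigenvalue equation: by Theorem~\ref{3.8;3}, $\varphi\equiv 0$ on $(\beps,\infty)$ and $\mu(\epsilon)\geq(1+\omega/2)^{-1}$ on $[0,\beps]$; equation (\ref{11.01.12/1.3}) then gives $\varphi=-\mu^{-1}(H_0+\epsilon)^{-1/2}V\psi$, and by positivity preservation of $(H_0+\epsilon)^{-1/2}$ (Lie--Trotter, cf.~(\ref{3.8;1})) the bound $|\varphi|\leq\mu^{-1}\sum_{i<k}(H_0+\epsilon)^{-1/2}|v_{ik}|^{1/2}\cdot|v_{ik}|^{1/2}|\psi|$ follows. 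By the Cauchy--Schwarz inequality, the proof reduces to a uniform bound on the two quantities
$$\|G_j^\alpha(0,\epsilon')(H_0+\epsilon)^{-1/2}|v_{ik}|^{1/2}\| \quad \text{and} \quad \||v_{ik}|^{1/2}\psi(\epsilon)\|.$$

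The vector factor is handled as follows. Pairing (\ref{11.01.12/1.3}) with $\psi$ yields $(\psi,V\psi)=-\mu\|\varphi\|^2=-\mu$; writing $(\psi,|V|\psi)=2(\psi,V^+\psi)+\mu$ and estimating $(\psi,V^\pm\psi)=\|(V^\pm)^{1/2}(H_0+\epsilon)^{-1/2}\varphi\|^2\leq\|(V^\pm)^{1/2}(H_0+\epsilon)^{-1/2}\|^2$, I invoke the Sobolev-type bound (\ref{9.10;1}) applied to each $(v_{ik})_\pm\in L^3$ to obtain $\||v_{ik}|^{1/2}\psi\|\leq C$ uniformly in $\epsilon$.

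The operator factor is the heart of the argument and follows the scheme of Lemma~\ref{31.07;1}. In the Jacobi coordinates associated with particle $j$, the operator $G_j(0,\epsilon')$ commutes with $(H_0+\epsilon)^{-1/2}$ (both are functions of commuting Laplacians), so squaring gives
$$\||v_{ik}|^{1/2}G_j^{2\alpha}(0,\epsilon')(H_0+\epsilon)^{-1}|v_{ik}|^{1/2}\|.$$
Using the spectral inequality $(H_0+\epsilon)^{-1}\leq(-\Delta_y+\epsilon)^{-1}$ with $y=r_i-r_k$ (analogue of (\ref{27.01/1})) together with the explicit kernel bound (\ref{2.10;1}) applied to $(-\Delta_{x_1^{(j)}}+\epsilon')^{-\alpha}$ with $p=\alpha<3/2$, the integral kernel of the sandwich is dominated in absolute value by a positive kernel with the power behavior $|x_1^{(j)}-x_1^{(j)\prime}|^{-(3-2\alpha)}$ combined with a Yukawa-type factor in $y$. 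Integrating out the directions perpendicular to $y$ reduces the operator norm to a Hardy--Littlewood--Sobolev integral of the form (\ref{30.07;4}) with exponent $\gamma=\gamma(\alpha)\in[0,1/2)$, which is finite under the assumption $v_{ik}\in L^1\cap L^3$.

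The main obstacle lies in the case $j\in\{i,k\}$, where the Jacobi variable $x_1^{(j)}$ is not orthogonal to the pair coordinate $y$, so the dominating kernel does not factor trivially across the $x_1^{(j)}$ and $y$ directions. Here I would pass to the pair Jacobi basis via the orthogonal change of variables (\ref{24.9;28})--(\ref{21.9;2}), then carry out the integration over the subspace orthogonal to $y$ against the positive $s=0$ majorant; the same Hardy--Littlewood--Sobolev integral then controls the result. The constraint $\alpha<3/2$ is precisely the range in which the factor $\Gamma(3/2-p)$ in (\ref{2.10;1}) (with $p=\alpha$) remains finite and the majorant $|x_1^{(j)}-x_1^{(j)\prime}|^{-(3-2\alpha)}$ stays locally integrable in $\mathbb{R}^3$; the borderline case $\alpha=3/2$ fails by logarithmic divergence, consistent with the stated exclusion.
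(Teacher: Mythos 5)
Your reduction breaks down at the step you call the ``heart of the argument'': the operator factor you split off by Cauchy--Schwarz is simply not uniformly bounded, so no kernel estimate can rescue it. Take $j=1$ and a pair not containing particle $1$, say $\{2,3\}$. After the partial Fourier transform $\mathcal{F}_1$ the operator $G_1^\alpha(0,\epsilon')\bigl(H_0+\epsilon\bigr)^{-1/2}|v_{23}|^{1/2}$ is decomposable over the fiber variable $p_1$, with fiber norm
\begin{equation*}
\bigl(p_1^2+\epsilon'\bigr)^{-\alpha/2}\,\Bigl\|\,|v_{23}|^{1/2}\bigl(-\Delta_{x_r}+p_1^2+\epsilon\bigr)^{-1}|v_{23}|^{1/2}\Bigr\|^{1/2},
\end{equation*}
and the second factor stays bounded away from zero as $p_1^2+\epsilon\to 0$ (provided $v_{23}\not\equiv 0$). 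Taking $|p_1|^2\leq\epsilon'$ and letting $\epsilon,\epsilon'\to 0$ shows the operator norm blows up like $(\epsilon')^{-\alpha/2}$ for every $\alpha>0$; an analogous low--momentum computation in the pair coordinates shows the factor also diverges when $j\in\{i,k\}$ once $\alpha>1/2$. So the quantity you propose to estimate by Hardy--Littlewood--Sobolev is infinite, and the divergence you correctly identify at $\alpha=3/2$ already occurs, in your scheme, at $\alpha=1/2$. A related structural symptom: after replacing $\varphi$ by $|\varphi|$ and splitting off $\||v_{ik}|^{1/2}\psi\|$, your argument uses only $v_{ik}\in L^1\cap L^3$ and never invokes the hypothesis R2 (the $\omega$--subcriticality of the $(N-1)$--particle subsystems). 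That hypothesis cannot be dispensable: the bound (\ref{6.8;2}) encodes decay of the zero--energy eigenfunction, and it is exactly what fails in the resonant three--body situation underlying the Efimov effect.

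What the paper does instead is to keep the singular low--momentum behavior attached to the eigenfunction rather than to an operator norm. For a pair containing the distinguished particle it inserts the weight $B_{12}(\epsilon')$ of (\ref{7.8;15}) (with $\zeta=\alpha/2+1/4<1$, which is where the restriction $\alpha<\tfrac32$ actually enters, via the finiteness of the integral (\ref{8.6;64})): the weighted operator $\mathcal{C}(s,\alpha,\epsilon,\epsilon')$ of (\ref{12.01/04}) is uniformly bounded (Lemma~\ref{6.8;5}), while the weighted vector $\|B_{12}^{-1}(\epsilon')|v_{12}|^{1/2}\psi(\epsilon)\|$ is controlled by rearranging the Schr\"odinger equation (\ref{11.01.12/1.3}) and using the uniformly bounded resolvents $Q_{\pmb i}(\epsilon)$ of Lemma~\ref{7.8;8} --- this is precisely where R2 is used --- together with Lemma~\ref{6.8;14}. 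The pairs not containing particle $j$, i.e.\ the bound (\ref{12.01/02}), are then handled by a second rearrangement of the equation (the operators $\mathcal{J}_{1,2}$ and vectors $\psi_{1,2}$), again through $Q_{\{1\}}$ and the already established case (\ref{12.01/01}). If you want to repair your proposal you would have to reproduce this mechanism: transfer the $(p^2+\epsilon')^{-\alpha/2}$ singularity onto $\psi(\epsilon)$ via a vanishing weight and control the weighted vector using the subsystem condition, rather than bounding the unweighted operator factor.
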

Before we proceed with the proof we shall need a couple of technical lemmas
\begin{lemma}\label{7.8;8}
 Suppose $H$ defined in (\ref{1.08;3})--(\ref{1.08;4}) satisfies R2. For a multi--index $\pmb i$ and $\epsilon >0$ let us define the operators
\begin{gather}
 \mathcal{K}_{\pmb i} (\epsilon) := \mu^{-1} (\epsilon) \bigl( H_0 + V^+_{\pmb i} + \epsilon \bigr)^{-1/2} V^-_{\pmb i} \bigl( H_0 + V^+_{\pmb i} + \epsilon \bigr)^{-1/2} \\
Q_{\pmb i} (\epsilon) := [1-\mathcal{K}_{\pmb i} (\epsilon)]^{-1} ,
\end{gather}
where $V^\pm_{\pmb i}$, $\mu (\epsilon)$ were defined in (\ref{7.8;1}), (\ref{7.01;41}) respectively. 
Then $\mathcal{K}_{\pmb i} (\epsilon) , Q_{\pmb i} (\epsilon) \in \mathfrak{B}(L^2 (\mathbb{R}^{3N-3}))$ and $\|\mathcal{K}_{\pmb i} (\epsilon)\| \leq (1+\omega)^{-1} (1+\omega/2)$,
$\|Q_{\pmb i} (\epsilon)\| \leq 2 \omega^{-1} (1+\omega)$.
\end{lemma}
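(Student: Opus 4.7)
The plan is to bound $\mathcal{K}_{\pmb i}(\epsilon)$ by controlling separately the prefactor $\mu^{-1}(\epsilon)$ (via Theorem~\ref{3.8;3}) and the subsystem Birman--Schwinger operator (via R2 and the HVZ theorem), and then to obtain $Q_{\pmb i}(\epsilon)$ from a Neumann series.

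For boundedness, I would write $\mathcal{K}_{\pmb i}(\epsilon) = \mu^{-1}(\epsilon)\, D_{\pmb i}(\epsilon) D_{\pmb i}^*(\epsilon)$ with $D_{\pmb i}(\epsilon) := (H_0 + V^+_{\pmb i} + \epsilon)^{-1/2}(V^-_{\pmb i})^{1/2}$, and repeat the Lie--Trotter/heat-kernel domination argument of Lemma~\ref{31.07;1} to bound $\|D_{\pmb i}(\epsilon)\|$ by the analogous expression with the positive part $V^+_{\pmb i}$ deleted. Since $v_{ik} \in L^3(\mathbb{R}^3)$ by (\ref{1.08;4}), each $(v_{jk})_-$ lies in $L^3$, and the Sobolev-type estimate (\ref{5.10;1})--(\ref{9.10;1}) gives a uniform-in-$\epsilon$ bound; hence $\mathcal{K}_{\pmb i}(\epsilon) \in \mathfrak{B}(L^2(\mathbb{R}^{3N-3}))$.

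The crux is to establish that for every non-empty multi-index $\pmb i$,
\begin{equation*}
H_0 + V^+_{\pmb i} - (1+\omega) V^-_{\pmb i} \geq 0.
\end{equation*}
In Jacobi coordinates adapted to the bi-partition $\pmb i \mid S_{\pmb i}$, the left-hand side splits as non-negative kinetic terms (in the intercluster and intra-$\pmb i$ coordinates) plus the intrinsic Hamiltonian $h_{\pmb i}(\omega)$ of the subsystem $S_{\pmb i}$ with negative parts scaled by $1+\omega$, so the claim reduces to $h_{\pmb i}(\omega) \geq 0$. For $|\pmb i| = 1$ this is exactly R2. The remaining cases I treat by downward induction on $|S_{\pmb i}|$: pick $k \in \pmb i$, set $\pmb i' := \pmb i \setminus \{k\}$ so that $|S_{\pmb i'}| = |S_{\pmb i}| + 1$ and, by the inductive hypothesis, $h_{\pmb i'}(\omega) \geq 0$. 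In the HVZ description of $\sigma_{ess}(h_{\pmb i'}(\omega))$ the two-cluster decomposition $S_{\pmb i'} = S_{\pmb i} \cup \{k\}$ contributes the threshold $\inf \sigma(h_{\pmb i}(\omega))$; non-negativity of $h_{\pmb i'}(\omega)$ then forces $h_{\pmb i}(\omega) \geq 0$, closing the induction.

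Given this, the Birman--Schwinger principle (Theorem~\ref{31.07;16}) together with Theorem~9.5 of \cite{weidmann} --- applied to $H_0 + V^+_{\pmb i} - \lambda V^-_{\pmb i}$ in the spirit of the proof of Theorem~\ref{3.8;3} --- yields $\|(H_0 + V^+_{\pmb i} + \epsilon)^{-1/2} V^-_{\pmb i} (H_0 + V^+_{\pmb i} + \epsilon)^{-1/2}\| \leq (1+\omega)^{-1}$ for all $\epsilon \geq 0$. Combining with $\mu^{-1}(\epsilon) \leq 1 + \omega/2$ from Theorem~\ref{3.8;3} gives the stated bound $\|\mathcal{K}_{\pmb i}(\epsilon)\| \leq (1+\omega)^{-1}(1+\omega/2)$. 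Because $1 - \|\mathcal{K}_{\pmb i}(\epsilon)\| \geq \omega/[2(1+\omega)] > 0$, the Neumann series $Q_{\pmb i}(\epsilon) = \sum_{n \geq 0} \mathcal{K}_{\pmb i}(\epsilon)^n$ converges in norm with $\|Q_{\pmb i}(\epsilon)\| \leq 2\omega^{-1}(1+\omega)$. The main hurdle is the inductive HVZ step: one must verify that the HVZ essential-spectrum description remains valid under the modified coupling $1+\omega$ --- the same verification already performed inside the proofs of Theorems~\ref{30.07;1} and~\ref{3.8;3}.
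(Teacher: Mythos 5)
Your proof is correct and follows essentially the same route as the paper: non-negativity of $H_0 + V^+_{\pmb i} - (1+\omega)V^-_{\pmb i}$ from R2 via the HVZ theorem, then the Birman--Schwinger principle (Theorem~\ref{31.07;16}) to bound the subsystem BS operator by $(1+\omega)^{-1}$, the factor $\mu^{-1}(\epsilon)\leq 1+\omega/2$ from Theorem~\ref{3.8;3} and (\ref{7.01;41}), and a Neumann series for $Q_{\pmb i}(\epsilon)$. Your downward induction on the subsystem size merely spells out the step the paper compresses into ``From R2 and HVZ theorem it follows that (\ref{7.8;3})'', and it is a valid way to do so.
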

\begin{proof}
From R2 and HVZ theorem it follows that
\begin{equation}\label{7.8;3}
 H_0 + \mu^{-1} (\epsilon) V^+_{\pmb i} - (1+ \omega)V^-_{\pmb i} \geq 0 , 
\end{equation}
since $\mu^{-1}(\epsilon) > 1$. 
Therefore, by the BS principle (Theorem~\ref{31.07;16}) we get
$\sigma \bigl( (1+ \omega)\mu(\epsilon) \mathcal{K}_{\pmb i} (\epsilon) \bigr) \cap (1, \infty) = \emptyset $. Together 
with $\mathcal{K}_{\pmb i} (\epsilon) \geq 0 $ this gives
$\|\mathcal{K}_{\pmb i} (\epsilon)\| \leq (1+\omega)^{-1} (1+\omega/2) <1 $, see Theorem~\ref{3.8;3}. The rest of the proof is trivial.
\end{proof}
\begin{lemma}\label{6.8;22}
Suppose $H$ defined in (\ref{1.08;3})--(\ref{1.08;4}) satisfies R2. For $f \in D(H^{1/2}_0)$, $\epsilon >0$ and any ordered multi--index $\pmb i$ 
the following inequality holds
\begin{equation}
 \left\| \bigl( H_0 + \mu^{-1}(\epsilon) V^+_{\pmb i} + \epsilon  \bigr)^{-1/2} \bigl( H_0 + \epsilon
\bigr)^{1/2} f\right\| \leq \| f\| ,
\end{equation}
where $V^+_{\pmb i}$ is defined in (\ref{7.8;1}).
\end{lemma}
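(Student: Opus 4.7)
The plan is to reduce the inequality to the classical fact that for self--adjoint operators $A \geq B \geq \epsilon > 0$ one has $A^{-1} \leq B^{-1}$, then sandwich.

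First I would set $A := H_0 + \mu^{-1}(\epsilon) V^+_{\pmb i} + \epsilon$ and $B := H_0 + \epsilon$, both understood as positive self--adjoint operators defined via their closed quadratic forms on $D(H_0^{1/2})$. The form sum defining $A$ makes sense because the hypotheses $v_{ik} \in L^1 \cap L^3$ (together with the Kato--Rellich argument used in Section~\ref{1.08;1}) imply that $V^+_{\pmb i}$ is infinitesimally form--bounded with respect to $H_0$, and $V^+_{\pmb i} \geq 0$ is non--negative. By Theorem~\ref{3.8;3} and the extension in (\ref{7.01;41}), $\mu(\epsilon) > 0$, so $\mu^{-1}(\epsilon) V^+_{\pmb i} \geq 0$ and consequently the quadratic form inequality $A \geq B \geq \epsilon$ holds on $D(H_0^{1/2})$.

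Next I invoke operator monotonicity: $A \geq B \geq \epsilon > 0$ implies $A^{-1} \leq B^{-1}$ as bounded positive self--adjoint operators on $L^2(\mathbb{R}^{3N-3})$. (If preferred, this can be seen without appealing to L\"owner's theorem by using the variational identity
\begin{equation*}
 \langle g, A^{-1} g\rangle = \sup_{h \in D(A^{1/2})} \bigl\{ 2\real \langle g, h\rangle - \langle h, A h\rangle \bigr\},
\end{equation*}
and the analogous one for $B$: since the quantity inside the supremum is pointwise larger for $B$ than for $A$, one gets $\langle g, A^{-1} g\rangle \leq \langle g, B^{-1} g\rangle$.)

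Finally, given $f \in D(H_0^{1/2}) = D(B^{1/2})$, $g := B^{1/2} f$ is a well--defined vector in $L^2$, and $A^{-1/2}$ is bounded. Using self--adjointness of $A^{-1/2}$ and the operator inequality from the previous step,
\begin{equation*}
 \left\| A^{-1/2} B^{1/2} f \right\|^2 = \langle g, A^{-1} g\rangle \leq \langle g, B^{-1} g\rangle = \left\| B^{-1/2} B^{1/2} f \right\|^2 = \|f\|^2 ,
\end{equation*}
which is precisely the claimed bound. There is no real obstacle; the only thing to be careful about is identifying the form domains so that the sum $H_0 + \mu^{-1}(\epsilon) V^+_{\pmb i}$ is indeed the standard form sum and the sandwiched expression is well defined on $D(H_0^{1/2})$.
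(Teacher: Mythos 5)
Your proposal is correct and follows essentially the same route as the paper: both reduce the claim to the operator inequality $\bigl( H_0 + \epsilon + \mu^{-1}(\epsilon) V^+_{\pmb i} \bigr)^{-1} \leq \bigl(H_0 + \epsilon\bigr)^{-1}$ (which follows from $V^+_{\pmb i} \geq 0$) and then sandwich with $\bigl(H_0+\epsilon\bigr)^{1/2} f$. The only cosmetic difference is that the paper cites Proposition~A.2.5 in \cite{glimmjaffe} for the inverse monotonicity, whereas you supply a short variational argument for it, and you phrase the sum as a form sum rather than the operator sum used in the paper.
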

\begin{proof}
Indeed,
\begin{gather}
\left\| \bigl( H_0 + \epsilon + \mu^{-1}(\epsilon) V^+_{\pmb i} \bigr)^{-1/2} \bigl( H_0 + \epsilon
\bigr)^{1/2} f\right\|^2 \nonumber\\
= \Bigl(\bigl( H_0 + \epsilon \bigr)^{1/2} f ,  \bigl(
H_0 + \epsilon + \mu^{-1}(\epsilon) V^+_{\pmb i} \bigr)^{-1} \bigl( H_0 + \epsilon \bigr)^{1/2} f \Bigr)
\nonumber\\
\leq \Bigl(\bigl( H_0 + \epsilon \bigr)^{1/2} f ,  \bigl(H_0 + \epsilon
\bigr)^{-1} \bigl( H_0 + \epsilon \bigr)^{1/2} f \Bigr) = \|f\|^2 ,
\label{17.01/02}
\end{gather}
In (\ref{17.01/02}) we have used the operator inequality
\begin{equation}
  \bigl( H_0 + \epsilon + \mu^{-1}(\epsilon) V^+_{\pmb i} \bigr)^{-1} \leq \bigl(H_0 + \epsilon  \bigr)^{-1}, 
\end{equation}
which follows from $H_0 + \mu^{-1}(\epsilon) V^+_{\pmb i} \geq H_0 \geq 0 $ (see, for example, Proposition
A.2.5 on page 131 in \cite{glimmjaffe}).
\end{proof}

\begin{proof}[Proof of Theorem~\ref{3.8;8}]
Without loosing generality we can set $j = 1$; the power $\alpha \in [1, \frac 32 )$ remains fixed throughout the proof. We shall further on
assume that $\epsilon' \leq \epsilon$ because from (\ref{6.8;1}) clearly follows that
\begin{equation}
  \sup_{\epsilon >0} \sup_{\epsilon' >0} \sup_{s \in \mathbb{R}^3}  \|  G_j^\alpha (s, \epsilon')
\varphi(\epsilon) \| =  \sup_{\epsilon >0} \sup_{0 <\epsilon' \leq \epsilon} \sup_{s \in \mathbb{R}^3}  \|  G_j^\alpha (s, \epsilon')
\varphi(\epsilon) \|
\end{equation}
Using (\ref{11.01.12/1}), (\ref{11.01.12/1.3}) and $\mu(\epsilon) \geq (1+\omega/2)^{-1}$ we can write
\begin{gather}
 \|  G^\alpha_1 (s, \epsilon') \varphi(\epsilon) \| \leq (1+\omega/2) \sum_{i = 2}^N  \|  G^\alpha_1 (s, \epsilon')  \bigl(
H_0 + \epsilon\bigr)^{-1/2} v_{1i} \psi(\epsilon) \|  \nonumber\\
+ (1+\omega/2) \sum_{2\leq i < k \leq N} \|  G^\alpha_1 (s, \epsilon') \bigl( H_0 + \epsilon\bigr)^{-1/2}
v_{ik}  \psi(\epsilon) \| .
\end{gather}
Thus without
loss of generality, (\ref{6.8;2}) follows from the following inequalities
\begin{gather}
 \sup_{\epsilon >0} \sup_{0 <\epsilon' \leq \epsilon}  \sup_{s \in \mathbb{R}^3}  \|  G^\alpha_1 (s, \epsilon')  \bigl( H_0 +
\epsilon\bigr)^{-1/2} v_{12} \psi(\epsilon) \| < \infty \label{12.01/01}\\
\sup_{\epsilon >0} \sup_{0 <\epsilon' \leq \epsilon}  \sup_{s \in \mathbb{R}^3} \|  G^\alpha_1 (s, \epsilon')  \bigl( H_0 +
\epsilon\bigr)^{-1/2} v_{23} \psi(\epsilon) \| < \infty \label{12.01/02}
\end{gather}
We shall follow the method developed in \cite{1,jpasubmit,lmp}. Let us start with
(\ref{12.01/01}).
We introduce another set of Jacobi coordinates $y_1 = \sqrt{2\mu_{12}} (r_2 -
r_1)$, $y_2 = (\sqrt{2 M_{12;3}})\bigl[ r_3 -
m_1/(m_1+m_2) r_1 - m_2/(m_1+m_2) r_2\bigr]$ etc.,
where $M_{ik;j} := (m_i + m_k)m_j / (m_i + m_k + m_j)$ and
$\mu_{ik} := m_i m_k /(m_i + m_k)$ denote the reduced masses.
The coordinate $y_i \in \mathbb{R}^3$ is proportional to the
vector pointing from the centre of mass of the particles $[1, 2, \ldots , i]$
to the particle $i+1$, and the scales are set in order to guarantee that $H_0 = - \sum_i \Delta_{y_i} $. The full and partial Fourier transforms have the form
\begin{gather}
 (F_{12} f ) (p_{y_1}, p_{y_2}, \ldots, p_{y_{N-1}}):= \frac 1{(2\pi)^{\frac{3N-3}2}} \int e^{-i\sum_{k = 1}^{N-1} p_{y_k} \cdot y_k} f(y_1, \ldots, y_{N-1}) d^3 y_1 \ldots d^3 y_{N-1} \label{6.8;52}\\
(\mathcal{F}_{12} f ) (y_1, p_{y_2}, \ldots, p_{y_{N-1}}):= \frac 1{(2\pi)^{\frac{3N-6}2}} \int e^{-i\sum_{k = 2}^{N-1} p_{y_k} \cdot y_k} f(y_1, \ldots, y_{N-1}) d^3 y_2 \ldots d^3 y_{N-1}
\end{gather}
For shorter notation we shall
denote by
$y_r , p_{y_r}\in \mathbb{R}^{3N-6}$ the following tuples $y_r := (y_2, y_3,
\ldots, y_{N-1})$ and $p_{y_r} := (p_{y_2}, p_{y_3}, \ldots, p_{y_{N-1}})$. The coordinate set $y_i$ can be expressed through $x_i$ as follows
% \begin{gather}
%  y_1 = b_{11} x_1 + b_{12} x_2 , \\
% y_2 = b_{21} x_1 + b_{22} x_2 , \\
% y_i = x_i \quad (i \geq 3)
% \end{gather}
% % where $b_{ik}$ is a real $2 \times 2$ orthogonal symmetric matrix. In particular, $b_{22} = - b_{11}$, where
% \begin{gather}
%  b_{11} = - \left[\frac{m_2 M}{(m_1 + m_2)(M-m_1)}\right]^{\frac 12} \\
% b_{12} = \left[\frac{m_1 (M- m_1 - m_2)}{(m_1 + m_2)(M-m_1)}\right]^{\frac 12}
% \end{gather}
\begin{equation}\label{6.8;53}
 x_i = \sum_{k=1}^{N-1} b_{ik} y_k \quad (i = 1, \ldots, N-1),
\end{equation}
where the $(N-1)\times (N-1)$ real orthogonal matrix $b_{ik}$ depends on the mass ratios. The expressions for these coefficients are complicated, we just mention that
\begin{equation}
 b_{11} = - \left[ \frac{m_2 M}{(M - m_1)(m_1 + m_2)} \right]^{\frac 12} . 
\end{equation}

Similar to \cite{jpasubmit,lmp} we introduce the operator, which acts on $f \in
L^2(\mathbb{R}^3)$ according to the rule
\begin{equation}\label{7.8;15}
 B_{12} (\epsilon) f = (1+ \epsilon^{\zeta/2})f + \mathcal{F}_{12}^{-1} (|p_{y_r}|^\zeta -
1)\chi_1 (|p_{y_r}|) (\mathcal{F}_{12} f),
\end{equation}
where
\begin{equation}\label{defzeta28}
 \zeta = \frac \alpha2  + \frac 14 < 1
\end{equation}
For all $\epsilon >0$ the operators $B_{12} (\epsilon)$ and $B_{12}^{-1}
(\epsilon)$ are bounded. Inserting the identity $B_{12} (\epsilon') B_{12}^{-1} (\epsilon') = 1$  into
(\ref{12.01/01})
and using $[B_{12} (\epsilon') , v_{12}] =0$ we get
\begin{equation}
 \|  G^\alpha_1 (s, \epsilon') \bigl( H_0 + \epsilon\bigr)^{-1/2} v_{12} \psi(\epsilon) \| \leq
\| \mathcal{C} (s, \alpha, \epsilon, \epsilon') \| \, \| B_{12}^{-1}
(\epsilon')\bigl|v_{12}\bigr|^{1/2} \psi (\epsilon)\| ,
\end{equation}
where by definition
\begin{equation}\label{12.01/04}
 \mathcal{C} (s, \alpha, \epsilon, \epsilon')  :=  G^\alpha_1 (s, \epsilon') \bigl( H_0 +
\epsilon\bigr)^{-1/2} B_{12} (\epsilon') \bigl|v_{12}\bigr|^{1/2}
\end{equation}
By Lemma~\ref{6.8;5} below to prove (\ref{12.01/01}) it suffices to show that
\begin{equation}\label{6.8;6}
  \sup_{\epsilon >0 } \sup_{0 < \epsilon' \leq \epsilon }  \| B_{12}^{-1} (\epsilon') \bigl|v_{12}\bigr|^{1/2} \psi
(\epsilon)\| < \infty .
\end{equation}
Using the method in \cite{jpasubmit} we shall prove that
\begin{equation}\label{6.8;7}
  \sup_{\epsilon >0 } \sup_{0 < \epsilon' \leq \epsilon }  \| B_{12}^{-1} (\epsilon') \bigl((v_{12})_-\bigr)^{\frac 12} \psi
(\epsilon)\| < \infty .
\end{equation}
We first prove that (\ref{6.8;6}) follows from (\ref{6.8;7}) and afterwards prove that (\ref{6.8;7}) holds.
After rearranging the terms in the Schr\"odinger equation (\ref{11.01.12/1.3}) we obtain
\begin{gather}
 B_{12}^{-1} (\epsilon') \bigl|v_{12}\bigr|^{1/2} \psi(\epsilon) = \mu^{-1} (\epsilon)\bigl|v_{12}\bigr|^{1/2}
B_{12}^{-1} (\epsilon') \bigl[H_0 + \epsilon + \mu^{-1} (\epsilon) (v_{12})_+ \bigr]^{-1} \\
\times  
\bigl((v_{12})_- - \sum_{i=3}^N v_{1i} - \sum_{2 \leq i < j \leq N} v_{ij}\bigr)
\psi(\epsilon)
\end{gather}
This leads to the upper bound
\begin{gather}
 \| B_{12}^{-1} (\epsilon') \bigl|v_{12}\bigr|^{1/2} \psi (\epsilon)\| \nonumber\\
\leq (1+ \omega/2)\Bigl\|
\bigl|v_{12}\bigr|^{1/2} \bigl[H_0 + \epsilon + \mu^{-1} (\epsilon) (v_{12})_+ \bigr]^{-1}
\bigl((v_{12})_-\bigr)^{1/2}\Bigr\|
\Bigl\| B_{12}^{-1} (\epsilon') \bigl((v_{12})_-\bigr)^{1/2} \psi(\epsilon) \Bigr\| \nonumber\\
+ (1+ \omega/2) \sum_{i=3}^N \Bigl\| \bigl|v_{12}\bigr|^{1/2} B_{12}^{-1} (\epsilon')\bigl[H_0 + \epsilon
+ \mu^{-1} (\epsilon) (v_{12})_+ \bigr]^{-1} \bigl| v_{1i} \bigr|^{1/2}\Bigr\| \Bigl\| \bigl| v_{1i}
\bigr|^{1/2}\psi(\epsilon) \Bigr\| \nonumber\\
+ (1+ \omega/2) \sum_{2 \leq i < j \leq N} \Bigl\| \bigl|v_{12}\bigr|^{1/2} B_{12}^{-1} (\epsilon')
\bigl[H_0 + \epsilon + \mu^{-1} (\epsilon) (v_{12})_+ \bigr]^{-1} \bigl| v_{ij} \bigr|^{1/2}\Bigr\|
\Bigl\| \bigl| v_{ij} \bigr|^{1/2}\psi(\epsilon) \Bigr\|,   \label{6.8;11}
\end{gather}
where we have used $\mu (\epsilon) \leq (1+\omega/2)$. Note, that for $1 \leq i < j \leq N$ the terms $\bigl\| \bigl| v_{ij}
\bigr|^{1/2}\psi(\epsilon) \bigr\|$ are uniformly bounded. Indeed, by
(\ref{11.01.12/1})
\begin{gather}
\sup_{\epsilon>0}\Bigl\| \bigl| |v_{ij} \bigr|^{1/2}\psi(\epsilon) \Bigr\| =
\sup_{\epsilon>0} \Bigl\| \bigl| v_{ij} \bigr|^{1/2} \bigl( H_0 +
\epsilon\bigr)^{-1/2} \varphi(\epsilon) \Bigr\| \nonumber \\
\leq
C_v \equiv \max_{i <j} \sup_{\epsilon>0} \Bigl\| \bigl| v_{ij} \bigr|^{1/2} \bigl( H_0 +
\epsilon\bigr)^{-1} \bigl| v_{ij} \bigr|^{1/2}  \Bigr\|^{1/2} < \infty , \label{16.01/1}
\end{gather}
where we have used $\| \varphi(\epsilon) \| \leq  1$ and $|v_{ij} (r)|^{1/2} \in L^3 (\mathbb{R}^3)$, c. f. (\ref{9.10;1}). 
Applying Lemma~\ref{6.8;14} and (\ref{16.01/1}) we conclude
that
all terms under the sums in (\ref{6.8;11}) are uniformly bounded. For the first operator norm on the rhs of
(\ref{6.8;11}) using (\ref{16.01/1}) we get
\begin{gather}
\sup_{\epsilon>0} \Bigl\| \bigl|v_{12}\bigr|^{1/2} \bigl[H_0 + \epsilon +  \mu^{-1} (\epsilon) (v_{12})_+ \bigr]^{-1}
\bigl((v_{12})_-\bigr)^{1/2}\Bigr\| \nonumber \\
\leq
\sup_{\epsilon>0}  \Bigl\| \bigl|v_{12}\bigr|^{1/2} \bigl[H_0 + \epsilon \bigr]^{-1}
\bigl|v_{12}\bigr|^{1/2}\Bigr\| \leq C_v^2 . \label{6.8;16}
\end{gather}
Thus from (\ref{6.8;11})--(\ref{6.8;16}) and (\ref{6.8;7}) inequality (\ref{6.8;6}) follows. It remains to prove (\ref{6.8;7}).
Using Eq.~16 in \cite{jpasubmit} (where one has to set $k_n^2 =
\epsilon$ and $\lambda_n = 1$) we get
\begin{gather}
B_{12}^{-1} (\epsilon') \bigl((v_{12})_-\bigr)^{\frac 12} \psi
(\epsilon) \nonumber\\
= -  \mu^{-1} (\epsilon) Q_{\{3,\ldots, N\}}(\epsilon) \sum_{i=3}^N \bigl((v_{12})_-\bigr)^{\frac 12} B_{12}^{-1} (\epsilon') \bigl[H_0 + \epsilon +  \mu^{-1} (\epsilon) (v_{12})_+ \bigr]^{-1} v_{1i} \psi(\epsilon) \nonumber\\
-  \mu^{-1} (\epsilon) Q_{\{3,\ldots, N\}}(\epsilon) \sum_{2 \leq i < j \leq N}  \bigl((v_{12})_-\bigr)^{\frac 12} B_{12}^{-1} (\epsilon') \bigl[H_0 + \epsilon +  \mu^{-1} (\epsilon) (v_{12})_+ \bigr]^{-1} v_{ij}  \psi(\epsilon) ,
\end{gather}
where
\begin{equation}
 Q_{\{3,\ldots, N\}}(\epsilon) = \left\{1 -  \mu^{-1} (\epsilon) \bigl((v_{12})_-\bigr)^{\frac 12} \bigl[H_0 + \epsilon +  \mu^{-1} (\epsilon) (v_{12})_+ \bigr]^{-1} \bigl((v_{12})_-\bigr)^{\frac 12} \right\}^{-1}
\end{equation}
was defined in Lemma~\ref{7.8;8}. By Lemma~\ref{7.8;8} and (\ref{16.01/1})
\begin{gather}
 \left\|B_{12}^{-1} (\epsilon') \bigl((v_{12})_-\bigr)^{\frac 12} \psi(\epsilon) \right\| \nonumber \\
\leq  2 \omega^{-1} (1+\omega) (1+\omega/2) C_v \sum_{i=3}^N \left\| \bigl((v_{12})_-\bigr)^{\frac 12} B_{12}^{-1} (\epsilon') \bigl[H_0 + \epsilon + \mu^{-1} (\epsilon) (v_{12})_+ \bigr]^{-1} |v_{1i}|^{\frac12} \right\| \nonumber \\
+ 2 \omega^{-1} (1+\omega) (1+\omega/2)  C_v \sum_{2 \leq i < j \leq N}   
\left\| \bigl((v_{12})_-\bigr)^{\frac 12} B_{12}^{-1} (\epsilon') \bigl[H_0 + \epsilon + \mu^{-1} (\epsilon) (v_{12})_+ \bigr]^{-1} |v_{ij}|^{\frac 12}  \right\| \nonumber
\end{gather}
Now (\ref{6.8;7}) follows from Lemma~\ref{6.8;14} and (\ref{12.01/01}) is proved.

Let us now consider (\ref{12.01/02}).
After rearranging the terms in the Schr\"odinger equation (\ref{11.01.12/1.3}) we obtain
\begin{gather}
 G^\alpha_1 (s, \epsilon)  \bigl( H_0 + \epsilon\bigr)^{-1/2} v_{23}  \psi(\epsilon) \nonumber \\
= \mu^{-1} (\epsilon) \bigl(
H_0 + \epsilon\bigr)^{-1/2} v_{23} \bigl( H_0 + \epsilon + \mu^{-1} (\epsilon) V_{\{1\}}^+ \bigr)^{-1}
 G^\alpha_1 (s, \epsilon) \left\{ V_{\{1\}}^- - \sum_{i=2}^N v_{1i}\right\} \psi(\epsilon) \nonumber
\\
= \mu^{-1} (\epsilon)\mathcal{J}_1(\epsilon) \psi_1 (\epsilon, \epsilon') - \mu^{-1} (\epsilon)\mathcal{J}_2(\epsilon) \psi_2
(\epsilon, \epsilon')  \label{17.01/0}
\end{gather}
where by definition
\begin{gather}
 \mathcal{J}_1(\epsilon)  := \bigl( H_0 + \epsilon\bigr)^{-1/2} v_{23} \bigl(
H_0 + \epsilon + \mu^{-1} (\epsilon)V_{\{1\}}^+ \bigr)^{-1} \bigl(V_{\{1\}}^-\bigr)^{1/2} \\
\mathcal{J}_2(\epsilon)  := \bigl( H_0 + \epsilon\bigr)^{-1/2} v_{23} \bigl( H_0
+ \epsilon + \mu^{-1} (\epsilon)V_{\{1\}}^+ \bigr)^{-1/2}
\end{gather}
and
\begin{gather}
\psi_1(\epsilon, \epsilon') := G^\alpha_1 (s, \epsilon') \bigl(V_{\{1\}}^-\bigr)^{1/2} \psi(\epsilon) \label{6.8;27}\\
\psi_2(\epsilon, \epsilon') := \sum_{i=2}^N \bigl( H_0 + \epsilon + \mu^{-1} (\epsilon) V_{\{1\}}^+ \bigr)^{-1/2}
G^\alpha_1 (s, \epsilon') v_{1i} \psi(\epsilon) \label{6.8;28}
\end{gather}
In (\ref{17.01/0}) we have used that $[ G^\alpha_1 (s, \epsilon), V_{\{1\}}^\pm] = 0$ due to
$V_{\{1\}}^\pm$ being dependent only on $x_2, \ldots, x_{N-1}$.
It is easy to show that $\|\mathcal{J}_{1,2} (\epsilon)\|$ is uniformly bounded.
For example,
\begin{gather}
 \|\mathcal{J}_1 (\epsilon)\| \leq \left\||v_{23}|^{1/2} (H_0 + \epsilon)^{-1}
|v_{23}|^{1/2}\right\|^{1/2} \,
\left\||v_{23}|^{1/2} (H_0 + \mu^{-1} (\epsilon) V_{\{1\}}^+ + \epsilon )^{-1}
|v_{23}|^{1/2}\right\|^{1/2} \nonumber\\
\times \left\|(V_{\{1\}}^-)^{1/2} (H_0 + \mu^{-1} (\epsilon) V_{\{1\}}^+ + \epsilon )^{-1} (V_{\{1\}}^-)^{1/2}
\right\|^{1/2} \leq C_v (1+\omega)^{-1/2} ,
\end{gather}
where we have used Lemma~\ref{7.8;8}.
It remains to prove that
\begin{equation}
 \sup_{\epsilon > 0} \sup_{0 < \epsilon' \leq \epsilon} \|\psi_{1,2} (\epsilon, \epsilon')\| < \infty. \label{6.8;35}
\end{equation}
By Lemma~\ref{6.8;22} and (\ref{12.01/01})
\begin{equation}
 \sup_{\epsilon >0 } \sup_{0 < \epsilon' \leq \epsilon}  \| \psi_2(\epsilon, \epsilon')\| \leq \sum_{i=2}^N \sup_{\epsilon >0 }
\sup_{0 < \epsilon' \leq \epsilon} \bigl\|  \bigl( H_0 + \epsilon
\bigr)^{-1/2} G^\alpha_1 (s, \epsilon') v_{1i} \psi(\epsilon)\bigr\| < \infty .  \label{6.8;32}
\end{equation}
After
rearranging the terms in (\ref{11.01.12/1.3}) we obtain
\begin{equation}
 \psi(\epsilon) = \mu^{-1} (\epsilon)\bigl[H_0 + \mu^{-1} (\epsilon)V_{\{1\}}^+ + \epsilon\bigr]^{-1}
\bigl(V_{\{1\}}^-\bigr)^{1/2}  \bigl(V_{\{1\}}^-\bigr)^{1/2} \psi(\epsilon) - \bigl[H_0 +
\mu^{-1} (\epsilon)V_{\{1\}}^+ + \epsilon\bigr]^{-1} \sum_{i=2}^N v_{1i} \psi(\epsilon) . \nonumber
\end{equation}
Therefore,
\begin{gather}
 \bigl(V_{\{1\}}^-\bigr)^{1/2} \psi(\epsilon) = -\mu^{-1} (\epsilon) Q_{\{1\}} (\epsilon)
\bigl(V_{\{1\}}^-\bigr)^{1/2} \bigl[H_0 + \mu^{-1} (\epsilon) V_{\{1\}}^+ + \epsilon\bigr]^{-1/2} \nonumber \\
\times \sum_{i=2}^N
\bigl[H_0 + \mu^{-1} (\epsilon) V_{\{1\}}^+ + \epsilon\bigr]^{-1/2} v_{1i} \psi(\epsilon) , \label{6.8;25}
\end{gather}
where $Q_{\{1\}} (\epsilon)$ was defined in Lemma~\ref{7.8;8}. Using (\ref{6.8;25}) and Lemma~\ref{7.8;8} we obtain from (\ref{6.8;27}), (\ref{6.8;28})
\begin{gather}
 \| \psi_1(\epsilon, \epsilon')\| \leq 2 (1+\omega) \omega^{-1} \mu^{-1} (\epsilon) \left\|
\bigl(V_{\{1\}}^-\bigr)^{1/2} \bigl[H_0 + \mu^{-1} (\epsilon) V_{\{1\}}^+ + \epsilon\bigr]^{-1/2}\right\|
\|\psi_2 (\epsilon, \epsilon')\| \nonumber\\
\leq 2 (1+\omega) \omega^{-1} \mu^{-1} (\epsilon) \left\|\bigl(V_{\{1\}}^-\bigr)^{1/2} \bigl[H_0 + \mu^{-1} (\epsilon) V_{\{1\}}^+ +
\epsilon\bigr]^{-1} \bigl(V_{\{1\}}^-\bigr)^{1/2} \right\|^{1/2} \|\psi_2 (\epsilon, \epsilon')\| \nonumber\\
\leq 2[(1+\omega)(1+\omega/2)]^{1/2} \omega^{-1} \|\psi_2 (\epsilon, \epsilon')\|.  \label{6.8;31}
\end{gather}
 Now (\ref{6.8;35}) follows from (\ref{6.8;32}) and (\ref{6.8;31}).
\end{proof}
\begin{lemma}\label{6.8;5}
For all $\alpha \in [1, \frac 32 )$  the following inequality holds
\begin{equation}
 \sup_{\epsilon >0 } \sup_{0 < \epsilon' \leq \epsilon } \sup_{s \in \mathbb{R}^3} \| \mathcal{C} (s, \alpha,
\epsilon, \epsilon') \| < \infty ,
\end{equation}
where $ \mathcal{C} (s, \alpha,\epsilon, \epsilon')$ is defined in (\ref{12.01/04}).
\end{lemma}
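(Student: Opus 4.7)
The plan is to work in the Jacobi coordinate system $y_1, \ldots, y_{N-1}$ adapted to the pair $\{1, 2\}$ (so $y_1$ is proportional to $r_2 - r_1$ and $y_r := (y_2, \ldots, y_{N-1})$), in which $v_{12}$ depends only on $y_1$, $B_{12}(\epsilon')$ is a Fourier multiplier in $p_{y_r}$ alone, and both $(H_0 + \epsilon)^{-1/2}$ and $G_1^\alpha(s, \epsilon')$ are Fourier multipliers. Using the orthogonal relation (\ref{6.8;53}), the symbol of $G_1^\alpha$ is $[(b_{11} p_{y_1} + w)^2 + \epsilon']^{-\alpha/2}$ with $w := s + \sum_{k \geq 2} b_{1k} p_{y_k}$. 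A partial Fourier transform in $y_r$ decomposes $\mathcal{C}$ as a direct integral over $p_{y_r}$, so its operator norm equals the essential supremum of the fiber norms; each fiber $\mathcal{C}(p_{y_r}) = \Psi(-i\nabla_{y_1}) |v_{12}|^{1/2}$ on $L^2(\mathbb{R}^3_{y_1})$ has symbol
\[
\Psi(p_{y_1}) = b(p_{y_r}, \epsilon') [(b_{11} p_{y_1} + w)^2 + \epsilon']^{-\alpha/2} (p_{y_1}^2 + |p_{y_r}|^2 + \epsilon)^{-1/2}.
\]

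The natural Hilbert--Schmidt ($L^2$) Kato--Seiler--Simon bound is not uniform in the limiting regimes, so I would instead invoke the general Kato--Seiler--Simon inequality (Theorem 4.1 in \cite{traceideals}) at the tailored exponent $p := 12/(2\alpha+3) \in (2, 12/5]$:
\[
\|\mathcal{C}(p_{y_r})\|_{\mathrm{op}} \leq \|\mathcal{C}(p_{y_r})\|_{S_p} \leq C_p \|\Psi\|_{L^p(\mathbb{R}^3)} \|v_{12}\|_{L^{p/2}(\mathbb{R}^3)}^{1/2}.
\]
Since $v_{12} \in L^1 \cap L^3$, interpolation gives $v_{12} \in L^{p/2}$ with $p/2 = 6/(2\alpha+3) \in [1, 6/5]$, so the last factor is finite. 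The task reduces to a uniform bound on $\|\Psi\|_{L^p}$.

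After the substitution $u = b_{11} p_{y_1} + w$ and the Riesz rearrangement inequality (which maximizes the integral at $w = 0$, both factors being positive symmetric decreasing), rescaling by $\Theta := \max(\sqrt{\epsilon'}, m)$ with $m := |b_{11}| \sqrt{|p_{y_r}|^2 + \epsilon}$ produces
\[
\|\Psi\|_{L^p} \leq C b(p_{y_r}, \epsilon') \Theta^{3/p - \alpha - 1},
\]
where the residual dimensionless integral absorbed into $C$ is uniformly finite because $\alpha \in [1, 3/2)$ ensures integrability both at infinity ($p(\alpha+1) > 3$) and at the origin ($p\alpha < 3$ and $p < 3$).

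The decisive computation is $3/p - \alpha - 1 = -(2\alpha+1)/4 = -\zeta$, matching the exponent fixed in (\ref{defzeta28}). Combined with the bounds $b \leq (\epsilon')^{\zeta/2} + |p_{y_r}|^\zeta \leq 2 \max(\sqrt{\epsilon'}, |p_{y_r}|)^\zeta$ on $\{|p_{y_r}| < 1\}$ and $b \leq 1 + (\epsilon')^{\zeta/2}$ together with $\Theta \geq \max(|b_{11}|, \sqrt{\epsilon'})$ on $\{|p_{y_r}| \geq 1\}$, this produces $b \cdot \Theta^{-\zeta} \leq C'$ uniformly in all parameters, whence $\|\Psi\|_{L^p}$ is bounded. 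The main obstacle will be precisely this exponent matching: the calibration $\zeta = \alpha/2 + 1/4$ in (\ref{defzeta28}) is chosen exactly so that the regularization provided by $B_{12}(\epsilon')$ cancels the symbol's scaling factor, and the argument becomes marginal at the endpoint $\alpha = 3/2$ where several of the above integrability conditions become borderline.
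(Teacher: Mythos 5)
Your argument is correct, and its skeleton is the same as the paper's: conjugate by $F_{12}$, observe that $|v_{12}|^{1/2}$ acts only in $y_1$ while everything else is a multiplier, so $\mathcal{C}$ fibers over $p_{y_r}$, and then bound the fiber norm uniformly in $(s,\epsilon,\epsilon',p_{y_r})$, with the exponent $\zeta$ of (\ref{defzeta28}) doing exactly the cancellation you identify. Where you genuinely diverge is in the fiber estimate: the paper simply takes the Hilbert--Schmidt norm of the fiber kernel (i.e.\ the $p=2$ Kato--Seiler--Simon bound), which needs only $|v_{12}|^{1/2}\in L^2$, splits $|p_{y_r}|\leq 1$ from $|p_{y_r}|>1$, and reduces everything to the single shifted integral $\mathcal{J}$ in (\ref{8.6;64}), leaving a residual factor $(|p_{y_r}|^2+\epsilon')^{(3-2\alpha)/4}$ with a non--negative exponent that is then controlled by the (legitimate, since all multipliers commute) normalization $\epsilon<1$; you instead invoke Kato--Seiler--Simon at the $\alpha$--dependent exponent $p=12/(2\alpha+3)>2$, plus a rearrangement and a scaling by $\Theta=\max(\sqrt{\epsilon'},|b_{11}|\sqrt{|p_{y_r}|^2+\epsilon})$, which makes the homogeneity come out exactly as $\Theta^{-\zeta}$ and so avoids both the regional split's residual exponent and any restriction on the size of $\epsilon$. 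The trade--off is that your route needs $v_{12}\in L^{6/(2\alpha+3)}$ (available here from $L^1\cap L^3$ by interpolation, but strictly more than the $L^1$ the paper uses at this point), and your stated motivation is off: the plain Hilbert--Schmidt bound \emph{is} uniform in the limiting regimes --- that is precisely the paper's proof, the only apparent non--uniformity (large $\epsilon$) being removed by writing $\mathcal{C}=(H_0+\epsilon)^{-1/2}(H_0+1)^{1/2}\cdot G_1^\alpha(H_0+1)^{-1/2}B_{12}|v_{12}|^{1/2}$ for $\epsilon\geq1$ --- so the switch to $p>2$ is an elegant alternative rather than a necessity. Your endpoint discussion ($p\alpha<3$ failing exactly at $\alpha=3/2$) correctly mirrors the paper's condition $2\alpha<3$ in $\mathcal{J}$.
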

\begin{proof}
Without loosing generality we can consider $0 < \epsilon < 1$. For the dual
coordinates defined in (\ref{6.8;51}), (\ref{6.8;52}) the following relation holds $p_1 = \sum_{i=1}^{N-1} b_{1i} p_{y_i} $,
where we have used that the matrix $b_{ik}$ in (\ref{6.8;53}) is orthogonal.
 The Fourier--transformed operator $M= F_{12} \mathcal{C} (s, \alpha, \epsilon, \epsilon')
F_{12}^{-1}$ acts on $f (p_{y_1}, p_{y_r}) \in L^2 (\mathbb{R}^{3N-3})$ as
follows
\begin{equation}
 (Mf)(p_{y_1}, p_{y_r}) = \int \mathcal{M}(p_{y_1} , p'_{y_1} ; p_{y_r}) f(p'_{y_1} ,
p_{y_r}) d^3 p'_{y_1} ,
\end{equation}
where
\begin{gather}
 \mathcal{M} (p_{y_1} , p'_{y_1} ; p_{y_r}) =
\frac{(2\mu_{12})^{3/2}}{(2\pi)^{3/2}}\left[(b_{11} p_{y_1} + \sum_{k=2}^{N-1} b_{1k}
p_{y_k} +s)^2 +  \epsilon' \right]^{-\alpha/2}
\left(p_{y_1}^2 + p_{y_r}^2 + \epsilon \right)^{-1/2} \nonumber \\
\times \left\{1 + (\epsilon')^{\zeta/2} + (|p_{y_r}|^\zeta -1|)\chi_1
(|p_{y_r}|)\right\} \widehat{|v_{12}|^{\frac 12}}\bigl(\sqrt{2\mu_{12}}(p_{y_1}
- p'_{y_1})\bigr) ,
\end{gather}
and the hat denotes standard Fourier transform in $L^2 (\mathbb{R}^3)$. We
estimate the norm as
\begin{equation}
 \|M\|^2 \leq \sup_{|p_{y_r}| \leq 1} \int \bigl|\mathcal{M}(p_{y_1} , p'_{y_1} ;
p_{y_r})\bigr|^2 d^3 p'_{y_1} d^3 p_{y_1} +  \sup_{|p_{y_r}| > 1} \int
\bigl|\mathcal{M}(p_{y_1} , p'_{y_1} ; p_{y_r})\bigr|^2 d^3 p'_{y_1} d^3 p_{y_1} \label{6.8;56}
\end{equation}
For the first term on the rhs in (\ref{6.8;56}) we get
\begin{gather}
 \sup_{|p_{y_r}| \leq 1} \int \bigl|\mathcal{M}(p_{y_1} , p'_{y_1} ; p_{y_r})\bigr|^2 d^3
p'_{y_1} d^3 p_{y_1} \nonumber \\
\leq  C_0 \sup_{|p_{y_r}| \leq 1} \sup_{s' \in \mathbb{R}^3}
\int \left[(b_{11} p_{y_1} + s')^2 + \epsilon' \right]^{-\alpha} \frac{
\left(|p_{y_r}|^\zeta + (\epsilon')^{\zeta/2}\right)^2 }{\left(p_{y_1}^2 +
p_{y_r}^2 + \epsilon' \right)} d^3 p_{y_1} , \label{6.8;59}
\end{gather}
where
\begin{equation}
 C_0 := \frac{(2\mu_{12}^3)}{2\pi^3} \int \left|\widehat{|v_{12}|^{\frac
12}}\bigl(\sqrt{2\mu_{12}} p_{y_1}\bigr)\right|^2 d^3 p_{y_1}
\end{equation}
is finite due to $|v_{12}|^{1/2} \in L^2 (\mathbb{R}^3)$. In (\ref{6.8;59}) we have also used that $\epsilon' \leq \epsilon$. Let us use the following inequality
\begin{equation}
 \left(|p_{y_r}|^\zeta + (\epsilon')^{\zeta/2}\right)^2  \leq 2|p_{y_r}|^{2\zeta} +
2(\epsilon')^\zeta  \leq 4\left(|p_{y_r}|^2 + \epsilon'\right)^\zeta ,\label{8.6;61}
\end{equation}
which follows from $a^\gamma + b^\gamma \leq 2(a+b)^\gamma$ for any $a, b \geq
0$ and $0 \leq \gamma\leq 1$. Using (\ref{8.6;61}) and (\ref{defzeta28}) we obtain
from (\ref{6.8;59})
\begin{gather}
  \sup_{|p_{y_r}| \leq 1} \int \bigl|\mathcal{M} (p_{y_1} , p'_{y_1} ; p_{y_r})\bigr|^2 d^3
p'_{y_1} d^3 p_{y_1}  \leq 4 C_0 |b_{11}|^{-2\alpha} \mathcal{J} \sup_{|p_{y_r}|
\leq 1} \left(|p_{y_r}|^2 + \epsilon'\right)^{\frac{3-2\alpha}4} \nonumber\\
\leq 4 C_0 |b_{11}|^{-2\alpha} 2^{\frac{3-2\alpha}4} \mathcal{J}  , \label{6.8;72}
\end{gather}
where
\begin{equation}
 \mathcal{J} := \sup_{s' \in \mathbb{R}^3} \int \frac{d^3
p_{y_1}}{|p_{y_1}|^{2\alpha} \left[(p_{y_1} + s')^2 +1\right]} . \label{8.6;64}
\end{equation}
It remains to show that the expression in (\ref{8.6;64}) is finite. This becomes clear from he following upper bound
\begin{gather}
 \mathcal{J} \leq   \int_{|p_{y_1}| \leq 2} \frac{d^3
p_{y_1}}{|p_{y_1}|^{2\alpha}} + \sup_{s' \in \mathbb{R}^3}  \int_{|p_{y_1}| \geq
2} \frac{d^3p_{y_1}}{((1/2)|p_{y_1}|^{2\alpha} + 1) [(p_{y_1} + s')^2 + 1]} \nonumber \\
\leq \frac{32\pi}{(3-2\alpha)4^\alpha} + \left[\int
\frac{d^3p_{y_1}}{((1/2)|p_{y_1}|^{2\alpha} + 2)^2}\right]^{1/2} \left[\int
\frac{d^3p_{y_1}}{(|p_{y_1}|^2 + 1)^2}\right]^{1/2}
\end{gather}
where we have used the Cauchy--Schwarz inequality and the last
two integrals are obviously convergent.
For the second term in (\ref{6.8;56}) we obtain
\begin{gather}
  \sup_{|p_{y_r}| > 1} \int \bigl|\mathcal{M}(p_{y_1} , p'_{y_1} ; p_{y_r})\bigr|^2 d^3
p'_{y_1} d^3 p_{y_1} \leq C_0 |b_{11}|^{-2\alpha} \mathcal{J} \label{6.8;71}
\end{gather}
Because the expression on the rhs of (\ref{6.8;72} ) and (\ref{6.8;71}) do not depend on
$\epsilon , \epsilon', s$ the lemma is proved.
\end{proof}

\begin{lemma}\label{6.8;14}
The following inequalities hold
\begin{gather}
 \sup_{\epsilon > 0} \sup_{\epsilon' > 0} \Bigl\| \bigl|v_{12}\bigr|^{1/2} B_{12}^{-1} (\epsilon') \bigl[H_0 +
\epsilon + \mu^{-1} (\epsilon) (v_{12})_+ \bigr]^{-1} \bigl| v_{1i} \bigr|^{1/2}\Bigr\| < \infty
\quad (i=3,\ldots, N ) \label{6.8;75}\\
\sup_{\epsilon > 0} \sup_{\epsilon' > 0} \Bigl\| \bigl|v_{12}\bigr|^{1/2} B_{12}^{-1} (\epsilon')\bigl[H_0 +
\epsilon + \mu^{-1} (\epsilon) (v_{12})_+ \bigr]^{-1} \bigl| v_{ij} \bigr|^{1/2}\Bigr\| < \infty
\quad (2 \leq i < j \leq N )
\end{gather}
\end{lemma}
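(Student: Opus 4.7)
The proof proceeds by a Fourier-analytic kernel estimate in the spirit of Lemma~\ref{6.8;5}, with the additional ingredient of an orthogonal decomposition of the intercluster Jacobi coordinates adapted to the direction on which $v_{ab}$ depends. Throughout, $v_{ab}$ denotes $v_{1i}$ for $i \geq 3$ or $v_{ij}$ for $2 \leq i < j \leq N$, and I write $M := \bigl[H_0 + \epsilon + \mu^{-1}(\epsilon)(v_{12})_+\bigr]^{-1}$.

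The first step exploits that $B_{12}^{-1}(\epsilon')$ is a Fourier multiplier depending only on the intercluster momenta $p_{y_r}$, and hence commutes with both $H_0$ and $(v_{12})_+$. Since $r_a - r_b$ is a 3-dimensional linear combination of Jacobi coordinates involving at least one of the variables $y_2,\ldots,y_{N-1}$, I perform an orthogonal change of variables within $y_r$, writing $y_r = (u, y_r^\perp)$ with $u \in \mathbb{R}^3$ aligned with the $y_r$-component of $r_a - r_b$ and $y_r^\perp \in \mathbb{R}^{3N-9}$ orthogonal to it. After the partial Fourier transform $y_r^\perp \to p_{y_r^\perp}$, the operator decomposes as a direct integral over $p_{y_r^\perp}$ of fiber operators on $L^2(\mathbb{R}^3_{y_1} \times \mathbb{R}^3_u)$ of the form
\begin{equation*}
T_{p_{y_r^\perp}} = |v_{12}(y_1)|^{1/2}\, \tilde\beta(-i\nabla_u)\, \mathcal{M}_{\tilde\epsilon}\, |v_{ab}(y_1,u)|^{1/2},
\end{equation*}
where $\tilde\beta(p_u) := \beta(\epsilon';\sqrt{|p_u|^2+|p_{y_r^\perp}|^2})$, $\tilde\epsilon := |p_{y_r^\perp}|^2 + \epsilon$, and $\mathcal{M}_{\tilde\epsilon} := [-\Delta_{y_1}-\Delta_u+\tilde\epsilon+\mu^{-1}(\epsilon)(v_{12})_+]^{-1}$. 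The operator norm equals the essential supremum of the fiber norms $\|T_{p_{y_r^\perp}}\|$ over $p_{y_r^\perp}$.

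The key observation is the uniform pointwise bound $\tilde\beta(p_u) \leq \max(1, |p_u|^{-\zeta})$, valid independently of $p_{y_r^\perp}$ and $\epsilon'$, since $|p_u|^2 + |p_{y_r^\perp}|^2 \geq |p_u|^2$ yields $\beta(\epsilon';\sqrt{|p_u|^2+|p_{y_r^\perp}|^2}) \leq |p_u|^{-\zeta}$ in the low-momentum regime and $\leq 1$ otherwise. I then bound each $\|T_{p_{y_r^\perp}}\|$ uniformly in $\tilde\epsilon \geq 0$ and $\epsilon' > 0$ by adapting the Hilbert-Schmidt-on-fibers estimate of Lemma~\ref{6.8;5}: use Lemma~\ref{6.8;22} to replace $\mathcal{M}_{\tilde\epsilon}$ by $(-\Delta_{y_1}-\Delta_u+\tilde\epsilon)^{-1}$ inside the appropriate sandwich, pass to the full Fourier transform in $(y_1,u)$, and compute the resulting kernel. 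Splitting the region of integration into $|p_u|^2+|p_{y_r^\perp}|^2 \leq 1$ and $>1$ as in (\ref{6.8;56})--(\ref{6.8;72}), the low-momentum singularity $\tilde\beta^2 \sim |p_u|^{-2\zeta}$ is integrable in $\mathbb{R}^3_{p_u}$ thanks to $2\zeta = \alpha + 1/2 < 2 < 3$, while the high-momentum region is controlled by $\tilde\beta \leq 1$ together with the resolvent decay and the $L^2$-finiteness of $\widehat{|v_{12}|^{1/2}}$ and $\widehat{|v_{ab}|^{1/2}}$ (which follows from $v_{12},v_{ab} \in L^1 \cap L^3 \subset L^{3/2}$).

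The main obstacle is that the naive Hilbert-Schmidt bound on the full kernel diverges, both at high $|p_u|$ (because the 6-dimensional free Green's function decays too slowly) and at low $p_{y_r}$ (where $\beta$ itself is singular). The orthogonal splitting of $y_r$ into an "active" direction $u$ aligned with the dependence of $v_{ab}$ and a "passive" orthogonal complement $y_r^\perp$ is essential: it is only after integrating out the passive directions that the $\beta$-singularity is confined to the 3-dimensional $p_u$-integral, where $2\zeta < 3$ makes it locally integrable; simultaneously, the reduction to 6-dimensional fibers brings the problem into the same regime handled in Lemma~\ref{6.8;5}, where a Schur test (rather than a pure Hilbert-Schmidt bound) with a weight built from $\widehat{|v_{12}|^{1/2}}, \widehat{|v_{ab}|^{1/2}} \in L^2(\mathbb{R}^3)$ closes the estimate uniformly in $\tilde\epsilon$ and $\epsilon'$.
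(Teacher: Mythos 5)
Your setup (fibering over the spectator momenta, rotating the intercluster coordinates so that $v_{ab}$ depends only on $(y_1,u)$, and exploiting $2\zeta<2$ for the low--momentum singularity of the multiplier) is in the same spirit as the paper's proof, but two of your key steps do not work as stated. First, the passage from $\mathcal{M}_{\tilde\epsilon}$ to the free resolvent: Lemma~\ref{6.8;22} is an operator--norm contraction statement about $\bigl(A+W\bigr)^{-1/2}A^{1/2}$ and does not justify replacing the kernel of the dressed resolvent by the free kernel inside a Hilbert--Schmidt or Schur estimate; and if you instead use it in the operator--norm sandwich form, you are left with a factor $\bigl\||v_{12}|^{1/2}\,\tilde\beta(-i\nabla_u)\,(-\Delta_{y_1}-\Delta_u+\tilde\epsilon)^{-1/2}\bigr\|$, whose fiber over $p_u$ is $\tilde\beta(p_u)$ times a quantity that does \emph{not} vanish as $p_u\to 0$, so the supremum diverges as $\epsilon'\to 0$. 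The decay that tames $\tilde\beta^2\sim|p_u|^{-2\zeta}$ must come from keeping the \emph{full} resolvent kernel in the $y_1$--sector and using the pointwise Green's--function domination $0\le G(k;y_1,y_1')\le e^{-k|y_1-y_1'|}/|y_1-y_1'|$ (this is (\ref{6.8;80}), a positivity argument valid because $(v_{12})_+\ge 0$); its $L^2_{y_1,y_1'}$--norm supplies the extra factor $(|p_u|^2+\epsilon)^{-1/2}$, and only then is the $p_u$--integral of $|p_u|^{-2\zeta-1}$ finite (note the correct requirement is $2\zeta+1<3$, not $2\zeta<3$).

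Second, your treatment of the high--momentum region is a genuine gap: as you yourself observe, the Hilbert--Schmidt bound of the full fiber kernel diverges there (the integrand behaves like $(|p_u|^2+\tilde\epsilon)^{-1/2}$ for $|p_u|\ge 1$), and the proposed rescue by ``a Schur test with a weight built from $\widehat{|v_{12}|^{1/2}},\widehat{|v_{ab}|^{1/2}}\in L^2$'' is not substantiated and is not a standard route for operators of the form $|v|^{1/2}R\,|w|^{1/2}$. The missing idea is an \emph{operator--level} splitting of the multiplier: write $B_{12}^{-1}(\epsilon')=(1+(\epsilon')^{\zeta/2})^{-1}+\bigl[B_{12}^{-1}(\epsilon')-(1+(\epsilon')^{\zeta/2})^{-1}\bigr]$, where the second term is supported in $|p_{y_r}|\le 1$. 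The constant part commutes with everything relevant and is disposed of by the uniform sandwich bound $\bigl\||v_{12}|^{1/2}\mathcal{M}^{1/2}\bigr\|\,\bigl\|\mathcal{M}^{1/2}|v_{ab}|^{1/2}\bigr\|<\infty$, while the kernel (Hilbert--Schmidt) estimate with the Green's--function bound is applied only to the compactly supported singular part, where the $\chi_1(|p_{y_r}|)$ cutoff removes the large--momentum divergence altogether. This is exactly how the paper proceeds (the splitting $D=D^{(1)}+D^{(2)}$); without it, or an equivalent device, your kernel estimate does not close.
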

\begin{proof}
 The proof practically repeats that of Lemma~2 in \cite{jpasubmit}, where
in the definition of $B_{12}$ we used $\zeta=1/2$
(note, that coordinates' notations here are different from definitions in
\cite{jpasubmit}). So we shall restrict ourselves to the proof of (\ref{6.8;75}) for $i=3$, which is equivalent to
$\sup_{\epsilon, \epsilon' >0}\|D(\epsilon, \epsilon')\|< \infty$, where
\begin{equation}
 D (\epsilon, \epsilon') = \mathcal{F}_{12}^{-1} \bigl|v_{12}\bigr|^{1/2} B_{12}^{-1} (\epsilon')
\bigl[H_0 + \epsilon + \mu^{-1} (\epsilon) (v_{12})_+ \bigr]^{-1} \bigl| v_{13} \bigr|^{1/2}
\mathcal{F}_{12}.
\end{equation}
We split $D (\epsilon, \epsilon')$ as follows
\begin{gather}
D (\epsilon, \epsilon') =  D^{(1)} (\epsilon, \epsilon') + D^{(2)}(\epsilon, \epsilon') , \\
D^{(1)} (\epsilon, \epsilon')  := \mathcal{F}_{12}^{-1}  \bigl|v_{12}\bigr|^{\frac 12}
\left\{ B_{12}^{-1} (\epsilon') - (1+(\epsilon')^{\frac \zeta2})^{-1} \right\}
\bigl[H_0 + \epsilon + \mu^{-1} (\epsilon) (v_{12})_+ \bigr]^{-1}\bigl| v_{13} \bigr|^{\frac 12}
\mathcal{F}_{12}  , \\
 D^{(2)} (\epsilon, \epsilon')  := (1+(\epsilon')^{\frac \zeta2})^{-1} \mathcal{F}_{12}^{-1}
\bigl|v_{12}\bigr|^{\frac 12}
 \bigl[H_0 + \epsilon + \mu^{-1} (\epsilon) (v_{12})_+ \bigr]^{-1}\bigl| v_{13} \bigr|^{\frac 12}
\mathcal{F}_{12}.
\end{gather}
For convenience we define the tuple $p_{y_c} := (p_{y_3} , p_{y_4}, \ldots, p_{y_{N-1}})
\in \mathbb{R}^{3N-9}$.
The operator $D^{(1)} (\epsilon, \epsilon')$ acts on $\phi(y_1, p_{y_2},  p_{y_c} ) \in
L^2(\mathbb{R}^{3N-3})$ as follows
\begin{equation}
\bigl( D^{(1)} (\epsilon, \epsilon') \phi\bigr) (y_1 ,  p_{y_2}, p_{y_c} ) = \int d^3 y'_1  \: d^3
p'_{y_2} \: \mathcal{D}^{(1)} (y_1, y'_1 ,   p_{y_2},  p'_{y_2} ;  p_{y_c} ) \phi(y'_1 ,
p'_{y_2}, p_{y_c} )  . \label{6.8;77}
\end{equation}
The integral kernel in (\ref{6.8;77}) has the form, see \cite{1,jpasubmit} 
\begin{eqnarray}
\mathcal{D}^{(1)} (y_1, y'_1 ,   p_{y_2},  p'_{y_2} ;  p_{y_c} ) = \frac
1{2^{\frac{3N-2}2}\pi^{\frac{3N-4}2} \gamma^3}
\left\{|p_{y_r}|^{\zeta} + (\epsilon')^{\frac \zeta2} \right\}^{-1} \chi_1
(|p_{y_r}|)\left| v_{12}\bigl( (2\mu_{12})^{-1} y_1 \bigr) \right|^{\frac 12}
\nonumber\\
\times  G\Bigl((p_{y_r}^2 + \epsilon)^{\frac 12};y_1, y'_1\Bigr)  \exp\left(
i\beta \gamma^{-1} y'_1 \cdot (p_{y_2} - p'_{y_2})\right) \:\widehat{\bigl|
v_{23}\bigr|^{\frac 12}} (\gamma^{-1}(p_{y_2} - p'_{y_2})), \quad
\end{eqnarray}
where $\beta := -m_2 \hbar / ((m_1 + m_2)\sqrt{2\mu_{12}})$, $\gamma :=
\hbar/\sqrt{2M_{12;3}}$ and $p_{y_r}^2 = p_{y_2}^2 + p_{y_c}^2 $.
The function $G(k; y_1, y'_1)$ denotes the integral kernel of the operator
$(-\Delta_{y_1} + k^2 + \mu^{-1} (\epsilon)(v_{12})_+)^{-1}$ acting in $L^2 (\mathbb{R}^3)$. By the arguments in \cite{jpasubmit} (around Eqs. (18)--(19))
\begin{equation}
 0 \leq G(k; y_1, y'_1) \leq \frac{e^{-k|y_1 -y'_1|}}{|y_1 -y'_1|}  \label{6.8;80}
\end{equation}
away from $y_1 = y'_1$.
Using the estimate
\begin{equation}
  \| D^{(1)} (\epsilon, \epsilon') \|^2 \leq \sup_{ p_{y_c}} \int \bigl|\mathcal{D}^{(1)} (y_1, y'_1
,   p_{y_2},  p'_{y_2} ;  p_{y_c} )\bigr|^2 d^3 y_1  d^3 y'_1 d^3 p_{y_2} d^3
p'_{y_2}
\end{equation}
and the upper bound (\ref{6.8;80}) we get
\begin{gather}
\| D^{(1)} (\epsilon, \epsilon') \|^2 \leq C_0 \sup_{ p_{y_c}} \int_{|p_{y_2}|\leq 1}
\left\{\bigl( p^2_{y_2} + |p_{y_c}|^2\bigr)^{\frac \zeta2} + (\epsilon')^{\frac
\zeta2} \right\}^{-2}
\left( p^2_{y_2} + |p_{y_c}|^2 + \epsilon \right)^{-\frac 12} d^3 p_{y_2} \nonumber\\
\leq C_0  \int_{|p_{y_2}|\leq 1}  \frac{d^3 p_{y_2}}{|p_{y_2}|^{2\zeta+1}} \leq
2\pi C_0 (1-\zeta)^{-1}
\end{gather}
where
\begin{equation}
 C_0 := \frac 1{2^{3N-3} \pi^{3N-5} \gamma^6} \left\{\int \Bigl|\widehat{\bigl|
v_{23}\bigr|^{\frac 12}} (\gamma^{-1}p_{y_2})\Bigr|^2 d^3 p_{y_2} \right\}
 \left\{\int \Bigl|v_{12}\bigl( (2\mu_{12})^{-1} y_1 \bigr)\Bigr| d^3 y_1
\right\}
\end{equation}
is finite due to $v_{ik} \in L^1 (\mathbb{R}^3)$. Therefore,
$\sup_{\epsilon, \epsilon' >0} \| D^{(1)} (\epsilon, \epsilon') \| <\infty $. The proof that
$\sup_{\epsilon, \epsilon'>0} \| D^{(2)} (\epsilon, \epsilon') \| <\infty $ is trivial (c. f. proof of
Lemma~2 in \cite{jpasubmit}).
\end{proof}

The following corollaries to Theorem~\ref{3.8;8} provide further necessary estimates. We
agreed to set $\|\varphi(\epsilon)\| = 1$ for $\epsilon \in [0, \beps]$, while for $\psi(\epsilon)$ we can
prove
\begin{corollary} \label{20.9;6}
\begin{equation}
  C_\psi := \sup_{\epsilon >0}\|\psi(\epsilon)\| < \infty
\end{equation}
\end{corollary}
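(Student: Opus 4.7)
The plan is to reduce the bound on $\|\psi(\epsilon)\|$ to an estimate already delivered by Theorem~\ref{3.8;8}. Since $\varphi(\epsilon) \equiv 0$ for $\epsilon > \beps$ by the convention fixed after (\ref{3.8;5}), we have $\psi(\epsilon) = 0$ there and the supremum in the statement is trivially controlled outside $(0, \beps]$. Thus I only need a uniform bound on $\|\psi(\epsilon)\|$ for $\epsilon \in (0, \beps]$.

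The key observation is that in the Jacobi coordinates associated with particle $1$ we have
\begin{equation}
H_0 = -\sum_{i=1}^{N-1} \Delta_{x_i} \geq -\Delta_{x_1},
\end{equation}
as self--adjoint operators on $L^2(\mathbb{R}^{3N-3})$. By operator monotonicity of $t \mapsto -t^{-1}$ on positive operators this yields
\begin{equation}
(H_0 + \epsilon)^{-1} \leq (-\Delta_{x_1} + \epsilon)^{-1}.
\end{equation}
Since $\psi(\epsilon) = (H_0+\epsilon)^{-1/2}\varphi(\epsilon)$ by (\ref{11.01.12/1}), taking the inner product with $\varphi(\epsilon)$ gives
\begin{equation}
\|\psi(\epsilon)\|^2 = \bigl(\varphi(\epsilon), (H_0+\epsilon)^{-1}\varphi(\epsilon)\bigr) \leq \bigl(\varphi(\epsilon), (-\Delta_{x_1}+\epsilon)^{-1}\varphi(\epsilon)\bigr) = \|G_1(0,\epsilon)\varphi(\epsilon)\|^2,
\end{equation}
where the last identity is the definition of $G_1(s,\epsilon)$ in (\ref{6.8;1}) specialized to $s=0$.

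Finally, Theorem~\ref{3.8;8} applied with $\alpha = 1$ (which lies in $[1, 3/2)$), $s = 0$, $j=1$, and $\epsilon' = \epsilon$ yields
\begin{equation}
\sup_{\epsilon > 0} \|G_1(0,\epsilon)\varphi(\epsilon)\| < \infty,
\end{equation}
and combining this with the previous display completes the proof. There is no real obstacle here: the entire content is that the isotropy of $H_0$ together with Theorem~\ref{3.8;8} already contains this corollary; what the corollary really does is record a convenient consequence in the form that will be used later, extracting control of the full resolvent $(H_0+\epsilon)^{-1/2}$ from the seemingly weaker single--direction operators $G_j(s,\epsilon')$.
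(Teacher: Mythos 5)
Your proof is correct and is essentially the paper's own argument: the paper simply states that the corollary "easily follows from (\ref{11.01.12/1}) and Theorem~\ref{3.8;8}", and your chain $\|\psi(\epsilon)\|^2=(\varphi(\epsilon),(H_0+\epsilon)^{-1}\varphi(\epsilon))\leq\|G_1(0,\epsilon)\varphi(\epsilon)\|^2$ together with Theorem~\ref{3.8;8} at $\alpha=1$ is exactly the intended filling-in of that one-line proof. No gaps.
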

\begin{proof}
 The proof easily follows from (\ref{11.01.12/1}) and Theorem~\ref{3.8;8}.
\end{proof}
\begin{corollary}\label{24.9;9}
 For $\epsilon , \epsilon' >0$ let us set
\begin{equation}\label{11.10;1}
 \eta(\epsilon, \epsilon') := |v_{12}|^{\frac12} B_{12}^{-1}(\epsilon') \bigl( H_0 +
\epsilon\bigr)^{-\frac 12} \varphi(\epsilon) ,
\end{equation}
where $B_{12} (\epsilon)$ was defined in (\ref{7.8;15}) and $\zeta \in (0, 1)$. Then
\begin{equation}
 C_\eta := \sup_{\epsilon >0} \sup_{0 < \epsilon' \leq \epsilon}\|\eta(\epsilon, \epsilon')\| < \infty . 
\end{equation}
\end{corollary}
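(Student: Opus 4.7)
The plan is to recognize that, up to a commutation, $\eta(\epsilon,\epsilon')$ is precisely the quantity whose uniform boundedness was already established as the intermediate bound (\ref{6.8;6}) in the proof of Theorem~\ref{3.8;8}. The corollary should therefore reduce to a one-line commutation argument plus invocation of (\ref{6.8;6}) and the definition of $\psi(\epsilon)$.

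Concretely, first I would observe that $v_{12}$ is a multiplication operator by a function of $r_2-r_1$, which is proportional to the Jacobi coordinate $y_1$ alone. On the other hand, inspecting the definition (\ref{7.8;15}) of $B_{12}(\epsilon')$, one sees that $B_{12}(\epsilon')$ is the identity in the variable $y_1$ and a bounded Fourier multiplier in the conjugate variables $p_{y_r}$ to $y_r=(y_2,\ldots,y_{N-1})$. Hence $B_{12}^{-1}(\epsilon')$ commutes with the multiplication operator $|v_{12}|^{1/2}$.

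Next I would use this commutation to write
\begin{equation*}
 \eta(\epsilon,\epsilon') \;=\; |v_{12}|^{\frac12} B_{12}^{-1}(\epsilon')(H_0+\epsilon)^{-\frac12}\varphi(\epsilon) \;=\; B_{12}^{-1}(\epsilon')\,|v_{12}|^{\frac12}\,(H_0+\epsilon)^{-\frac12}\varphi(\epsilon),
\end{equation*}
and then, using the definition (\ref{11.01.12/1}) of $\psi(\epsilon)=(H_0+\epsilon)^{-1/2}\varphi(\epsilon)$,
\begin{equation*}
 \eta(\epsilon,\epsilon') \;=\; B_{12}^{-1}(\epsilon')\,|v_{12}|^{\frac12}\,\psi(\epsilon).
\end{equation*}

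Finally I would apply the uniform estimate (\ref{6.8;6}), which gives $\sup_{\epsilon>0}\sup_{0<\epsilon'\leq \epsilon}\|B_{12}^{-1}(\epsilon')|v_{12}|^{1/2}\psi(\epsilon)\|<\infty$, and conclude $C_\eta<\infty$. There is essentially no obstacle here: the whole content of the corollary is the reinterpretation of $\eta(\epsilon,\epsilon')$ as $B_{12}^{-1}(\epsilon')|v_{12}|^{1/2}\psi(\epsilon)$ via commutativity, after which Theorem~\ref{3.8;8} does all the work.
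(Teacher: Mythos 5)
Your proposal is correct and coincides with the paper's own argument: the paper likewise rewrites $\eta(\epsilon,\epsilon')$ via $\psi(\epsilon)=(H_0+\epsilon)^{-1/2}\varphi(\epsilon)$, uses the commutation $[B_{12}(\epsilon'),v_{12}]=0$ (already invoked in the proof of Theorem~\ref{3.8;8}), and concludes from the uniform bound (\ref{6.8;6}). Your explicit justification of the commutation (that $B_{12}(\epsilon')$ is a Fourier multiplier in $p_{y_r}$ and acts as the identity in $y_1$, on which $v_{12}$ depends) is merely a more detailed statement of the same step.
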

\begin{proof}
Using (\ref{11.01.12/1}) we can write
\begin{equation}
  \eta(\epsilon, \epsilon') =  |v_{12}|^{\frac12} B_{12}^{-1}(\epsilon') \psi(\epsilon) . 
\end{equation}
Now the result follows from (\ref{6.8;6}).
\end{proof}
Note that due to continuity of $\varphi(\epsilon)$ on $[0, \beps]$ and
compactness of the interval
\begin{equation}
 \sup_{\substack{\epsilon_{1, 2} \in [0, \beps] \\ |\epsilon_1 - \epsilon_2| <
d}} \|\varphi(\epsilon_1) - \varphi(\epsilon_2)\| = \hbox{o} (d)  \quad \quad (\textnormal{when $d \to 0$}).  \label{7.8;25}
\end{equation}
The following is also true
\begin{corollary}\label{19.9;10}
Let $\alpha \in [1, \frac 32 )$. For all $\varepsilon > 0$ there exists
$\delta > 0$ such that
\begin{equation}
\sup_{\epsilon ' > 0}  \sup_{s \in \mathbb{R}^3} \| G_j^\alpha (s, \epsilon')
(\varphi(\epsilon_1)  - \varphi(\epsilon_2))\| < \varepsilon \label{7.8;21}
\end{equation}
for all $|\epsilon_1 - \epsilon_2 | < \delta$ and $j = 1, 2, \ldots, N$.
\end{corollary}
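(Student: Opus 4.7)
The plan is to combine two bounds on $\varphi$: the uniform estimate of Theorem~\ref{3.8;8} at a strictly \emph{larger} exponent $\alpha'\in(\alpha,\tfrac32)$, and the norm-continuity of $\varphi$ on $[0,\beps]$ recorded in (\ref{7.8;25}), via a H\"older interpolation in momentum space. The extra power in $[(p_1+s)^2+\epsilon']^{-1/2}$ beyond what is strictly needed will be traded for a factor of the small $L^2$ quantity $\|\varphi(\epsilon_1)-\varphi(\epsilon_2)\|$, which is where all the continuity information enters.

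Fix $\alpha'\in(\alpha,\tfrac32)$ and set $\theta:=\alpha/\alpha'\in(0,1)$. In the $j$-th Jacobi coordinate system the operator $G_j^\alpha(s,\epsilon')$ is (via $\mathcal{F}_j$) multiplication by $[(p_1+s)^2+\epsilon']^{-\alpha/2}$, so for any $f\in L^2(\mathbb{R}^{3N-3})$
\begin{equation*}
\|G_j^\alpha(s,\epsilon')f\|^2 \;=\; \int \frac{|\hat f(p_1,p_r)|^2}{[(p_1+s)^2+\epsilon']^{\alpha}}\,d^3p_1\,d^{3N-6}p_r.
\end{equation*}
Factoring the integrand as $|\hat f|^{2(1-\theta)}\cdot\bigl(|\hat f|^{2}[(p_1+s)^2+\epsilon']^{-\alpha/\theta}\bigr)^{\theta}$ and applying H\"older's inequality with conjugate exponents $(1-\theta)^{-1}$, $\theta^{-1}$, using $\alpha/\theta=\alpha'$, yields the pointwise interpolation bound
\begin{equation*}
\|G_j^\alpha(s,\epsilon')f\| \;\le\; \|f\|^{1-\theta}\,\|G_j^{\alpha'}(s,\epsilon')f\|^{\theta},
\end{equation*}
valid uniformly in $s\in\mathbb{R}^3$ and $\epsilon'>0$.

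Applying this with $f=\varphi(\epsilon_1)-\varphi(\epsilon_2)$, the second factor is controlled by Theorem~\ref{3.8;8} at the exponent $\alpha'$ (admissible since $\alpha'<\tfrac 32$): the triangle inequality gives $\|G_j^{\alpha'}(s,\epsilon')(\varphi(\epsilon_1)-\varphi(\epsilon_2))\|\le 2C_{\alpha'}$ for a finite constant $C_{\alpha'}$. Consequently
\begin{equation*}
\sup_{\epsilon'>0}\,\sup_{s\in\mathbb{R}^3}\,\|G_j^\alpha(s,\epsilon')(\varphi(\epsilon_1)-\varphi(\epsilon_2))\|
\;\le\;(2C_{\alpha'})^{\theta}\,\|\varphi(\epsilon_1)-\varphi(\epsilon_2)\|^{1-\theta}.
\end{equation*}
By (\ref{7.8;25}) the $L^2$-factor on the right tends to $0$ as $|\epsilon_1-\epsilon_2|\to 0$ with both points in $[0,\beps]$, and it vanishes identically when both lie in $(\beps,\infty)$; choosing $\delta$ small enough so that the two parameters cannot straddle $\beps$ turns this into the desired $\varepsilon$-bound. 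No serious obstacle is anticipated: the argument is essentially two lines once the correct interpolation exponent is chosen, and the choice $\alpha'\in(\alpha,\tfrac 32)$ is available precisely because Theorem~\ref{3.8;8} is proved on the open interval $[1,\tfrac 32)$ rather than only at the endpoint $\alpha=1$.
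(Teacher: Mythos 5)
Your argument is correct and uses the same two ingredients as the paper's proof --- the uniform bound of Theorem~\ref{3.8;8} at a larger exponent $\alpha'\in(\alpha,\tfrac32)$ and the uniform continuity (\ref{7.8;25}) of $\varphi$ on $[0,\beps]$ --- but it combines them by a genuinely different mechanism. The paper first observes that the supremum over $\epsilon'>0$ may be restricted to $\epsilon'<r$, then splits momentum space at $|p_1+s|^2=r$: outside, the weight is bounded by $r^{-\alpha}$ and the $L^2$ smallness of $\varphi(\epsilon_1)-\varphi(\epsilon_2)$ enters; inside, a factor $(2r)^{\alpha'-\alpha}$ is extracted and the remainder is controlled by $M_{\alpha'}$, after which $r$ and then $\delta$ are fixed. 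Your H\"older interpolation $\|G_j^\alpha(s,\epsilon')f\|\le\|f\|^{1-\theta}\,\|G_j^{\alpha'}(s,\epsilon')f\|^{\theta}$ with $\theta=\alpha/\alpha'$ packages the same trade-off in one line and yields the explicit modulus $\sup_{\epsilon',s}\|G_j^\alpha(s,\epsilon')(\varphi(\epsilon_1)-\varphi(\epsilon_2))\|\le(2C_{\alpha'})^{\theta}\|\varphi(\epsilon_1)-\varphi(\epsilon_2)\|^{1-\theta}$, which is exactly what the paper's splitting gives after optimizing over $r$; so nothing is lost and the bookkeeping is cleaner. One caveat: your closing remark that $\delta$ can be chosen so that $\epsilon_1,\epsilon_2$ ``cannot straddle $\beps$'' is not literally achievable --- every $\delta$-neighbourhood of $\beps$ contains straddling pairs, and across $\beps$ the map $\varphi$ jumps from a normalized vector to $0$, so for such pairs no smallness can hold. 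This, however, is a defect of the literal reading of the statement rather than of your argument: the paper's own proof likewise relies on (\ref{7.8;25}), which only covers $\epsilon_{1,2}\in[0,\beps]$ (the case of both parameters beyond $\beps$ being trivial), and with that intended reading your proof is complete.
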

\begin{proof}
 Without loosing generality let us set $j = 1$. For any $r \in \mathbb{R}_+ / \{0\}$ we have
\begin{gather}
 \sup_{\epsilon ' > 0}  \sup_{s \in \mathbb{R}^3} \| G_1^\alpha (s, \epsilon')
(\varphi(\epsilon_1)  - \varphi(\epsilon_2))\|^2 =
\sup_{0 <\epsilon ' <r}  \sup_{s \in \mathbb{R}^3} \| G_1^\alpha (s, \epsilon')
(\varphi(\epsilon_1)  - \varphi(\epsilon_2))\|^2 \nonumber\\
\leq r^{-\alpha} \| \varphi(\epsilon_1)  - \varphi(\epsilon_2) \|^2 + \sup_{0
<\epsilon ' <r}  \sup_{s \in \mathbb{R}^3} \int_{|p_1 + s|^2 \leq r}
\frac{\left|\hat \varphi(\epsilon_1)  - \hat \varphi(\epsilon_2)\right|^2
}{\left[ (p_1 + s)^2 + \epsilon' \right]^\alpha} d^{3N-3}p,
\end{gather}
where $\hat \varphi = F_1 \varphi$ and $p := (p_1, \ldots, p_{N-1})$. Let us choose $\alpha' \in (\alpha, \frac32 )$. For
the last term we can write
\begin{equation}
 \sup_{0 <\epsilon ' <r}  \sup_{s \in \mathbb{R}^3} \int_{|p_1 + s|^2 \leq r}
\frac{\left|\hat \varphi(\epsilon_1)  - \hat \varphi(\epsilon_2)\right|^2
}{\left[ (p_1 + s)^2 + \epsilon' \right]^\alpha} d^{3N-3}p \leq 2 (2r)^{\alpha'
- \alpha} M_{\alpha'},
\end{equation}
where the constant
\begin{equation}
 M_{\alpha'} := \sup_{\epsilon >0} \sup_{\epsilon ' >0}  \sup_{s \in
\mathbb{R}^3} \|G_1^{\alpha'} (s, \epsilon') \varphi(\epsilon)\|^2
% = \sup_{0
% <\epsilon ' < \epsilon}  \sup_{s \in \mathbb{R}^3} \|G_1^{\alpha'} (s, \epsilon')
% \varphi(\epsilon)\|^2
\end{equation}
is finite by Theorem~\ref{3.8;8}. Summarizing,
\begin{equation}
\sup_{\epsilon ' > 0}  \sup_{s \in \mathbb{R}^3} \| G_1^\alpha (s, \epsilon')
(\varphi(\epsilon_1)  - \varphi(\epsilon_2))\|^2 \leq  r^{-\alpha} \|
\varphi(\epsilon_1)  - \varphi(\epsilon_2) \|^2 + 2 (2r)^{\alpha' - \alpha}
M_{\alpha'}
\end{equation}
Now (\ref{7.8;21}) for $j = 1$ would hold if the following two inequalities are fulfilled
\begin{gather}
4 (2r)^{\alpha' - \alpha} M_{\alpha'} \leq \varepsilon^2  \label{7.8;22} , \\
2 r^{-\alpha} \| \varphi(\epsilon_1)  - \varphi(\epsilon_2) \|^2 \leq
\varepsilon^2 \label{7.8;23} . 
\end{gather}
Let us fix $r$ so that (\ref{7.8;22}) is satisfied. Then by (\ref{7.8;25}) we can always
choose an appropriate $\delta >0$ so that
(\ref{7.8;23}) holds for $|\epsilon_1 - \epsilon_2| < \delta$.
\end{proof}

Using (\ref{11.01.12/1}) let us define
\begin{equation}
 \tilde \varphi(\epsilon)  := - \mu^{-1} (\epsilon) V \psi (\epsilon) \quad \quad (\epsilon \in (0, \beps]). \label{18.9;1}
\end{equation}
Obviously
\begin{equation}
\bigl(H_0 + \epsilon\bigr)^{\frac 12}  \varphi(\epsilon) = \bigl(H_0 + \epsilon\bigr) \psi(\epsilon) =  \tilde \varphi(\epsilon).
\end{equation}
\begin{corollary}\label{18.9;4}
The norm limits $\psi(0)= \lim_{\epsilon \to +0} \psi(\epsilon)$ and $\tilde \varphi(0)=
\lim_{\epsilon \to +0} \tilde \varphi(\epsilon)$ exist and $\tilde \varphi(\epsilon), \psi(\epsilon)$ are norm--continuous on $[0, \beps]$. The following is also true for
$L > 0$
\begin{equation}\label{19.9;3}
 \sup_{\epsilon \in [0, \beps]}\left\|\bigl(1-\chi_L (|x|^2)\bigr) \tilde \varphi(\epsilon)\right\| = \hbox{o} (L^{-1}) \quad \quad (\textnormal{when $L \to \infty$})
\end{equation}
\end{corollary}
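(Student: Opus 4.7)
The argument splits into defining the candidate limits, establishing norm continuity of $\psi$ and of $\tilde\varphi$, and finally the uniform tail bound.

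\emph{Candidate limits.} By Corollary~\ref{19.9;10} the map $\epsilon\mapsto\varphi(\epsilon)$ is norm continuous on $[0,\beps]$, and the weighted bound of Theorem~\ref{3.8;8} extends to $\varphi(0)$ in the limit; in particular $\int p_1^{-2}|\hat\varphi(0)|^2\,dp<\infty$ (take $\alpha=1$, $s=0$, and pass $\epsilon'\to 0^+$ by monotone convergence). Since $|p|^{-2}\leq p_1^{-2}$, the function $|p|^{-1}\hat\varphi(0)$ lies in $L^2$, so I may define $\psi(0):= F_1^{-1}\bigl(|p|^{-1}\hat\varphi(0)\bigr)$ and $\tilde\varphi(0):=-\mu^{-1}(0)V\psi(0)$.

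\emph{Continuity of $\psi$.} In Fourier variables of the first Jacobi system, decompose
\begin{equation*}
\psi(\epsilon)-\psi(0)=F_1^{-1}(p^2+\epsilon)^{-1/2}\bigl(\hat\varphi(\epsilon)-\hat\varphi(0)\bigr)+F_1^{-1}\bigl[(p^2+\epsilon)^{-1/2}-|p|^{-1}\bigr]\hat\varphi(0).
\end{equation*}
Using $(p^2+\epsilon)^{-1}\leq(p_1^2+\epsilon)^{-1}$ the first piece has $L^2$ norm bounded by $\|G_1(0,\epsilon)(\varphi(\epsilon)-\varphi(0))\|$, which is small by Corollary~\ref{19.9;10} with $\alpha=1$. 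The integrand of the second piece vanishes pointwise and is dominated by $p_1^{-2}|\hat\varphi(0)|^2\in L^1$, so dominated convergence applies. The analogous splitting around an interior point $\epsilon_0\in(0,\beps]$ (with $|p|^{-1}$ replaced by $(p^2+\epsilon_0)^{-1/2}$) gives continuity on the whole interval.

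\emph{Continuity of $\tilde\varphi$.} Since $\tilde\varphi=-\mu^{-1}V\psi$ and $\mu$ is continuous and positive (Theorem~\ref{3.8;3}), it suffices to prove norm continuity of $V\psi$. For each pair $\{i,k\}$ pass to Jacobi coordinates $(y_1,y_r)$ adapted to the pair, so that $v_{ik}=v_{ik}(y_1)$ with $y_1\in\mathbb{R}^3$. Applying H\"older's inequality on each $y_r$-slice followed by the 3D Sobolev embedding $H^1(\mathbb{R}^3)\hookrightarrow L^6(\mathbb{R}^3)$ and integrating in $y_r$ yields
\begin{equation*}
\|v_{ik}f\|_{L^2(\mathbb{R}^{3N-3})}\leq\|v_{ik}\|_{L^3(\mathbb{R}^3)}\,\|f\|_{L^2_{y_r}L^6_{y_1}}\leq C\,\|v_{ik}\|_3\,\|\nabla_{y_1}f\|_{L^2}.
\end{equation*}
With $f=\psi(\epsilon)-\psi(\epsilon_0)$ this reduces the problem to norm continuity of $\nabla_{y_1}\psi(\epsilon)$; the Fourier identity $\|\nabla_{y_1}\psi(\epsilon)\|_2^2=\int p_{y_1}^2(p^2+\epsilon)^{-1}|\hat\varphi(\epsilon)|^2\,dp\leq 1$ together with a repeat of the previous dominated-convergence argument (the extra factor $p_{y_1}^2(p^2+\epsilon)^{-1}\leq 1$ is harmless) supplies the needed continuity.

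\emph{Tail bound and main obstacle.} Norm continuity of $\tilde\varphi$ on the compact interval $[0,\beps]$ makes its image a compact, hence uniformly tight, subset of $L^2(\mathbb{R}^{3N-3})$: given $\delta>0$, cover the image by finitely many $\delta/3$-balls centered at $\tilde\varphi(\epsilon_1),\ldots,\tilde\varphi(\epsilon_K)$, pick $L_0$ with $\|(1-\chi_L(|x|^2))\tilde\varphi(\epsilon_k)\|<\delta/3$ for all $k$ and $L\geq L_0$ by dominated convergence applied to each center, and the triangle inequality transfers the bound to all $\epsilon\in[0,\beps]$. This yields $\sup_\epsilon\|(1-\chi_L(|x|^2))\tilde\varphi(\epsilon)\|\to 0$ as $L\to\infty$, consistent with the author's loose use of $\hbox{o}$ in (\ref{7.8;25}). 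The most delicate step of the plan is the dominated-convergence bookkeeping around $\epsilon=0$, where I subtract through the common limit $|p|^{-1}\hat\varphi(0)$ so that the singular weight and the moving density are each controlled by a single integrable dominant coming from Theorem~\ref{3.8;8}. A literal $o(L^{-1})$ rate, if that is what (\ref{19.9;3}) intends, would require a uniform weighted bound $\sup_\epsilon\||x|^\beta\tilde\varphi(\epsilon)\|<\infty$ for some $\beta>2$, strictly stronger than compactness and presumably obtained by exploiting the Schr\"odinger equation for $\psi$ to generate intercluster decay uniformly in $\epsilon$.
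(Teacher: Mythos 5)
Your proposal is correct, but it reaches the continuity statements by a genuinely different route than the paper. The paper proves a uniform Cauchy estimate on $(0,\beps]$: it splits $\psi(\epsilon_1)-\psi(\epsilon_2)$ as $(H_0+\epsilon_1)^{-1/2}[\varphi(\epsilon_1)-\varphi(\epsilon_2)]$ plus $[(H_0+\epsilon_1)^{-1/2}-(H_0+\epsilon_2)^{-1/2}]\varphi(\epsilon_2)$, controls the second term through the interpolation trick with $f(r)=r^{2/3}$, giving the quantitative modulus $|\epsilon_1-\epsilon_2|^{1/6}$ together with the $\alpha=4/3$ case of Theorem~\ref{3.8;8}, and then obtains $\tilde\varphi$--continuity from relative $H_0$--boundedness of $V$ with bound zero plus the external bound $\sup_\epsilon\|H_0\psi(\epsilon)\|<\infty$ cited from \cite{1}. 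You instead construct the limit explicitly, showing $|p|^{-1}\hat\varphi(0)\in L^2$ via Theorem~\ref{3.8;8} and monotone convergence, and prove convergence by subtracting through the limit and using dominated convergence; for $\tilde\varphi$ you replace the citation of \cite{1} by the self--contained H\"older--Sobolev bound $\|v_{ik}f\|\lesssim\|v_{ik}\|_3\|\nabla_{y_1}f\|$ together with continuity of $\nabla_{y_1}\psi(\epsilon)$, which follows from the same splitting since $|p_{y_1}|(p^2+\epsilon)^{-1/2}\leq 1$. Both arguments are sound; the paper's buys an explicit H\"older-$1/6$ rate (not needed for the corollary), yours is more elementary and avoids the appeal to Lemma~1 of \cite{1}. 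Your attribution of the plain norm continuity of $\varphi$ to Corollary~\ref{19.9;10} is cosmetic (it is really (\ref{7.8;25}), i.e.\ Theorem~\ref{3.8;3}), and the extension of Theorem~\ref{3.8;8} to $\varphi(0)$, which you sketch, is indeed needed when you invoke Corollary~\ref{19.9;10} with $\epsilon_2=0$.

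Concerning (\ref{19.9;3}): your compactness/tightness argument is exactly what the paper means by ``trivial consequence of the norm--continuity of $\tilde\varphi(\epsilon)$ on $[0,\beps]$'', and you are right that it yields only $\sup_\epsilon\|(1-\chi_L)\tilde\varphi(\epsilon)\|\to 0$, not a literal $\hbox{o}(L^{-1})$ rate. The paper's own justification delivers no more than that either (just as (\ref{7.8;25}) can only mean $\hbox{o}(1)$ as $d\to 0$), and the only place (\ref{19.9;3}) is used, in Lemma~\ref{8.8;20}, requires precisely this $\hbox{o}(1)$ smallness. So your reading of the loose $\hbox{o}$--notation, and your remark that a genuine $\hbox{o}(L^{-1})$ would demand a uniform weighted bound such as $\sup_\epsilon\||x|^\beta\tilde\varphi(\epsilon)\|<\infty$ with $\beta>2$, are both accurate; there is no gap in your argument relative to what the paper actually proves and uses.
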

\begin{proof}
Let us first prove the following statement: for all $\varepsilon > 0$ there exist $\delta , \delta' > 0$ such that
\begin{gather}
\sup_{\substack{\epsilon_{1, 2} \in (0, \beps] \\ |\epsilon_1 - \epsilon_2| <\delta}} \|\psi (\epsilon_1) - \psi(\epsilon_2)\| < \varepsilon \label{18.9;2} \\
\sup_{\substack{\epsilon_{1, 2} \in (0, \beps] \\ |\epsilon_1 - \epsilon_2| <\delta'}} \|\tilde \varphi (\epsilon_1) - \tilde \varphi (\epsilon_2)\|  < \varepsilon  \label{18.9;3}
\end{gather}
Note that (\ref{18.9;3}) easily follows from (\ref{18.9;2}) since $V$ is relatively $H_0$ bounded with a relative bound zero and
$\sup_{\epsilon \in (0, \beps]} \|H_0 \psi(\epsilon)\| < \infty$ (see f. e. Lemma~1 in \cite{1}). For the same reason
$\sup_{\epsilon \in (0, \beps] } \|\tilde \varphi(\epsilon)\|< \infty$. The norm--continuity of $\psi(\epsilon), \tilde \varphi(\epsilon)$ on $[0, \beps]$
follows directly from (\ref{18.9;2})--(\ref{18.9;3}). So suppose that $0 <\epsilon_1 \leq \epsilon_2 \leq \beps$. Then
\begin{gather}
 \|\psi (\epsilon_1) - \psi(\epsilon_2)\| \leq \left\|\bigl(H_0 + \epsilon_1\bigr)^{-\frac 12}\bigl[\varphi(\epsilon_1) - \varphi(\epsilon_2)\bigr]\right\| +
\left\|\Bigl[\bigl(H_0 + \epsilon_1\bigr)^{-\frac 12} - \bigl(H_0 + \epsilon_2\bigr)^{-\frac 12}\Bigr]\varphi(\epsilon_2)\right\| \nonumber\\
\leq \left\|G_1 (0, \epsilon_1) \bigl[\varphi(\epsilon_1) - \varphi(\epsilon_2)\bigr]\right\| + \left\|\Bigl[\bigl(H_0 + \epsilon_1\bigr)^{-\frac 12} - \bigl(H_0 + \epsilon_2\bigr)^{-\frac 12}\Bigr]\varphi(\epsilon_2)\right\| . 
\end{gather}
By Corollary~\ref{19.9;10} we can choose $\delta > 0$ so that
\begin{equation}\label{18.9;9}
 \|\psi (\epsilon_1) - \psi(\epsilon_2)\| < \varepsilon/2 + \left\|\Bigl[\bigl(H_0 + \epsilon_1\bigr)^{-\frac 12} - \bigl(H_0 + \epsilon_2\bigr)^{-\frac 12}\Bigr]\varphi(\epsilon_2)\right\|
\end{equation}
for $|\epsilon_1 - \epsilon_2| < \delta$. Let us introduce $f:\mathbb{R}_+ \to \mathbb{R}_+ $ such that $f(r) = r^{2/3}$ for $r \in [0, 1]$ and $f(r) = 1$ for $r \geq 1$. Now we consider
the last term in (\ref{18.9;9}).
\begin{gather}
 \left\|\Bigl[\bigl(H_0 + \epsilon_1\bigr)^{-\frac 12} - \bigl(H_0 + \epsilon_2\bigr)^{-\frac 12}\Bigr]\varphi(\epsilon_2)\right\| \nonumber\\
\leq
\left\|\Bigl[\bigl(H_0 + \epsilon_1\bigr)^{-\frac 12} - \bigl(H_0 + \epsilon_2\bigr)^{-\frac 12}\Bigr]f(H_0 + \epsilon_1)\right\| \left\|\left[f(H_0 + \epsilon_1)\right]^{-1}\varphi(\epsilon_2)\right\|  . \label{18.9;10}
\end{gather}
The expression $f(H_0 + \epsilon_1)$ in (\ref{18.9;10}) is to be understood in terms of functional calculus of self--adjoint operators. 
Using the Fourier transform it is easy to see that
\begin{equation}
 \left\|\Bigl[\bigl(H_0 + \epsilon_1\bigr)^{-\frac 12} - \bigl(H_0 + \epsilon_2\bigr)^{-\frac 12}\Bigr]f(H_0 + \epsilon_1)\right\| \leq|\epsilon_1 - \epsilon_2|^{1/6}
\end{equation}
for $|\epsilon _2 - \epsilon_1| \leq 1$. The second norm in (\ref{18.9;10}) can be estimated as follows 
 \begin{equation}
  \left\|\left[f(H_0 + \epsilon_1)\right]^{-1}\varphi(\epsilon_2)\right\| \leq 1 +  \|G^{\frac 43}_1 (0 , \epsilon_1) \varphi(\epsilon_2) \| , 
 \end{equation}
where  the last norm is uniformly bounded by Theorem~\ref{3.8;8}. Thus we can always set $\delta > 0$ to ensure that the last term in  (\ref{18.9;9}) is less that $\varepsilon/2$. Eq.~(\ref{19.9;3}) is a
trivial consequence of the norm--continuity of $\tilde \varphi(\epsilon)$ on $[0, \beps]$.
\end{proof}

\section{Proof of Main Theorem}\label{1.08;2}

Throughout this section we assume that the N--particle Hamiltonian (\ref{1.08;3})--(\ref{1.08;4}) satisfies the assumption R1. 
Let us by $\mathfrak{C}_j$ for $j = 1, 2, \ldots, N$ denote the subsystem
containing $N-1$ particles, where the particle $j$ is missing.
For each subsystem $\mathfrak{C}_j$ we introduce the operators
\begin{gather}
 K_j (\epsilon):= -\bigl(H_0 + \epsilon\bigr)^{-1/2} V_{\{j\}} \bigl(H_0 +
\epsilon\bigr)^{-1/2} \\
L_j (\epsilon) := -\bigl(H_0 + \epsilon\bigr)^{-1/2} (V-V_{\{j\}}) \bigl(H_0 +
\epsilon\bigr)^{-1/2} ,
\end{gather}
where $\epsilon >0$ and $V_{\{j\}}$ was defined in (\ref{7.8;1a}).
Clearly, $K_j, L_j \in \mathfrak{B}(L^2(\mathbb{R}^{3N-3}))$ and
\begin{equation}
 K (\epsilon)= K_j(\epsilon) + L_j(\epsilon) \quad \quad (j = 1, \ldots, N).
\end{equation}
After an appropriate Fourier transform $K_j (\epsilon)$ becomes the BS operator for the subsystem
$\mathfrak{C}_j$.
Suppose that the subsystem $\mathfrak{C}_j$ (as a system of $N-1$ particles) is
at critical coupling and satisfies the conditions of Theorem~\ref{3.8;3}.
Using Theorem~\ref{3.8;3} and (\ref{7.01;41}) for each such subsystem  we can define $\beps_j$, 
$\varphi_j (\epsilon) \in L^2 (\mathbb{R}^{3N-6})$, $\mu_j (\epsilon)$. Inequality
(\ref{mubound}) reads then $1- \mu_j (\epsilon) \geq a_\mu^{(j)}\epsilon$	for
$\epsilon \in [0, \beps_j]$.
It is convenient to set $\beps := \min_j \beps_j$ and $a_\mu := \min_j
a_\mu^{(j)}$ (where minima are taken over all such $j$ for which
$\mathfrak{C}_j$ satisfies the conditions of Theorem~\ref{3.8;3}). The function $\mu_j (\epsilon)$ is defined as in (\ref{7.01;41}), where $\beps := \min_j \beps_j$. 
Note that $\mu_j (\epsilon) \geq (1+\omega/2)^{-1}$, where $\omega$ is defined in R1. 
Let us redefine the definitions saying that for $\epsilon \in [0,  \beps]$ the
functions $\varphi_j (\epsilon) , \tilde \varphi_j (\epsilon)$ are defined through
Theorem~\ref{3.8;3} and (\ref{18.9;1}) respectively 
and $\varphi_j (\epsilon) = \tilde \varphi_j (\epsilon) = 0$ if $\epsilon > \beps$.

 Now suppose that the subsystem $\mathfrak{C}_1$ is at critical coupling. We use
Jacobi coordinates $x_1, \ldots, x_{N-1}$,
which we have already introduced in Sec.~\ref{1.08;1}. Then $\varphi_1 (\epsilon)$
depends explicitly
on the coordinates as $\varphi_1 (\epsilon; x_r)$. Similar to (\ref{7.8;31}) for $\epsilon >0$ we
define the projection operator $\hat P_1 (\epsilon)$,
which acts on $\hat f (p_1, x_r) \in L^2(\mathbb{R}^{3N-3})$ as follows
\begin{equation}\label{29.8;3}
 \hat P_1 (\epsilon)  \hat f := 
 \varphi_1 (\epsilon + p_1^2; x_r) {\displaystyle \int \hat f (p_1, x_r)}
\varphi^*_1 (\epsilon + p_1^2; x_r) \: d^{3N-6} x_r 
\end{equation}
Its Fourier--transformed version is $P_1 (\epsilon) := \mathcal{F}_1^{-1}
\hat P_1 (\epsilon)\mathcal{F}_1$, where $\mathcal{F}_1$ was defined in
(\ref{7.8;51}).
Additionally, let us introduce the operator functions $\mathcal{P}_1 , \mathfrak{m}_1 : \mathbb{R}_+ \to \mathfrak{B} (L^2 (\mathbb{R}^{3N-3}))$, and 
$\mathfrak{g}_1 : \mathbb{R}_+ / \{0\}\to \mathfrak{B} (L^2 (\mathbb{R}^{3N-3}))$, which act on
$f(x_1, x_r) \in L^2(\mathbb{R}^{3N-3})$ as follows
\begin{gather}
 \mathfrak{m}_1 (\epsilon)f  = \mathcal{F}_1^{-1}  \mu_1 (\epsilon + p_1^2)
(\mathcal{F}_1 f) , \\
\mathcal{P}_1 (\epsilon)  := \mathfrak{m}_1  (\epsilon) P_1  (\epsilon) , \\
\mathfrak{g}_1 (\epsilon)f  = \mathcal{F}_1^{-1}  \left(\left[1-\mu_1 (\epsilon
+ p_1^2) \right]^{-\frac 12} - 1\right) (\mathcal{F}_1 f).  \label{24.9;26}
\end{gather}
Similarly, for each subsystem $\mathfrak{C}_j$ at critical coupling we define
$\hat P_j , P_j, \mathcal{P}_j , \mathfrak{g}_j ,  \mathfrak{m}_j, \tilde \varphi_j (\epsilon)$,
where for each $j$ one has to choose appropriate Jacobi coordinates.
If the subsystem $\mathfrak{C}_j$ is not at critical coupling we simply set
$\hat P_j , P_j, \mathcal{P}_j , \mathfrak{g}_j, \mathfrak{m}_j , \tilde \varphi_j (\epsilon) = 0$.
According to Lemma~\ref{7.8;53} the operator
\begin{equation}
\mathcal{R}_j (\epsilon) := \mathfrak{g}_j (\epsilon) P_j (\epsilon)= P_j (\epsilon) \mathfrak{g}_j (\epsilon) \quad \quad (\epsilon >0)\label{20.9;11}
\end{equation}
satisfies
\begin{equation}\label{8.8;1}
  \mathcal{R}_j (\epsilon) = \bigl( 1- \mathcal{P}_j (\epsilon) \bigr)^{-1/2} -
1.
\end{equation}

From definitions it is clear that the operator $\mathcal{P}_j (\epsilon)$ is not norm--continuous. We shall now construct its norm--continuous
analogue. Let us introduce a continuous function $\theta: \mathbb{R} \to \mathbb{R}$ depending on a parameter $\gamma >0$ such that
$\theta(r) = 1$ if $r\in [1-\gamma , 1+\gamma]$; $\theta(r) = 0$ if $r\in \mathbb{R}/(1-2\gamma , 1+2\gamma)$; in the intervals $[1-2\gamma, 1-\gamma]$ and
$[1+\gamma, 1+2\gamma]$ the function $\theta(r)$ is linear, see Fig.~\ref{28.08;1}. Recall that $\mu_j (\epsilon)$ on $[0, \beps]$ is continuous and monotone decreasing. Let us set
$\gamma$ so that the following equation is fulfilled
\begin{equation}
 1 - 2\gamma = \max_j \mu_j (\beps/2) ,
\end{equation}
where the maximum is taken over all such $j$ that $\mathfrak{C}_j$ is at critical coupling. If $\mathfrak{C}_1$ is at critical coupling then the operators
$\mathfrak{m}^{(c)}_1 (\epsilon), \mathcal{P}^{(c)}_1 (\epsilon) $ given by expressions 
\begin{gather}
 \mathfrak{m}^{(c)}_1 (\epsilon)f  := \mathcal{F}_1^{-1}  \theta\bigl( \mu_1 (\epsilon + p_1^2) \bigr)
(\mathcal{F}_1 f) \label{29.8;11}\\
\mathcal{P}^{(c)}_1 (\epsilon)  :=  \mathfrak{m}^{(c)}_1 (\epsilon) P_1  (\epsilon)
\end{gather}
are norm--continuous on $[0, \infty)$ (the superscript ``c'' stands for continuous). Similarly, using appropriate coordinates
one defines $\mathfrak{m}^{(c)}_j (\epsilon), \mathcal{P}^{(c)}_j (\epsilon) $ if
$\mathfrak{C}_j$ is at critical coupling and sets  $\mathfrak{m}^{(c)}_j (\epsilon), \mathcal{P}^{(c)}_j (\epsilon) = 0$ otherwise.
\begin{figure}
\begin{center}
\includegraphics[width=0.7\textwidth]{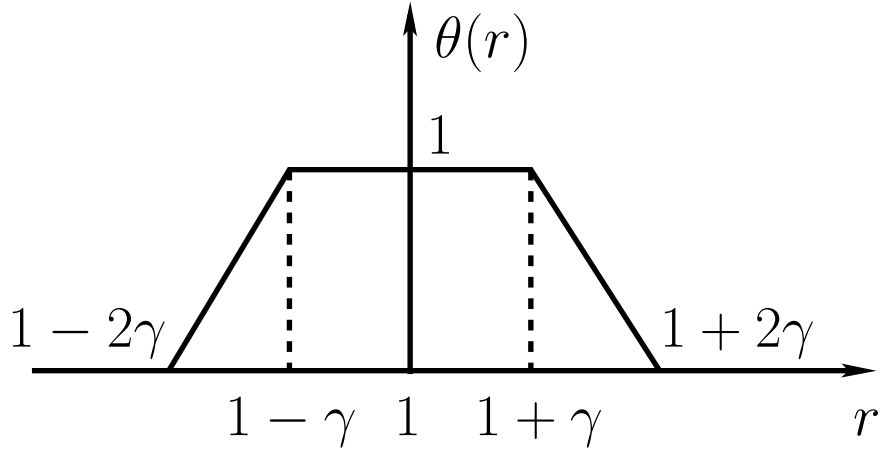}
\caption{Behavior of the continuous function $\theta(r)$.}
\label{28.08;1}
\end{center}
\end{figure}
It is easy to see that $\mathfrak{m}_j (\epsilon) - \mathfrak{m}^{(c)}_j (\epsilon) \geq 0$, therefore 
\begin{equation}\label{28.08;3}
\mathcal{P}_j (\epsilon) - \mathcal{P}^{(c)}_j (\epsilon) \geq 0 \quad \quad (j = 1, \ldots, N).
\end{equation}
By construction of $\mathcal{P}^{(c)}_j$ and (\ref{19.9;11})
\begin{equation}\label{19.9;1}
 \pmb{\eta} := \max_{j = 1, \ldots, N} \sup_{\epsilon >0 } \sup \sigma\left( K_j
(\epsilon) - \mathcal{P}^{(c)}_j (\epsilon)\right) < 1.
\end{equation}

The following theorem, which is used for counting the eigenvalues, is the
central ingredient in the proof of Theorem~\ref{1.08;9}.
\begin{theorem}\label{8.8.;8}
Suppose that $H$ defined in (\ref{1.08;3})--(\ref{1.08;4}) is such that $\sigma_{ess} (H) = [0, \infty)$. Then the following relation
holds for $N_\epsilon$ defined in (\ref{7.8;55})
\begin{equation}
 N_\epsilon = \# (\evs (\mathcal{T}_k (\epsilon)) > 1) \quad \quad (k = 1,
\ldots, N),
\end{equation}
where $\epsilon >0$ and
\begin{gather}
 \mathcal{T}_k (\epsilon) := K(\epsilon) - \sum_{i=1}^k \mathcal{P}_i (\epsilon)
+ \mathcal{M}_k (\epsilon) , \\
\mathcal{M}_1 (\epsilon):= \mathcal{R}_1 L_1 \mathcal{R}_1 + \mathcal{R}_1 L_1 +
L_1 \mathcal{R}_1 , \label{8.8;2}
\end{gather}
and the operators $\mathcal{M}_k (\epsilon)$ for $k = 2, 3, \ldots, N$ are
determined through the following recurrence relation
\begin{gather}
\mathcal{M}_k (\epsilon) = \bigl(1+ \mathcal{R}_k\bigr)
\mathcal{M}_{k-1}\bigl(1+ \mathcal{R}_k\bigr)  + \mathcal{R}_k \left\{  L_k -
\sum_{i=1}^{k-1} \mathcal{P}_i \right\} \mathcal{R}_k \nonumber\\
+ \mathcal{R}_k \left\{  L_k - \sum_{i=1}^{k-1} \mathcal{P}_i \right\}  +
\left\{  L_k - \sum_{i=1}^{k-1} \mathcal{P}_i \right\} \mathcal{R}_k . \label{8.8.7}
\end{gather}
Besides, $\sigma_{ess} (\mathcal{T}_k (\epsilon )) \cap (1, \infty) = \emptyset$ for $k = 1, \ldots, N$.
\end{theorem}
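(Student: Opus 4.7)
The plan is to establish by induction on $k$ the master congruence
\[
\mathcal{T}_k(\epsilon) - 1 \;=\; \bigl(1+\mathcal{R}_k(\epsilon)\bigr)\bigl(\mathcal{T}_{k-1}(\epsilon) - 1\bigr)\bigl(1+\mathcal{R}_k(\epsilon)\bigr), \qquad \mathcal{T}_0 := K(\epsilon),
\]
from which both conclusions of the theorem will be read off. The induction rests on one structural observation: $V_{\{k\}}$ depends only on the internal coordinates of the subsystem $\mathfrak{C}_k$, so after Fourier transform in the Jacobi coordinate joining particle $k$ to the centre of mass of the remaining particles, $K_k(\epsilon)$ becomes a direct integral over $p_1$ of the BS operators of $\mathfrak{C}_k$ at energies $\epsilon+p_1^2$. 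On each fibre the top eigenvalue is $\mu_k(\epsilon+p_1^2)$ with eigenprojection the fibre of $P_k$. Consequently $K_k$ commutes with each of $P_k,\mathfrak{g}_k,\mathfrak{m}_k,\mathcal{R}_k$ and satisfies $K_k P_k = \mathcal{P}_k$.

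Given this, a short fibre-wise computation combined with Lemma~\ref{7.8;53} (recast as $(1+\mathcal{R}_k)^2 = (1-\mathcal{P}_k)^{-1}$) yields
\[
\bigl(1+\mathcal{R}_k\bigr) K_k \bigl(1+\mathcal{R}_k\bigr) \;=\; K_k - \mathcal{P}_k + (1-\mathcal{P}_k)^{-1} - 1.
\]
Writing $K = K_k + L_k$ and plugging this into $(1+\mathcal{R}_k)\mathcal{T}_{k-1}(1+\mathcal{R}_k)$, the remaining block $(1+\mathcal{R}_k)\bigl[L_k - \sum_{i<k}\mathcal{P}_i + \mathcal{M}_{k-1}\bigr](1+\mathcal{R}_k)$ is, \emph{by the very definition} of $\mathcal{M}_k$ in~(\ref{8.8.7}), exactly $L_k - \sum_{i<k}\mathcal{P}_i + \mathcal{M}_k$. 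Subtracting $(1+\mathcal{R}_k)^2 = (1-\mathcal{P}_k)^{-1}$ from both sides produces the master congruence.

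Iterating gives $\mathcal{T}_k - 1 = W_k(K - 1)W_k^*$ with $W_k := (1+\mathcal{R}_k)\cdots(1+\mathcal{R}_1)$. For $\epsilon > 0$, by monotonicity (Theorem~\ref{3.8;3}) one has $\mu_j(\epsilon + p_1^2) \leq \mu_j(\epsilon) < 1$ uniformly in $p_1$, so each factor $1+\mathcal{R}_j = (1-\mathcal{P}_j)^{-1/2}$ is a bounded self-adjoint operator with bounded inverse $(1-\mathcal{P}_j)^{1/2}$; hence $W_k$ is bounded and invertible on $L^2(\mathbb{R}^{3N-3})$. The substitution $\psi = W_k^*\phi$ is a linear bijection preserving the sign of the form identity $(\phi,(\mathcal{T}_k - 1)\phi) = (\psi,(K - 1)\psi)$, so the max--min principle delivers $\#(\evs(\mathcal{T}_k) > 1) = \#(\evs(K) > 1) = N_\epsilon$.

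For the essential spectrum claim I would argue by contradiction: if some $\lambda > 1$ belonged to $\sigma_{ess}(\mathcal{T}_k)$, then Weyl's criterion produces $\phi_n \rightharpoonup 0$ with $\|\phi_n\| = 1$ and $\|(\mathcal{T}_k - \lambda)\phi_n\| \to 0$; setting $\psi_n := W_k^*\phi_n$, the sequence is bounded above and below in norm and remains weakly null, and the form identity forces $(\psi_n,(K - 1)\psi_n) \to \lambda - 1 > 0$, which after normalization contradicts $\sigma_{ess}(K)\cap(1,\infty) = \emptyset$; the latter follows in turn from $\sigma_{ess}(H)=[0,\infty)$ via the BS principle (Theorem~\ref{31.07;16} and Theorem~9.5 in~\cite{weidmann}). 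The main obstacle is not analytic but combinatorial: checking that the somewhat opaque recursive definition~(\ref{8.8.7}) is \emph{exactly} the correction needed to absorb all the non-commuting cross-terms produced when $1+\mathcal{R}_k$ is squared around the previously accumulated pieces $\mathcal{M}_{k-1}$ and $\sum_{i<k}\mathcal{P}_i$; once this bookkeeping is straightened out, everything else reduces to congruence of quadratic forms and Weyl's criterion.
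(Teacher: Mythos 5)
Your argument is correct and is essentially the paper's proof: the congruence $\mathcal{T}_k - 1 = \bigl(1+\mathcal{R}_k\bigr)\bigl(\mathcal{T}_{k-1}-1\bigr)\bigl(1+\mathcal{R}_k\bigr)$, resting on $(1+\mathcal{R}_k)^2 = (1-\mathcal{P}_k)^{-1}$ (Lemma~\ref{7.8;53}) and on the fibre-wise relations $K_k P_k = \mathcal{P}_k$, i.e. $\mathcal{R}_k(K_k-\mathcal{P}_k) = (K_k-\mathcal{P}_k)\mathcal{R}_k = 0$, is exactly the conjugation by $(1-\mathcal{P}_k)^{-1/2}$ that the paper carries out at each of the $N$ steps, with the same absorption of the cross terms into $\mathcal{M}_k$ via (\ref{8.8.7}). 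The only divergence is how the spectral count is transported: you argue directly via Sylvester inertia and a Weyl-sequence argument for the composed congruence $\mathcal{T}_k - 1 = W_k(K-1)W_k^{*}$ (using $1-\mathcal{P}_j \geq 1-\mu_j(\epsilon) > 0$ for $\epsilon>0$ to invert $W_k$), whereas the paper invokes its bounded-operator version of the BS principle (the Remark after Theorem~\ref{31.07;16}) at each step — the two justifications are interchangeable here.
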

\begin{proof}
 The proof consists in applying the BS principle $N$ times. As
the first step let us write
\begin{equation}
 1-K(\epsilon) = 1- \mathcal{P}_1 (\epsilon) - \left[K(\epsilon) - \mathcal{P}_1
(\epsilon) \right] ,
\end{equation}
where the operator on the lhs has $N_\epsilon$ negative eigenvalues, see (\ref{7.8;55}). By Theorem~\ref{31.07;16} we also have
$\sigma_{ess} ( 1-K(\epsilon)) \subset [0, \infty)$. Obviously, for $\epsilon >0$
\begin{equation}
 1- \mathcal{P}_j (\epsilon) \geq 1- \mu_j (\epsilon) >0 \quad \quad (j = 1, \ldots, N)
\end{equation}
Applying the BS principle
(see a remark after Theorem~\ref{31.07;16}) we obtain
\begin{equation}
 N_\epsilon = \#\bigl(\evs(  \mathcal{T}_1 (\epsilon)) > 1 \bigr)
\end{equation}
where
\begin{equation}
 \mathcal{T}_1 (\epsilon ) = \bigl(  1- \mathcal{P}_1 \bigr)^{-1/2} \bigl[ K -
\mathcal{P}_1\bigr] \bigl(  1- \mathcal{P}_1 \bigr)^{-1/2} .
\end{equation}
Besides, $\sigma_{ess} (\mathcal{T}_1 (\epsilon )) \cap (1, \infty) = \emptyset$. Using  (\ref{8.8;1}) and $K(\epsilon) = K_1(\epsilon) + L_1(\epsilon)$ we get
\begin{equation}
 \mathcal{T}_1 (\epsilon ) = K - \mathcal{P}_1 + L_1 \mathcal{R}_1 +
\mathcal{R}_1 L_1 + \mathcal{R}_1 L_1 \mathcal{R}_1 = K - \mathcal{P}_1 +
\mathcal{M}_1
\end{equation}
where we have used $\mathcal{R}_1(K_1 - \mathcal{P}_1) = 0$ and (\ref{8.8;2}). Now
we do the second step and use that
\begin{equation}
 1 - \mathcal{T}_1 (\epsilon ) = 1- \mathcal{P}_2 (\epsilon) -
\left[\mathcal{T}_1 (\epsilon) - \mathcal{P}_2 (\epsilon) \right]
\end{equation}
has $N_\epsilon$ negative eigenvalues (counting multiplicities). Therefore, again by the BS principle
\begin{equation}
 N_\epsilon = \#\bigl(\evs(  \mathcal{T}_2 (\epsilon)) > 1 \bigr),
\end{equation}
where
\begin{gather}
 \mathcal{T}_2 (\epsilon ) = \bigl(  1- \mathcal{P}_2 \bigr)^{-1/2} \bigl[
\mathcal{T}_1 (\epsilon) - \mathcal{P}_2\bigr] \bigl(  1- \mathcal{P}_2
\bigr)^{-1/2} \nonumber\\
= \bigl(  1+ \mathcal{R}_2 \bigr) \bigl[ K - \mathcal{P}_1  - \mathcal{P}_2 +
\mathcal{M}_1  \bigr] \bigl(  1+ \mathcal{R}_2 \bigr) \nonumber\\
= K - \mathcal{P}_1 - \mathcal{P}_2  + \bigl(  1+ \mathcal{R}_2 \bigr)
\mathcal{M}_1 \bigl(  1+ \mathcal{R}_2 \bigr) + \mathcal{R}_2 \bigl[ K -
\mathcal{P}_1  - \mathcal{P}_2 \bigr] \mathcal{R}_2 \nonumber\\
+ \mathcal{R}_2 \bigl[ K - \mathcal{P}_1  - \mathcal{P}_2 \bigr] +  \bigl[ K -
\mathcal{P}_1  - \mathcal{P}_2 \bigr] \mathcal{R}_2
\end{gather}
has $N_\epsilon$ eigenvalues larger than one. From the BS principle it also follows that $\sigma_{ess} (\mathcal{T}_2 (\epsilon )) \cap (1, \infty) = \emptyset$.
For the last three terms in square
brackets we substitute $K = K_2 + L_2$ and use $\mathcal{R}_2(K_2 -
\mathcal{P}_2) = (K_2 - \mathcal{P}_2)\mathcal{R}_2  = 0$.
This leads to
\begin{equation}
 \mathcal{T}_2 = K - \mathcal{P}_1 - \mathcal{P}_2 + \mathcal{M}_2 ,
\end{equation}
where $\mathcal{M}_2 $ is defined through (\ref{8.8.7}). Proceeding in the same way, that is writing each time
\begin{equation}
 1 - \mathcal{T}_k (\epsilon ) = 1- \mathcal{P}_{k+1} (\epsilon) -
\left[\mathcal{T}_k (\epsilon) - \mathcal{P}_{k+1} (\epsilon) \right] \quad (k = 3, \ldots, N-1)
\end{equation}
and applying the BS principle we prove the theorem.
\end{proof}
Let us define the norm--continuous operator function $\mathcal{G}_N : \mathbb{R}_+ \to \mathfrak{B}(L^2 (\mathbb{R}^{3N-3}))$ as
\begin{equation}
 \mathcal{G}_N (\epsilon) := K(\epsilon) - \sum_{j=1}^N  \mathcal{P}^{(c)}_j (\epsilon).  \label{13.8;11}
\end{equation}
\begin{lemma}\label{13.8;4}
 There exist $q, \varepsilon_0 >0$ such that for $\epsilon \in [0, \varepsilon_0]$
\begin{equation}
 \sigma_{ess}  \bigl(\mathcal{G}_N (\epsilon)\bigr) \subset (- \infty , 1-q).
\end{equation}
\end{lemma}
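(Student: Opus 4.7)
The plan is to show $\sup \sigma_{ess}(\mathcal{G}_N(\epsilon)) \leq \pmb{\eta}$ for each $\epsilon \geq 0$, where $\pmb{\eta} < 1$ is the constant from (\ref{19.9;1}); any $q \in (0, 1-\pmb{\eta})$ then works, and norm continuity of $\mathcal{G}_N(\epsilon)$ (which follows from Lemma~\ref{1.8;54} together with the fact that the cutoff $\theta \circ \mu_j$ was inserted precisely to make $\mathcal{P}_j^{(c)}(\epsilon)$ norm continuous) gives an $\varepsilon_0 > 0$ on which the bound is uniform. The proof rests on two observations. First, each $\mathcal{P}_j^{(c)}(\epsilon) \geq 0$: in the $j$-th Jacobi frame, $\mathfrak{m}_j^{(c)}(\epsilon)$ acts as multiplication in momentum by the nonnegative function $\theta(\mu_j(\epsilon + p_1^2))$ depending only on $p_1$, while $P_j(\epsilon)$ fibers as an orthogonal projection in the $x_r$-variables parametrized by $p_1$; the two commute, so $\mathcal{P}_j^{(c)} = (\mathfrak{m}_j^{(c)})^{1/2} P_j (\mathfrak{m}_j^{(c)})^{1/2} \geq 0$. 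Combined with (\ref{19.9;1}), this yields, for every $j$,
\[
\mathcal{G}_N(\epsilon) \;=\; \bigl(K_j - \mathcal{P}_j^{(c)}\bigr) + L_j - \sum_{i\neq j} \mathcal{P}_i^{(c)} \;\leq\; \pmb{\eta}\,\mathbf{1} + L_j(\epsilon).
\]

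Second, $L_j(\epsilon)$ is small at spatial infinity in the $j$-th Jacobi coordinate. To exploit this, I would introduce a smooth partition of unity $\{J_j^R\}_{j=1}^N$ on $\mathbb{R}^{3N-3}$ with $\sum_j (J_j^R)^2 \equiv 1$, subordinate to the cones where the one-vs-rest Jacobi coordinate $|x_1^{(j)}|$ dominates, arranged so that outside a ball of radius $R$ one has $|x_1^{(j)}| \geq c R$ on $\operatorname{supp} J_j^R$, and hence $|r_j - r_i| \geq c' R$ for every $i \neq j$. Given a Weyl sequence $f_n$ with $\|f_n\|=1$, $f_n \rightharpoonup 0$, $(\mathcal{G}_N(\epsilon) - \lambda)f_n \to 0$, weak convergence eliminates any compactly supported piece, so
\[
\lambda \;=\; \lim_n \sum_{j=1}^N \bigl(\mathcal{G}_N(\epsilon) J_j^R f_n,\, J_j^R f_n\bigr) + o_R(1),
\]
the $o_R(1)$ absorbing the commutator defect of $\mathcal{G}_N(\epsilon)$ with the $J_j^R$. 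Applying the operator inequality of the previous paragraph gives
\[
(\mathcal{G}_N(\epsilon) J_j^R f_n, J_j^R f_n) \;\leq\; \pmb{\eta}\,\|J_j^R f_n\|^2 + \bigl(L_j(\epsilon) J_j^R f_n, J_j^R f_n\bigr),
\]
and a Birman--Schwinger-style estimate using the kernel bound (\ref{2.10;1}) on $(H_0+\epsilon)^{-1/2}$ together with $v_{ji} \in L^1 \cap L^3$ shows $\|J_j^R L_j(\epsilon) J_j^R\| \to 0$ as $R \to \infty$, uniformly in $\epsilon \in [0, \varepsilon_0]$. Since $\sum_j \|J_j^R f_n\|^2 \to 1$, letting $n \to \infty$ and then $R \to \infty$ gives $\lambda \leq \pmb{\eta}$, as required.

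The main obstacle is controlling the commutator errors $[J_j^R, (H_0+\epsilon)^{-1/2}]$ uniformly in $\epsilon \in [0, \varepsilon_0]$: since $(H_0+\epsilon)^{-1/2}$ is nonlocal, the standard IMS formula for local operators does not apply directly, and at $\epsilon = 0$ the kernel is singular at the origin. One must exploit the explicit kernel estimate (\ref{2.10;1}) (with $p = 1/2$), choose the $J_j^R$ to be slowly varying (of the form $J_j^R(X) = \widetilde J_j(X/R)$) so that their gradients are $O(R^{-1})$, and bound $\|[J_j^R,(H_0+\epsilon)^{-1/2}]\|$ by an explicit convolution estimate against a power-decaying kernel. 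The same kernel bound, combined with the fall-off of $V - V_{\{j\}}$ on $\operatorname{supp} J_j^R$, gives the localization estimate on $L_j$. These are essentially the same techniques already deployed in the proof of Lemma~\ref{31.07;1} and Theorem~\ref{3.8;8}, so no genuinely new analytic input is needed; the task is bookkeeping.
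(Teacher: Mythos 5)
Your opening step is fine: $\mathcal{P}^{(c)}_i(\epsilon)\ge 0$ and (\ref{19.9;1}) do give the quadratic-form bound $\mathcal{G}_N(\epsilon)\le \pmb{\eta}+L_j(\epsilon)$ for each $j$. The gap is in the localization that is supposed to exploit it. Your geometric claim is false: on a cone where $|x_1^{(j)}|\ge cR$ you only know that particle $j$ is far from the \emph{centre of mass} of the remaining particles, not from each of them individually. For $N=4$, take particles $3,4$ close together and far from the pair $1,2$: then $|x_1^{(3)}|$ is comparable to $|x|$ while $|r_3-r_4|$ stays bounded, so $v_{34}$, and hence $L_3(\epsilon)$, is not small on $\supp J_3^R$, and the claim $\|J_j^R L_j(\epsilon)J_j^R\|\to 0$ fails. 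If instead you impose the support property you actually need, $|r_j-r_i|\ge c'R$ for \emph{all} $i\neq j$, then for $N\ge 4$ the one-particle-versus-rest cones no longer cover configuration space (the two-pair region just described lies in none of them), so no partition with $\sum_j (J_j^R)^2=1$ exists. This is precisely the $N\ge4$ cluster geometry: a correct localization must run over \emph{all} two-cluster decompositions, invoking R1 (the $(1+\omega)$-subcriticality of subsystems with at most $N-2$ particles) for decompositions whose clusters both contain at least two particles, and the bound (\ref{19.9;1}) only for the one-versus-rest channels; your sketch never brings R1 into play at this point.

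The second problem is the step you dismiss as bookkeeping. An IMS-type identity for $\mathcal{G}_N(\epsilon)$ with commutator errors $o_R(1)$ \emph{uniformly} in $\epsilon\in[0,\varepsilon_0]$ is not supplied by the techniques of Lemma~\ref{31.07;1} or Theorem~\ref{3.8;8}: the kernel of $(H_0+\epsilon)^{-1/2}$ in dimension $3N-3$ decays only like $|z|^{-(3N-4)}$ (borderline non-integrable, so Schur-type commutator bounds degenerate), $(H_0)^{-1/2}$ is not even bounded at $\epsilon=0$, and the localization errors involving the nonlocal channel operators $\mathcal{P}^{(c)}_j(\epsilon)$ again require the decay of $\varphi_j$, i.e.\ the same nontrivial input as Corollary~\ref{18.9;4}. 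The paper's proof is built to avoid exactly this: it writes $\mathcal{G}_N(\epsilon)=\mathcal{G}'(\epsilon)+\sum_i\bigl[\mathcal{P}'_i(\epsilon)-\mathcal{P}^{(c)}_i(\epsilon)\bigr]$ with $\mathcal{G}'(\epsilon)$ the BS operator of $H_0+(V+\sum_i\mathcal{Z}_i)$, performs the cluster decomposition at the \emph{Hamiltonian} level (Lemma~\ref{8.8;20}, where the IMS formula applies to the local operator $H_0+\lambda V$ perturbed by the bounded operators $\mathcal{Z}_i$, with relative compactness of the cross terms coming from Corollary~\ref{18.9;4}), transfers $\sigma_{ess}(\tilde H(\lambda_0))=[0,\infty)$, $\lambda_0>1$, to $\sigma_{ess}(\mathcal{G}'(\epsilon))\subset(-\infty,\lambda_0^{-1}]$ via Theorem~\ref{31.07;16}, and finally absorbs $\mathcal{P}'_i-\mathcal{P}^{(c)}_i$ for small $\epsilon$ using Lemma~\ref{8.8;19} and Theorem~9.5 of \cite{weidmann}. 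To make your route work you would have to repair the partition as above and then essentially rebuild this machinery, so the proposal as it stands does not prove the lemma.
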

The proof of Lemma~\ref{13.8;4} would be given later. Let us define 
\begin{equation}
 \mathcal{A}_\infty := \left\{ W:\mathbb{R}_+ /\{0\} \to \mathfrak{B}(L^2 (\mathbb{R}^{3N-3})) \Bigl| \; \;  \sup_{\epsilon >0} \| W(\epsilon)\| < \infty \right\} . 
\end{equation}
By definition $W \in \mathfrak{J}\subset \mathcal{A}_\infty$  {\it iff} for all $\beta >0$ there exists a
decomposition $W = W^{B}  + W^{HS}$, 
where $W^{B}, W^{HS} \in \mathcal{A}_\infty$ satisfy the following inequalities
\begin{gather}
\sup_{\epsilon >0} \| W^{B} (\epsilon)\| < \beta \\
\sup_{\epsilon >0} \| W^{HS} (\epsilon)\|_{HS} < \infty .\label{13.8;8}
\end{gather}
(Eq.~(\ref{13.8;8}) implies that $W^{HS} (\epsilon)$ is a Hilbert-Schmidt operator for all $\epsilon >0$). Obviously, if $W \in \mathfrak{J}$ then $W(\epsilon)$ is a 
compact operator for all $\epsilon >0$. 
\begin{lemma}\label{13.8;9}
 $\mathcal{A}_\infty$ is an algebra and $\mathfrak{J}$ is a two--sided ideal in
$\mathcal{A}_\infty$.
\end{lemma}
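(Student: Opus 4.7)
The proof plan is entirely algebraic and relies on three standard facts: the operator norm is submultiplicative, the Hilbert--Schmidt operators form a two--sided ideal in $\mathfrak{B}(L^2)$ with $\|AB\|_{HS}\leq \|A\|\|B\|_{HS}$ and $\|BA\|_{HS}\leq \|A\|\|B\|_{HS}$, and both classes are closed under linear combinations. The only genuine content is the $\beta$--splitting bookkeeping in the definition of $\mathfrak{J}$; no analysis specific to this paper enters.

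First I would verify that $\mathcal{A}_\infty$ is an algebra. Linearity is immediate since $\sup_\epsilon \|(\lambda W_1 + W_2)(\epsilon)\| \leq |\lambda|\sup_\epsilon\|W_1(\epsilon)\| + \sup_\epsilon\|W_2(\epsilon)\|$. For closure under the pointwise product $(W_1 W_2)(\epsilon):= W_1(\epsilon)W_2(\epsilon)$, submultiplicativity gives $\sup_\epsilon\|W_1(\epsilon)W_2(\epsilon)\|\leq \bigl(\sup_\epsilon\|W_1(\epsilon)\|\bigr)\bigl(\sup_\epsilon\|W_2(\epsilon)\|\bigr)<\infty$.

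Next I would show $\mathfrak{J}$ is a linear subspace. For $W_1,W_2\in\mathfrak{J}$ and arbitrary $\beta>0$, pick decompositions $W_i = W_i^B + W_i^{HS}$ with $\sup_\epsilon\|W_i^B(\epsilon)\|<\beta/2$; then $W_1+W_2 = (W_1^B+W_2^B)+(W_1^{HS}+W_2^{HS})$ satisfies $\sup_\epsilon\|W_1^B(\epsilon)+W_2^B(\epsilon)\|<\beta$, while $\sup_\epsilon\|W_1^{HS}(\epsilon)+W_2^{HS}(\epsilon)\|_{HS}\leq \sup_\epsilon\|W_1^{HS}(\epsilon)\|_{HS}+\sup_\epsilon\|W_2^{HS}(\epsilon)\|_{HS}<\infty$. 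Scalar multiples are handled by rescaling $\beta$.

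Finally I would verify the ideal property. Fix $W\in\mathfrak{J}$ and $A\in\mathcal{A}_\infty$, and set $C:=\sup_\epsilon\|A(\epsilon)\|$; without loss of generality $C>0$. Given $\beta>0$, choose a decomposition $W=W^B+W^{HS}$ with $\sup_\epsilon\|W^B(\epsilon)\|<\beta/C$. Then
\begin{equation}
AW = AW^B + AW^{HS},\qquad WA = W^B A + W^{HS} A,
\end{equation}
and submultiplicativity gives $\sup_\epsilon\|A(\epsilon)W^B(\epsilon)\|<\beta$ and $\sup_\epsilon\|W^B(\epsilon)A(\epsilon)\|<\beta$, while the standard Hilbert--Schmidt inequalities yield $\sup_\epsilon\|A(\epsilon)W^{HS}(\epsilon)\|_{HS}\leq C\sup_\epsilon\|W^{HS}(\epsilon)\|_{HS}<\infty$ and the same bound for $W^{HS}A$. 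Since $\beta$ was arbitrary, $AW,WA\in\mathfrak{J}$. There is no real obstacle; the only thing to watch is to absorb the factor $C$ into the $B$--component (not the $HS$--component) when constructing the decomposition, since the smallness requirement lives on the $B$--part.
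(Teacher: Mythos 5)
Your proof is correct, and it is precisely the standard argument the paper has in mind: the paper omits the proof entirely, remarking only that the lemma is "a trivial consequence of the Hilbert--Schmidt class properties." Your bookkeeping — splitting off $\beta/2$ for sums and absorbing the constant $C=\sup_\epsilon\|A(\epsilon)\|$ into the bounded part via $\beta/C$, with $\|AB\|_{HS}\leq\|A\|\,\|B\|_{HS}$ handling the Hilbert--Schmidt part — fills the gap faithfully and without any error.
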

The proof of Lemma~\ref{13.8;9} is a trivial consequence of the Hilbert--Schmidt class properties and we omit it.
\begin{lemma}\label{13.8;1}
 The function $\mathcal{M}_N\in \mathcal{A}_\infty$ defined in (\ref{8.8.7}) is such that $\mathcal{M}_N  \in \mathfrak{J}$.
\end{lemma}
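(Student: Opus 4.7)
I would proceed by induction on $k\in\{1,\ldots,N\}$, proving $\mathcal{M}_k\in\mathfrak{J}$; the lemma is the case $k=N$. A conceptual obstacle from the outset is that $\mathcal{R}_k=\mathfrak{g}_kP_k$ is \emph{not} itself in $\mathcal{A}_\infty$: by (\ref{mubound}), the Fourier multiplier $(1-\mu_k(\epsilon+p_k^2))^{-1/2}-1$ diverges like $(a_\mu(p_k^2+\epsilon))^{-1/2}$ near the origin, so Lemma~\ref{13.8;9} cannot be invoked naively on terms containing $\mathcal{R}_k$. The singularity must be regularized each time it appears, using the detailed regularity of $\varphi_j$ collected in Sec.~\ref{1.08;1}. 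For the base case $\mathcal{M}_1=\mathcal{R}_1 L_1\mathcal{R}_1+\mathcal{R}_1 L_1+L_1\mathcal{R}_1$, I would expand $L_1=-\sum_{i\geq 2}(H_0+\epsilon)^{-1/2}v_{1i}(H_0+\epsilon)^{-1/2}$ and write $\mathcal{R}_1=\mathfrak{g}_1P_1$ explicitly, so that each summand becomes, after the Fourier transform $F_1$, an integral operator whose kernel contains $\varphi_1(\epsilon+p_1^2;x_r)$, $v_{1i}^{1/2}$ and the singular multiplier. A momentum cutoff $r>0$ in $p_1$ then does the splitting: on $\{|p_1|<r\}$ the operator has small norm, because the volume factor $r^{3-2\alpha}$ beats the singular weight once one invokes the uniform bound $\sup_{\epsilon'>0,s}\|G_1^\alpha(s,\epsilon')\varphi_1(\epsilon)\|<\infty$ from Theorem~\ref{3.8;8} with $\alpha\in[1,3/2)$, applied to the critically-coupled subsystem $\mathfrak{C}_1$; on $\{|p_1|\geq r\}$ the singular factor is bounded by $(a_\mu r^2)^{-1/2}$, and a kernel estimate of the same shape as in Lemma~\ref{6.8;5}, together with Corollary~\ref{24.9;9} and $|v_{1i}|^{1/2}\in L^2(\mathbb{R}^3)$, makes the resulting operator uniformly Hilbert--Schmidt in $\epsilon$.

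\textbf{Inductive step.} Assuming $\mathcal{M}_{k-1}\in\mathfrak{J}$, I would write the recurrence (\ref{8.8.7}) schematically as $\mathcal{M}_k=\mathcal{M}_{k-1}+\mathcal{R}_k\mathcal{M}_{k-1}+\mathcal{M}_{k-1}\mathcal{R}_k+\mathcal{R}_k\mathcal{M}_{k-1}\mathcal{R}_k+\Sigma_k$, where $\Sigma_k:=\mathcal{R}_k\tilde L_k\mathcal{R}_k+\mathcal{R}_k\tilde L_k+\tilde L_k\mathcal{R}_k$ and $\tilde L_k:=L_k-\sum_{i<k}\mathcal{P}_i$. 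The $L_k$-portion of $\Sigma_k$ is handled exactly as in the base case, with particle $k$ in place of particle $1$. For the $\mathcal{M}_{k-1}$-terms I would fix $\beta>0$ and use the induction-hypothesis decomposition $\mathcal{M}_{k-1}=W^B+W^{HS}$ with $\|W^B\|<\beta/C$ for an appropriate constant and $\|W^{HS}\|_{HS}$ uniformly bounded: sandwiching $W^B$ between the two $(1+\mathcal{R}_k)$'s keeps the norm small, provided the smoothness structure of $W^B$ (inherited from its construction at the previous step via $\varphi_{j}$-matrix elements) compensates the singularity of $\mathfrak{g}_k$; sandwiching $W^{HS}$ with the same $p_k$-cutoff as in the base case returns a uniformly Hilbert--Schmidt kernel, exploiting that an HS kernel multiplied by a momentum weight which is only $L^2$ in $p_k$ near zero remains HS uniformly in $\epsilon$.

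\textbf{Main obstacle.} The most delicate ingredient is the cross terms $\mathcal{R}_k\mathcal{P}_i\mathcal{R}_k$ (and $\mathcal{R}_k\mathcal{P}_i$, $\mathcal{P}_i\mathcal{R}_k$) inside $\Sigma_k$ for $i\neq k$. After Fourier transformation their kernels involve the overlap $\int \varphi_k^*(\epsilon+p_k^2;x_r^{(k)})\,\varphi_i(\epsilon+p_i^2;x_r^{(i)})\,dx$ between the ground states of two distinct $(N-1)$-particle subsystems, expressed in two inequivalent Jacobi frames connected by an orthogonal transformation of the type (\ref{24.9;28})--(\ref{21.9;2}). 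To show these pieces lie in $\mathfrak{J}$ I would combine Theorem~\ref{3.8;8} (the uniform weighted-momentum bounds on $\varphi_j$ with $\alpha$ close to $3/2$) with the decay estimate (\ref{19.9;3}) on $\tilde\varphi_j$ and the norm continuity from Corollary~\ref{18.9;4}, and then perform an explicit Hilbert--Schmidt kernel estimate after a momentum cutoff, patterned on Lemma~\ref{6.8;5}. I expect this cross-term analysis, together with verifying that the smoothness structure of $\mathcal{M}_{k-1}$ is preserved well enough for the inductive sandwiching to close, to constitute the bulk of the technical work.
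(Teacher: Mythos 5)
Your overall plan (induction in $k$, Hilbert--Schmidt kernel estimates powered by Theorem~\ref{3.8;8} and by the fact that the singular multiplier $[1-\mu_j(\epsilon+p_j^2)]^{-1/2}-1\leq a_\mu^{-1/2}|p_j|^{-1}$ is square integrable on a ball in $\mathbb{R}^3$, and the identification of the cross terms $\mathcal{R}_k\mathcal{P}_i\mathcal{R}_k$, $\mathcal{R}_kP_j$ as the delicate pieces) is the right toolbox, and your base case is essentially the paper's Lemma~\ref{19.9;21} (no cutoff splitting is even needed there: the support restriction $\epsilon+p_1^2\leq\beps$ built into $\varphi_1$ makes the terms $\mathcal{R}_1L_1$, $L_1\mathcal{R}_1$, $\mathcal{R}_1L_1\mathcal{R}_1$ uniformly Hilbert--Schmidt outright). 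The genuine gap is in your inductive step. From $\mathcal{M}_{k-1}\in\mathfrak{J}$ alone you cannot control $(1+\mathcal{R}_k)\mathcal{M}_{k-1}(1+\mathcal{R}_k)$, because $\mathcal{R}_k\notin\mathcal{A}_\infty$ and the abstract decomposition $\mathcal{M}_{k-1}=W^B+W^{HS}$ carries none of the structural information you invoke. For the $W^B$ part, the clause ``provided the smoothness structure of $W^B$ compensates the singularity of $\mathfrak{g}_k$'' is exactly the missing statement: membership in $\mathfrak{J}$ prescribes nothing about how $W^B$ behaves near $p_k=0$, so $\mathcal{R}_kW^B\mathcal{R}_k$ need not be small, or even bounded. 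For the $W^{HS}$ part, the claim that a uniformly Hilbert--Schmidt family multiplied by a momentum weight that is merely $L^2$ near $p_k=0$ stays uniformly Hilbert--Schmidt is false in general: take $W^{HS}(\epsilon)$ a rank--one projection onto a normalized state concentrated in $|p_k|\lesssim\epsilon$; multiplication by $|p_k|^{-1}$ then produces an operator whose Hilbert--Schmidt (indeed operator) norm blows up as $\epsilon\to 0$. What rescues the concrete operators in the paper is not their HS property but the fibered structure of their kernels (the factor $\varphi_j(\epsilon+p_j^2;\cdot)$ with a uniform bound in the fiber variable, Theorem~\ref{3.8;8}), and that information is lost once you pass to an arbitrary decomposition supplied by the induction hypothesis.

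The way the paper closes the induction is to strengthen the hypothesis: besides $\mathcal{M}_k\in\mathfrak{J}$ one carries along that $\mathcal{R}_s\mathcal{M}_k$, $\mathcal{M}_k\mathcal{R}_s$ and $\mathcal{R}_s\mathcal{M}_k\mathcal{R}_{s'}$ belong to $\mathfrak{J}$ for all $s,s'\geq k+1$, i.e.\ one keeps track of products with all the singular factors that will appear at later steps. These memberships are verified not by splitting $\mathcal{M}_k$ abstractly but by re-expanding it algebraically, using $P_i\mathcal{R}_i=\mathcal{R}_i P_i=\mathcal{R}_i$, into products of bracketed factors of the form $\mathcal{R}_sP_i$, $\mathcal{R}_iL_i$, $\mathcal{R}_sL_i\mathcal{R}_i$, $P_i\mathcal{R}_{s'}$, each of which is in $\mathfrak{J}$ (or bounded) by Lemmas~\ref{19.9;21}, \ref{19.9;22}, \ref{19.9;23}, and then invoking the ideal property of Lemma~\ref{13.8;9}. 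Your ``main obstacle'' paragraph does aim at the right estimates for the cross terms (and is close in spirit to Lemmas~\ref{19.9;22} and \ref{19.9;23}), but note that the proof of $\mathcal{R}_iP_j\in\mathfrak{J}$ for $i\neq j$ needs an extra device you do not mention: the approximation of $\hat\varphi_j(\epsilon+p_j^2;\cdot)$ by functions piecewise constant in the first argument (Lemma~\ref{24.9;41}), without which the $p_j$--dependence entering through the energy argument obstructs a direct Hilbert--Schmidt bound on products of projections in different Jacobi frames. As it stands, your induction does not close; with the strengthened hypothesis and the structural re-expansion it does.
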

The proof of Lemma~\ref{13.8;1} would be given later.

\begin{proof}[Proof of Theorem~\ref{1.08;9}]
 Let us assume by contradiction that $N_\epsilon \to \infty$. Then by Theorem~\ref{8.8.;8}
\begin{equation}
 \lim_{\epsilon \to +0 }\# (\evs (\mathcal{T}_N (\epsilon)) > 1) = \infty . \label{13.8;31a}
\end{equation}
Let us define
\begin{equation}
 \mathcal{T'}_N (\epsilon) := \mathcal{T}_N (\epsilon) + \sum_{j = 1}^N \left[ \mathcal{P}_j (\epsilon) - \mathcal{P}^{(c)}_j (\epsilon) \right].
\end{equation}
Using (\ref{13.8;11}) we can write
\begin{equation}
 \mathcal{T'}_N(\epsilon) = \mathcal{G}_N (0) + \left\{ \mathcal{G}_N (\epsilon) -
\mathcal{G} (0)\right\} + \mathcal{M}_N(\epsilon) . \label{13.8;32}
\end{equation}
By norm--continuity of $ \mathcal{G} (\epsilon)$ and by Lemma~\ref{13.8;4} there exist $\varepsilon_0, q >0$ such that for all $\epsilon \in (0, \varepsilon_0)$
\begin{gather}
  \|\mathcal{G}_N (\epsilon) - \mathcal{G}_N (0)\| < q , \label{13.8;33}\\
 \mathcal{G}_N (0) \leq 1 - 3q + \mathcal{C}_f \label{13.8;34}
\end{gather}
where $\mathcal{C}_f$ is a fixed finite rank self--adjoint operator.
By Lemma~\ref{13.8;1} we can write the decomposition
\begin{equation}
 \mathcal{M}_N (\epsilon) = \mathcal{M}_N^B (\epsilon) + \mathcal{M}_N^{HS} \label{13.8;35}
(\epsilon) ,
\end{equation}
where
\begin{gather}
\|\mathcal{M}_N^B (\epsilon)\|< q \label{13.8;36} \\
\sup_{\epsilon \in (0,\beps]} \| \mathcal{M}_N^{HS} (\epsilon) \|_{HS} =\vartheta < \infty . \label{13.8;37}
\end{gather}
On one hand, from (\ref{13.8;31a}) we infer that for any $n \in \mathbb{Z}_+$ there is
$\epsilon \in (0, \varepsilon_0)$ and an orthonormal set
$\phi_1 , \ldots, \phi_n \in L^2 (\mathbb{R}^{3N-3})$ such that $(\phi_i, \mathcal{T}_N(\epsilon) \phi_i) >
1$ holds for $i = 1, \ldots, n$. Due to (\ref{28.08;3}) $(\phi_i, \mathcal{T'}_N(\epsilon) \phi_i) >
1$ holds as well for $i = 1, \ldots, n$. With (\ref{13.8;32})--(\ref{13.8;36}) this results in
\begin{equation}
 \bigl| \bigl(\phi_i , [\mathcal{C}_f + \mathcal{M}_N^{HS}
(\epsilon)]\phi_i)\bigr| > q \quad \quad (i = 1, \ldots, n).\label{13.8;38}
\end{equation}
On the other hand, from Lemma~\ref{30.07;2} and (\ref{13.8;37}), (\ref{13.8;38}) it follows that $n
\leq(\|\mathcal{C}_f\|_{HS} +\vartheta)^2/q^2$,which contradicts $n$ being arbitrary positive integer.
\end{proof}

Our next aim is to prove Lemma~\ref{13.8;4}. Note that the operator $H_0^{1/2} \bigl(H_0 + \epsilon\bigr)^{-1/2}  $ and its 
adjoint $\bigl(H_0 + \epsilon\bigr)^{-1/2}  H_0^{1/2}$ are uniformly bounded for $\epsilon >0$ (the second operator
can obviously be extended from $D(H_0^{1/2})$ to the whole Hilbert space by the BLT theorem). Let us define
\begin{equation}
  \mathcal{P}'_j (\epsilon) := \bigl(H_0 + \epsilon\bigr)^{-1/2}  H_0^{1/2}  \mathcal{P}^{(c)}_j (0)  H_0^{1/2} \bigl(H_0 + \epsilon\bigr)^{-1/2}  \quad \quad (\epsilon >0)
\end{equation}
\begin{lemma}\label{8.8;19}
The following is true
\begin{equation}
 \lim_{\epsilon \to +0} \left\| \mathcal{P}'_j (\epsilon) - \mathcal{P}^{(c)}_j
(\epsilon) \right\| = 0 \quad \quad (j = 1, \ldots, N)
\end{equation}
 \end{lemma}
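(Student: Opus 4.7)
My approach is to work in the Fourier representation $\mathcal{F}_j$ for an index $j$ such that $\mathfrak{C}_j$ is at critical coupling (if not, both operators vanish by construction and the lemma is trivial). Without loss of generality take $j = 1$. After $\mathcal{F}_1$ one has $H_0 = p_1^2 + H_0'$ with $H_0' := -\Delta_{x_r}$ on $L^2(\mathbb{R}^{3N-6})$, and every operator in sight becomes a direct integral over fibres $p_1 = k \in \mathbb{R}^3$; the plan is to verify that the operator norm of the difference vanishes fibre by fibre, uniformly in $k$.

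The first step is to bring both operators to a common factorised form. Writing $\varphi_1(E) = (H_0' + E)^{-1/2}\tilde{\varphi}_1(E)$ (the fibre version of (\ref{11.01.12/1})) gives
$$\mathcal{P}^{(c)}_1(\epsilon) = (H_0+\epsilon)^{-1/2}\, \tilde{Q}_1(\epsilon)\, (H_0+\epsilon)^{-1/2},$$
where $\tilde{Q}_1(\epsilon)$ acts on the fibre over $k$ as the rank-one operator $\theta(\mu_1(k^2+\epsilon))\,|\tilde{\varphi}_1(k^2+\epsilon)\rangle\langle\tilde{\varphi}_1(k^2+\epsilon)|$. Since $\ker H_0 = \{0\}$, cancelling $H_0^{1/2}H_0^{-1/2} = I$ inside the definition of $\mathcal{P}'_1(\epsilon)$ yields $H_0^{1/2}\mathcal{P}^{(c)}_1(0) H_0^{1/2} = \tilde{Q}_1(0)$ as a bounded operator (the extension from $D(H_0^{1/2})$ to $L^2$ is legitimate because $\|\tilde\varphi_1(E)\|$ is bounded on the compact set where $\theta(\mu_1(E)) \neq 0$), and hence $\mathcal{P}'_1(\epsilon) = (H_0+\epsilon)^{-1/2}\tilde{Q}_1(0)(H_0+\epsilon)^{-1/2}$. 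Therefore the difference reads
$$\mathcal{P}'_1(\epsilon) - \mathcal{P}^{(c)}_1(\epsilon) = (H_0+\epsilon)^{-1/2} \bigl[\tilde{Q}_1(0) - \tilde{Q}_1(\epsilon)\bigr] (H_0+\epsilon)^{-1/2},$$
and its norm equals $\sup_{k}\|A(k,\epsilon)\|$, where the fibre operator is $A(k,\epsilon) = \theta(\mu_1(k^2))|\alpha(k,\epsilon)\rangle\langle\alpha(k,\epsilon)| - \theta(\mu_1(k^2+\epsilon))|\varphi_1(k^2+\epsilon)\rangle\langle\varphi_1(k^2+\epsilon)|$ with $\alpha(k,\epsilon) := (k^2+H_0'+\epsilon)^{-1/2}\tilde{\varphi}_1(k^2) = T(k^2,\epsilon)\varphi_1(k^2)$, $T(E,\epsilon) := (E+H_0'+\epsilon)^{-1/2}(E+H_0')^{1/2}$; both vectors have norm at most one.

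The standard bound for differences of rank-one operators gives $\|A(k,\epsilon)\| \leq 2\,\|\alpha(k,\epsilon) - \varphi_1(k^2+\epsilon)\| + |\theta(\mu_1(k^2)) - \theta(\mu_1(k^2+\epsilon))|$, and the $\theta$-difference is $o(1)$ uniformly in $k$ by Lipschitz continuity of $\theta$ and uniform continuity of $\mu_1$ on $[0,\beps]$. Splitting $\alpha(k,\epsilon) - \varphi_1(k^2+\epsilon) = [T(k^2,\epsilon)-1]\varphi_1(k^2) + [\varphi_1(k^2) - \varphi_1(k^2+\epsilon)]$, the second summand is $o(1)$ uniformly on the compact support of $\theta(\mu_1(\cdot))$ by (\ref{7.8;25}) applied to the subsystem $\mathfrak{C}_1$.

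The hard part is the first summand, because the crude bound $\|(H_0+\epsilon)^{-1/2}\| = \epsilon^{-1/2}$ is useless. Here I invoke the functional calculus of $H_0'$ together with $|\sqrt a - \sqrt b| \leq \sqrt{|a-b|}$ (valid for $0 \leq a \leq b$) to get the multiplier estimate $|T(k^2,\epsilon) - 1|^2 \leq \epsilon\,(H_0'+k^2+\epsilon)^{-1}$, so
$$\|[T(k^2,\epsilon)-1]\varphi_1(k^2)\|^2 \leq \epsilon\,\|(H_0'+k^2+\epsilon)^{-1/2}\varphi_1(k^2)\|^2.$$
Since $|q_r|^2 \geq |q_i|^2$ for any single subsystem Jacobi momentum $q_i$, the right-hand norm is bounded above by $\|G_i^{\,1}(0, k^2+\epsilon)\varphi_1(k^2)\|^2$, which by Theorem~\ref{3.8;8} applied to $\mathfrak{C}_1$ (with $\alpha = 1$) is finite uniformly in $k^2 \in [0,\beps]$ and $\epsilon > 0$. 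Thus $\|[T(k^2,\epsilon)-1]\varphi_1(k^2)\| = O(\sqrt\epsilon)$ uniformly in $k$, and combining the three estimates $\|A(k,\epsilon)\| \to 0$ uniformly on the (uniformly) compact set where either $\theta$-factor is nonzero, while $A(k,\epsilon) = 0$ elsewhere. This establishes $\|\mathcal{P}'_1(\epsilon) - \mathcal{P}^{(c)}_1(\epsilon)\| \to 0$ as $\epsilon \to +0$.
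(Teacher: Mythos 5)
Your proof is correct, but it follows a genuinely different route from the paper's. The paper does not estimate $\mathcal{P}'_j(\epsilon)-\mathcal{P}^{(c)}_j(\epsilon)$ directly: it shows that $\mathcal{D}_j(\epsilon)=(H_0+\epsilon)^{-1/2}H_0^{1/2}P_j(0)$ is norm--Cauchy as $\epsilon\to+0$ (via the same $\bigl(H_0+\epsilon_2\bigr)^{1/2}-\bigl(H_0+\epsilon_1\bigr)^{1/2}$ trick as in Lemma~\ref{31.07;2}, with the uniform constant $\sup_{\epsilon>0}\|(H_0+\epsilon)^{-1/4}P_j(0)\|$ supplied by Theorem~\ref{3.8;8}), identifies the norm limit of $\mathcal{P}'_j(\epsilon)$ as $\mathcal{P}^{(c)}_j(0)$ through the weak limit $\sslim_{\epsilon\to+0}H_0^{1/2}(H_0+\epsilon)^{-1/2}=1$, and then invokes the norm--continuity of $\mathcal{P}^{(c)}_j(\epsilon)$ at $\epsilon=0$. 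You instead exhibit the common factorized form $(H_0+\epsilon)^{-1/2}\tilde Q_1(\cdot)(H_0+\epsilon)^{-1/2}$ (your $\tilde Q_1(0)$ is in effect the operator $\mathcal{Z}_1$ the paper only constructs afterwards via the form (\ref{8.1;1})) and estimate the difference fibre by fibre in $p_1$, reducing everything to a rank--one perturbation bound plus the multiplier estimate $|T(k^2,\epsilon)-1|^2\leq\epsilon\,(H_0'+k^2+\epsilon)^{-1}$; the uniform finiteness of $\|(H_0'+k^2+\epsilon)^{-1/2}\varphi_1(k^2)\|$ again comes from Theorem~\ref{3.8;8} (with $\alpha=1$, after discarding all but one subsystem momentum), and the remaining terms are handled by (\ref{7.8;25}) and the uniform continuity of $\theta\circ\mu_1$. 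So both arguments rest on the same two pillars --- Theorem~\ref{3.8;8} and the continuity of $\varphi_1,\mu_1$ --- but yours is more explicit and yields a quantitative rate $O(\sqrt{\epsilon})+\hbox{o}(1)$, whereas the paper's Cauchy--plus--weak--limit argument is shorter and avoids the case analysis near the edge of the support of $\theta(\mu_1(\cdot))$ (which you handle correctly, since for $\epsilon<\beps/2$ any fibre with a nonvanishing $\theta$--factor has $k^2,\,k^2+\epsilon\in[0,\beps]$). The only point worth making explicit is that the boundedness of $\|\tilde\varphi_1(E)\|$ on the support of $\theta(\mu_1(\cdot))$, which justifies your extension of $H_0^{1/2}\mathcal{P}^{(c)}_1(0)H_0^{1/2}$ to all of $L^2$, is exactly what Corollary~\ref{18.9;4} (applied to the subsystem $\mathfrak{C}_1$) provides.
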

\begin{proof}
Note that $\wlim_{\epsilon \to +0} \mathcal{P}'_j (\epsilon) = \mathcal{P}^{(c)}_j
(0)$ because
\begin{equation}
 \sslim_{\epsilon \to +0}  H_0^{1/2} \bigl(  H_0 + \epsilon\bigr)^{-1/2} = 1.
\end{equation}
So the lemma would be proved if we would show that $\mathcal{P}'_j (\epsilon)$
form a Cauchy sequence for $\epsilon \to +0$.
We follow the same recipe as in Lemma~\ref{31.07;2}. It is enough to prove that
\begin{equation}
 \mathcal{D}_j (\epsilon) := \bigl(H_0 + \epsilon\bigr)^{-1/2} H_0^{1/2}  P_j
(0)
\end{equation}
forms a Cauchy sequence for $\epsilon \to +0$. Repeating the arguments from
Lemma~\ref{31.07;2} we obtain for $\epsilon_2 \geq \epsilon_1 >0$
\begin{gather}
\left\|\mathcal{D}_j (\epsilon_2) - \mathcal{D}_j (\epsilon_1)\right\| \nonumber \\
\leq \left\|\left\{(H_0 + \epsilon_2)^{\frac 12} - (H_0 + \epsilon_1)^{\frac
12}\right\}
(H_0 + \epsilon_1)^{-\frac 12} (H_0 + \epsilon_2)^{- \frac 12} H_0^{1/2}  P_j
(0)\right\| \nonumber\\
\leq |\epsilon_2 - \epsilon_1|^{\frac 12} |\epsilon_2|^{- \frac 14}
\left\|H_0^{\frac 12} (H_0 + \epsilon_1)^{- \frac 12} (H_0 + \epsilon_2)^{-
\frac 14 } P_j (0)\right\| \nonumber \\
\leq |\epsilon_2 - \epsilon_1|^{\frac 12} |\epsilon_2|^{- \frac 14}
C_0 , \label{14.8;41}
\end{gather}
where
\begin{equation}
 C_0 := \sup_{\epsilon >0} \left\|(H_0 + \epsilon)^{- \frac 14} P_j (0)\right\|. \label{14.8;23}
\end{equation}
It remains to show that $C_0$ in (\ref{14.8;23}) is finite. Without loss of generality let us set $j = 1$.
 We have
\begin{equation}\label{29.8;1}
  C_0 = \sup_{\epsilon >0}\|(p_1^2 + \epsilon - \Delta_{x_r})^{- \frac 14} \hat P_1 (0) \| =
\sup_{\epsilon >0}\sup_{\epsilon' >0}\|(\epsilon' + \epsilon - \Delta_{x_r})^{- \frac 14} \varphi_1 (\epsilon')\| , 
\end{equation}
where the last norm is that of $L^2 (\mathbb{R}^{3N-6})$. The expression on the rhs of (\ref{29.8;1}) is finite due to Theorem~\ref{3.8;8}.
From (\ref{14.8;41}) it follows that $D_j (\epsilon)$ form
a Cauchy sequence for $\epsilon \to +0$.
\end{proof}

% \nonumber \\
% = \bigl(H_0 + \epsilon\bigr)^{-1/2}  \mathcal{A}_j \bigl(H_0 + \epsilon\bigr)^{-1/2} \quad (\epsilon >0)
Consider a hermitian sesquilinear form
$\mathfrak{q}_j (f, g ) = \bigl( H_0^{\frac 12} f ,  \mathcal{P}^{(c)}_j (0)  H_0^{\frac 12} g \bigr)$ with the domain $D( H_0^{\frac 12}) \times D( H_0^{\frac 12})$. Let us first
show that $\mathfrak{q}_j$ is bounded. Using norm--continuity we get
\begin{gather}
 \Bigl\| \bigl[\mathfrak{m}^{(c)}_j (0)\bigr]^{\frac 12} P_j (0) H_0^{\frac 12} f\Bigr\| = \lim_{\epsilon \to +0}  \Bigl\| \bigl[\mathfrak{m}^{(c)}_j (\epsilon)\bigr]^{\frac 12} P_j (\epsilon) H_0^{\frac 12} f\Bigr\| \nonumber\\
= \lim_{\epsilon \to +0}  \Bigl\| \bigl[\mathfrak{m}_j (\epsilon)\bigr]^{-1} \bigl[\mathfrak{m}^{(c)}_j (\epsilon)\bigr]^{\frac 12} P_j (\epsilon) K_j (\epsilon)  H_0^{\frac 12} f\Bigr\| \label{29.8;12}
\end{gather}
Recall that we chose $\beps$ so that $\mu_j (\beps) \geq (1 + \omega/2)^{-1}$. Hence, 
\begin{equation}
 \bigl[\mathfrak{m}_j (\epsilon)\bigr]^{-1} \bigl[\mathfrak{m}^{(c)}_j (\epsilon)\bigr]^{\frac 12}  \leq (1+ \omega/2)^{1/2}
\end{equation}
Therefore, we can rewrite (\ref{29.8;12}) as
\begin{gather}
 \Bigl\| \bigl[\mathfrak{m}^{(c)}_j (0)\bigr]^{\frac 12} P_j (0) H_0^{\frac 12} f\Bigr\|  \leq  (1+ \omega/2)^{1/2} \lim_{\epsilon \to +0} \| K_j (\epsilon) H_0^{\frac 12} f \| \\
=   (1+ \omega/2)^{1/2} \lim_{\epsilon \to +0} \| (H_0 + \epsilon)^{- \frac 12 } V_{\{j\}} (H_0 +
\epsilon)^{- \frac 12 }  H_0^{\frac 12} f \| \leq c_j \| (H_0 + \epsilon)^{- \frac
12 }  H_0^{\frac 12} f\| \leq c_j \|f\|, \label{29.8;41}
\end{gather}
where
\begin{equation}
 c_j := (1+ \omega/2)^{1/2} \sup_{\epsilon >0} \| (H_0 + \epsilon)^{- \frac 12 } V_{\{j\}} \| < \infty
\end{equation}
From (\ref{29.8;41}) it follows that $|\mathfrak{q}_j (f, g )| \leq c_j \|f\|\|g\|$. Hence, there exists a self--adjoint operator
$\mathcal{Z}_j \in \mathfrak{B} (L^2 (\mathbb{R}^{3N-3}))$ such that
\begin{equation}
 \mathfrak{q}_j (f, g )  = (f, \mathcal{Z}_j g ). \label{8.1;1}
\end{equation}
It is easy to check that
\begin{equation}
  \mathcal{P}'_j (\epsilon) = \bigl(H_0 + \epsilon\bigr)^{-1/2}  \mathcal{Z}_j \bigl(H_0 + \epsilon\bigr)^{-1/2}  \quad \quad (\epsilon >0).  \label{14.8;1}
\end{equation}

\begin{lemma}\label{8.8;20}
  Suppose that $H$ defined in (\ref{1.08;3})--(\ref{1.08;4}) satisfies R1. Then there
exists $\lambda_0 > 1 $ such that $\sigma_{ess} (\tilde H(\lambda_0)) =
[0, \infty) $, where
\begin{equation}\label{esshard}
  \tilde H(\lambda) := H_0 + \lambda V + \lambda \sum_{i = 1}^N  \mathcal{Z}_i .
\end{equation}
\end{lemma}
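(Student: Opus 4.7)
The plan is to apply the HVZ theorem to $\tilde H(\lambda)$ and control each cluster--decomposition threshold $\Sigma_D(\lambda)$. The structural point is that $\mathcal{Z}_j$ contributes non--trivially only in the 2--cluster decomposition $D_j=(\{j\},\mathfrak{C}_j)$, because $\mathcal{P}^{(c)}_j(0)$ involves the bound state $\varphi_j$ of the $(N-1)$--particle subsystem $\mathfrak{C}_j$, which requires all of its particles to remain together; in any decomposition that splits $\mathfrak{C}_j$ across clusters, $\mathcal{Z}_j$ vanishes in the cluster limit.

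First I would make explicit the fiber--wise structure of $\mathcal{Z}_j$ in the direct--integral decomposition over the momentum $p_1$ conjugate to the Jacobi coordinate of particle $j$. Using $\varphi_j(p_1^2)=(H_0^{(\mathfrak{C}_j)}+p_1^2)^{1/2}\psi_j(p_1^2)$ together with the subsystem eigenvalue equation $(H_0^{(\mathfrak{C}_j)}+p_1^2)\psi_j(p_1^2)=-\mu_j^{-1}(p_1^2)V_{\{j\}}\psi_j(p_1^2)$, the fiber of $\mathcal{Z}_j=H_0^{1/2}\mathcal{P}^{(c)}_j(0)H_0^{1/2}$ at $p_1$ becomes the rank--one positive operator
\[
\mathcal{Z}_j^{(p_1)}=\theta(\mu_j(p_1^2))\,\mu_j^{-2}(p_1^2)\,\bigl|V_{\{j\}}\psi_j(p_1^2)\bigr\rangle\bigl\langle V_{\{j\}}\psi_j(p_1^2)\bigr|,
\]
which at $p_1=0$ (where $\mu_j(0)=1$) reduces to $\mathcal{Z}_j^{(0)}=|V_{\{j\}}\psi_j(0)\rangle\langle V_{\{j\}}\psi_j(0)|$ on the internal Hilbert space of $\mathfrak{C}_j$. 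This shows that the threshold of $\tilde H(\lambda)$ in the decomposition $D_k$ equals $\inf\sigma\bigl(H_0^{(\mathfrak{C}_k)}+\lambda V_{\{k\}}+\lambda\mathcal{Z}_k^{(0)}\bigr)$.

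Then I would verify $\Sigma_D(\lambda_0)\geq 0$ for every cluster decomposition. For $D=(C_1,C_2)$ with $|C_1|,|C_2|\geq 2$, no $\mathcal{Z}_i$ survives and $\Sigma_D(\lambda)=\inf\sigma(H_{C_1}(\lambda))+\inf\sigma(H_{C_2}(\lambda))$; combining the trivial inequality $H_0+\lambda V_C\geq H_0+V_C^+-\lambda V_C^-$ with R1 gives non--negativity for $\lambda\in[1,1+\omega]$ for each $(N-2)$--or--smaller subsystem. For $D=D_k$ with $\mathfrak{C}_k$ not at critical coupling, $\mathcal{Z}_k=0$ and strict positivity of $H_0^{(\mathfrak{C}_k)}+V_{\{k\}}$ propagates by continuity to some $\lambda$ slightly above one. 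The reverse inclusion $[0,\infty)\subseteq\sigma_{ess}(\tilde H(\lambda_0))$ follows from a standard Weyl--sequence construction: take $\psi_k(0)$ tensored with a spreading wave packet in the external Jacobi coordinate, whose kinetic energy can fill $[0,\infty)$.

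The main obstacle is the case $D=D_k$ with $\mathfrak{C}_k$ at critical coupling, where one needs $H_0^{(\mathfrak{C}_k)}+\lambda V_{\{k\}}+\lambda\mathcal{Z}_k^{(0)}\geq 0$ for $\lambda\in[1,1+\omega']$. Here the rank--one operator $\mathcal{Z}_k^{(0)}$ is aligned with the bound--state direction via $V_{\{k\}}\psi_k(0)$, and must compensate the downward drift of the small negative eigenvalues of $H_0^{(\mathfrak{C}_k)}+\lambda V_{\{k\}}$ that appear as $\lambda$ crosses $1$. I expect this to be established by a BS--principle argument applied to the subsystem together with the slope estimate $1-\mu_k(\epsilon)\geq a_\mu\epsilon$ from Theorem~\ref{3.8;3} (applied to $\mathfrak{C}_k$) and the eigenfunction bounds from Theorem~\ref{3.8;8}; the delicate cancellation between the negativity introduced by $(\lambda-1)V_{\{k\}}$ on the bound state and the positivity of $(\lambda-1)\mathcal{Z}_k^{(0)}$ is the technical heart of the lemma.
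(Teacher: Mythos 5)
Your fiber computation of $\mathcal{Z}_j$ is correct and your list of cases matches the structure of the problem, but the proposal stops short of proving precisely the step that constitutes the lemma. You write that the positivity of $H_0^{(\mathfrak{C}_k)}+\lambda V_{\{k\}}+\lambda\mathcal{Z}_k^{(0)}$ for some $\lambda>1$ ``is the technical heart'' and that you ``expect'' it to follow from a BS argument plus the slope estimate; that expectation is not a proof, and the cancellation you anticipate is not how the bound is actually obtained. The paper never isolates the $p_1=0$ fiber or balances $(\lambda-1)V_{\{k\}}$ against $(\lambda-1)\mathcal{Z}_k^{(0)}$ on the bound--state direction. Instead it uses the factorization (\ref{19.9;14}),
\begin{equation*}
H_0+\lambda V_{\{s\}}+\lambda\mathcal{Z}_s+\epsilon_n
=\bigl(H_0+\epsilon_n\bigr)^{1/2}\Bigl\{1-\lambda\bigl[K_s(\epsilon_n)-\mathcal{P}^{(c)}_s(\epsilon_n)\bigr]
+\lambda\bigl(\mathcal{P}'_s(\epsilon_n)-\mathcal{P}^{(c)}_s(\epsilon_n)\bigr)\Bigr\}\bigl(H_0+\epsilon_n\bigr)^{1/2},
\end{equation*}
together with two inputs: the uniform spectral bound $\pmb{\eta}<1$ of (\ref{19.9;1}) (subtracting the projection onto the non--degenerate top eigenvector of $K_s(\epsilon)$ pushes the remainder strictly below $1$, uniformly in $\epsilon$), and Lemma~\ref{8.8;19}, i.e.\ $\|\mathcal{P}'_s(\epsilon)-\mathcal{P}^{(c)}_s(\epsilon)\|\to0$, whose proof requires the weighted bounds of Theorem~\ref{3.8;8}. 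Choosing $\lambda_0\leq(\pmb{\eta}+\varepsilon)^{-1}$ with $\lambda_0\in(1,1+\omega)$ then gives nonnegativity along a sequence $\epsilon_n\to+0$, hence on all fibers at once. Your proposal contains no substitute for either ingredient; also, reducing the threshold of the channel $D_k$ to the single fiber at $p_1=0$ is unjustified, since $\mathcal{Z}_k^{(p_1)}$ varies with $p_1$ and what is needed is nonnegativity of the full fibered operator, not of one fiber.

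A second, smaller gap: you invoke the HVZ theorem for $\tilde H(\lambda)$ as if the $\mathcal{Z}_i$ were decaying pair potentials, asserting that $\mathcal{Z}_j$ ``vanishes in the cluster limit'' whenever $\mathfrak{C}_j$ is split. Because the $\mathcal{Z}_i$ are nonlocal, this is exactly what has to be proved, and the paper does it by hand: it uses the Ruelle--Simon partition of unity and the IMS formula (\ref{06.02/1}), and shows that $J_m^2\mathcal{Z}_s$ for $m\neq s$ is relatively $H_0$--compact by passing to the operator $\hat Y_1$ built from $\tilde\varphi_1=-\mu_1^{-1}V_{\{1\}}\psi_1$ and using the decay estimate (\ref{19.9;3}) of Corollary~\ref{18.9;4}. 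Your fiber formula for $\mathcal{Z}_j$ is essentially this $\hat Y_1$ representation, so you have the right object in hand, but without the decay statement for $\tilde\varphi_j$ and the localization argument the HVZ--type conclusion does not follow.
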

\begin{proof}
The operator $\tilde H (\lambda)$ is self--adjoint on $D(H_0) \subset L^2 (\mathbb{R}^{3N-3})$. 
Let $J_s \in C^{2}(\mathbb{R}^{3N-3})$ for $s=1,2,\ldots, N$ denote the Ruelle--Simon partition of
unity, see Definition
3.4 and Proposition 3.5 in \cite{cikon} and also \cite{jpasubmit,coulombgridnev}. One has
$J_s \geq 0$, $\sum_s J^{2}_s =1$ and $J_s (k x) =J_s (x)$ for $k
\geq 1$ and $|x|= 1$. Besides there exists $C > 0$ such that for $i \neq s$
\begin{equation}\label{ims5}
    \supp J_s \cap \{ x | |x| > 1 \} \subset \{x|\; |r_i - r_s | \geq C |x|\} .
\end{equation}
We shall use the following version of the IMS formula, see eq.~(42) in \cite{coulombgridnev}
\begin{equation}\label{06.02/1}
 \Delta = \sum_{s, s' = 1}^N J_s J_{s'} \Delta J_{s'}J_s  + 2\sum_{s=1}^N
|\nabla J_s|^2
\end{equation}
The previous equation can be obtained from the standard IMS formula (Theorem 3.2
in \cite{cikon}) if one notes that $N^2$ functions $J_s J_{s'}$ satisfy
$\sum_{s, s '} (J_s J_{s'})^2 = 1 $. With the help of (\ref{06.02/1}) we can
write
\begin{equation}
    \tilde H (\lambda) = \sum_{s=1}^N J^2_s \left[H_0 + \lambda V_{\{s\}}  + \lambda
\mathcal{Z}_s \right]  J^2_s
+ \sum_{s=1}^N \sum_{\substack{s'=1\\ s\neq s'}}^N J_s J_{s'} \tilde H_{s s'}
(\lambda)J_{s'} J_s + \mathcal{C}_1 + \mathcal{C}_2 + \mathcal{C}_3 \label{ims}
\end{equation}
where
\begin{gather}
\tilde H_{s s'} (\lambda) := H_0 + \lambda V_{\{s, s'\}} \quad \quad (s \neq s')\\
\mathcal{C}_1 := \lambda \sum_{s=1}^N \sum_{\substack{i=1\\i \neq s}}^N J_s^2
v_{is} J_s^2 + 2\sum_{s=1}^N |\nabla J_s|^2 \label{C1} \\
\mathcal{C}_2 := \lambda \sum_{s=1}^N \sum_{\substack{s' = 1\\ s'\neq s}}^N
J_s^2 J_{s'}^2 \left\{ v_{ss'} + \sum_{\substack{i=1\\i \neq s, s'}}^N \left(
v_{is} + v_{is'} \right) \right\}  \label{C2}\\
\mathcal{C}_3 := \lambda \sum_{s=1}^N \sum_{m =1}^N \sum_{\substack{k = 1 \\ k
\neq m}}^N J^2_m  \mathcal{Z}_s  J^2_k \label{C3}
\end{gather}
By the standard arguments in the proof of the HVZ theorem the lemma would be
proved if we can show that $\mathcal{C}_{1, 2, 3}$ are relatively $H_0$ compact
and
the operators under the sums in (\ref{ims}) are non--negative for some $\lambda_0
>1$.  From the derivation of the HVZ theorem, see \cite{teschl,cikon}, it follows that the
operators
$\mathcal{C}_{1,2}$ are relatively $H_0$ compact. To prove the same for
$\mathcal{C}_3$ it suffices to show that
$J^2_m  \mathcal{Z}_s$ for $m \neq s$ is relatively $H_0$ compact.
Without loosing generality we consider only $J^2_2  \mathcal{Z}_1$. Let $\phi \in C_0^\infty(\mathbb{R}_+)$ be such that $\phi(r) = 1$ if $r \in [0, 1]$,
$\phi(r) \in [0,1]$ for $r \in [1, 2]$ and
$\phi(r)=0$ if $r \in [2, \infty)$. Then by definition $\phi_L (r) := \phi(L^{-1}r)$, where $L >0$.
We
have
\begin{equation}\label{71}
 J^2_2  \mathcal{Z}_1 = J^2_2 \phi_L (x_r^2) \mathcal{Z}_1 +
J^2_2  \bigl(1- \phi_L (x_r^2)\bigr) \mathcal{Z}_1 ,
\end{equation}
where $x_r^2 = x_2^2 + \ldots + x_{N-1}^2$.
Obviously, the operator $J^2_2 \phi_L (x_r^2)$ is
relatively $H_0$ compact for all $L >0$.
It remains to show that the norm of the second term in (\ref{71}) can
be made as small as pleased by choosing $L$ large enough. Before we estimate this term let us introduce the operator $\hat Y_1$ similar to the expression in (\ref{29.8;3}),
which acts on $\hat f (p_1, x_r) \in L^2(\mathbb{R}^{3N-3})$ as follows
\begin{equation}\label{29.8;314}
 \hat Y_1 (\epsilon)  \hat f := 
 \tilde \varphi_1 (\epsilon + p_1^2; x_r) {\displaystyle \int \hat f (p_1, x_r)}
\tilde \varphi^*_1 (\epsilon + p_1^2; x_r) \: d^{3N-6} x_r
\end{equation}
while $\tilde \varphi$ was defined in (\ref{18.9;1}). The Fourier--transformed version we denote as $Y_1 := \mathcal{F}_1^{-1}
\hat Y_1 (\epsilon)\mathcal{F}_1$ (the Fourier transform $\mathcal{F}_1$ was defined in (\ref{7.8;51})). By (\ref{8.1;1}), (\ref{14.8;1}) for any $f, g \in C_0^\infty (\mathbb{R}^{3N-3})$
\begin{gather}
 \left| \bigl(f, (1-\phi_L)\mathcal{Z}_1 g\bigr) \right| = \left| \bigl(H_0^{\frac 12} (1-\phi_L)f, \mathcal{P}^{(c)}_1 (0) H_0^{\frac 12} g\bigr) \right| \nonumber\\
= \lim_{\epsilon \to +0} \left|  \bigl((H_0+\epsilon)^{\frac 12} (1-\phi_L)f, \mathfrak{m}_1^{(c)} (\epsilon) P_1 (\epsilon) (H_0+\epsilon)^{\frac 12} g\bigr) \right| \nonumber\\
= \lim_{\epsilon \to +0} \left|  \bigl(\mathfrak{m}_1^{(c)} (\epsilon) f,  (1-\phi_L) Y_1 (\epsilon) g\bigr) \right| \leq \|f\| \|g\| \sup_{\epsilon \in (0, \beps]}\|(1-\phi_L) Y_1 (\epsilon)\| \nonumber\\
\leq \|f\| \|g\| \sup_{\epsilon \in (0, \beps]}\|(1-\phi_L) \tilde \varphi_1 (\epsilon)\| \sup_{\epsilon \in (0, \beps]}\|\tilde \varphi_1 (\epsilon)\| ,
\end{gather}
where the last two norms are that of $L^2 (\mathbb{R}^{3N-6})$. Now by Corollary~\ref{18.9;4} it follows that the norm of the second term on the rhs of (\ref{71}) can be
made as small as pleased by choosing $L$ large and, hence, $\mathcal{C}_3$ is relatively $H_0$ compact.

Due to R1 $\tilde H_{ss'}\bigl( 1+\omega \bigr) \geq
0$ and $\lambda_0 \in (1, 1+\omega)$ ensures that all terms under the double sum in (\ref{ims}) are non--negative operators.
Thus to prove the Lemma it remains to show that with appropriate $\lambda_0 > 1$
the operator
in square brackets in (\ref{ims}) is non--negative. Or equivalently, that for
some $\lambda_0 > 1$ there is a sequence $\epsilon_n \to +0$ such that
\begin{equation}
  H_0 + \lambda_0 V_{\{s\}}  + \lambda_0 \mathcal{Z}_s  +\epsilon_n \geq 0  .  \label{19.9;17}
\end{equation}
We have
\begin{gather}
 H_0 + \lambda V_{\{s\}}  + \lambda \mathcal{Z}_s  +\epsilon_n \nonumber\\
 =\bigl(H_0 + \epsilon_n\bigr)^{1/2}  \left\{ 1 - \lambda \Bigl[ K_s
(\epsilon_n) -
\mathcal{P}^{(c)}_s (\epsilon_n)\Bigr] + \lambda \Bigl(\mathcal{P'}_s (\epsilon_n) -
\mathcal{P}^{(c)}_s (\epsilon_n)\Bigr)\right\}
\bigl(H_0 + \epsilon_n\bigr)^{1/2} , \label{19.9;14}
\end{gather}
where $\epsilon_n \in (0, \beps)$.
By Lemma~\ref{8.8;19}  for any $\varepsilon > 0$ we can choose $\epsilon_n \to +0$ such
that
\begin{equation}
 \Bigl\|\mathcal{P'}_s (\epsilon_n) - \mathcal{P}^{(c)}_s (\epsilon_n)\Bigr\| <
\varepsilon
\end{equation}
Hence, by  (\ref{19.9;1}) from (\ref{19.9;14}) follows
\begin{gather}
 H_0 + \lambda V_{\{s\}}  + \lambda \mathcal{Z}_s  +\epsilon_n \nonumber\\
\geq \bigl(H_0 + \epsilon_n\bigr)^{1/2}  \left\{ 1 - \lambda (\pmb{\eta} +
\varepsilon) \right\}
\bigl(H_0 + \epsilon_n\bigr)^{1/2} .
\end{gather}
Setting $\lambda_0 \leq (\pmb{\eta} + \varepsilon)^{-1}$ makes (\ref{19.9;17}) hold.
Taking $\varepsilon$ sufficiently small we ensure that $\lambda_0 \in(1, 1+\omega)$.

The inclusion
$[0, \infty) \subset \sigma_{ess} (\tilde H(\lambda)) $ for all $\lambda > 0$ is
standard and we omit its proof (it is not used in other proofs anyway). 
\end{proof}
\begin{proof}[Proof of Lemma~\ref{13.8;4}]
Let us first consider $\epsilon \in (0, \beps]$. We can write
\begin{equation}
 \mathcal{G}_N(\epsilon) = \mathcal{G'} (\epsilon) + \sum_{i=1}^N \left[\mathcal{P'}_i (\epsilon) - \mathcal{P}^{(c)}_i (\epsilon)\right] ,
\end{equation}
where
\begin{equation}
  \mathcal{G'} (\epsilon):= K(\epsilon) - \sum_{i=1}^N \mathcal{P'}_i (\epsilon) .
\end{equation}
If we set $A=H_0$ and $B = V + \sum_{i=1}^N \mathcal{Z}_i$ in the
BS principle (Theorem~\ref{31.07;16}) then from (\ref{14.8;1}) and Lemma~\ref{8.8;20} it follows that $\sigma_{ess} \bigl(  \mathcal{G'} (\epsilon)
\bigr) \subset (-\infty , \lambda_0^{-1}]$ if $\epsilon \in (0, \beps]$. Now the result follows from Lemma~\ref{8.8;19} and Theorem~9.5 in \cite{weidmann}.
\end{proof}

\begin{proof}[Proof of Lemma~\ref{13.8;1}]
We shall prove by induction that $\mathcal{M}_k (\epsilon)  \in \mathfrak{J}$
for $k= 1, 2, \ldots, N$. We make the following induction assumption.
Suppose that for $\mathcal{M}_k$ the following holds: (a) $\mathcal{M}_k \in
\mathfrak{J}$; (b) for any $s, s' \geq k+1$ one has
$\mathcal{M}_k \mathcal{R}_s ,  \mathcal{R}_s\mathcal{M}_k ,  \mathcal{R}_s
\mathcal{M}_k \mathcal{R}_{s'}  \in \mathfrak{J}$.
Let us first show that the induction
assumption is fulfilled for $k=1$. That $\mathcal{M}_1 \in \mathfrak{J}$ follows from (\ref{8.8;2})
and Lemma~\ref{19.9;21}.
Checking (b) is also straightforward if one applies Lemmas~\ref{19.9;21}, \ref{19.9;22}, \ref{19.9;23}.
For example,
\begin{gather}
 \mathcal{R}_s\mathcal{M}_1 \mathcal{R}_{s'} = \left[ \mathcal{R}_s  P_1\right]
\left[ \mathcal{R}_1 \mathcal{L}_1  \mathcal{R}_1 \right]
\left[P_1 \mathcal{R}_{s'}  \right] + \left[ \mathcal{R}_s  P_1\right] \left[
\mathcal{R}_1 \mathcal{L}_1 \mathcal{R}_{s'}  \right] \nonumber \\
+ \left[ \mathcal{R}_s  \mathcal{L}_1 \mathcal{R}_1 \right] \left[  P_1
\mathcal{R}_{s'} \right] \quad \quad (s, s' \geq 2) , \label{20.9;1}
\end{gather}
where we have used $P_i \mathcal{R}_i = \mathcal{R}_i P_i  = \mathcal{R}_i$. All
expressions in square brackets are elements of $\mathfrak{J}$ according to Lemmas~\ref{19.9;21}, \ref{19.9;22}, \ref{19.9;23}, 
hence, the lhs of (\ref{20.9;1}) also belongs to $\mathfrak{J}$ 
according to Lemma~\ref{13.8;9}. 
The implication $k \rightarrow k+1$ is proved similarly. The fact that
$\mathcal{M}_{k+1} \in \mathfrak{J}$ follows directly from the induction
assumption and Lemmas~\ref{19.9;21}, \ref{19.9;22}, \ref{19.9;23}. Let us consider, for example,
$\mathcal{R}_s \mathcal{M}_{k+1} $ for $s \geq k+2$. By (\ref{8.8.7}) we obtain
\begin{gather}
 \mathcal{R}_s \mathcal{M}_{k+1}  = \left[ \mathcal{R}_s \mathcal{M}_{k} \right]
+ \left[ \mathcal{R}_s P_{k+1} \right] \left[\mathcal{R}_{k+1} \mathcal{M}_{k}
\right]
+ \left[\mathcal{R}_s \mathcal{M}_{k} \mathcal{R}_{k+1}  \right]
+ \left[\mathcal{R}_s P_{k+1} \right] \left[\mathcal{R}_{k+1} \mathcal{M}_{k}
\mathcal{R}_{k+1} \right] \nonumber\\
+ \left[\mathcal{R}_s P_{k+1} \right]  \left[\mathcal{R}_{k+1} \mathcal{L}_{k+1}
\mathcal{R}_{k+1} - \sum_{i=1}^k \left[ \mathcal{R}_{k+1} P_i \right]
\mathcal{P}_i \left[P_i \mathcal{R}_{k+1}\right]\right] \nonumber\\
+ \left[\mathcal{R}_s P_{k+1} \right]  \left[\mathcal{R}_{k+1} \mathcal{L}_{k+1}
- \sum_{i=1}^k \left[\mathcal{R}_{k+1} P_i \right]\mathcal{P}_i \right]
+  \left[\mathcal{R}_{s} \mathcal{L}_{k+1} \mathcal{R}_{k+1}  - \sum_{i=1}^k
\left[ \mathcal{R}_{s} P_i \right] \mathcal{P}_i \left[P_i \mathcal{R}_{k+1}
\right]\right] \nonumber
\end{gather}
Again, according to Lemmas~~\ref{19.9;21}, \ref{19.9;22}, \ref{19.9;23} and the induction assumption all expressions in
square brackets belong to $\mathfrak{J}$.
\end{proof}

\begin{lemma} \label{19.9;21}
For $i \neq j$
\begin{equation}
 \sup_{\epsilon > 0} \left\|\mathcal{R}_i \bigl(H_0 + \epsilon\bigr)^{-1/2}
|v_{ij}|^{1/2}\right\|_{HS} < \infty . \label{20.9;51}
\end{equation}
\end{lemma}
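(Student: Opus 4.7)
The plan is to reduce $\|\mathcal{R}_i(H_0+\epsilon)^{-1/2}|v_{ij}|^{1/2}\|_{HS}$ to an explicit integral by exploiting the rank--one fibre structure of $\mathcal{R}_i$. If $\mathfrak{C}_i$ is not at critical coupling then $\mathcal{R}_i\equiv 0$ and the claim is trivial, so I assume WLOG $i=1$ and use the Jacobi coordinates $(x_1,x_r)$ adapted to particle 1 (as in Sec.~\ref{1.08;1}). Under the partial Fourier transform $\mathcal{F}_1$, the operator $(H_0+\epsilon)^{-1/2}$ becomes the $x_r$--operator $(-\Delta_{x_r}+p_1^2+\epsilon)^{-1/2}$ at each fixed $p_1$, $P_1(\epsilon)$ becomes the rank--one projection $|\varphi_1(\epsilon+p_1^2;\cdot)\rangle\langle\varphi_1(\epsilon+p_1^2;\cdot)|$, and $\mathfrak{g}_1(\epsilon)$ becomes multiplication by $g(p_1):=[1-\mu_1(\epsilon+p_1^2)]^{-1/2}-1$. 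Composing, the integral kernel of $A:=\mathcal{R}_1(H_0+\epsilon)^{-1/2}|v_{1j}|^{1/2}$ is
\begin{equation*}
A(x_1,x_r;x_1',x_r') = \frac{|v_{1j}(r_1'-r_j')|^{1/2}}{(2\pi)^3}\int e^{ip_1\cdot(x_1-x_1')}g(p_1)\,\varphi_1(\epsilon+p_1^2;x_r)\,\psi_1^{*}(\epsilon+p_1^2;x_r')\,dp_1,
\end{equation*}
where $\psi_1(\epsilon'):=(-\Delta_{x_r}+\epsilon')^{-1/2}\varphi_1(\epsilon')$ is the subsystem analogue of (\ref{11.01.12/1}).

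Next I would integrate $|A|^2$. Plancherel in $x_1$ collapses the Fourier integral, $\|\varphi_1(\epsilon+p_1^2)\|=1$ kills the $x_r$ integral, and the inner $x_1'$ integral is treated by the change of variable $u=\alpha x_1'+\beta(x_r')$, where $r_1-r_j=\alpha x_1+\beta(x_r)$ with $\alpha\neq 0$ (since varying $x_1$ with $x_r$ fixed moves particle 1 relative to $j\ge 2$); this yields $|\alpha|^{-3}\|v_{1j}\|_{L^1}$, finite by (\ref{1.08;4}) and independent of $x_r'$. The remaining $x_r'$ integral produces $\|\psi_1(\epsilon+p_1^2)\|^2$, which is uniformly bounded by the subsystem version of Corollary~\ref{20.9;6}. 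Altogether,
\begin{equation*}
\|A\|_{HS}^2 \le \frac{C\,\|v_{1j}\|_{L^1}}{(2\pi)^3|\alpha|^3}\int|g(p_1)|^2\,dp_1
\end{equation*}
for some constant $C$ independent of $\epsilon$.

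The final step is the uniform bound on $\int|g(p_1)|^2 dp_1$. The key point is that $\varphi_1(\epsilon')\equiv 0$ for $\epsilon'>\beps$, so $P_1(\epsilon)$, hence $\mathcal{R}_1(\epsilon)$, is supported on the bounded region $\{p_1:p_1^2+\epsilon\le\beps\}$; for $\epsilon>\beps$ the operator vanishes identically. On the support, (\ref{mubound}) yields $|g(p_1)|^2\le[1-\mu_1(\epsilon+p_1^2)]^{-1}\le[a_\mu(\epsilon+p_1^2)]^{-1}$, and a crude 3D polar estimate gives
\begin{equation*}
\int_{|p_1|^2\le\beps-\epsilon}\frac{dp_1}{\epsilon+p_1^2}\le 4\pi\int_0^{\sqrt{\beps}}\frac{r^2\,dr}{\epsilon+r^2}\le 4\pi\sqrt{\beps},
\end{equation*}
uniformly in $\epsilon\ge 0$. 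The main (essentially only) delicate point is the use of the $\beps$--cutoff built into (\ref{7.01;41}): without it, $|g|^2$ would stay a nonzero constant on an unbounded region, and the argument would fail. It is precisely the dimensional integrability of $(\epsilon+p_1^2)^{-1}$ over the bounded ball $\{|p_1|\le\sqrt{\beps}\}$ that makes the Hilbert--Schmidt norm finite.
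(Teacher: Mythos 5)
Your proof is correct and follows essentially the same route as the paper: exploit the rank--one fibre structure of $\mathcal{R}_1$ over $p_1$, use $\|\varphi_1\|=1$, the uniform bound $C_\psi$ on $\|\psi_1\|$, $v_{1j}\in L^1$, the support restriction $p_1^2+\epsilon\le\beps$ coming from the cutoff in (\ref{7.01;41}), and (\ref{mubound}) to make the $|p_1|^{-2}$ singularity integrable over the bounded ball. The only cosmetic difference is that you handle the potential factor in position space via Plancherel in $x_1$ and a change of variables giving $|\alpha|^{-3}\|v_{1j}\|_{L^1}$, while the paper works with the momentum--space kernel $\widehat{|v_{12}|^{1/2}}$, which amounts to the same estimate.
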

\begin{proof}
Without loosing generality we can consider the integral operator
$\mathcal{C}(\epsilon):= \mathcal{F}_1 \mathcal{R}_1 \bigl(H_0 +
\epsilon\bigr)^{-1/2} |v_{12}|^{1/2} \mathcal{F}^{-1}_1$.
We use the Jacobi coordinates $x_1, \ldots, x_{N-1}$ and from Fig.~\ref{Fig:1} one finds $r_1 - r_2 = \gamma_1 x_1 + \gamma_2 x_2$, where
\begin{gather}
 \gamma_1 = \left[ \frac{M}{2m_1 (M - m_1)} \right]^{\frac 12} , \label{21.9;32}\\
 \gamma_2 = - \left[ \frac{M- m_1 - m_2}{2m_2 (M - m_1)} \right]^{\frac 12} . 
\end{gather}
The integral operator $\mathcal{C}(\epsilon)$ acts on $f(p_1 , x_r) \in L^2
(\mathbb{R}^{3N-3})$ as
follows
\begin{equation}
 \bigl(\mathcal{C}(\epsilon) f\bigr) (p_1, x_r)= \int \mathcal{C}(p_1, p'_1,
x_r, x'_r; \epsilon) f(p'_1, x_r') d^3 p'_1 d^{3N-6}x'_r ,
\end{equation}
where the integral kernel is
\begin{gather}
\mathcal{C}(p_1, p'_1, x_r, x'_r; \epsilon)  = \frac{\gamma_1^3}{(2 \pi)^{3/2}}
\varphi_1 (\epsilon + p_1^2; x_r ) \psi_1 (\epsilon + p_1^2; x'_r)
\left\{[1-\mu_1(\epsilon + p_1^2)]^{-\frac 12} - 1\right\} \nonumber\\
 \times \widehat{\bigl| v_{12}\bigr|^{\frac 12}} \bigl(\gamma_1 (p_1 -
p'_1)\bigr)
\exp
\left\{ i\gamma_2 (p'_1 - p_1)\cdot  x_2 \right\} , \label{20.9;3}
\end{gather}
and $\psi_1$ is expressed through $\varphi_1$ through (\ref{11.01.12/1}).
The hat in (\ref{20.9;3}) denotes a standard Fourier transform in $\mathbb{R}^3$.
Now we calculate the square of the Hilbert--Schmidt norm
\begin{gather}
 \left\|\mathcal{C}(\epsilon)\right\|_{HS}^2 = \int \left| \mathcal{C}(p_1,
p'_1, \xi, \xi'; \epsilon)\right|^2 \; d^3 p_1 d^3 p'_1 d^{3N-6}\xi d^{3N-6}\xi'
\nonumber\\
\leq \gamma_1^3 C^2_\psi C_0 \int_{ p_1^2 \leq \beps - \epsilon} \left\{[1-\mu_1(\epsilon +
p_1^2)]^{-\frac 12} - 1\right\}^2 d^3 p_1 \label{31.01/02},
\end{gather}
where $C_\psi$ was defined in Corollary~\ref{20.9;6} and
\begin{equation}
 C_0 :=   (2\pi)^{-3} \max_{i < k} \int \left|  \widehat{\bigl| v_{ik}\bigr|^{\frac
12}} \bigl( t\bigr) \right|^2 d^3t
\end{equation}
is finite since $|v_{ik}|^{\frac 12} \in L^2 (\mathbb{R}^3)$. Using (\ref{mubound}) we
obtain
\begin{gather}
 \left\|\mathcal{C}(\epsilon)\right\|_{HS}^2 \leq  \gamma_1^3 C^2_\psi C_0 \int_{ |p_1|\leq
\sqrt{\beps}} \left(a_\mu^{-\frac 12} |p_1|^{-1} - 1\right)^2 d^3p_1 \nonumber \\
= (4\pi)\gamma_1^3  C^2_\psi C_0  \left[a_\mu^{-1} \sqrt{\beps} - a_\mu^{-\frac 12} \beps +
(1/3) \beps^{\frac 32}\right] . 
\end{gather}
\end{proof}
Another less trivial estimate is given by
\begin{lemma}\label{19.9;22}
  For all $1 \leq i < j \leq N$ and $1 \leq i < s \leq N $
\begin{equation}
\sup_{\epsilon > 0}
\left\|\mathcal{R}_i \bigl(H_0 + \epsilon\bigr)^{-1/2} v_{is} \bigl(H_0 +
\epsilon\bigr)^{-1/2} \mathcal{R}_j\right\|_{HS} < \infty
\end{equation}
\end{lemma}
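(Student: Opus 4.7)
My plan is to reduce the bound to Lemma~\ref{19.9;21} via an adaptation of the same technique. First factor $v_{is} = \sign(v_{is})\,|v_{is}|$, and apply submultiplicativity of the Hilbert--Schmidt norm to obtain
\[
 \left\| \mathcal{R}_i \bigl(H_0 + \epsilon\bigr)^{-1/2} v_{is} \bigl(H_0 + \epsilon\bigr)^{-1/2} \mathcal{R}_j \right\|_{HS} \leq \left\|\mathcal{R}_i \bigl(H_0+\epsilon\bigr)^{-1/2} |v_{is}|^{1/2}\right\|_{HS} \left\||v_{is}|^{1/2} \bigl(H_0+\epsilon\bigr)^{-1/2}\mathcal{R}_j\right\|.
\]
The first factor is uniformly bounded in $\epsilon$ by Lemma~\ref{19.9;21} because $i < s$. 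The second factor coincides with the operator norm of its adjoint, which is dominated by the corresponding Hilbert--Schmidt norm; hence the task reduces to proving
\[
 \sup_{\epsilon > 0} \left\|\mathcal{R}_j \bigl(H_0+\epsilon\bigr)^{-1/2} |v_{is}|^{1/2}\right\|_{HS} < \infty.
\]

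If $j = s$, this bound is again Lemma~\ref{19.9;21}. The substantive case is $j \notin \{i, s\}$. I would repeat the calculation in the proof of Lemma~\ref{19.9;21}, now working in the Jacobi coordinates $w_1, \ldots, w_{N-1}$ based at particle $j$ from Sec.~\ref{1.08;1}. The key geometric observation is that since $i, s \in \mathfrak{C}_j$, the pair coordinate $r_i - r_s$ is an internal relative coordinate of the subsystem $\mathfrak{C}_j$, hence a linear combination of $w_2, \ldots, w_{N-1}$ alone, with no component along $w_1$ (which is proportional to the displacement of particle $j$ from the centre of mass of $\mathfrak{C}_j$). Consequently $|v_{is}|^{1/2}$ commutes with the partial Fourier transform $\mathcal{F}_j$ in $w_1$, and in the mixed $(p, w_r)$ representation ($w_r := (w_2,\ldots,w_{N-1})$) the operator $\mathcal{R}_j \bigl(H_0+\epsilon\bigr)^{-1/2} |v_{is}|^{1/2}$ has an integral kernel diagonal in $p$:
\[
 \delta(p-p') \left([1-\mu_j(\epsilon+p^2)]^{-1/2} - 1\right) \varphi_j(\epsilon+p^2; w_r) \overline{\psi_j(\epsilon+p^2; w_r')} \,|v_{is}(w_r')|^{1/2},
\]
an analogue of (\ref{20.9;3}) in which the convolution factor in $p$ collapses to the identity.

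Integrating, the squared Hilbert--Schmidt norm equals
\[
 \int_{p^2 \leq \beps - \epsilon} \left([1-\mu_j(\epsilon+p^2)]^{-1/2} - 1\right)^2 \|\varphi_j(\epsilon+p^2)\|^2 \, \left\||v_{is}|^{1/2} \psi_j(\epsilon+p^2)\right\|^2 d^3 p.
\]
Here $\|\varphi_j\| = 1$, while $\||v_{is}|^{1/2}\psi_j(\cdot)\|$ is uniformly bounded by the subsystem analogue of (\ref{16.01/1}), applicable since $v_{is}$ is internal to $\mathfrak{C}_j$ and that subsystem inherits the hypotheses of Theorem~\ref{3.8;3} from R1 together with the standing assumption that $\mathfrak{C}_j$ is at critical coupling (otherwise $\mathcal{R}_j = 0$ and there is nothing to prove). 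The remaining $p$-integral is finite by the lower bound $1-\mu_j(\epsilon+p^2) \geq a_\mu p^2$ from (\ref{mubound}), exactly as in the closing calculation of Lemma~\ref{19.9;21}. The main obstacle is isolating the geometric fact that $r_i - r_s$ has no $w_1$-component; once recognised, the estimate is a direct transcription of the argument already given for Lemma~\ref{19.9;21}.
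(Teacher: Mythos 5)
Your reduction breaks down at the crucial point, and the error is visible in your own formula: for $j \notin \{i,s\}$ the potential $v_{is}$ is internal to the subsystem $\mathfrak{C}_j$, so the kernel of $\mathcal{R}_j \bigl(H_0+\epsilon\bigr)^{-1/2}|v_{is}|^{1/2}$ in the mixed representation contains the factor $\delta(p-p')$ — the operator is fibered (diagonal) in the momentum $p$ conjugate to the coordinate joining particle $j$ to the centre of mass of $\mathfrak{C}_j$. Such an operator is not Hilbert--Schmidt (indeed not even compact, its fibers being nonvanishing rank--one operators on a set of positive measure): squaring the kernel produces $\delta(0)$, and the quantity you compute, $\int \bigl([1-\mu_j]^{-1/2}-1\bigr)^2\|\varphi_j\|^2\,\||v_{is}|^{1/2}\psi_j\|^2\,d^3p$, is the fiberwise Hilbert--Schmidt norm integrated in $p$, not the Hilbert--Schmidt norm of the full operator. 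The geometric fact you isolate (that $r_i-r_s$ has no component along $w_1$) is correct, but it works against you: it is precisely what makes this factor non--Hilbert--Schmidt, whereas in Lemma~\ref{19.9;21} it is the dependence of $v_{ij}$ on the $x_1$--direction that produces the convolution factor $\widehat{|v_{ij}|^{1/2}}(\gamma_1(p_1-p_1'))$ and hence square--integrability in $p_1'$. Nor can you repair the argument by estimating the second factor merely in operator norm (which would suffice for $\|AB\|_{HS}\leq\|A\|_{HS}\|B\|$): by (\ref{mubound}) the multiplier in $\mathfrak{g}_j$ behaves like $a_\mu^{-1/2}(\epsilon+p^2)^{-1/2}$ near $p=0$, while $\||v_{is}|^{1/2}\psi_j(\epsilon+p^2)\|$ does not vanish as $\epsilon+p^2\to 0$, so $\sup_{\epsilon>0}\||v_{is}|^{1/2}(H_0+\epsilon)^{-1/2}\mathcal{R}_j\|=\infty$ (it grows like $\epsilon^{-1/2}$).

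The missing idea is that the two singular multipliers $\mathfrak{g}_i$ and $\mathfrak{g}_j$ must each be placed where it can be absorbed, and neither can be absorbed on the $\mathcal{R}_j$ side alone. The paper's proof commutes them (they are functions of momenta, hence commute with each other and with $(H_0+\epsilon)^{-1/2}$), writing $\mathcal{R}_i(H_0+\epsilon)^{-1/2}v_{is}(H_0+\epsilon)^{-1/2}\mathcal{R}_j = P_i\,\mathfrak{g}_j(H_0+\epsilon)^{-1/2}\mathfrak{g}_i\,v_{is}(H_0+\epsilon)^{-1/2}P_j$, so that $\mathfrak{g}_i$ sits next to $v_{is}$ (its $p_i$--singularity is then square--integrable thanks to the convolution coming from the $x_1$--dependence of $v_{is}$), while the singularity of $\mathfrak{g}_j$, which after the change of coordinates lives on the hyperplane $a_{11}p_1+a_{12}p_2=0$, is controlled by the weighted bounds $\sup\|G^{\alpha}(s,\epsilon')\varphi_i\|<\infty$ of Theorem~\ref{3.8;8} with $\alpha=2-\zeta$; this also requires the auxiliary regularizer $B_t(\epsilon)$ and Corollary~\ref{24.9;9} to show the remaining factor $\mathcal{C}_5$ is uniformly bounded. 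None of this machinery is dispensable, and your proposal, which avoids it, does not prove the lemma.
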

\begin{proof}
 For $s = j$ the result easily follows from Lemma~\ref{19.9;21}. So without loosing
generality it suffices to prove that
\begin{equation}
\sup_{\epsilon > 0} \left\|\mathcal{C}_1 (\epsilon) \right\|_{HS} < \infty ,\label{20.9;25}
\end{equation}
where we have defined
\begin{equation}
\mathcal{C}_1 (\epsilon):= \mathcal{R}_1 \bigl(H_0 + \epsilon\bigr)^{-1/2}
v_{13} \bigl(H_0 + \epsilon\bigr)^{-1/2} \mathcal{R}_2 .
\end{equation}
By (\ref{20.9;11}) we have
\begin{gather}
 \mathcal{C}_1 (\epsilon) =
P_1 (\epsilon) \mathfrak{g}_1 (\epsilon)\bigl(H_0 + \epsilon\bigr)^{-1/2} v_{13}
\bigl(H_0 + \epsilon\bigr)^{-1/2} \mathfrak{g}_2 (\epsilon) P_2 (\epsilon) \nonumber \\
= P_1 (\epsilon)\mathfrak{g}_2 (\epsilon) \bigl(H_0 + \epsilon\bigr)^{-1/2}
\mathfrak{g}_1 (\epsilon) v_{13} \bigl(H_0 + \epsilon\bigr)^{-1/2}P_2 (\epsilon) .
\end{gather}
We can write $\mathcal{C}_1 (\epsilon) = \mathcal{C}_2 (\epsilon) -
\mathcal{C}_3 (\epsilon)$, where
\begin{gather}
\mathcal{C}_2 (\epsilon):= P_1 (\epsilon)[\mathfrak{g}_2 (\epsilon) +1]\bigl(H_0
+ \epsilon\bigr)^{-1/2} \mathfrak{g}_1 (\epsilon) v_{13} \bigl(H_0 +
\epsilon\bigr)^{-1/2}P_2 (\epsilon) , \label{20.9;41}\\
\mathcal{C}_3 (\epsilon):= \mathcal{R}_1 (\epsilon)\bigl(H_0 + \epsilon\bigr)^{-1/2} v_{13} \bigl(H_0 + \epsilon\bigr)^{-1/2}P_2 (\epsilon) .
\end{gather}
From Lemma~\ref{19.9;21} it easily follows that $\sup_{\epsilon > 0} \left\|\mathcal{C}_3
(\epsilon) \right\|_{HS} < \infty $. Therefore, (\ref{20.9;25}) reduces to proving
that
\begin{equation}
 \sup_{\epsilon > 0} \left\|\mathcal{C}_2 (\epsilon) \right\|_{HS} < \infty . \label{20.9;42}
\end{equation}
Apart from the sets of coordinates $x_i , z_i$ depicted in Fig.~\ref{Fig:1} (left) we shall need the third set of coordinates
$t_1, t_2, \ldots, t_{N-1}$ depicted in Fig.~\ref{Fig:1} (right).
Thereby $t_1 = z_1$
and $t_2 = \sqrt{2\mu_{13}} (r_3 - r_1)$. The coordinate $t_3$ points in the
direction from the center of mass of the particles $\{4, 5, \ldots, N\}$
to the center of mass of the particle pair
$\{1,3\}$ and $t_i = x_i = z_i$ for $i \geq 4$. The scales are set so as to make the kinetic energy
operator take the form $H_0 = - \sum_i \Delta_{t_i} $. The coordinates are connected through
\begin{gather}
 t_2 = b_{22} z_2 + b_{23} z_3 \\
 t_3 = b_{32} z_2 + b_{33} z_3
\end{gather}
where
\begin{gather}
b_{22} =  -\left[ \frac{m_3 (M-m_2)}{(M - m_1-m_2)(m_1 +m_3)} \right]^{\frac 12}\\
b_{23} = \left[ \frac{m_1(M- m_1 -m_2 -m_3)}{(m_1 +m_3)(M - m_1 -m_2)} \right]^{\frac 12},
\end{gather}
and $b_{32} = b_{23}$, $b_{33} = - b_{22}$. We also set $b_{11} = 1$, $b_{12} = b_{13} = b_{21} = b_{31}  = 0$. Then $b_{ik}$  
are entries of the $3\times 3$ orthogonal matrix $b$. 
From the $2\times 2$ matrix in
(\ref{21.9;1})--(\ref{21.9;2}) we can construct the $3 \times 3 $ orthogonal matrix $a$ by setting $a_{33} = 1$ and $a_{13} = a_{23} = a_{32} = a_{31} = 0$. Then
\begin{equation}
 t_i = \sum_{i=1}^3 c_{ik} x_k \quad \quad (i = 1, 2, 3),
\end{equation}
where $c_{ik}$ are elements of the orthogonal matrix $c = ab$. 

The full and partial Fourier transforms associated with $t_i$ are 
\begin{gather}
 (F_t f ) (p_{t_1}, p_{t_2}, \ldots, p_{t_{N-1}}):= \frac 1{(2\pi)^{\frac{3N-3}2}} \int e^{-i\sum_{k = 1}^{N-1} p_{t_k} \cdot t_k} f(t_1, \ldots, t_{N-1}) d^3 t_1 \ldots d^3 t_{N-1} \\
(\mathcal{F}_t f ) (t_1, p_{t_2}, \ldots, p_{t_{N-1}}):= \frac 1{(2\pi)^{\frac{3N-6}2}} \int e^{-i\sum_{k = 2}^{N-1} p_{t_k} \cdot t_k} f(t_1, \ldots, t_{N-1}) d^3 t_2 \ldots d^3 t_{N-1}
\end{gather}
For shorter notation we introduce the tuples
$p_{t_r} := (p_{t_2}, p_{t_3}, \ldots, p_{t_{N-1}})$ and $p_{t_c} := (p_{t_3},
p_{t_4}, \ldots, p_{t_{N-1}})$.
In analogy with the proof of Theorem~\ref{3.8;8} let us introduce the operator $ B_t
(\epsilon)$, which acts on $f \in L^2(\mathbb{R}^{3N-3})$ as
\begin{equation}
 B_t (\epsilon) f = F_t^{-1} (1+ (p_{t_1}^2 + \epsilon)^{\frac \zeta2})(F_t
f)  +F_t^{-1} (|p_{t_c}|^\zeta - 1)\chi_1 (|p_{t_c}|) (F_t f).
\end{equation}
and $\zeta = 3/4$ (in fact, we could take any $\zeta \in (1/2 , 1)$). For
all $\epsilon >0$ the operators $B_t (\epsilon)$ and $B_t^{-1} (\epsilon)$
are bounded. Inserting the identity $B_t B_t^{-1} = 1$  into (\ref{20.9;41})
and using $[B_t , v_{13}] =0$ we get
\begin{equation}
 \mathcal{C}_2 (\epsilon) = \mathcal{C}_4 (\epsilon) \sign (v_{13})
\mathcal{C}_5 (\epsilon)
\end{equation}
where
\begin{gather}
 \mathcal{C}_4 (\epsilon)  := P_1 (\epsilon)_1 [\mathfrak{g}_2 (\epsilon) + 1] 
\bigl(H_0 + \epsilon\bigr)^{-1/2} B_t (\epsilon) \mathfrak{g}_1
(\epsilon)\bigl|v_{13}\bigr|^{1/2} , \\
\mathcal{C}_5 (\epsilon) = \bigl|v_{13}\bigr|^{1/2} \bigl(H_0 +
\epsilon\bigr)^{-1/2}  B^{-1}_t (\epsilon) P_2 (\epsilon) . 
\end{gather}
Now (\ref{20.9;42}) follows from the inequalities
\begin{gather}
 \sup_{\epsilon > 0} \| \mathcal{C}_4 (\epsilon)\|_{HS} < \infty , \\
\sup_{\epsilon > 0} \| \mathcal{C}_5 (\epsilon)\| < \infty . \label{24.9;6}
\end{gather}
By construction of coordinates $r_1- r_3 = \gamma_1 x_1 + \gamma_2' x_2 + \gamma_3 x_3$, where $\gamma_1$ was defined in (\ref{21.9;32}) and
\begin{gather}
 \gamma_2' = \left[ \frac{m_2}{2(M - m_1)(M - m_1 -m_2 )} \right]^{\frac 12} , \\
 \gamma_3 = - \left[ \frac{M- m_1 - m_2 -m_3}{2m_3 (M - m_1-m_2)} \right]^{\frac 12} . 
\end{gather}
Let us first consider the operator $ \mathcal{\hat C}_4 (\epsilon) :=
\mathcal{F}_1 \mathcal{C}_4 (\epsilon)   \mathcal{F}^{-1}_1$, which  acts on $f(p_1 , x_r) \in L^2 (\mathbb{R}^{3N-3})$ as
follows
\begin{equation}
 \bigl(\mathcal{\hat C}_4 (\epsilon) f\bigr) (p_1, x_r)= \int \mathcal{C}_4
(p_1, p'_1, x_r, x'_r; \epsilon) f(p'_1, x_r') d^3 p'_1 d^{3N-6}x'_r ,
\end{equation}
whereby  the integral kernel is (c. f. eq.~(\ref{20.9;3}))
\begin{gather}
\mathcal{C}_4 (p_1, p'_1, x_r, x'_r; \epsilon)  = \frac{\gamma_1^3}{(2
\pi)^{3/2}} \varphi_1 (\epsilon + p_1^2; x_r ) \psi'_1 (p_1, x'_r)
\left\{[1-\mu_1(\epsilon + p_1^2)]^{-\frac 12} - 1\right\} \nonumber\\
 \times \widehat{\bigl| v_{13}\bigr|^{\frac 12}} \bigl(\gamma_1 (p_1 -
p'_1)\bigr)
\exp
\left\{ i(p'_1 - p_1)\cdot(\gamma_2' x_2 + \gamma_3 x_3 )\right\} . \label{20.9;54}
\end{gather}
In (\ref{20.9;54}) we have introduced the function $\psi_1' = \mathcal{F}_1 F_1^{-1}  \hat \psi'_1$,
where
\begin{gather}
\hat \psi'_1 = \hat \varphi_1 (\epsilon +p_1^2; p_r) \left[1- \mu_2 (\epsilon +
p_{t_1}^2)\right]^{- \frac 12} \left[p_1^2 + p_r^2 + \epsilon\right]^{- \frac
12} \nonumber\\
\times \left\{1+ (p_{t_1}^2 + \epsilon)^{\frac \zeta2} + (|p_{t_c}|^\zeta -
1)\chi_1 (|p_{t_c}|) \right\} \label{21.9;31}
\end{gather}
and in (\ref{21.9;31}) one has to substitute $p_{t_1} = a_{11} p_1 + a_{12} p_2$ and
\begin{equation}
 p_{t_c}^2 = \sum_{i=4}^{N-1}p_i^2 + (c_{31}p_1 + c_{32}p_2 + c_{33}p_3)^2.
\end{equation}
Repeating the arguments in Lemma~\ref{19.9;21} we obtain
\begin{equation}
 \|\mathcal{C}_4 (\epsilon)\|_{HS} \leq (4\pi) \gamma_1^3 C^2_{\psi'} C_0  \left[a_\mu^{-1}
\sqrt{\beps} - a_\mu^{-\frac 12} \beps + (1/3) \beps^{\frac 32}\right] ,
\end{equation}
where by definition
\begin{equation}
 C^2_{\psi'} =  \sup_{\epsilon >0} \sup_{p_1 \in \mathbb{R}^3}\int \left|
\psi'_1 (p_1, x_r)\right|^2 d^{3N-6} x_r =
\sup_{\epsilon >0}  \sup_{p_1 \in \mathbb{R}^3} \int \left| \hat \psi'_1 (p_1,
p_r)\right|^2 d^{3N-6} p_r \label{21.9;61}
\end{equation}
To further estimate the expression in (\ref{21.9;61}) we use the inequality
\begin{equation}
\frac{\left\{1+ (p_{t_1}^2 + \epsilon)^{\frac \zeta2} + (|p_{t_c}|^\zeta -
1)\chi_1 (|p_{t_c}|) \right\}^2 }{p_{t_1}^2 + p_{t_2}^2 + p_{t_c}^2 + \epsilon}
\leq \frac{4}{(p_{t_1}^2 + \epsilon)^{1-\zeta}}  \label{24.9;1}
\end{equation}
Let us check (\ref{24.9;1}) for $|p_{t_c}| < 1$. By (\ref{8.6;61})
\begin{gather}
 \frac{\left\{ (p_{t_1}^2 + \epsilon)^{\frac \zeta2} + |p_{t_c}|^\zeta
\right\}^2 }{p_{t_1}^2 + p_{t_2}^2 + p_{t_c}^2 + \epsilon} \leq
 \frac{4\left\{ p_{t_1}^2 + p_{t_c}^2 + \epsilon \right\}^\zeta }{p_{t_1}^2 +
p_{t_c}^2 + \epsilon} \leq \frac{4}{(p_{t_1}^2 + \epsilon)^{1-\zeta}}
\end{gather}
Similarly, one proves that (\ref{24.9;1}) holds for $|p_{t_c}| \geq 1$. Substituting
(\ref{24.9;1}), (\ref{21.9;31}) into (\ref{21.9;61})  and using that
$p_1^2 + p_r^2 = p_{t_1}^2 + p_{t_2}^2 + p_{t_c}^2$ we obtain the estimate
\begin{gather}
 C^2_{\psi'} \leq 4 \sup_{\epsilon >0} \sup_{p_1 \in \mathbb{R}^3} \int \left|
\hat \varphi_1 (\epsilon +p_1^2; p_r) \right|^2 \left[1- \mu_2 (\epsilon +
p_{t_1}^2)\right]^{-1} (p_{t_1}^2 + \epsilon)^{\zeta-1} d^{3N-6} p_r \nonumber\\
\leq 4 a_\mu^{-1}
 \sup_{\epsilon >0} \sup_{p_1 \in \mathbb{R}^3} \int \left| \hat \varphi_1
(\epsilon +p_1^2; p_r) \right|^2 \left[\epsilon + (a_{11}p_1 + a_{12} p_2)^2
\right]^{-2+ \zeta} d^{3N-6} p_r \nonumber\\
\leq 4 a_\mu^{-1} \sup_{\epsilon >0} \sup_{p_1 \in \mathbb{R}^3} \sup_{s\in
\mathbb{R}^3}  \int \left| \hat \varphi_1 (\epsilon +p_1^2; p_r) \right|^2
\left[\epsilon + (s + a_{12} p_2)^2 \right]^{-2+ \zeta} d^{3N-6} p_r \nonumber\\
\leq 4 a_\mu^{-1} \sup_{0 < \epsilon' < \epsilon }  \sup_{s\in \mathbb{R}^3}
\int \left| \hat \varphi_1 (\epsilon; p_r) \right|^2 \left[\epsilon' + (s +
a_{12} p_2)^2 \right]^{-2+ \zeta} d^{3N-6} p_r , \label{24.9;4}
\end{gather}
where we have used (\ref{mubound}). For $\zeta = 3/4$ the expression on the rhs of (\ref{24.9;4}) is finite due to
Theorem~\ref{3.8;8}. It remains to prove (\ref{24.9;6}). Like in Corollary~\ref{24.9;9} let us define $\eta_2 (\epsilon;t_r) \in L^2 (\mathbb{R}^{3N-6})$
as
\begin{equation}
 \eta_2 (\epsilon;t_r) := |v_{13}|^{\frac 12} \bigl(-\Delta_{t_r} +
\epsilon\bigr)^{-\frac 12} \tilde B_t^{-1} (\epsilon) \varphi_2 (\epsilon). \label{24.9;10}
\end{equation}
In (\ref{24.9;10}) we use the inverse of $\tilde B_t (\epsilon) \in \mathfrak{B}
(L^2 (\mathbb{R}^{3N-6}))$, which acts on $f \in L^2(\mathbb{R}^{3N-6})$ as
\begin{equation}
 \tilde B_t (\epsilon) f = (1+ \epsilon^{\frac \zeta2} ) f  +\mathcal{F}_{t}^{-1}
(|p_{t_c}|^\zeta - 1)\chi_1 (|p_{t_c}|) (\mathcal{F}_t f).
\end{equation}
and $\zeta = 3/4$. Eq.~(\ref{24.9;10}) is equivalent to the expression (\ref{11.10;1}) corresponding to the subsystem $\mathfrak{C}_2$ (though $\tilde B_t (\epsilon)$ and 
$B_{12} (\epsilon)$ are defined using different coordinates, they are, in fact, equal, see the discussion around Eq.~(8), (9) in \cite{jpasubmit}). 
Then the operator $F_t \mathcal{C}_5 (\epsilon) F_t^{-1}$ acts on
$\hat f(p_{t_1}, p_{t_r}) \in L^2 (\mathbb{R}^{3N-3})$ as follows
\begin{equation}
 \bigl(F_t \mathcal{C}_5 (\epsilon) F_t^{-1} \hat f\bigr)(p_{t_1}, p_{t_r}) =
\hat \eta_2^* (p_{t_1}^2 + \epsilon; p_{t_r}) \int \hat \varphi_2 (p_{t_1}^2 +
\epsilon; p'_{t_r}) \hat f(p_{t_1}, p'_{t_r}) d^{3N-6}p'_{t_r} ,
\end{equation}
where $\hat \eta_2 (\epsilon) = \mathcal{F}_t \eta_2 (\epsilon)$, $\hat \varphi
(\epsilon) = \mathcal{F}_t \varphi (\epsilon)$. Now it is trivial to show that
\begin{equation}
 \| \mathcal{C}_5 (\epsilon)\|^2  \leq \sup_{\epsilon >0}  \sup_{p_{t_1} \in
\mathbb{R}^3} \int \left| \hat \eta_2 (p_{t_1}^2 + \epsilon; p_{t_r}) \right|^2
d^{3N-6} p_{t_r} =
 \sup_{\epsilon >0} \int \left| \eta_2 (\epsilon; t_r) \right|^2 d^{3N-6}  t_r. \label{24.9;12}
\end{equation}
The rhs in (\ref{24.9;12}) is finite due to Corollary~\ref{24.9;9}.
\end{proof}

Note that in the expression (\ref{29.8;3}) for the operator $\hat P_1$ the function $\hat \varphi_1 (\epsilon + p_1^2; p_r)$ depends also on $p_1$ through
its first argument.
This is a source of trouble when one attempts to prove, for example,  that
$\sup_{\epsilon >0}\|P_1 (\epsilon) P_2 (\epsilon)\|_{HS} < \infty$.
Therefore, in the expression for $\hat P_j$ it makes sense to approximate  $\hat
\varphi_j$ by a function, which is
piecewise constant in the first argument. Namely, for $\epsilon \in (0, \beps]$,
$n \in \mathbb{Z}_+ $ and $k = 1, \ldots , n-1$
let us define $\hat P_1^{(k)} (\epsilon, n) $ as an operator, which acts on
$\hat f (p_1, p_r) \in L^2(\mathbb{R}^{3N-3})$ as follows
\begin{equation}\label{24.9;43}
 \hat P^{(k)}_1 (\epsilon, n) \hat f := \begin{cases}
 \hat \varphi^*_1 (\epsilon_k ; p_r) {\displaystyle \int \hat \varphi_1
(\epsilon_k , p'_r) f (p_1, p'_r)}  \: d^{3N-6} p'_r &\text{if $p_1^2 +
\epsilon\in (\epsilon_k, \epsilon_{k+1}]$},\\
0& \text{if $p_1^2 + \epsilon \notin (\epsilon_k, \epsilon_{k+1}]$} ,
\end{cases}
\end{equation}
where $\epsilon_k := k\beps n^{-1}$. We define $ P_1^{(k)}
(\epsilon, n) := F_1^{-1} \hat P_1^{(k)} (\epsilon, n) F_1$.
Similarly, using appropriate coordinates one defines $\hat P_j^{(k)} (\epsilon,
n) , P_j^{(k)} (\epsilon, n)$ for $j = 1, \ldots, N$.
\begin{lemma}\label{24.9;41}
 The following approximation formulas hold for $j, s = 1, \ldots, N$
\begin{gather}
 \sup_{\epsilon \in (0, \beps]}\left\|  P_j (\epsilon) - \sum_{k=1}^{n - 1}
P^{(k)}_j (\epsilon, n) \right\| = \hbox{o} (1/n) \\
\sup_{\epsilon \in (0, \beps]} \left\| \mathfrak{ g}_s (\epsilon)  P_j
(\epsilon) - \sum_{k=1}^{n - 1}  \mathfrak{ g}_s (\epsilon)  P^{(k)}_j
(\epsilon, n) \right\| = \hbox{o} (1/n) \quad \quad (j \neq s)
\end{gather}
\end{lemma}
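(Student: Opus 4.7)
The plan is to reduce both estimates to pointwise (in the Jacobi momentum $p_j$ dual to $x_1$ for particle $j$) estimates on rank-$\leq 2$ operators in $L^2(\mathbb{R}^{3N-6})$, exploiting the uniform continuity of $\varphi_j(\cdot)$ on the compact interval $[0,\beps]$ from (\ref{7.8;25}), together with the $G$-smoothing bounds of Theorem~\ref{3.8;8} and Corollary~\ref{19.9;10}.

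For the first estimate, after applying $\mathcal{F}_j$ both $P_j(\epsilon)$ and $\sum_{k=1}^{n-1} P_j^{(k)}(\epsilon, n)$ act diagonally in $p_j$. On the slice $\{p_j^2+\epsilon\in(\epsilon_k,\epsilon_{k+1}]\}$ the fibre of the difference is the rank-$\leq 2$ operator $\varphi_j(\epsilon+p_j^2;\cdot)\langle \varphi_j(\epsilon+p_j^2),\cdot\rangle - \varphi_j(\epsilon_k;\cdot)\langle \varphi_j(\epsilon_k),\cdot\rangle$, whose operator norm is bounded by $2\|\varphi_j(\epsilon+p_j^2)-\varphi_j(\epsilon_k)\|$ via $\|uu^*-vv^*\|\leq 2\|u-v\|$ for unit vectors. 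Since $|\epsilon+p_j^2-\epsilon_k|\leq \beps/n$ on this slice, (\ref{7.8;25}) makes this fibre norm $\hbox{o}(1/n)$ uniformly in $p_j$ and $\epsilon$, giving the first claim.

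For the second estimate ($j\neq s$), $\mathfrak{g}_s(\epsilon)$ is the Fourier multiplier in the $s$-frame variable $q_s$ with symbol $[1-\mu_s(\epsilon+q_s^2)]^{-1/2}-1$. By (\ref{mubound}) this symbol is dominated on $\{\epsilon+q_s^2\leq\beps\}$ by $a_\mu^{-1/2}(\epsilon+q_s^2)^{-1/2}$ and is bounded elsewhere, yielding the pointwise domination $\|\mathfrak{g}_s(\epsilon)f\|\leq C\bigl(\|G_s^1(0,\epsilon)f\|+\|f\|\bigr)$. Using the first claim to dispose of the $\|f\|$ contribution, the task reduces to bounding $\|G_s^1(0,\epsilon)[P_j(\epsilon)-\sum_k P_j^{(k)}(\epsilon,n)]\|$. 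Writing out the rank-$\leq 2$ fibre difference and passing to the Jacobi frame of particle $s$ via the orthogonal change of variables of Sec.~\ref{1.08;1}, the coupling between $p_j$-fibres introduced by $G_s^1$ is absorbed by a supremum over a translation $s'\in\mathbb{R}^3$, reducing the estimate to $\sup_{s'}\|G_j^1(s',\epsilon)[\varphi_j(\epsilon+p_j^2)-\varphi_j(\epsilon_k)]\|$. This tends to zero as $|\epsilon+p_j^2-\epsilon_k|\to 0$ by Corollary~\ref{19.9;10}, while the uniform $\sup_{s',\epsilon'}\|G_j^1(s',\epsilon')\varphi_j(\eta)\|$-bound from Theorem~\ref{3.8;8} controls the residual terms arising from the fibre identity $uu^*-vv^* = (u-v)u^* + v(u^*-v^*)$.

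The main obstacle, I expect, is precisely this frame-conversion bookkeeping for the second claim: one must verify carefully that, fibrewise in $p_j$, the $q_s$-multiplier $G_s^1(0,\epsilon)$ acting on the rank-$\leq 2$ fibre reduces to a $G_j^1(s'(p_j,\cdot),\epsilon)$-estimate on $\varphi_j(\epsilon+p_j^2)-\varphi_j(\epsilon_k)$, with constants uniform in $p_j$ and $\epsilon$ so that Corollary~\ref{19.9;10} applies without loss. A secondary point is the stated $\hbox{o}(1/n)$ rate: the outline above directly yields vanishing as $n\to\infty$ for the second statement, and obtaining the sharper rate would require a quantitative strengthening of Corollary~\ref{19.9;10}, optimizing the $r$-splitting in its proof against the modulus-of-continuity rate from (\ref{7.8;25}).
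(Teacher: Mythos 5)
Your argument is correct and essentially the paper's own: the same fibrewise rank--two projection--difference bounds (the paper's (\ref{24.9;22})--(\ref{24.9;21})), the same domination of the $\mathfrak{g}_s$--symbol via (\ref{mubound}) by a shifted $G$--type multiplier with the $p_j$--dependence absorbed into a supremum over the shift, and the same appeal to Theorem~\ref{3.8;8}, Corollary~\ref{19.9;10} and (\ref{7.8;25}). Concerning your caveat about the rate: the paper reads the stated $\hbox{o}(1/n)$ directly off the modulus asserted in (\ref{7.8;25}) rather than strengthening Corollary~\ref{19.9;10}, and in the subsequent application (Lemma~\ref{19.9;23}) only vanishing as $n\to\infty$ is actually needed.
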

\begin{proof}
Without loosing generality we set $j = 1$ and $s=2$. Generally, suppose $f, h,
g, g' \in \mathcal{H}$, where $\|f\| = \|h\|=1$ and $\mathcal{H}$ denotes some
Hilbert space.
Then the norm of the difference of projections
can be trivially estimated as follows 
\begin{equation}
 \|g(f, \cdot ) - g' (h,  \cdot)\| \leq \|g - g'\| + \|g\| \|f-h\| \label{24.9;22}
\end{equation}
and, consequently,
\begin{equation}
 \|f( f, \cdot ) - h(h,  \cdot )\| \leq 2\|f-h\|. \label{24.9;21}
\end{equation}
Using (\ref{24.9;21}) we get
\begin{gather}
  \sup_{\epsilon \in (0, \beps]}\left\| \hat P_1 (\epsilon) - \sum_{k=1}^{n - 1}
\hat P^{(k)}_1 (\epsilon, n) \right\| \leq 2 \sup_{\epsilon \in (0, \beps]}
\sup_{k = 1, \ldots, n} \sup_{p_1^2 \in [\epsilon_k - \epsilon, \epsilon_{k+1} -
\epsilon)} \| \varphi_1 (p_1^2 + \epsilon) - \varphi_1 (\epsilon_k)\| \nonumber\\
\leq 2\sup_{|\epsilon' - \epsilon|\leq \beps n^{-1} } \| \varphi_1 (\epsilon') -
\varphi_1 (\epsilon)\| = \hbox{o} (n^{-1}) \label{24.9;24}
\end{gather}
due to (\ref{7.8;25}). The norms on the rhs in (\ref{24.9;24}) are that of $L^2
(\mathbb{R}^{3N-6})$. For the second approximation formula using (\ref{24.9;22}) we
get
\begin{gather}
\sup_{\epsilon \in (0, \beps]} \left\| \mathfrak{ g}_2 (\epsilon)  P_1
(\epsilon) - \sum_{k=1}^{n - 1}  \mathfrak{ g}_2 (\epsilon)  P^{(k)}_1
(\epsilon, n) \right\| \nonumber \\
\leq  \sup_{\epsilon \in (0, \beps]} \left\| (\mathfrak{\hat g}_2 (\epsilon) +1)
\hat P_1 (\epsilon) - \sum_{k=1}^{n - 1}  (\mathfrak{\hat g}_2 (\epsilon) +1)
\hat P^{(k)}_1 (\epsilon, n) \right\| +
\hbox{o}(n^{-1}),
\end{gather}
where $\mathfrak{ \hat g}_2 (\epsilon) := F_1 \mathfrak{ g}_2 (\epsilon)F_1^{-1}$
Using (\ref{24.9;26}) and (\ref{24.9;28})--(\ref{24.9;29}) we estimate the last term as follows
\begin{gather}
\sup_{\epsilon \in (0, \beps]} \left\| (\mathfrak{\hat g}_2 (\epsilon) +1) \hat
P_1 (\epsilon) - \sum_{k=1}^{n - 1}  (\mathfrak{\hat g}_2 (\epsilon) +1)  \hat
P^{(k)}_1 (\epsilon, n) \right\|^2  \leq \nonumber\\
\sup_{\epsilon \in (0, \beps]}
\sup_{k = 1, \ldots, n} \sup_{p_1^2 \in [\epsilon_k - \epsilon, \epsilon_{k+1} -
\epsilon)} \left\|\left[1-\mu_2 \bigl(\epsilon + (a_{11} p_1 + a_{12} p_2
)^2\bigr)\right]^{- \frac 12} \bigl(\hat \varphi_1 (p_1^2 + \epsilon) - \hat
\varphi_1 (\epsilon_k)\bigr) \right\| \nonumber\\
+ \sup_{\epsilon \in (0, \beps]} \sup_{k = 1, \ldots, n} \sup_{p_1^2 \in
[\epsilon_k - \epsilon, \epsilon_{k+1} - \epsilon)} \left\|\left[1-\mu_2
\bigl(\epsilon + (a_{11} p_1 + a_{12} p_2 )^2\bigr)\right]^{- \frac 12} \hat
\varphi_1 (p_1^2 + \epsilon)  \right\| \nonumber\\
\times \sup_{\epsilon \in (0, \beps]} \sup_{k = 1, \ldots, n} \sup_{p_1^2 \in
[\epsilon_k - \epsilon, \epsilon_{k+1} - \epsilon)} \| \varphi_1 (p_1^2 +
\epsilon) - \varphi_1 (\epsilon_k)\|
\end{gather}
Now applying (\ref{mubound}) we continue the last line as
\begin{gather}
 \leq a_{12}^{-1} a_\mu^{- \frac 12} \sup_{|\epsilon - \epsilon'| < \beps
n^{-1}}  \sup_{\epsilon'' >0} \sup_{s \in \mathbb{R}^3} \left\|G_2 (s,
\epsilon'') \bigl(\varphi_1 (\epsilon') - \varphi_1 (\epsilon)\bigr)\right\| \nonumber \\
+ a_{12}^{-1} a_\mu^{- \frac 12} \left\{\sup_{\epsilon \in (0, \beps]}
\sup_{\epsilon' >0} \sup_{s \in \mathbb{R}^3} \left\|G_2 (s, \epsilon')
\varphi_1 (\epsilon)\right\|  \right\} \times
\left\{\sup_{|\epsilon' - \epsilon|\leq \beps n^{-1} } \| \varphi_1 (\epsilon')
- \varphi_1 (\epsilon)\|\right\} \nonumber \\
= \hbox{o}(n^{-1}),
\end{gather}
where we have used Theorem~\ref{3.8;8}, Corollary~\ref{19.9;10} and (\ref{7.8;25}).
\end{proof}

It remains to prove
\begin{lemma}\label{19.9;23}
$\mathcal{R}_i P_j , P_i P_j\in \mathfrak{J}$ for $i \neq j$.
\end{lemma}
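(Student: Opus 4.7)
The plan is to verify the defining condition of $\mathfrak{J}$: given any $\beta > 0$, produce a decomposition $W = W^B + W^{HS}$ with $\sup_\epsilon \|W^B(\epsilon)\| < \beta$ and $\sup_\epsilon \|W^{HS}(\epsilon)\|_{HS} < \infty$. The key ingredient is Lemma~\ref{24.9;41}: the ``continuous'' projections are approximated in operator norm by finite sums of the piecewise-constant $P_i^{(k)}, P_j^{(\ell)}$ with error $\hbox{o}(1/n)$. Choosing $n$ large will make the approximation errors serve as $W^B$, leaving the finite sums as $W^{HS}$.

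First I would write, for $P_iP_j$,
\[ P_iP_j = \Bigl(P_i - \sum_{k} P_i^{(k)}\Bigr)P_j + \sum_k P_i^{(k)}\Bigl(P_j - \sum_\ell P_j^{(\ell)}\Bigr) + \sum_{k,\ell} P_i^{(k)}P_j^{(\ell)}, \]
the first two summands being $\hbox{o}(1/n)$ in operator norm (using $\|P_j\|, \|\sum_k P_i^{(k)}\| \leq 1$). For $\mathcal{R}_iP_j$, the identity $[\mathfrak{g}_i,P_i] = 0$ gives $\mathcal{R}_iP_j = P_i\mathfrak{g}_iP_j$, and the second assertion of Lemma~\ref{24.9;41} (valid since $i\neq j$) together with the approximation of $P_i$ by $\sum_k P_i^{(k)}$ reduces the problem to bounding the Hilbert--Schmidt norm of $\sum_{k,\ell} \mathfrak{g}_iP_i^{(k)}P_j^{(\ell)}$. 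The decisive observation is that the $P_i^{(k)}$ are mutually orthogonal in $k$ (disjoint $p_1^{(i)}$-shells $A_k$), and similarly for the $P_j^{(\ell)}$ in $\ell$; consequently all off-diagonal cross-terms vanish under the trace and
\[ \Bigl\|\sum_{k,\ell} P_i^{(k)}P_j^{(\ell)}\Bigr\|_{HS}^2 = \sum_{k,\ell}\mathrm{tr}\bigl(P_j^{(\ell)}P_i^{(k)}\bigr). \]

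The hard part will be computing $\mathrm{tr}(P_j^{(\ell)}P_i^{(k)})$: the two projections live in different Jacobi frames, so I would express the kernel of $P_j^{(\ell)}$ in the momentum representation of particle $i$ via the orthogonal change of coordinates $(p_1^{(i)},p_r^{(i)})\to(q_1^{(j)},q_r^{(j)})$ of \eqref{24.9;28}--\eqref{24.9;29}. The two delta-type singularities combine, thanks to the non-vanishing of $a_{12}$ (see \eqref{21.9;2}, reflecting that the two Jacobi frames genuinely differ), into one regular integration, yielding a double integral over $A_k\times B_\ell$ whose integrand is pointwise bounded, via Cauchy--Schwarz applied in the ``transverse'' variables $p_c$, by $\|\hat\varphi_i(\epsilon_k;\cdot)\|\cdot\|\hat\varphi_j(\epsilon_\ell;\cdot)\| = 1$. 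Since $|A_k|\,|B_\ell| \leq C\,\epsilon_{k+1}^{3/2}\epsilon_{\ell+1}^{3/2}$, the sum $\sum_{k,\ell=1}^{n-1}\mathrm{tr}(P_j^{(\ell)}P_i^{(k)})$ is finite for each fixed $n$. The $\mathfrak{g}_i$-weighted variant works identically, using that on $A_k$ the multiplier $m_i(p_1) = [1-\mu_i(\epsilon+p_1^2)]^{-1/2}-1$ is controlled by $(a_\mu\epsilon_k)^{-1/2}$ thanks to \eqref{mubound}; the resulting estimate $\|\mathfrak{g}_iP_i^{(k)}P_j^{(\ell)}\|_{HS}^2 \leq Ca_\mu^{-1}\epsilon_k^{-1}\epsilon_{k+1}^{3/2}\epsilon_{\ell+1}^{3/2}$ still sums to a finite quantity for each $n$, completing the verification that $P_iP_j, \mathcal{R}_iP_j \in \mathfrak{J}$.
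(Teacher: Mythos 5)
Your overall strategy is the same as the paper's: use Lemma~\ref{24.9;41} to replace $P_i$, $P_j$ and $\mathfrak{g}_iP_j$ by the finite sums of shell--localized projections (the $\hbox{o}(1/n)$ errors supplying the small--norm part $W^B$), and then show that each product $P_i^{(k)}(\cdot)P_j^{(\ell)}$ is Hilbert--Schmidt uniformly in $\epsilon$ by writing both kernels in one momentum frame, collapsing the two delta factors thanks to $a_{12}\neq 0$, and applying Cauchy--Schwarz in the transverse momenta. Your shell--wise treatment of the weight, bounding $[1-\mu_i(\epsilon+p_1^2)]^{-1/2}-1$ by $(a_\mu\epsilon_k)^{-1/2}$ on the $k$-th shell via (\ref{mubound}), is legitimate (each $\epsilon_k>0$ for fixed $n$) and in fact simpler than the paper's handling, which keeps the weight as $a_\mu^{-1/2}[\epsilon+p_1^2]^{-1/2}$ and eventually calls on Theorem~\ref{3.8;8}.

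The genuine gap is in the decisive estimate. Cauchy--Schwarz over the transverse variables $p_c$ bounds $|D(p_2,q_2)|=\bigl|\int\hat\varphi_i(\epsilon_k;p_2,p_c)\,\hat\varphi_j^*(\epsilon_\ell;q_2,p_c)\,d p_c\bigr|$ only by $\rho_i(p_2)^{1/2}\rho_j(q_2)^{1/2}$, where $\rho_i(p_2)=\int|\hat\varphi_i(\epsilon_k;p_2,p_c)|^2 dp_c$ is a marginal density; this is \emph{not} pointwise bounded by $\|\hat\varphi_i\|\,\|\hat\varphi_j\|=1$, since the full $L^2$ norm also integrates over $p_2$ and a marginal of a normalized function can be arbitrarily large on small sets. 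Hence the claimed bound $\mathrm{tr}\bigl(P_j^{(\ell)}P_i^{(k)}\bigr)\leq C|A_k|\,|B_\ell|$ does not follow from your reasoning (nothing prevents $\rho_i$ from concentrating exactly on the $p_2$-set selected by the shells), and the subsequent summation collapses. The repair is what the paper does: keep $\rho_i,\rho_j$, use only their normalization $\int\rho=1$, and change variables exploiting the invertibility of $p_1\mapsto a_{21}p_1+a_{22}p_2$ (with $a_{21}=a_{12}\neq 0$); in your shell--wise setting this gives $\mathrm{tr}\bigl(P_j^{(\ell)}P_i^{(k)}\bigr)\leq |a_{12}|^{-3}|a_{21}|^{-3}$ uniformly in $\epsilon$, which combined with your $(a_\mu\epsilon_k)^{-1}$ factor yields finiteness for every fixed $n$, so your plan goes through once this step is corrected (for the weighted term as the paper organizes it, the singular factor $[\epsilon+p_1^2]^{-1}$ survives and Theorem~\ref{3.8;8} is then genuinely needed). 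Two minor remarks: the mutual orthogonality of the shells is unnecessary --- the sums over $k,\ell$ are finite, so the triangle inequality for $\|\cdot\|_{HS}$ suffices, and your trace manipulations would anyway need a trace--class justification; and in the reduction for $\mathcal{R}_iP_j$ you should record that $\sup_{\epsilon}\|\mathfrak{g}_i(\epsilon)P_j(\epsilon)\|<\infty$, so that the $\hbox{o}(1/n)$ error from approximating $P_i$ remains small after multiplication by $\sum_\ell\mathfrak{g}_iP_j^{(\ell)}$.
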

\begin{proof}
 Without loosing generality it suffices to prove that $\mathcal{R}_1 P_2 \in
\mathfrak{J}$. By Lemma~\ref{24.9;41} it is enough to prove that
\begin{gather}
 \sup_{\epsilon > 0} \left\|P_1^{(i)}(\epsilon, n) \bigl(\mathfrak{g}_1
(\epsilon) + 1 \bigr) P_2^{(k)}(\epsilon, n)  \right\|_{HS} < \infty \label{24.9;44} , \\
\sup_{\epsilon > 0} \left\|P_1^{(i)}(\epsilon, n)  P_2^{(k)}(\epsilon, n)
\right\|_{HS} < \infty \label{24.9;45} . 
\end{gather}
for any given $i, k = 1, \ldots, n-1$ and $n \in \mathbb{Z}_+$.
We shall consider only (\ref{24.9;44}), (\ref{24.9;45}) is proved analogously. Eq.~(\ref{24.9;44}) follows from the
inequality
\begin{equation}
\sup_{\epsilon > 0} \left\|\mathcal{C}_L \mathcal{C}_R\right\|_{HS} < \infty ,
\end{equation}
where $\mathcal{C}_L = F_1 P_1^{(i)}(\epsilon, n) \bigl(\mathfrak{g}_1 (\epsilon)
+ 1 \bigr)  F_1^{-1}$ and $\mathcal{C}_R = F_1 P_2^{(k)}(\epsilon, n) F_1^{-1}$ are
integral
operators with the kernels
\begin{gather}
\mathcal{C}_L (p_1, p_2 , p_c ; p'_1, p'_2 , p'_c) = \hat \varphi^*_1 (\epsilon_i
, p_2, p_c) \left[1-\mu_1(\epsilon + p_1^2)\right]^{-\frac 12}
\chi_{(\epsilon_i, \epsilon_{i+1}]} (|p_1|) \hat \varphi_1 (\epsilon_i , p'_2,
p'_c) \nonumber\\
\times \delta (p_1 - p'_1 )\\
\mathcal{C}_R (p_1, p_2 , p_c ; p'_1, p'_2 , p'_c) = \hat \varphi^*_2 (\epsilon_k
, a_{21}p_1 + a_{22}p_2, p_c)
\hat \varphi_2 (\epsilon_k , a_{21}p'_1 + a_{22}p'_2, p'_c)  \nonumber \\
\times \chi_{(\epsilon_k, \epsilon_{k+1}]} (|a_{11}p_1 + a_{12}p_2|)  \delta
(a_{11}p_1 + a_{12}p_2 - a_{11}p'_1 - a_{12}p'_2)
\end{gather}
where $ \chi_\Omega : \mathbb{R} \to \mathbb{R}$ is the characteristic function
of the interval $\Omega \subset \mathbb{R}$ (the delta--function is needed formally here to compute the product kernel).

Let us define the function $D:\mathbb{R}^3 \times \mathbb{R}^3  \to \mathbb{C}$
as
\begin{equation}
  D (p_2, q_2) := \int \hat \varphi_1 (\epsilon_i ; p_2, p_c)  \hat \varphi_2^*
(\epsilon_k ; q_2, p_c) d^{3N-9} p_c.
\end{equation}
By the Cauchy--Schwarz inequality
\begin{equation}
 |D (p_2, q_2)|^2 \leq \rho_1 (p_2) \rho_2 (q_2) , \label{24.9;72}
\end{equation}
where
\begin{gather}
 \rho_1 (p_2) := \int \bigl|\hat \varphi_1 (\epsilon_i ; p_2, p_c) \bigr|^2
d^{3N-9} p_c \\
 \rho_2 (q_2) := \int \bigl|\hat \varphi_2 (\epsilon_k ; q_2, p_c) \bigr|^2
d^{3N-9} p_c,
\end{gather}
and by normalization
\begin{equation}
  \int \rho_1 (z) d^3 z = \int \rho_2 (z) d^3 z = 1. \label{24.9;71}
\end{equation}

A straightforward calculation shows that the kernel of the product $\mathcal{C}_L \mathcal{C}_R$ has the form
\begin{gather}
(\mathcal{C}_L \mathcal{C}_R) (p_1, p_2 , p_c ; p'_1, p'_2 , p'_c) =
|a_{12}|^{-3}\hat \varphi_1 (\epsilon_i , p_2, p_c) \left[1-\mu_1(\epsilon +
p_1^2)\right]^{-\frac 12}
\chi_{(\epsilon_i, \epsilon_{i+1}]} (|p_1|) \nonumber\\
\times D\bigl(a_{12}^{-1}a_{11} (p'_1 - p_1) + p'_2 , a_{21} p_1 +
a_{22}a_{12}^{-1} a_{11} (p'_1 - p_1)  + a_{22}p'_2 \bigr) \nonumber\\
\times \hat \varphi_2 (\epsilon_k , a_{21} p'_1 + a_{22} p'_2  , p'_c)
\chi_{(\epsilon_k, \epsilon_{k+1}]} (|a_{11}p'_1 + a_{12}p'_2|)
\end{gather}
Using (\ref{24.9;72}), (\ref{24.9;71}) and (\ref{mubound}) the square of the Hilbert--Schmidt norm
can be estimated as follows
\begin{gather}
\left\|\mathcal{C}_L \mathcal{C}_R\right\|^2_{HS} = \int \left|(\mathcal{C}_L
\mathcal{C}_R) (p_1, p_2 , p_c ; p'_1, p'_2 , p'_c)\right|^2
d^3 p_1 d^3 p_2 d^{3N-9} p_c d^3 p'_1 d^3 p'_2 d^{3N-9} p'_c \nonumber\\
\leq |a_{12}|^{-6} a_\mu^{-1}\int d^3 p_1 d^3 p_2 d^{3N-9} p_c d^3 p'_1 d^3 p'_2 d^{3N-9}
p'_c \bigl|\hat \varphi_1 (\epsilon_i , p_2, p_c)\bigr|^2 \bigl[\epsilon + p_1^2
\bigr]^{-1} \nonumber\\
\times \rho_1 \bigl(a_{12}^{-1}a_{11} (p'_1 - p_1) + p'_2 \bigr) \rho_2
\bigl(a_{21} p_1 + a_{22}a_{12}^{-1} a_{11} (p'_1 - p_1)  + a_{22}p'_2 \bigr) \nonumber\\
\times \bigl|\hat \varphi_2 (\epsilon_k , a_{21} p'_1 + a_{22} p'_2  , p'_c)
\bigr|^2
= |a_{12}|^{-6} a_\mu^{-1} \int d^3 p_1 d^3 p_2 d^3 p'_1 d^3 p'_2  \bigl[\epsilon + p_1^2
\bigr]^{-1} \rho_1 (p_2) \nonumber\\
\times \rho_1 \bigl(a_{12}^{-1}a_{11} (p'_1 - p_1) + p'_2 \bigr) \rho_2
\bigl(a_{21} p_1 + a_{22}a_{12}^{-1} a_{11} (p'_1 - p_1)  + a_{22}p'_2 \bigr)
\rho_2 \bigl(a_{21} p'_1 + a_{22} p'_2 \bigr) \nonumber\\
= |a_{12}|^{-6} a_\mu^{-1} \int d^3 p_1 d^3 p'_1 d^3 p'_2  \bigl[\epsilon + p_1^2
\bigr]^{-1} \rho_1 \bigl(a_{12}^{-1}a_{11} (p'_1 - p_1) + p'_2 \bigr) \nonumber \\
\times \rho_2 \bigl(a_{12}^{-1} (p'_1 - p_1) + a_{21}p'_1 + a_{22}p'_2 \bigr)
\rho_2 \bigl(a_{21} p'_1 + a_{22} p'_2 \bigr),
\end{gather}
where we have used the relation $a_{11}a_{22} - a_{12}a_{21} = 1$.
Now we make the change of variables
\begin{gather}
 \xi_1 := a_{12}^{-1} a_{11} (p'_1 - p_1 ) + p'_2 \\
\xi_2 := a_{12}^{-1} (p'_1 - p_1 ) + a_{21}p'_1 + a_{22} p'_2 \\
\xi_3 := a_{21}p'_1 + a_{22}p'_2
\end{gather}
The inverse transformation has the form
\begin{gather}
 p_1 = -a_{21}^{-1} a_{22} \xi_1 + a_{21}^{-1} \xi_2  \\
p'_1 = -a_{21}^{-1} a_{22} \xi_1 + a_{21}^{-1} a_{22} a_{11}\xi_2 - a_{12}\xi_3\\
p'_2 = \xi_1 - a_{11} \xi_2 + a_{11} \xi_3
\end{gather}
This gives
\begin{gather}
\left\|\mathcal{C}_L \mathcal{C}_R\right\|^2_{HS} \leq |a_{12}|^{-5} |a_{21}|^{-1} a_\mu^{-1} \int d^3 \xi_1 d^3 \xi_2 \rho_1 (\xi_1) \rho_2 (\xi_2)
\left\{\epsilon + a_{21}^{-2} (\xi_2 - a_{22}\xi_1 )^2 \right\}^{-1} \\
\leq |a_{12}|^{-5} |a_{21}| a_\mu^{-1}  \sup_{\epsilon >0} \sup_{s \in \mathbb{R}^3} \int d^3 \xi_2\rho_2 (\xi_2)
\left\{\epsilon + (\xi_2 + s)^2 \right\}^{-1}
\end{gather}
That the rhs is finite follows from Theorem~\ref{3.8;8}. \end{proof}

In conclusion, let us explain why the proof of Theorem~\ref{1.08;9} does not apply to the three--particle 
case, where the Efimov effect is possible \cite{yafaev,merkuriev,sobolev}.
For simplicity let us assume that the pair--interactions are bounded and finite
range, particle pairs $\{2,3\}$ and $\{1,3\}$ have zero energy resonances and
$v_{12} = 0$. Theorem~\ref{3.8;3} applies in this case as well
but instead of (\ref{mubound}) one has $1 - \mu (\epsilon) = a_\mu \sqrt\epsilon +
\mathcal{O} (\epsilon)$, see \cite{klaus1} or \cite{yafaev2}
(this makes the operators $\mathcal{R}_j$ defined in (\ref{20.9;11}) less singular
compared to the case of Theorem~\ref{1.08;9}!). Lemma~\ref{13.8;4} and Theorem~\ref{8.8.;8} apply without change. However, in
case $N=3$ the proof of
Theorem~\ref{1.08;9}
breaks down because $\mathcal{M}_3 \notin \mathfrak{J}$. Indeed, from definition
in Theorem~\ref{8.8.;8} it follows that $\mathcal{M}_3 = \mathcal{M}_2$, while the
expression for
$\mathcal{M}_2$ contains the term $\mathcal{R}_1 L_1 \mathcal{R}_2 +
\mathcal{R}_2 L_1 \mathcal{R}_1 = \mathcal{R}_1 \mathfrak{m}_2 \mathcal{R}_2 +
\mathcal{R}_2 \mathfrak{m}_2 \mathcal{R}_1$. Slightly modifying the analysis in
\cite{yafaev,sobolev} one can show that for $\epsilon \to +0$ the eigenvalues of the last operator accumulate at the point lying
in the interval $(1, \infty)$, which implies that $\mathcal{M}_3 \notin \mathfrak{J}$.

\appendix
\section{The Birman Schwinger Principle}\label{31.07;14}

The Birman--Schwinger (BS) principle was first independently formulated in \cite{birman,schwinger} and since then remains an indispensable tool of counting the
eigenvalues of Schr\"odinger operators, see f. e.
\cite{reed,klaus1,klaus2,traceideals,liebyy,liebthirring}.
The form of the BS operator that we use here is
unconventional, and it appears useful to reprove
some standard statements.
So suppose $A\geq 0$ is a self--adjoint operator and $B$ is a symmetric operator on a separable Hilbert space $\mathcal{H}$, where
$D(A^{\frac 12}) \subseteq D(B)$. (Note, that we do not require $B$ being positive as it is usually  done in the proofs of the BS principle).
It follows immediately that
$B (A+
\varepsilon)^{-1/2} \in \mathfrak{B}(\mathcal{H})$ for all $\varepsilon >0$. Indeed,
\begin{equation}
 B (A+
\varepsilon)^{-1/2} = \overline{B} (A+
\varepsilon)^{-1/2} , \label{26.9;1}
\end{equation}
where $\overline{B} = \bigl( B^*\bigr)^*$ is a closure of $B$. It is easy to see that the operator on the rhs of (\ref{26.9;1}) is a closed 
operator defined on the whole $\mathcal{H}$, hence, it is bounded by the closed graph theorem. It follows that $(A+
\varepsilon)^{-1/2} B \in \mathfrak{B}(\mathcal{H})$ as well. 
\begin{remark}
 Suppose that $A\geq 0 , B$ are self--adjoint operators with domains $D(A), D(B)$ respectively. If $\|(A+\epsilon_0)^{-1/2} B f\|$ for some $\epsilon_0 >0$ 
is uniformly bounded for all $f \in D(B)$ 
then $D(A^{\frac 12}) \subseteq D(B)$. Indeed, suppose $g \in D\bigl((A+\epsilon_0)^{\frac 12}\bigr) = D(A^{\frac 12})$. Then $|(g, Bf)| = 
\bigl| \bigl((A+\epsilon_0)g, (A+\epsilon_0)^{-1/2} Bf\bigr)\bigr|$ is uniformly bounded for all $f \in D(B)$, which by self--adjointness of $B$ means that $g \in D(B)$. 
\end{remark}

Let us define the BS operator $K: \mathbb{R}_+/\{0\} \to \mathfrak{B}(\mathcal{H})$ as
\begin{equation}
 K(\varepsilon) := - (A + \varepsilon)^{-1/2}B(A + \varepsilon)^{-1/2} \quad
\quad (\varepsilon >0) ,
\end{equation}
where $K(\varepsilon)$ is, clearly, self--adjoint.

We shall consider two cases: in the first case the BS principle is formulated in
form of an inequality (Theorem~\ref{31.07;15}) and in the second case one has an exact
equality (Theorem~\ref{31.07;16}).
\begin{theorem}\label{31.07;15}
Let $A\geq 0$ be a self--adjoint operator and $B$ a symmetric operator such that $D(A^{\frac 12}) \subseteq D(B)$.  
Suppose $A+B$ is self--adjoint on $D(A)$ and has $n$ eigenvalues (counting
multiplicities) lying in the interval
$(-\infty , -\varepsilon_0 )$, where $\varepsilon_0 >0$. Then $\#(\evs(K(\varepsilon_0)) > 1) \geq n$.
\end{theorem}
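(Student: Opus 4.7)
My plan is to use the standard variational/quadratic-form version of the Birman--Schwinger argument, adapted to the ``unconventional'' normalization $K(\varepsilon) = -(A+\varepsilon)^{-1/2} B (A+\varepsilon)^{-1/2}$ used in the paper. The idea is to exhibit an $n$-dimensional subspace of $\mathcal{H}$ on which the quadratic form of $K(\varepsilon_0)$ is strictly greater than $1$, and then invoke the min--max principle together with the paper's convention that $\#(\evs(K(\varepsilon_0))>1)=\infty$ whenever $\sigma_{ess}(K(\varepsilon_0))\cap(1,\infty)\neq\emptyset$.

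First I would fix orthonormal eigenfunctions $\psi_1,\dots,\psi_n\in D(A)$ of $A+B$ with eigenvalues $E_1,\dots,E_n<-\varepsilon_0$. Set $\varphi_i:=(A+\varepsilon_0)^{1/2}\psi_i$ and let $V_n:=\mathrm{span}\{\varphi_1,\dots,\varphi_n\}$. Since $A+\varepsilon_0\geq\varepsilon_0>0$, the operator $(A+\varepsilon_0)^{1/2}$ is injective on $D(A^{1/2})$, so $\dim V_n=n$. For any $0\neq f\in V_n$ there is a unique $\psi\in\mathrm{span}\{\psi_1,\dots,\psi_n\}\subset D(A)$ with $f=(A+\varepsilon_0)^{1/2}\psi$, and
\begin{equation}
\|f\|^2 = \bigl((A+\varepsilon_0)^{1/2}\psi,(A+\varepsilon_0)^{1/2}\psi\bigr) = (\psi,A\psi)+\varepsilon_0\|\psi\|^2.
\end{equation}

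Second I would compute the quadratic form of $K(\varepsilon_0)$ on $V_n$. Because $\psi\in D(A)\subseteq D(A^{1/2})\subseteq D(B)$, the expression $(\psi,B\psi)$ makes sense, and
\begin{equation}
(f,K(\varepsilon_0)f) = -(\psi,B\psi) = -(\psi,(A+B)\psi)+(\psi,A\psi).
\end{equation}
Expanding $\psi=\sum_i c_i\psi_i$ and using $(A+B)\psi_i=E_i\psi_i$ with $E_i<-\varepsilon_0$ gives
\begin{equation}
-(\psi,(A+B)\psi) = -\sum_i|c_i|^2 E_i > \varepsilon_0\sum_i|c_i|^2 = \varepsilon_0\|\psi\|^2
\end{equation}
whenever $\psi\neq 0$. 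Combining, $(f,K(\varepsilon_0)f)>(\psi,A\psi)+\varepsilon_0\|\psi\|^2=\|f\|^2$.

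Finally I would apply the min--max principle to $K(\varepsilon_0)$, which is self-adjoint and bounded. The quantity $\mu_n:=\sup_{W}\inf_{f\in W,\,\|f\|=1}(f,K(\varepsilon_0)f)$, where the supremum runs over $n$-dimensional subspaces $W\subset\mathcal{H}$, is bounded below by the infimum of the form on the unit sphere of $V_n$, which is $>1$ by compactness and strict inequality. By the standard min--max characterization, either $\sigma_{ess}(K(\varepsilon_0))\cap(1,\infty)\neq\emptyset$ (in which case $\#(\evs(K(\varepsilon_0))>1)=\infty\geq n$ by the paper's convention) or $K(\varepsilon_0)$ has at least $n$ eigenvalues (with multiplicity) strictly exceeding $1$. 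Either way the conclusion follows. The only delicate point, which I would emphasize, is the domain check allowing the identity $(\psi,B\psi)=(\psi,(A+B)\psi)-(\psi,A\psi)$; this is what forces the choice $\psi\in\mathrm{span}\{\psi_i\}\subset D(A)$ rather than an arbitrary $\psi\in D(A^{1/2})$, and it is the reason for defining $V_n$ as the image of the eigenspace under $(A+\varepsilon_0)^{1/2}$ instead of taking an arbitrary $n$-dimensional subspace of $\mathcal{H}$.
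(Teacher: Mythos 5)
Your proof is correct, but it takes a genuinely different route from the paper. You argue variationally: pushing the eigenvectors $\psi_1,\dots,\psi_n$ of $A+B$ forward by $(A+\varepsilon_0)^{1/2}$ gives an $n$--dimensional subspace $V_n$ on which, thanks to the identity $(f,K(\varepsilon_0)f)=-(\psi,B\psi)=-(\psi,(A+B)\psi)+(\psi,A\psi)$ and $E_i<-\varepsilon_0$, the form of $K(\varepsilon_0)$ strictly dominates $\|f\|^2$; the min--max principle (together with the paper's convention when $\sigma_{ess}(K(\varepsilon_0))\cap(1,\infty)\neq\emptyset$) then yields $\#(\evs(K(\varepsilon_0))>1)\geq n$. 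Your domain bookkeeping is the right one: you only ever apply $B$ to vectors in $\mathrm{span}\{\psi_i\}\subset D(A)\subseteq D(A^{1/2})\subseteq D(B)$, and the injectivity of $(A+\varepsilon_0)^{1/2}$ guarantees $\dim V_n=n$. The paper instead converts each eigenvalue equation into the statement that $1$ is an eigenvalue of $-(A+\varepsilon_0)^{-1}(B+\epsilon_i)$ with exact multiplicity $m_i$, transfers this via $\sigma(CD)\setminus\{0\}=\sigma(DC)\setminus\{0\}$ to the operator family $M(r)=K(\varepsilon_0)-r(A+\varepsilon_0)^{-1}$ of (\ref{25.9;1}), and then counts eigenvalues of $M(r)$ above $1$ by monotonicity in $r$, stepping down through the distinct eigenvalues $\epsilon_s>\dots>\epsilon_1$ until reaching $M(0)=K(\varepsilon_0)$. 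Your argument is shorter and handles all eigenvalues in one stroke; what the paper's more elaborate route buys is the machinery (the family $M(r)$ and the behavior of its min--max values in $r$) that is reused essentially verbatim to prove the converse count in part (b) of Theorem~\ref{31.07;16}, so the two inequalities of the BS principle are obtained by one uniform technique.
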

\begin{proof}
By conditions of the theorem there are $0 < \epsilon_1 < \epsilon_2 <\ldots \epsilon_s$ and orthonormal $\phi^{(j_i)}_i \in D(A)$ for $i = 1,  \ldots, s$ and $j_i = 1, \ldots, m_i$ such
that
\begin{equation}
 (A + B) \phi^{(j_i)}_i = - (\varepsilon_0 + \epsilon_i) \phi^{(j_i)}_i  \quad \quad (j_i = 1, \ldots, m_i;\; \; \; i = 1, \ldots, s)   \label{20.07;1} .
\end{equation}
The number $m_i$ is the geometric multiplicity of the eigenvalue $-(\varepsilon_0 + \epsilon_i) $ in (\ref{20.07;1}) and $\sum_{i=1}^s m_i = n$.
 Rearranging the terms in (\ref{20.07;1}) and acting on both sides by $ (A + \varepsilon_0)^{-1}$ (where clearly $\Ker (A + \varepsilon_0)^{-1} = \{0\}$) we obtain  that
$-(A + \varepsilon_0)^{-1}(B + \epsilon_i) $ for each $i = 1, \ldots, s$
has an eigenvalue equal to $1$ with the multiplicity larger or equal to $m_i$. In fact, it is exactly equal to $m_i$. Indeed, suppose
\begin{equation}
 -(A + \varepsilon_0)^{- \frac 12} (A + \varepsilon_0)^{- \frac 12} (B + \epsilon_i) \phi = \phi, \label{27.9;1}
\end{equation}
for some $\phi \in \mathcal{H}$. Since $(A + \varepsilon_0)^{- \frac 12} B$ is bounded from (\ref{27.9;1}) it follows that $\phi \in D(A^{\frac 12})$ and
$\phi \in D(B)$. Then by (\ref{27.9;1}) $\phi \in D(A)$ and $(A + B) \phi = - (\varepsilon_0 + \epsilon_i) \phi$.

Let us introduce the operator function $M: \mathbb{R} \to \mathfrak{B}(H)$, where
\begin{equation}
 M(r) := K(\varepsilon_0) - r (A+ \varepsilon_0)^{-1} . \label{25.9;1}
\end{equation}
Using the well--known fact that $\sigma(CD)/\{0\} = \sigma(DC)/\{0\}$ for any
bounded $C, D$ (see f. e. \cite{deift}) we
conclude that $1$ is an eigenvalue of $M(\epsilon_i)$ with the multiplicity $m_i$ for $i= 1, \ldots, s$.

We can assume that $\sigma_{ess} \bigl(K(\varepsilon_0)\bigr) \cap (1, \infty) = \emptyset$,
otherwise the theorem is proved. By the inequality $M(r) \leq K(\varepsilon_0)$ for $r \geq 0$ we conclude
that $\sigma_{ess} \bigl(M(r)\bigr) \cap (1, \infty) = \emptyset$ for $r \geq 0$. The operator $M(\epsilon_s)$ has an eigenvalue equal to $1$ with the multiplicity
$m_s$. On one hand, $M(\epsilon_{s-1})$ has an eigenvalue equal to $1$ with the  multiplicity
$m_{s-1}$. On the other hand, by the min--max principle for eigenvalues (Vol. 1, Theorem XIII.1 in \cite{reed}) and the inequality $M(\epsilon_{s-1}) < M(\epsilon_s)$ the
operator $M(\epsilon_{s-1})$ has $m_s$ eigenvalues (counting multiplicities), which are strictly larger than 1. Thus
$\#(\evs(M(\epsilon_{s-1})) \geq 1) \geq m_s + m_{s-1}$. Similarly, $\#(\evs(M(\epsilon_{s-2})) \geq 1) \geq m_s + m_{s-1} + m_{s-2}$. Proceeding in the same way we find that
$M(0)= K(\varepsilon_0)$ has at least $n = m_1 + \dotsb + m_s$ eigenvalues (counting multiplicities), which are strictly larger than 1.
% Let us set
% \begin{equation}
%  \mu_k (r) := \inf_{\phi_1, \ldots, \phi_{k-1}}
% \sup_{\substack{\|\psi\| = 1 \\\psi \in [\phi_1, \ldots, \phi_{k-1}]^{\bot}}} \bigl(\psi, M(r) \psi\bigr)
% \end{equation}
%
%
% Using that $M(r_1) \geq M(r_2)$ for $0 \leq r_1 \leq r_2$ and
%
% proves the statement.
\end{proof}

The proof of the next theorem largely repeats that of Proposition~2.2 in \cite{klaus2}.

\begin{theorem}\label{31.07;16}
Let $A > 0$ be a self--adjoint operator and $B$ a symmetric operator such that $D(A^{\frac 12}) \subseteq D(B)$.  
Suppose that $A + \mu B$ is self--adjoint on $D(A)$ for all $\mu \in [0, 1]$ and $\sigma_{ess} (A
+ B)\cap (-\infty, 0) = \emptyset$. Then: (a) $\sigma_{ess}
\bigl( K(\varepsilon)\bigr) \cap (1, \infty) = \emptyset$ for all $\varepsilon >0$; (b) if
$A+B$ has $n$ eigenvalues counting multiplicities in the interval
$(-\infty , -\varepsilon_0 )$ for $\varepsilon_0 >0$ then $K(\varepsilon_0)$
has exactly $n$ eigenvalues counting multiplicities in the interval $(1, \infty)$.
\end{theorem}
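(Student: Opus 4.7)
The plan is to transfer the eigenvalue count of $K(\varepsilon)$ above $1$ to a quadratic--form count of $A+B$ below $-\varepsilon$ via the bounded bijection $\psi\mapsto f:=(A+\varepsilon)^{-1/2}\psi$ carrying $\mathcal{H}$ onto $D(A^{1/2})$, and then apply the Courant--Fischer min--max principle on both sides. The core identity is
\begin{equation*}
 \|\psi\|^{2}-\bigl(\psi,K(\varepsilon)\psi\bigr)=\bigl(f,(A+B+\varepsilon)f\bigr),\qquad\psi\in\mathcal{H},\;f=(A+\varepsilon)^{-1/2}\psi,
\end{equation*}
which follows directly from $K(\varepsilon)=-(A+\varepsilon)^{-1/2}B(A+\varepsilon)^{-1/2}$ once the right--hand side is interpreted as a quadratic form on $D(A^{1/2})$.

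That interpretation is legitimate because of the hypothesis $D(A^{1/2})\subseteq D(B)$: by the closed graph theorem $B(A+1)^{-1/2}\in\mathfrak{B}(\mathcal{H})$, and after absorbing the cross term $\|A^{1/2}f\|\,\|f\|$ via Young's inequality $B$ becomes an $A^{1/2}$--form bounded perturbation with arbitrarily small relative form bound. Hence the form $f\mapsto(f,Af)+(f,Bf)$ is closed and bounded below on $D(A^{1/2})$, and by uniqueness of self--adjoint extensions from $D(A)$ the operator it defines coincides with the given $A+B$. Consequently the map $V\mapsto W:=(A+\varepsilon)^{-1/2}V$ is a dimension--preserving bijection between subspaces $V\subseteq\mathcal{H}$ with $(\psi,K(\varepsilon)\psi)>\|\psi\|^{2}$ on $V\setminus\{0\}$ and subspaces $W\subseteq D(A^{1/2})$ with $(f,(A+B)f)<-\varepsilon\|f\|^{2}$ on $W\setminus\{0\}$.

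The Courant--Fischer min--max principle, applied to the bounded self--adjoint operator $K(\varepsilon)$ at the level $1$ and to the semibounded $A+B$ at the level $-\varepsilon$, identifies the suprema $\sup\dim V$ and $\sup\dim W$ with $\#(\evs(K(\varepsilon))>1)$ and $\#(\evs(A+B)<-\varepsilon)$ respectively, using the paper's convention that these counts are $+\infty$ whenever the relevant essential spectrum intrudes. Because $\sigma_{ess}(A+B)\cap(-\infty,0)=\emptyset$ and $A+B$ is bounded below, its negative eigenvalues are isolated and cannot accumulate away from $0$, so $\#(\evs(A+B)<-\varepsilon)<\infty$ for every $\varepsilon>0$; then finiteness of $\#(\evs(K(\varepsilon))>1)$ is equivalent to $\sigma_{ess}(K(\varepsilon))\cap(1,\infty)=\emptyset$, which is (a). Specialising to $\varepsilon=\varepsilon_{0}$ and inserting the hypothesis $\#(\evs(A+B)<-\varepsilon_{0})=n$ yields (b). The main obstacle I anticipate is justifying that the operator--domain eigenvalue count for $A+B$ agrees with the form--domain min--max---equivalently, that $D(A)$ is form--dense in $D(A^{1/2})$ and that the closed form on $D(A^{1/2})$ reproduces the given self--adjoint operator---both of which rest on the $A^{1/2}$--form boundedness of $B$ with vanishing relative bound, which in turn is extracted from the mere inclusion $D(A^{1/2})\subseteq D(B)$.
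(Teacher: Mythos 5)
Your proposal is correct, but it proves the theorem by a genuinely different route than the paper. You obtain (a) and (b) simultaneously from the single identity $\|\psi\|^{2}-(\psi,K(\varepsilon)\psi)=q_{A+B}(f)+\varepsilon\|f\|^{2}$ with $f=(A+\varepsilon)^{-1/2}\psi$, together with Glazman--type counting on both sides (the dimension of the spectral projection of $K(\varepsilon)$ onto $(1,\infty)$, resp.\ of $A+B$ onto $(-\infty,-\varepsilon)$, equals the maximal dimension of a subspace on which the corresponding strict quadratic--form inequality holds); the only substantive point is the identification of the KLMN form sum on $D(A^{1/2})$ with the operator $A+B$, and your justification is sound: $B(A+1)^{-1/2}$ is bounded by the closed graph theorem, hence $B$ is infinitesimally form--bounded relative to $A$, the closed form sum defines a self--adjoint extension of $(A+B)|_{D(A)}$, and a self--adjoint operator admits no proper self--adjoint extension (this also yields the semiboundedness of $A+B$ that your accumulation argument needs). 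The paper instead works at the operator level, following Klaus--Simon: from the factorization $A+\mu B-z=\mu(A-z)\bigl[\mu^{-1}+(A-z)^{-1}B\bigr]$ it derives the equivalence $z\in\sigma(A+\mu B)\Leftrightarrow\mu^{-1}\in\sigma(K(-z))$ for $z<0$; part (a) is proved by contradiction using strict monotonicity in $\mu$ of the negative eigenvalues of $A+\mu B$, so that a putative point of $\sigma_{ess}(K(-z_0))$ in $(1,\infty)$ would have to be an isolated, infinitely degenerate eigenvalue, whose multiplicity is then shown not to exceed the finite multiplicity of $z_0$ as an eigenvalue of $A+\mu_0 B$; part (b) combines the inequality of Theorem~\ref{31.07;15} with a deformation argument for $M(r)=K(\varepsilon_0)-r(A+\varepsilon_0)^{-1}$ (continuity and monotonicity of the min--max values, Lemma~\ref{25.9;2}) that transports eigenvectors back to $(1+\delta)A+B$. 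Your route is shorter, needs the self--adjointness hypothesis only at $\mu=1$, and avoids the auxiliary family $M(r)$, at the cost of invoking the form--sum machinery that the paper's operator--theoretic presentation deliberately avoids; the paper's longer argument also produces the coupling--constant spectral correspondence that serves as the engine of its Birman--Schwinger principle. One cosmetic remark: the ``equivalence'' you assert between finiteness of $\#(\evs(K(\varepsilon))>1)$ and $\sigma_{ess}(K(\varepsilon))\cap(1,\infty)=\emptyset$ is only needed, and is only true in general, in the direction you actually use (finite spectral multiplicity in $(1,\infty)$ excludes essential spectrum there); the converse could fail if eigenvalues accumulated at $1$ from above, but nothing in your argument relies on that direction.
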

\begin{proof}
Note that by the min--max principle
$\sigma_{ess} (A + \mu B)\cap (-\infty, 0) = \emptyset$ for $\mu \in (0,1]$.
Following \cite{klaus2} we write
\begin{equation}
A + \mu B - z = \mu (A - z) \bigl[\mu^{-1} + (A-z)^{-1} B \bigr] \quad \quad (z
< 0)  \label{20.07;2}
\end{equation}
First, we show that $z \in \sigma \bigl(A + \mu B\bigr)$ if and only if
$\mu^{-1} \in \sigma \bigl(K(-z)\bigr)$. Or
\begin{equation}
 z \notin \sigma \bigl(A + \mu B\bigr) \Longleftrightarrow \mu^{-1} \notin \sigma \bigl(K(-z)\bigr),
\end{equation}
which is the same.
Indeed, if $\mu^{-1} \notin  \sigma \bigl(K(-z)\bigr)$ or equivalently
$\mu^{-1} \notin \sigma \bigl((A-z)^{-1} B\bigr) $ then by (\ref{20.07;2}) the operator $A + \mu B - z$ has a bounded
inverse. Conversely, if $z \notin \sigma \bigl(A + \mu B\bigr)$ then
\begin{equation}
 \bigl[\mu^{-1} + (A-z)^{-1} B \bigr]^{-1} = \bigl( A + \mu B - z\bigr)^{-1} (A
- z)
\end{equation}
The operator on the rhs is bounded because $D(A + \mu B) = D(A)$, see f.~e.
Lemma~6.2 in \cite{teschl} and the remark after Proposition~2.2 in
\cite{klaus2}.
Therefore, $\mu^{-1} \notin \sigma \bigl((A-z)^{-1} B\bigr) $, or, equivalently,
$\mu^{-1} \notin \sigma \bigl(K(-z)\bigr)$.

Now let us fix $z_0 <0$ and prove that
$\sigma_{ess} \bigl( K(-z_0) \bigr) \cap (1, \infty) = \emptyset$. By contradiction, suppose that $\mu_0 < 1$ is
such that $\mu_0^{-1} \in \sigma_{ess} \bigl(K(-z_0)\bigr)$. By the established
equivalence
$z_0 \in \sigma \bigl(A + \mu_0 B\bigr)$ is an eigenvalue. Because negative eigenvalues
of $A + \mu B$ are strictly monotonic \cite{reed} in $\mu$ the point $\mu_0^{-1}$ must be an
isolated
point of the spectrum of $K(-z_0)$. Since $\mu_0^{-1} \in \sigma_{ess}
\bigl(K(-z_0)\bigr)$ by the spectral theorem it can only be an infinitely degenerate eigenvalue. Thus there is
an orthonormal set
$\varphi_i $, $i = 1,2, \ldots$ such that $\mu_0 K(-z_0 ) \varphi_i =  \varphi_i$.
It follows that $ \varphi_i \in D\bigl((A-z_0)^{1/2}\bigr)$. The vectors
$\tilde  \varphi_i := (A-z_0)^{-1/2} \varphi_i$ form a linearly independent set and
$ \tilde \varphi_i \in D(A)$. Acting on both sides of the equation
$\mu_0 K(-z_0 ) \varphi_i =  \varphi_i$ with $(A-z_0)^{1/2}$ gives $-\mu_0
(A-z_0)^{-1} B \tilde \varphi_i = \tilde \varphi_i$. Acting on both sides of this
equation with $(A-z_0)$
we see that the degeneracy of $\mu_0^{-1}$ as an eigenvalue of $K(-z_0)$ does not
exceed the degeneracy of $z_0$ as an eigenvalue of $A + \mu_0 B$. Therefore, in
the interval
$(1,\infty)$ the spectrum of $K(-z_0)$ can contain only eigenvalues of finite
multiplicity and (a) is proved. Let us prove (b). Suppose $K(\varepsilon_0)$
in the interval $(1, \infty)$ has $n_0$ eigenvalues. By Theorem~\ref{31.07;15} (b) would be proved if we can show
that $A+B$ has at least $n_0$ eigenvalues in the interval $(-\infty, -\varepsilon_0)$. Let us choose $\delta >0$ so that
$n_0$ eigenvalues of $K(\varepsilon_0)$ are larger than $1+ \delta$. Due to (a) $\sigma_{ess} \bigl(M(r)\bigr) \cap (1, \infty) = \emptyset$ for
$r \geq 0$, where $M(r)$ was defined in (\ref{25.9;1}).
Let us set
\begin{equation}
 \mu_k (r) := \inf_{\phi_1, \ldots, \phi_{k-1} \in \mathcal{H}}
\sup_{\substack{\|\psi\| = 1 \\\psi \in [\phi_1, \ldots, \phi_{k-1}]^{\bot}}} \bigl(\psi, M(r) \psi\bigr)
\end{equation}
Clearly, $\mu_1 (0), \ldots \mu_{n_0} (0) \in (1+\delta, \infty)$ are eigenvalues of $K(\epsilon_0)$. For each $k = 1, \ldots, n_0 $ the function $\mu_k (r)$ is continuous and
on the set $\mathbb{R}_+ \cap \{r| \mu_k (r) \in [1+ \delta/4, \infty) \}$ it is also monotone decreasing.
Using  Lemma~\ref{25.9;2} it is easy to show that $\mu_k (L) \leq 1 + \delta/2$ for $L$ large enough. Therefore, we conclude that
there exist $\epsilon_1 < \dotsb  < \epsilon_s$ and the vectors $\varphi_i^{(j_i)}$ such that
\begin{equation}
 M(\epsilon_i) \varphi_i^{(j_i)} = (1+\delta) \varphi_i^{(j_i)}  \quad \quad (i = 1, \ldots, s;\; \; \; j_i = 1, \ldots, m_i) \label{25.9;7}
\end{equation}
and $m_1 + \dotsb + m_{s} = n_{0}$. For each $i = 1, \ldots, s$ the vectors $\varphi_i^{(1)}, \ldots, \varphi_i^{(m_i)}$ are orthonormal.
Again, we set $\tilde  \varphi^{(j_i)}_i := (A+ \varepsilon_0 )^{-1/2} \varphi^{(j_i)}_i$, where $\tilde  \varphi^{(j_i)}_i \in D(A)$ and for each $i$ the vectors
$\tilde \varphi^{(1)}_i , \ldots, \tilde \varphi^{(m_i)}_i$ form a linearly independent set. From (\ref{25.9;7}) it follows that
\begin{equation}
 \left[(1+ \delta) A + B \right] \tilde \varphi^{(j_i)}_i = [- \varepsilon_0 - \delta\varepsilon_0 - \epsilon_i ] \tilde \varphi^{(j_i)}_i \quad \quad (i = 1, \ldots, s;\; \; \; j_i = 1, \ldots, m_i)
\end{equation}
Hence, the operator $(1+ \delta) A + B $ has at least $n_0$ eigenvalues in the interval $( -\infty , - \varepsilon_0) $. The same is true for $A + B$ due to
$A + B < (1+ \delta) A + B$.
\end{proof}
\begin{remark}
Suppose that $A, B \in \mathfrak{B}(\mathcal{H})$ are self--adjoint operators such that $A$ is bounded from
below by a positive constant and $\sigma_{ess} (A + B)\cap (-\infty, 0) = \emptyset$. One can
define a bounded self--adjoint operator $K(0) := A^{-1/2} B A^{-1/2}$.
The proof of Theorem~\ref{31.07;16} can be easily utilized to prove the
following statement:  (a) $\sigma_{ess}
\bigl( K(0)\bigr) \cap (1, \infty) = \emptyset$; (b) if
$A+B$ has $n$ eigenvalues counting multiplicities lying in the interval
$(-\infty , 0 )$ then $K(0)$ has exactly $n$ eigenvalues counting
multiplicities in the interval $(1, \infty)$.
\end{remark}
\begin{lemma}\label{25.9;2}
Suppose $f \in \mathcal{H}$ and $A \geq 0$ is a self--adjoint operator acting in $\mathcal{H}$ with domain $D(A)$.
Then for any given $c, \varepsilon_0 >0 $ one can find $L >0$ such that $f(f, \cdot) - L(A+\varepsilon_0)^{-1} \leq c$.
\end{lemma}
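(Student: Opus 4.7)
The statement is an operator inequality on $\mathcal{H}$, so the plan is to verify it by testing against an arbitrary $g \in \mathcal{H}$ and showing
\[
 |(f,g)|^2 \;\leq\; L\,\|(A+\varepsilon_0)^{-1/2} g\|^2 + c\,\|g\|^2 \qquad (g \in \mathcal{H}),
\]
since the left-hand side is $(g, f(f,\cdot)g)$ and the right-hand side is $(g, [L(A+\varepsilon_0)^{-1} + c]g)$. The only serious issue is that $f$ need not lie in $D(A^{1/2})$, so one cannot simply pull $(A+\varepsilon_0)^{1/2}$ across Cauchy--Schwarz on all of $f$; the fix is a spectral cutoff.

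Using the spectral theorem for $A$ let $E_\lambda$ be its spectral family and split $f = f_1 + f_2$ with $f_1 := E_{[0,M]}(A) f$ and $f_2 := E_{(M,\infty)}(A) f$ for a parameter $M>0$ to be chosen. Since $\lim_{M\to\infty}\|f_2\|=0$ I can (and will) fix $M$ so large that $2\|f_2\|^2 \leq c$. The vector $f_1$ lies in the range of a bounded spectral projection, hence in $D(A^{1/2})$, and satisfies $\|(A+\varepsilon_0)^{1/2} f_1\|^2 = (f_1,(A+\varepsilon_0)f_1) < \infty$.

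Then apply $|a+b|^2 \leq 2|a|^2 + 2|b|^2$ together with two Cauchy--Schwarz estimates:
\begin{gather}
|(f_1,g)|^2 = \bigl|\bigl((A+\varepsilon_0)^{1/2} f_1,\,(A+\varepsilon_0)^{-1/2}g\bigr)\bigr|^2 \leq \|(A+\varepsilon_0)^{1/2}f_1\|^2\,\|(A+\varepsilon_0)^{-1/2}g\|^2, \\
|(f_2,g)|^2 \leq \|f_2\|^2\,\|g\|^2.
\end{gather}
Combining these,
\[
 |(f,g)|^2 \leq 2\|(A+\varepsilon_0)^{1/2} f_1\|^2\,\|(A+\varepsilon_0)^{-1/2}g\|^2 + 2\|f_2\|^2\,\|g\|^2.
\]
Setting $L := 2\|(A+\varepsilon_0)^{1/2} f_1\|^2$ (which is finite by construction of $f_1$) and using the choice $2\|f_2\|^2 \leq c$ gives the desired operator inequality $f(f,\cdot) - L(A+\varepsilon_0)^{-1} \leq c$. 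The only genuinely non-routine point is the splitting $f = f_1+f_2$ to bring $f$ into $D(A^{1/2})$; everything else is Cauchy--Schwarz and the dominated-convergence-type identity $\lim_{M\to\infty}\|E_{(M,\infty)}(A)f\|=0$.
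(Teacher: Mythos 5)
Your argument is correct and is essentially the paper's own proof: both hinge on a spectral cutoff of $f$ at a high energy $M$ (resp.\ $k_0$), making the high--energy tail contribute at most $c$ and the low--energy part lie in $D(A^{1/2})$ so that it is dominated by $L(A+\varepsilon_0)^{-1}$ with $L$ of order $(M+\varepsilon_0)\|f\|^2$. The only difference is presentational: you verify the inequality as a quadratic form via Cauchy--Schwarz, while the paper sandwiches the cut--off rank--one operator between the spectral projections $\mathbb{P}_{[0,k_0]}$ and uses $(A+\varepsilon_0)^{-1}\geq (k_0+\varepsilon_0)^{-1}$ on that subspace.
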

\begin{proof}
Without loosing generality we can set $\|f\| = 1$.  Let $\mathbb{P}_\Omega$
denote the spectral projections of the operator $A$. Using that $\sslim_{k \to \infty} \mathbb{P}_{[0, k]} = 1$ we can set $k_0 > 0$ so that
$\|f - f' \| < c/2$, where $f' = \mathbb{P}_{[0, k_0 ]} f$. Then $\|f(f, \cdot) - f'(f', \cdot)\| < c $  and we get
\begin{gather}
f(f, \cdot) - L(A+\varepsilon_0)^{-1} \leq c  + \mathbb{P}_{[0, k_0 ]} \bigl(  f'(f', \cdot) - L(A+\varepsilon_0)^{-1} \bigr)
\mathbb{P}_{[0, k_0 ]} \\ \leq c + \mathbb{P}_{[0, k_0 ]} \bigl(  f'(f', \cdot) - L(k_0 +\varepsilon_0)^{-1} \bigr) \mathbb{P}_{[0, k_0 ]} \leq c ,
\end{gather}
where we set $L = 2(k_0 + \varepsilon_0 )$.
\end{proof}

\end{document}